\newcommand\givenbase[1][]{\:#1\lvert\:}
\let\given\givenbase
\DeclarePairedDelimiterX\Basics[1](){\let\given\sgiven #1}
\DeclarePairedDelimiter\abs{\lvert}{\rvert}
\DeclarePairedDelimiter\norm{\lVert}{\rVert}
\DeclarePairedDelimiterX{\infdivx}[2]{(}{)}{#1\;\delimsize\|\;#2}
\DeclareMathOperator*{\argmax}{arg\,max}
\DeclareMathOperator*{\argmin}{arg\,min}
\newcommand{\distas}[1]{\mathbin{\overset{#1}{\kern\z@\sim}}}%
\newsavebox{\mybox}\newsavebox{\mysim}
\newcommand{\distras}[1]{%
	\savebox{\mybox}{\hbox{\kern3pt$\scriptstyle#1$\kern3pt}}%
	\savebox{\mysim}{\hbox{$\sim$}}%
	\mathbin{\overset{#1}{\kern\z@\resizebox{\wd\mybox}{\ht\mysim}{$\sim$}}}%
}
\newlist{inparaenum}{enumerate}{2}% allow two levels of nesting in an enumerate-like environment
\setlist[inparaenum]{nosep}% compact spacing for all nesting levels
\setlist[inparaenum,1]{label=\bfseries\arabic*.}% labels for top level
\setlist[inparaenum,2]{label=\arabic{inparaenumi}\emph{\alph*})}% labels for second level
\theoremstyle{plain}% default
\newtheorem{thm}{Theorem}[section]
\newtheorem*{thm*}{Theorem}
\newtheorem{corollary}{Corollary}[section]
\newtheorem*{corollary*}{Corollary}
\newtheorem{lemma}{Lemma}[section]
\newtheorem*{lemma*}{Lemma}
\newtheorem*{prop*}{Proposition}
\theoremstyle{definition}
\theoremstyle{remark}
\newtheorem{remark}{Remark}[thm]
\newcommand{\bX}{\boldsymbol{X}}
\newcommand{\bY}{\boldsymbol{Y}}
\newcommand{\bM}{\boldsymbol{M}}
\newcommand{\bE}{\boldsymbol{E}}
\newcommand{\bQ}{\boldsymbol{Q}}
\newcommand{\bA}{\boldsymbol{A}}
\newcommand{\bB}{\boldsymbol{B}}
\newcommand{\bC}{\boldsymbol{C}}
\newcommand{\bP}{\boldsymbol{P}}
\newcommand{\bI}{\boldsymbol{I}}
\newcommand{\bU}{\boldsymbol{U}}
\newcommand{\bV}{\boldsymbol{V}}
\newcommand{\bD}{\boldsymbol{D}}
\newcommand{\bSigma}{\boldsymbol{\Sigma}}
\newcommand{\bLambda}{\boldsymbol{\Lambda}}
\newcommand{\bhB}{\hat{\bB}}
\newcommand{\bhM}{\hat{\bM}}
\newcommand{\bhQ}{\hat{\bQ}}
\newcommand{\MSE}{\emph{MSE}}
\newcommand{\bbX}{\bar{\bX}}
\newcommand{\bbY}{\bar{\bY}}
\newcommand{\bbM}{\bar{\bM}}
\newcommand{\E}{\mathbb{E}}
\newcommand{\Pb}{\mathbb{P}}
\def\BState{\State\hskip-\ALG@thistlm}
\title{Robust Synthetic Control\footnote{We would like to thank Alberto Abadie for careful reading and comments that have helped in improving the manuscript.}}
\author{Muhammad Jehangir Amjad, Devavrat Shah, and Dennis Shen \\ \\
Laboratory for Information and Decision Systems, \\
Statistics and Data Science Center, \\
Massachusetts Institute of Technology \\
\{mamjad, devavrat, deshen\}@mit.edu}
\begin{document} 
\date{}

\maketitle

%abstract
\section*{\hfil \small{Abstract} \hfil}

\begin{changemargin}{1cm}{1cm} 

We present a robust generalization of the synthetic control method for comparative case studies. Like the classical method cf. \cite{abadie1, abadie3, abadie2}, we present an algorithm to estimate the unobservable counterfactual of a treatment unit. A distinguishing feature of our algorithm is that of de-noising the data matrix via singular value thresholding, which renders our approach robust in multiple facets: it automatically identifies a good subset of donors for the synthetic control, overcomes the challenges of missing data, and continues to work well in settings where covariate information may not be provided. To begin with, we establish the condition under which the fundamental assumption in synthetic control-like approaches holds, i.e. when the linear relationship between the treatment unit and the donor pool prevails in both the pre- and post-intervention periods. We provide the first finite sample analysis (coupled with asymptotic results) for a broader class of models, the Latent Variable Model (LVM), in contrast to Factor Models previously considered in the literature, while also relating the interpolation and extrapolation abilities of our estimator to the amount of data available. In particular, we show that our de-noising procedure accurately imputes missing entries and filters corrupted observations in producing a consistent estimator of the underlying signal matrix, provided $p = \Omega( T^{-1 + \zeta})$ for some $\zeta > 0$; here, $p$ is the fraction of observed data and $T$ is the time interval of interest. Under the same proportion of observations, we demonstrate that the mean-squared-error 
in our prediction estimation scales as $\mathcal{O}(\sigma^2/p +  1/\sqrt{T})$, where $\sigma^2$ is the variance of the inherent noise. Using a ``data aggregation'' method, we show that the mean-square-error can be made as small as {$\mathcal{O}(T^{-1/2+\gamma})$} for any $\gamma \in (0, 1/2)$, and thus leading to a consistent estimator. In order to move beyond point estimates, we introduce a Bayesian framework that not only provides the ability to readily develop different estimators under various loss functions, but also quantifies the uncertainty of the model/estimates through posterior probabilities. Our experiments, using both synthetic and real-world datasets, demonstrate that our robust generalization yields an improvement over 
the classical synthetic control method, underscoring the value of our key de-noising procedure. 

\end{changemargin}

% introduction
%% This is an example first chapter.  You should put chapter/appendix that you
%% write into a separate file, and add a line \include{yourfilename} to
%% main.tex, where `yourfilename.tex' is the name of the chapter/appendix file.
%% You can process specific files by typing their names in at the 
%% \files=
%% prompt when you run the file main.tex through LaTeX.
\section{Introduction} \label{sec:intro}
On November 8, 2016 in the aftermath of several high profile mass-shootings, voters in California passed Proposition 63 in to law \cite{prop63}. Prop. 63 ``outlaw[ed] the possession of ammunition magazines that [held] more than 10 rounds, requir[ed] background checks for people buying bullets,'' and was proclaimed as an initiative for ``historic progress to reduce gun violence'' \cite{prop63latimes}. Imagine that we wanted to study the impact of Prop. 63 on the rates of violent crime in California. Randomized control trials, such as A/B testings, have been successful in establishing effects of interventions by randomly exposing segments of the population to various types of interventions. Unfortunately, a randomized control trial is not applicable in this scenario since only one California exists. Instead, a statistical comparative study could be conducted where the rates of violent crime in California are compared to a ``control'' state after November 2016, which we refer to as the post-intervention period. To reach a statistically valid conclusion, however, the control state must be demonstrably similar to California sans the passage of a Prop. 63 style legislation. In general, there may not exist a natural control state for California, and subject-matter experts tend to disagree on the most appropriate state for comparison.
 
As a suggested remedy to overcome the limitations of a classical comparative study outlined above, Abadie et al. proposed a powerful, data-driven approach to construct a ``synthetic'' control unit absent of intervention \cite{abadie1, abadie3, abadie2}. In the example above, the synthetic control method would construct a ``synthetic'' state of California such that the rates of violent crime of that hypothetical state would best match the rates in California before the passage of Prop. 63. This synthetic California can then serve as a data-driven counterfactual for the period after the passage of Prop. 63. Abadie et al. propose to construct such a synthetic California by choosing a convex combination of other states (donors) in the United States. For instance, synthetic California might be 80\% like New York and 20\% like Massachusetts. This approach is nearly entirely data-driven and appeals to intuition. For optimal results, however, the method still relies on subjective covariate information, such as employment rates, and the presence of domain ``experts'' to help identify a useful subset of donors. The approach may also perform poorly in the presence of non-negligible levels of noise and missing data. 

%\subsection{Overview of main contributions}

\subsection{Overview of main contributions.}
%In this work, we revisit the study of synthetic control from a robust perspective in order to address the limitations described above. 
As the main result, we propose a simple, two-step robust synthetic control algorithm, wherein the first step de-noises the data and the second step learns a linear relationship between the treated unit and the donor pool under the de-noised setting. The algorithm is robust in two senses: first, it is fully data-driven in that it is able to find a good donor subset even in the absence of helpful domain knowledge or supplementary covariate information; and second, it provides the means to overcome the challenges presented by missing and/or noisy observations. As another important contribution, we establish analytic guarantees (finite sample analysis and asymptotic consistency) -- that are missing from the literature -- for a broader class of models.

% Algorithm
%\textbf
{\em \textbf{Robust algorithm.}}
A distinguishing feature of our work is that of de-noising the observation data via singular value thresholding. Although this spectral procedure is commonplace in the matrix completion arena, it is novel in the realm of synthetic control. Despite its simplicity, however, thresholding brings a myriad of benefits and resolves points of concern that have not been previously addressed. For instance, while classical methods have not even tackled the obstacle of missing data, our approach is well equipped to impute missing values as a consequence of the thresholding procedure. Additionally, thresholding can help prevent the model from overfitting to the idiosyncrasies of the data, providing a knob for practitioners to tune the ``bias-variance'' trade-off of their model and, thus, reduce their mean square error (MSE). From empirical studies, we hypothesize that thresholding may possibly render auxiliary covariate information (vital to several existing methods) a luxury as opposed to a necessity. However, as one would expect, the algorithm can only benefit from useful covariate and/or ``expert'' information and we do not advocate ignoring such helpful information, if available.

In the spirit of combatting overfitting, we extend our algorithm to include regularization techniques such as ridge regression and LASSO. We also move beyond point estimates in establishing a Bayesian framework, which allows one to quantitatively compute the uncertainty of the results through posterior probabilities.

%In the first step, we de-noise the donor pool matrix $\bY = [Y_{it}]_{2\leq i\leq N, t \in [T]}$, where $Y_{it} = X_{it}$ if $X_{it}$ is observed and $0$ otherwise, to produce an estimate $\bhM = [\hat{m}_{it}]_{2\leq i\leq N, t \in [T]}$ of the underlying mean matrix $\bM = [m_{it}]_{2\leq i\leq N, t \in [T]}$. The de-noising step involves producing an appropriate low-rank approximation of $\bY$. In the second step, we use the pre-intervention portion of $\bhM$ to learn the pre-intervention linear relationship between the treatment unit and the de-noised donor pool. More specifically, the algorithm solves a least-squares problem to learn a linear model, say $\hat{\beta}\in {\mathbb R}^{N-1}$, between the treatment vector $X_1 = [X_{1t}]_{t \le T_0}$ and the features $\hat{m}_{i}^- = [\hat{m}_{it}]_{2\le i \le N, t\leq T_0}$. This linear model defines a ``synthetic control'', which leads to the estimation of $m_{1t}$ as $\hat{m}_{1t} = \sum_{i=2}^N \hat{\beta}_i \hat{m}_{it}$ for all $t \in [T]$. We note that if the derived matrix in step one is nearly full rank, then our algorithm essentially reduces to the original synthetic control method, modulo the addition of extraneous covariates and the pruning of the donor pool via domain expertise.

% Performance
%\textbf
{\em \textbf{Theoretical performance.}}
To the best of our knowledge, ours is the first to provide finite sample analysis of the MSE for the synthetic control method, in addition to guarantees in the presence of missing data. 
Previously, the main theoretical result from the synthetic control literature (cf. \cite{abadie1, abadie3, abadie2}) pertained to bounding the bias of the synthetic control 
estimator; however, the proof of the result assumed that the latent parameters, which live in the simplex, have a perfect pre-treatment match in the noisy predictor variables 
-- our analysis, on the other hand, removes this assumption. We begin by demonstrating that our de-noising procedure produces a consistent estimator of the latent signal matrix (Theorems \ref{thm:imputation_lowrank}, \ref{thm:imputation_lipschitz}), proving that our thresholding method accurately imputes and filters missing and noisy observations, respectively. We then provide finite sample analysis that not only highlights the value of thresholding in balancing 
the inherent ``bias-variance'' trade-off of forecasting, but also proves that the prediction efficacy of our algorithm degrades gracefully with an increasing number of randomly missing data 
(Theorems \ref{thm:finite-sample}, \ref{thm:post_rmse}, and Corollary \ref{corollary:universal_threshold}). Further, we show that a computationally beneficial pre-processing data aggregation step allows us to establish the asymptotic consistency of our estimator in generality (Theorem \ref{thm:consistency}). 

%Previously, the main theoretical result from the synthetic control literature pertained to asymptotic unbiasedness for a linear factor model; 

%Using results from the statistical learning theory literature, we also provide post-intervention/generalization error bounds under the regularized (ridge regression) setting. 

Additionally, we prove a simple linear algebraic fact that justifies the basic premise of synthetic control, which has not been formally established in literature, i.e. the linear relationship 
between the treatment and donor units that exists in the pre-intervention continues to hold in post-intervention period (Theorem \ref{thm:post}). We introduce a latent variable model, 
which subsumes many of the models previously used in literature (e.g. econometric factor models). Despite this generality, a unifying theme that connects these 
models is that they all induce (approximately) low rank matrices, which is well suited for our method.

%As the main technical contribution, we provide finite-sample analysis to bound the MSE of our estimation. Our analysis also demonstrates that the efficacy of the algorithm degrades gracefully with an increasing number of randomly missing observations. Moreover, if the latent function $f$ induces a matrix $\bM$ that is low rank, then our finite-sample analysis suggests that our estimator is asymptotically consistent (see Theorems \ref{thm:finite-sample}, \ref{corollary:goldilocks}, \ref{thm:consistency}). We further argue that as long as the rank of $\bM$ and the pre-intervention matrix $\bM^{-} = [m_{it}]_{2\le i \le N, t \in [T_0]}$ are the same, then the linear relationship between the treatment unit and donor pool means remains identical post-intervention and, thus, justifying the approach of synthetic control (see Theorem \ref{thm:post}). To the best of our knowledge, both the finite sample analysis and asymptotic consistency have been absent in all prior literature; as a result, ours is the first such result in the context of synthetic control literature.

% Experimental Results
%\textbf
{\em \textbf{Experimental results.}} We conduct two sets of experiments: (a) on existing case studies from real world datasets referenced in \cite{abadie1, abadie2, abadie3}, and (b) on synthetically generated data. Remarkably, while \cite{abadie1, abadie2, abadie3} use numerous covariates and employ expert knowledge in selecting their donor pool, our algorithm achieves similar results without any such assistance; additionally, our algorithm detects subtle effects of the intervention that were overlooked by the original synthetic control approach. { Since it is impossible to simultaneously observe the evolution of a treated unit and its counterfactual}, we employ synthetic data to validate the efficacy of our method. Using the MSE as our evaluation metric, we demonstrate that our algorithm is robust to varying levels of noise and missing data, { reinforcing the importance of de-noising}.  

%Finally, we perform hypothesis testing and uncertainty quantifications to measure the statistical significance of the interventions on our treatment units. As we note in the results of our real-world case studies, namely (a) the impact of an anit-Tobacco legislation (Proposition 99) in California and (b) the economic effects of terrorism on Basque Country (Spain), our algorithm agrees with the results of previous studies where the impact of the interventions was deemed statistically significant. However, as we show via our experiments, if there exist strong prior beliefs about the distributions, a Bayesian view may lead us to different conclusions and fail to reject the null-hypothesis.

%it appears that the previous studies and methods may have overestimated the impacts of the Proposition 99 in California and the effect of terrorism on Basque Country's economy. Our algorithm produces counterfactuals which make it hard to reject the null hypotheses, i.e. we cannot conclude that the ``interventions'' had a statistically significant impact. 

%\subsection{Related work}
\subsection{Related work.} Synthetic control has received widespread attention since its conception by Abadie and Gardeazabal in their pioneering work \cite{abadie3, abadie1}. It has been employed in numerous case studies, ranging from criminology \cite{crime} to health policy \cite{health} to online advertisement to retail; other notable studies include \cite{abadie4, economic_liberalization, tax, turkey}. In their paper on the state of applied econometrics for causality and policy evaluation, Athey and Imbens assert that synthetic control is ``one of the most important development[s] in program evaluation in the past decade'' and ``arguably the most important innovation in the evaluation literature in the last fifteen years'' \cite{athey}. In a somewhat different direction, Hsiao et al. introduce the panel data method \cite{hsiao1, hsiao2}, which seems to have a close bearing with some of the approaches of this work. In particular, \cite{hsiao1, hsiao2} only uses data for the outcome variable and solves an ordinary least squares problem in learning synthetic control. However, \cite{hsiao1, hsiao2} restrict the subset of possible controls to units that are within the geographical or economic proximity of the treated unit. Therefore, there is still some degree of subjectivity in the choice of the donor pool. In addition, \cite{hsiao1, hsiao2} do not include a ``de-noising'' step, which is a key feature of our approach. For an empirical comparison between the synthetic control and panel data methods, see \cite{comparison}. It should be noted that \cite{comparison} also adapts the panel data method to automate the donor selection process. \cite{doudchenko} allows for an additive difference between the treated unit and donor pool, similar to the difference-in-differences (DID) method. Moreover, similar to our exposition, \cite{doudchenko} relaxes the convexity aspect of synthetic control and proposes an algorithm that allows for unrestricted linearity as well as regularization. In an effort to infer the causal impact of market interventions, \cite{bayes_sc} introduce yet another evaluation methodology based on a diffusion-regression state-space model that is fully Bayesian; similar to \cite{abadie1, abadie3, hsiao1, hsiao2}, their model also generalizes the DID procedure. Due to the subjectivity in the choice of covariates and predictor variables, \cite{ferman1} provides recommendations for specification-searching opportunities in synthetic control applications. The recent work of \cite{xu} extends the synthetic control method to allow for multiple treated units and variable treatment periods as well as the treatment being correlated with unobserved units. Similar to our work, \cite{xu} computes uncertainty estimates; however, while \cite{xu} obtains these measurements via a parametric bootstrap procedure, we obtain uncertainty estimates under a Bayesian framework.

Matrix completion and factorization approaches are well-studied problems with broad applications (e.g. recommendation systems, graphon estimation, etc.). As shown profusely in the literature, spectral methods, such as singular value decomposition and thresholding, provide a procedure to estimate the entries of a matrix from partial and/or noisy observations \cite{recht1}. With our eyes set on achieving ``robustness'', spectral methods become particularly appealing since they de-noise random effects and impute missing information within the data matrix \cite{svdjha1}. For a detailed discussion on the topic, see \cite{usvt}; for algorithmic implementations, see \cite{mazumder2010} and references there in. We note that our goal differs from traditional matrix completion applications in that we are using spectral methods to estimate a low-rank matrix, allowing us to determine a linear relationship between the rows of the mean matrix. This relationship is then projected into the future to determine the counterfactual evolution of a row in the matrix (treated unit), which is traditionally not the goal in matrix completion applications. Another line of work within this arena is to impute the missing entries via a nearest neighbor based estimation algorithm under a latent variable model framework \cite{lee_song, lee2}. 

There has been some recent work in using matrix norm methods in relation to causal inference, including for synthetic control. In \cite{athey1}, the authors use matrix norm regularization techniques to estimate counterfactuals for panel data under settings that rely on the availability of a large number of units relative to the number of factors or characteristics, and under settings that involve limited number of units but plenty of history (synthetic control). This is different from our approach, which increases robustness by ``de-noising'' using matrix completion methods, and then using linear regression on the de-noised matrix, instead of relying on matrix norm regularizations.% Additionally, in \cite{athey1}, the authors also allow for structure in missing data, which is in contrast to our approach of allowing randomly missing data. 

Despite its popularity, there has been less theoretical work in establishing the consistency of the synthetic control method or its variants. \cite{abadie1} demonstrates that the bias of the synthetic control estimator can be bounded by a function that is close to zero when the pre-intervention period is large in relation to the scale of the transitory shocks, but under the additional condition that a perfect convex match between the pre-treatment noisy outcome and covariate variables for the treated unit and donor pool exists. \cite{ferman2} relaxes the assumption in \cite{abadie1}, and derives conditions under which the synthetic control estimator is asymptotically unbiased under non-stationarity conditions. To our knowledge, however, no prior work has provided finite-sample analysis, analyzed the performance of these estimators with respect to the mean-squared error (MSE), established asymptotic consistency, or addressed the possibility of missing data, {a common handicap in practice}. 

\section{Background} \label{sec:model}

%\subsection{Notation}
\subsection{Notation.} We will denote $\mathbb{R}$ as the field of real numbers. For any positive integer $N$, let $[N] = \{1, \dots, N\}$. For any vector $v \in \mathbb{R}^n$, we denote its Euclidean ($\ell_2$) norm by $\norm{v}_2$, and define $\norm{v}_2^2 = \sum_{i=1}^n v_i^2$. We define its infinity norm as $\norm{v}_{\infty} = \max_{i} \abs{v_i}$. In general, the $\ell_p$ norm for a vector $v$ is defined as $\norm{v}_p = \Big( \sum_{i=1}^n \abs{v_i}^p \Big)^{1/p}$. Similarly, for an $m \times n$ real-valued matrix $\bA = [A_{ij}]$, its spectral/operator norm, denoted by $\norm{\bA}_2$, is defined as $\norm{\bA}_2 = \max_{1 \le i \le k} \abs{\sigma_i}$, where $k = \min\{m,n\}$ and $\sigma_i$ are the singular values of $\bA$. The Moore-Penrose pseudoinverse $\bA^{\dagger}$ of $\bA$ is defined as 
\begin{align}
\bA^{\dagger} & = \sum_{i=1}^k (1/ \sigma_i) y_i x_i^T, \quad \text{where} \quad \bA = \sum_{i=1}^k \sigma_i x_i y_i^T,
\end{align} 
%where 
%\begin{align}
%\bA = \sum_{i=1}^k \sigma_i x_i y_i^T,
%\end{align}
with $x_i$ and $y_i$ being the left and right singular vectors of $\bA$, respectively. We will adopt the shorthand notation of $\norm{\cdot} \equiv \norm{\cdot}_2$. 
To avoid any confusions between scalars/vectors and matrices, we will represent all matrices in bold, e.g. $\bA$.

Let $f$ and $g$ be two functions defined on the same space. We say that $f(x) = \mathcal{O}(g(x))$ and $f(x) = \Omega(g(x))$ if and only if there exists a positive real number $M$ and a real number $x_0$ such that for all $x \ge x_0$,
\begin{align}
	\abs{f(x)} &\le M \abs{g(x)} \quad \text{and} \quad \abs{f(x)} \ge M \abs{g(x)},
\end{align}
respectively. 
%Under the same conditions as above, we say that $f(x) = \Omega(g(x))$ if and only if,
%\begin{align}
%	\abs{f(x)} &\ge M \abs{g(x)}. 
%\end{align}

%Similarly, we say that $f(x) = \Omega(g(x))$ if and only if there exists a positive real number $M$ and a real number $x_0$ such that for all $x \ge x_0$,
%\begin{align}
%	\abs{f(x)} &\ge M \abs{g(x)}. 
%\end{align}

%Let $\hat{v}$ be a random vector that is an estimate of $v$. Then one choice for the measure of error in estimation is the average mean-squared error, denoted as $\text{MSE}(\hat{v})$, and defined as
%\begin{align}
%\text{MSE}(\hat{v}) = \frac{1}{n} \norm{ v - \hat{v} }_2^2. \label{eq:mse_def}
%\end{align}
%We say that $\hat{v}$ is a consistent estimator of $v$ if \eqref{eq:mse_def} approaches 0 as $n \rightarrow \infty$. We will denote the root mean-squared error, $\text{RMSE}(\hat{v})$, as the square root of the MSE. 

%In order to distinguish the pre- and post-intervention periods, we use the following notation for all (donor) matrices: $\bA = [\bA^{-}, \bA^{+}]$, where $\bA^{-} = [A_{ij}]_{2 \le i \le N, j \in [T_0]}$ and $\bA^{+} = [A_{ij}]_{2 \le i \le N, T_0 < j \le T}$ denote the pre- and post-intervention submatrices, respectively. We will define vectors in the same manner, i.e. $v = [v^{-}, v^{+}]$ where $v^{-} = [v_{i}]_{i \in [T_0]}$ and $v^{+} = [v_i]_{T_0 < i \le T}$ denote the pre- and post-intervention subvectors, respectively. Moreover, we will denote all vectors related to the treatment unit with the subscript ``1'', e.g. $v_1$. 

% Model
\subsection{Model.} 
The data at hand is a collection of time series with respect to an aggregated metric of interest (e.g. violent crime rates) comprised of both the treated unit and the donor pool outcomes. Suppose we observe $N \geq 2$ units across $T \geq 2$ time periods. We denote $T_0$ as the number of pre-intervention periods with $1 \leq T_0 < T$, rendering $T - T_0$ as the length of the post-intervention stage. Without loss of generality, let the first unit represent the treatment unit -- exposed to the intervention of interest at time $t = T_0 + 1$. The remaining donor units, $2\leq i\leq N$, are unaffected by the intervention for the entire time period $[T]=\{1,\dots, T\}$.

Let $X_{it}$ denote the measured value of metric for unit $i$ at time $t$. We posit
\begin{align} \label{eq:1}
	X_{it} &= M_{it} + \epsilon_{it},
\end{align}
where $M_{it}$ is the deterministic mean while the random variables $\epsilon_{it}$ represent zero-mean noise that are independent across $i,t$. Following the philosophy of latent variable models \cite{usvt, lee_song, aldous, hoover1, hoover2}, we further posit that for all $2 \le i \le N$, $t \in [T]$
\begin{align} \label{eq:2}
	M_{it} &= f(\theta_i, \rho_t),
\end{align}
where $\theta_i \in {\mathbb R}^{d_1}$ and $\rho_t \in {\mathbb R}^{d_2}$ are latent feature vectors capturing unit and time specific information, respectively, for some $d_1, d_2 \geq 1$; the latent function $f: {\mathbb R}^{d_1} \times {\mathbb R}^{d_2} \to {\mathbb R}$ captures the model relationship. We note that this formulation subsumes popular econometric factor models, such as the one presented in \cite{abadie1}, as a special case with (small) constants $d_1 = d_2$ and $f$ as a bilinear function. 

The treatment unit obeys the same model relationship during the pre-intervention period. That is, for $t \leq T_0$ 
\begin{align} \label{eq:1a}
	X_{1t} & = M_{1t} + \epsilon_{1t}, 
\end{align} 
where $M_{1t} = f(\theta_1, \rho_t)$ for some latent parameter $\theta_1 \in {\mathbb R}^{d_1}$. If unit one was never exposed to the intervention, then the same relationship as \eqref{eq:1a} would continue to hold during the post-intervention period as well. In essence, we are assuming that the outcome random variables for {\em all} unaffected units follow the model relationship defined by \eqref{eq:1a} and \eqref{eq:1}. Therefore, the ``synthetic control'' would ideally help estimate the underlying counterfactual means $M_{1t} = f(\theta_1, \rho_t)$ for $T_0 < t \le T$ by using an appropriate combination of the post-intervention observations from the donor pool since the donor units are immune to the treatment.

To render this feasible, we make the key operating assumption (as done in literature cf. \cite{abadie1, abadie2, abadie3}) that the mean vector of the treatment unit over the pre-intervention period, i.e. the vector $M_{1}^- = [M_{1t}]_{t\leq T_0}$, lies within the span of the mean vectors within the donor pool over the pre-intervention period, i.e. the span of the donor mean vectors $M_{i}^- = [M_{it}]_{2\le i \le N, t \le T_0}$ \footnote{We note that this is a minor departure from the literature on synthetic control starting in \cite{abadie3} -- in literature, the pre-intervention {\em noisy} observation (rather than the mean) vector $X_{1}$, is assumed to be a {\em convex} (rather than linear) combination of the noisy donor observations. We believe our setup is more reasonable since we do not want to fit noise.}. More precisely, we assume there exists a set of weights $\beta^* \in {\mathbb R}^{N-1}$ such that for all $t\leq T_0$,
\begin{equation}\label{eq:3}
	M_{1t} = \sum_{i=2}^N \beta_i^* M_{it}.
\end{equation} 
This is a reasonable and intuitive assumption, utilized in literature, hypothesizing that the treatment unit can be modeled as some combination of the donor pool. In fact, the set of weights $\beta^*$ are the very definition of a synthetic control. 

In order to distinguish the pre- and post-intervention periods, we use the following notation for all (donor) matrices: $\bA = [\bA^{-}, \bA^{+}]$, where $\bA^{-} = [A_{ij}]_{2 \le i \le N, j \in [T_0]}$ and $\bA^{+} = [A_{ij}]_{2 \le i \le N, T_0 < j \le T}$ denote the pre- and post-intervention submatrices, respectively; vectors will be defined in the same manner, i.e. $A_i = [A_i^{-}, A_i^{+}]$, where $A_i^{-} = [A_{it}]_{t \in [T_0]}$ and $A_i^{+} = [A_{it}]_{T_0 < t \le T}$ denote the pre- and post-intervention subvectors, respectively, for the $i$th donor. Moreover, we will denote all vectors related to the treatment unit with the subscript ``1'', e.g. $A_1 = [A_1^{-}, A_1^{+}]$.

In contrast with the classical synthetic control work, we allow our model to be robust to incomplete observations. To model randomly missing data, the algorithm observes each data point $X_{it}$ in the donor pool with probability $p \in (0, 1]$, independently of all other entries. While the assumption that $p$ is constant across all rows and columns of our observation matrix is standard in literature, our results remain valid even in situations where the probability of observation is dependent on the row and column latent parameters, i.e. $p_{ij} = g(\theta_i, \rho_j) \in (0, 1]$. In such situations, $p_{ij}$ can be estimated as $\hat{p}_{ij}$ using consistent graphon estimation techniques described in a growing body of literature, e.g. see \cite{lee2, usvt, wolfe1,yang1}. These estimates can then be used in our analysis presented in Section \ref{sec:results}.

\section{Algorithm} \label{sec:algorithm}

\subsection{Intuition.}
We begin by exploring the intuition behind our proposed two-step algorithm: (1) \textit{de-noising the data:} since the singular values of our observation matrix, $\bX = [X_{it}]_{2 \le i \le N, t \in [T]}$, encode both signal and noise, we aim to discover a low rank approximation of $\bX$ that only incorporates the singular values associated with useful information; simultaneously, this procedure will naturally impute any missing observations. We note that this procedure is similar to the algorithm proposed in \cite{usvt}. (2) \textit{learning $\beta^*$:} using the pre-intervention portion of the de-noised matrix, we learn the linear relationship between the treatment unit and the donor pool prior to estimating the post-intervention counterfactual outcomes. Since our objective is to produce accurate predictions, it is not obvious why the synthetic treatment unit should be a convex combination of its donor pool as assumed in \cite{abadie1, abadie3, abadie4}. In fact, one can reasonably expect that the treatment unit and some of the donor units may exhibit negative correlations with one another. In light of this intuition, we learn the optimal set of weights via linear regression, allowing for both positive and negative elements. 

%Finally, (3) \textit{paired-sample hypothesis test:} we obtain a t-statistic, $\tau$, for a paired-sample hypothesis test and determine if it is above/below a significance threshold to conclude that the two post-intervention samples, i.e. actual and predicted counter-factual, are statistically indistinguishable (null hypothesis) or not (alternate hypothesis). Note that the paired-sample t-test assumes independence, which is satisfied by our model where the errors are assumed to be independent across time and location. \\

\subsection{Robust algorithm (algorithm \ref{euclid}).} \label{sec:robust_algo}
We present the details of our robust method in Algorithm \ref{euclid} below. The algorithm utilizes two hyperparameters: (1) a thresholding hyperparameter $\mu \geq 0$, which serves as a knob to effectively trade-off between the bias and variance of the estimator, and (2) a regularization hyperameter $\eta \ge 0$ that controls the model complexity. We discuss the procedure for determining the hyperparameters in Section \ref{sec:fineprint}. To simplify the exposition, we assume the entries of $\bX$ are bounded by one in absolute value, i.e. $ \abs{X_{it} } \leq 1$.

%\noindent
%\textit{Note}: To simplify the exposition, we assume the entries of $\bX$ are bounded by one in absolute value, i.e. $ \abs{X_{it} } \leq 1$. 

\begin{algorithm} [H]
\caption{Robust synthetic control}\label{euclid}
\noindent \\

{\bf Step 1. De-noising the data: singular value thresholding (inspired by \cite{usvt}).} \\
\begin{enumerate}
	\item Define $\bY = [Y_{it}]_{2 \le i \le N, t \in [T]}$ with
	\begin{align}
		Y_{it} = \begin{cases}
		X_{it} & \text{if}\ X_{it} \text{ is observed}, \\
		0 & \text{otherwise.}
		\end{cases}
	\end{align}

	\item Compute the singular value decomposition of $\bY$:
	\begin{align}
		\bY  = \sum_{i=1}^{N-1} s_i u_i v_i^T.
	\end{align}

	\item Let $S = \{ i : s_i \geq \mu\}$ be the set of singular values above the threshold $\mu$.

	\item Define the estimator of $\bM$ as 
	\begin{align} \label{eq:p_hat}
		\bhM & = \dfrac{1}{\hat{p}} \sum_{i \in S} s_i u_i v_i^T,
	\end{align}
	where $\hat{p}$ is the maximum of the fraction of observed entries in $\bX$ and $\frac{1}{(N-1)T}$.	
\end{enumerate} 

\noindent \\
{\bf Step 2. Learning and projecting} \\

\begin{enumerate}
	\item For any $\eta \ge 0$, let
	\begin{align} 
		\hat{\beta}(\eta) &= \argmin_{v \in \mathbb{R}^{N-1}} \norm{Y_1^{-} - (\bhM^{-})^T v}^2 + \eta \norm{v}^2 \label{eq:ls}.
	\end{align}
	
	\item Define the counterfactual means for the treatment unit as 
	\begin{align}
		\hat{M}_1 & = \bhM^T \hat{\beta}(\eta).
	\end{align}
\end{enumerate} 

%\noindent \\
%{\bf Step 3. Paired-Sample Hypothesis Test} 
%\begin{enumerate} 
%	\item Calculate the average treatment effect, post-intervention: 
%	\begin{align} \label{eq:treatment1}
%		\bar{x}_{diff} &= \frac{1}{T - T_0} \sum_{t >T_0}^{T} ( M_{1t} - \hat{M}_{1t})
%	\end{align}
%	
%	\item Calculate the sample standard deviation:
%	\begin{align} \label{eq:treatment2}
%		s_{diff} &= \sqrt{ \frac{1}{(T - T_0 - 1)} \sum_{t >T_0}^{T} ( M_{1t} - \hat{M}_{1t} - \bar{x}_{diff} )^2 }
%	\end{align}
%	
%	\item Calculate the test statistic, $\tau_{df}$, where $df = 2(T - T_0 - 1)$ degrees of freedom:
%	\begin{align}
%		\tau_{df} &= \frac{\bar{x}_{diff} - 0}{s_{diff} / \sqrt{T - T_0} }
%	\end{align}
%	
%	\item Using the CDF of a t-distribution for $df$ degrees of freedom, $T_{df}$, calculate the p-value as: 
%	\begin{align}
%		p &=  \mathbb{P}(T_{df} > | \tau_{df} | )
%	\end{align}
%	
%	\item If $p < \alpha$, where $\alpha$ is the significance level of the test, reject the null hypothesis (no effect of intervention).
%	
%	%\item Determine if the statistic, $\tau$, is above/below a threshold to conclude that the two post-interventions, i.e. actual and predicted counter-factual, are statistically indistinguishable (null hypothesis) or not (alternate hypothesis).
%\end{enumerate}
\end{algorithm}

\subsection{Bayesian algorithm: measuring uncertainty (algorithm \ref{euclid:bayesian}).}
In order to quantitatively assess the uncertainty of our model, we will transition from a frequentist perspective to a Bayesian viewpoint. As commonly assumed in literature, we consider a zero-mean, isotropic Gaussian noise model (i.e. $\epsilon \distas{} \mathcal{N}(0, \sigma^2 \bI)$) and use the square loss for our cost function. We present the Bayesian method as Algorithm \ref{euclid:bayesian}. Note that we perform step one of our robust algorithm exactly as in Algorithm \ref{euclid}; as a result, we only detail the alterations of step two in the Bayesian version (Algorithm \ref{euclid:bayesian}).

\begin{algorithm}[H]
\caption{Bayesian robust synthetic control}\label{euclid:bayesian}
	\noindent \\
	
	{\bf Step 2. Learning and projecting} \\
	\begin{enumerate}
		\item Estimate the noise variance via (bias-corrected) maximum likelihood, i.e.
		\begin{align} \label{eq:sample_variance}
			\hat{\sigma}^2 &= \dfrac{1}{T_0 - 1} \sum_{t=1}^{T_0} (Y_{1t} - \bar{Y})^2, 
		\end{align}
		where $\bar{Y}$ denotes the pre-intervention sample mean.
		
		\item Compute posterior distribution parameters for an appropriate choice of the prior $\alpha$:
		\begin{align}
			\bSigma_D &= \Big(\dfrac{1}{\hat{\sigma}^2} \bhM^{-} (\bhM^{-})^T +  \alpha \bI \Big)^{-1} \label{eq:post_cov}
			\\ \beta_D &= \dfrac{1}{\hat{\sigma}^2} \bSigma_D\bhM^{-} Y_1^{-} \label{eq:post_mean}.
		\end{align} 
		
		\item Define the counterfactual means for the treatment unit as 
		\begin{align}
			\hat{M}_1 & = \bhM^T \beta_D.
		\end{align}
		
		\item For each time instance $t \in [T]$, compute the model uncertainty (variance) as
		\begin{align}
			\sigma^2_D(\hat{M}_{\cdot, t}) &= \hat{\sigma}^2 + \hat{M}_{\cdot, t}^T \bSigma_D \hat{M}_{\cdot, t},
		\end{align}
		where $\hat{M}_{\cdot, t} = [\hat{M}_{it}]_{2 \le i \le N}$ is the de-noised vector of donor outcomes at time $t$. 
	\end{enumerate}
	\noindent \\
%{\bf Step 3. Paired-Sample Bayesian Hypothesis Test} 
%\begin{enumerate} 
%	\item Determine the posterior distributions, $D_o, D_u$, of both the observed and unobserved counter-factual series, respectively.
%	\item Perform a Bayesian hypothesis test of difference between $D_o$ and $D_u$, such as the MCMC-method suggested in \cite{kruschke1}. Alternatively, determine the uncertainty envelope around each estimated value using the estimated posterior standard deviation to decide if the actual observations lie within (partially or fully) the uncertainty envelop.
%
%\end{enumerate}
	
\end{algorithm}

\subsection{Algorithmic features: the fine print.} \label{sec:fineprint}

\subsubsection{Bounded entries transformation.}
Several of our results, as well as the algorithm we propose, assume that the observation matrix is bounded such that $\abs{X_{it}} \leq 1$. For any data matrix, we can achieve this by using the following pre-processing transformation: suppose the entries of $\bX$ belong to an interval $[a,b]$. Then, one can first pre-process the matrix $\bX$ by subtracting $(a+b)/2$ from each entry, and dividing by $(b-a)/2$ to enforce that the entries lie in the range $[-1, 1]$. The reverse transformation, which can be applied at the end of the algorithm description above, returns a matrix with values contained in the original range. Specifically, the reverse transformation equates to multiplying the end result by $(b-a)/2$ and adding by $(a + b)/2$. 

\subsubsection{Solution interpretability.}
For the practitioner who seeks a more interpretable solution, e.g. a convex combination of donors as per the original synthetic control estimator of Abadie et. al, we recommend using an $\ell_1$-regularization penalty in the learning procedure of step 2. Due to the geometry of LASSO, the resulting estimator will be often be a sparse vector. Specifically, for any $\eta > 0$, we define the LASSO estimator to be
\[ \hat{\beta}(\eta) = \argmin_{v \in \mathbb{R}^{N-1}} \norm{Y_1^{-} - (\bhM^{-})^T v}^2 + \eta \norm{v}_1. \]

\subsubsection{Choosing the hyperparameters.} Here, we discuss several approaches to choosing the hyperparameter $\mu$ for the singular values. If it is known a priori that the underlying model is low rank with rank at most $k$, then it may make sense to choose $\mu$ such that $\abs{S} = k$. A data driven approach, however, could be implemented based on cross-validation. Precisely, reserve a portion of the pre-intervention period for validation, and use the rest of the pre-intervention data to produce an estimate $\hat{\beta}(\eta)$ for each of the finitely many choices of $\mu$ ($s_1, \dots, s_{N-1}$). Using each estimate $\hat{\beta}(\eta)$, produce its corresponding treatment unit mean vector over the validation period. Then, select the $\mu$ that achieves the minimum MSE with respect to the observed data. Finally, \cite{usvt} provides a universal approach to picking a threshold; similarly, we also propose another such universal threshold, \eqref{eq:goldilocks}, in Section \ref{sec:imputation}. We utilize the data driven approach in our experiments in this work.

The regularization parameter, $\eta$, also plays a crucial role in learning the synthetic control and influences both the training and generalization errors. As is often the case in model selection, a popular strategy in estimating the ideal $\eta$ is to employ cross-validation as described above. However, since time-series data often have a natural temporal ordering with causal effects, we also recommend employing the forward chaining strategy. Although the forward chaining strategy is similar to leave-one-out (LOO) cross-validation, an important distinction is that forward chaining does not break the temporal ordering in the training data. More specifically, for a particular candidate of $\eta$ at every iteration $t \in [T_0]$, the learning process uses $[Y_{11}, \dots, Y_{1,t-1}]$ as the training portion while reserving $Y_{1t}$ as the validation point. As before, the average error is then computed and used to evaluate the model (characterized by the choice of $\eta$). The forward chaining strategy can also be used to learn the optimal $\mu$.

\subsubsection{Scalability.}
In terms of scalability, the most computationally demanding procedure is that of evaluating the singular value decomposition (SVD) of the observation matrix. Given the ubiquity of SVD methods in the realm of machine learning, there are well-known techniques that enable computational and storage scaling for SVD algorithms. For instance, both Spark (through alternative least squares) and Tensor-Flow come with built-in SVD implementations. As a result, by utilizing the appropriate computational infrastructure, our de-noising procedure, and algorithm in generality, can scale quite well. Also note that for a low rank structure, we typically only need to compute the top few singular values and vectors. Various truncated-SVD algorithms provide resource-efficient implementations to compute the top $k$ singular values and vectors instead of the complete-SVD.

\subsubsection{Low rank hypothesis.}
The factor models that are commonly used in the Econometrics literature, cf. \cite{abadie1, abadie2, abadie3}, often lead to a low rank structure for the underlying mean matrix $\bM$. When $f$ is nonlinear, $\bM$ can still be well approximated by a low rank matrix for a large class of functions. For instance, if the latent parameters assumed values from a bounded, compact set, and if $f$ was Lipschitz continuous, then it can be argued that $\bM$ is well approximated by a low rank matrix, cf. see \cite{usvt} for a very simple proof. As the reader will notice, while we establish results for low rank matrix, the results of this work are robust to low rank approximations whereby the approximation error can be viewed as ``noise''. Lastly, as shown in \cite{log_rank}, many latent variable models can be well approximated (up to arbitrary accuracy $\epsilon$) by low rank matrices. Specifically, \cite{log_rank} shows that the corresponding low rank approximation matrices associated with ``nice'' functions (e.g. linear functions, polynomials, kernels, etc.) are of log-rank. 

\subsubsection{Covariate information.}
Although the algorithm does not appear to rely on any helpful covariate information and the experimental results, presented in Section \ref{sec:experiments}, suggest that it performs on par with that of the original synthetic control algorithm, we want to emphasize that we are not suggesting that practitioners should abandon the use of any additional covariate information or the application of domain knowledge. Rather, we believe that our key algorithmic feature -- the de-noising step -- may render covariates and domain expertise as luxuries as opposed to necessities for many practical applications. If the practitioner has access to supplementary predictor variables, we propose that step one of our algorithm be used as a pre-processing routine for de-noising the data before incorporating additional information. Moreover, other than the obvious benefit of narrowing the donor pool, domain expertise can also come in handy in various settings, such as determining the appropriate method for imputing the missing entries in the data. For instance, if it is known a priori that there is a trend or periodicity in the time series evolution for the units, it may behoove the practitioner to impute the missing entries using ``nearest-neighbors'' or linear interpolation.
%: Factor models and beyond linear $f$

% main results
\section{Theoretical Results} \label{sec:results}

In this section, we derive the finite sample and asymptotic properties of the estimators $\bhM$ and $\hat{M}_1$.  We begin by defining necessary notations and recalling a few operating assumptions prior to presenting the results, with the corresponding proofs relegated to the Appendix. To that end, we re-write \eqref{eq:1} in matrix form as $\bX = \bM + \bE$, where $\bE = [\epsilon_{it}]_{2\leq i\leq N, t \in [T]}$ denotes the noise matrix. We shall assume that the noise parameters $\epsilon_{it}$ are independent zero-mean random variables with bounded second moments. Specifically, for all $2\leq i\leq N, t \in [T]$, 
\begin{align}\label{eq:4}
	\mathbb{E}[ \epsilon_{it} ] &= 0, \quad \text{and} \quad \text{Var}(\epsilon_{it}) \le \sigma^2.
\end{align}
We shall also assume that the treatment unit noise in \eqref{eq:1a} obeys \eqref{eq:4}. Further, we assume the relationship in \eqref{eq:3} holds. To simplify the following exposition, we assume that $\abs{M_{ij}} \le 1$ and $\abs{X_{ij}} \le 1$.

As previously discussed, we evaluate the accuracy of our estimated means for the treatment unit with respect to the deviation between $\hat{M}_1$ and $M_1$ measured in $\ell_2$-norm, and similarly between $\bhM$ and $\bM$. Additionally, we aim to establish the validity of our pre-intervention linear model assumption (cf. \eqref{eq:3}) and investigate how the linear relationship translates over to the post-intervention regime, i.e. if $M_1^{-} = (\bM^{-})^T \beta^*$ for some $\beta^*$, does $M^{+}_1$ (approximately) equal to $(\bM^{+})^T \beta^*$? If so, under what conditions? We present our results for the above aspects after a brief motivation of $\ell_2$ regularization. 

% Combatting overfitting
\noindent {\bf Combatting overfitting.} One weapon to combat overfitting is to constrain the learning algorithm to limit the effective model complexity by fitting the data under a simpler hypothesis. This technique is known as regularization, and it has been widely used in practice. To employ regularization, we introduce a complexity penalty term into the objective function \eqref{eq:ls}. For a general regularizer, the objective function takes the form
\begin{align}
	\hat{\beta}(\eta) &= \argmin_{v \in \mathbb{R}^{N-1}} \norm{Y_1^{-} - (\bhM^{-})^Tv}^2 + \eta \sum_{j = 1}^{N-1} \abs{v_j}^q, \label{eq:obj_reg}
\end{align}
for some choice of positive constants $\eta$ and $q$. The first term measures the empirical error of the model on the given dataset, while the second term penalizes models that are too ``complex'' by controlling the ``smoothness'' of the model in order to avoid overfitting. In general, the impact/trade-off of regularization can be controlled by the value of the regularization parameter $\eta$. Via the use of Lagrange multipliers, we note that minimizing \eqref{eq:obj_reg} is equivalent to minimizing \eqref{eq:ls} subject to the constraint that 
\begin{align*}
	\sum_{j = 1}^{N-1} \abs{v_j}^q \le c,
\end{align*}
for some appropriate value of $c$. When $q = 2$, \eqref{eq:obj_reg} corresponds to the classical setup known as \textit{ridge regression} or \textit{weight decay}. The case of $q =1$ is known as the LASSO in the statistics literature; the $\ell_1$-norm regularization of LASSO is a popular heuristic for finding a sparse solution. In either case, incorporating an additional regularization term encourages the learning algorithm to output a simpler model with respect to some measure of complexity, which helps the algorithm avoid overfitting to the idiosyncrasies within the observed dataset. Although the training error may suffer from the simpler model, empirical studies have demonstrated that the generalization error can be greatly improved under this new setting. Throughout this section, we will primarily focus our attention on the case of $q = 2$, which maintains our learning objective to be (convex) quadratic in the parameter $v$ so that its exact minimizer can be found in closed form: 
\begin{align}
	\hat{\beta}(\eta) &= \Big(\bhM^{-} (\bhM^{-})^T + \eta \bI \Big)^{-1} \bhM^{-} Y_1^{-} \label{eq:beta_ridge}.
\end{align} 

% IMPUTATION ANALYSIS
\subsection{Imputation analysis.} \label{sec:imputation}

In this section, we highlight the importance of our de-noising procedure and prescribe a universal threshold (similar to that of \cite{usvt}) that dexterously distinguishes signal from noise, enabling the algorithm to capture the appropriate amount of useful information (encoded in the singular values of $\bY$) while discarding out the randomness. Due to its universality, the threshold naturally adapts to the amount of structure within $\bM$ in a purely data-driven manner. Specifically, for any choice of $\omega \in (0.1,1)$, we find that choosing
\begin{align} \label{eq:goldilocks}
	\mu &= (2 + \omega) \sqrt{T (\hat{\sigma}^2 \hat{p} + \hat{p} (1 - \hat{p}))},
\end{align}
results in an estimator with strong theoretical properties for both interpolation and extrapolation (discussed in Section \ref{sec:goldilocks}). Here, $\hat{p}$ and $\hat{\sigma}^2$ denote the unbiased maximum likelihood estimates of $p$ and $\sigma^2$, respectively, and can be computed via \eqref{eq:p_hat} and \eqref{eq:sample_variance}. 

The following Theorems (adapted from Theorems 2.1 and 2.7 of \cite{usvt}) demonstrate that Step 1 of our algorithm (detailed in Section \ref{sec:robust_algo}) accurately imputes missing entries within our data matrix $\bX$ when the signal matrix $\bM$ is either low rank or generated by an $\mathcal{L}$-Lipschitz function. In particular, Theorems \ref{thm:imputation_lowrank} and \ref{thm:imputation_lipschitz}  states that Step 1 produces a consistent estimator of the underlying mean matrix $\bM$ with respect to the (matrix) mean-squared-error, which is defined as
\begin{align} \label{eq:matrix_mse}
	\text{MSE}(\bhM) &=  \dfrac{1}{(N-1)T} \mathbb{E} \Big[ \sum_{i=2}^N \sum_{j=1}^T (\hat{M}_{ij} - M_{ij})^2 \Big]. 
\end{align}
We say that $\bhM$ is a consistent estimator of $\bM$ if the right-hand side of \eqref{eq:matrix_mse} converges to zero as $N$ and $T$ grow without bound.

The following theorem demonstrates that $\bhM$ is a good estimate of $\bM$ when $\bM$ is a low rank matrix, particularly when the rank of $\bM$ is small compared to $(N-1)p$.  \\

\begin{thm} \label{thm:imputation_lowrank} {\em (\textbf{Theorem 2.1 of \cite{usvt}})}
Suppose that $\bM$ is rank $k$. Suppose that $p \ge \frac{T^{-1 + \zeta}}{\sigma^2 + 1}$ for some $\zeta > 0$. Then using $\mu$ as defined in \eqref{eq:goldilocks}, 
	\begin{align}
		\emph{MSE}(\bhM) &\le C_1 \sqrt{\dfrac{k}{(N-1) p}} + \mathcal{O}\Big( \frac{1}{(N-1)T} \Big),
	\end{align}
	where $C_1$ is a universal positive constant.
\end{thm}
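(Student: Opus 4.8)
This statement is quoted directly from Theorem~2.1 of \cite{usvt}, so the plan is to recognize our observation model as a special case of Chatterjee's and invoke that result; below I sketch the reduction and the underlying argument, and flag which step carries the weight. The first step is to put the partially-observed noisy matrix into the form that universal singular value thresholding handles. Writing $\delta_{it}$ for the indicator that entry $(i,t)$ is observed, $Y_{it} = \delta_{it}(M_{it}+\epsilon_{it})$, so $\E[Y_{it}] = p\,M_{it}$ and $\boldsymbol{W} := \bY - p\bM$ has independent, mean-zero entries with $|W_{it}| \le 2$ (since $|M_{it}|,|X_{it}|\le 1$) and per-entry variance at most $\sigma^2 p + p(1-p)$. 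This is exactly the generalized model of \cite{usvt}; the threshold \eqref{eq:goldilocks} is the instantiation of Chatterjee's universal threshold with the plug-in variance proxy $\hat\sigma^2\hat p + \hat p(1-\hat p)$ for $\sigma^2 p + p(1-p)$, and the rescaling by $1/\hat p$ in \eqref{eq:p_hat} together with the concentration of $\hat p$ around $p$ is handled as there. It then remains only to note that the hypotheses hold: the noise obeys \eqref{eq:4}, the entries are bounded, and $p = \Omega(T^{-1+\zeta}/(\sigma^2+1))$ is precisely the sampling-rate condition needed below.

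For a self-contained argument I would reproduce the three ingredients of the \cite{usvt} proof. (i) \emph{Spectral concentration.} A non-asymptotic operator-norm bound for matrices with independent, heteroscedastic, bounded entries (following Seginer / Bandeira--van Handel) gives $\E\norm{\boldsymbol{W}} \lesssim \sqrt{T(\sigma^2 p + p(1-p))}$, and Talagrand concentration of $\norm{\boldsymbol{W}}$ around its mean --- with the slack $\omega$ in \eqref{eq:goldilocks} as deviation budget --- yields $\norm{\boldsymbol{W}} \le \mu$ except on an event of probability $\exp(-c\,\omega^2\,T(\sigma^2 p + p(1-p)))$; the hypothesis $p = \Omega(T^{-1+\zeta}/(\sigma^2+1))$ forces $T(\sigma^2 p + p(1-p)) = \Theta(Tp(\sigma^2+1)) \gtrsim T^{\zeta}$, making this failure probability superpolynomially small. (ii) \emph{Deterministic thresholding bound on the good event $\{\norm{\boldsymbol{W}}\le\mu\}$.} Since $\bM$ has rank $k$, Weyl's inequality gives $s_i(\bY) \le s_i(p\bM)+\norm{\boldsymbol{W}} = \norm{\boldsymbol{W}} \le \mu$ for every $i>k$, so $|S|\le k$; then $\hat p\,\bhM - \bY = -\sum_{i\notin S} s_i u_i v_i^T$ has operator norm $\le\mu$ while $\bY-p\bM=\boldsymbol{W}$ has operator norm $\le\mu$, so $\hat p\,\bhM - p\bM$ has rank at most $2k$ and operator norm at most $2\mu$, giving $\norm{\hat p\,\bhM - p\bM}_F^2 \le 8k\mu^2$. (iii) \emph{Rescaling and averaging.} Writing $\bhM-\bM = \hat p^{-1}\big((\hat p\,\bhM - p\bM) + (p-\hat p)\bM\big)$, using $\norm{\bM}_F \le \sqrt{(N-1)T}$, $\E[(\hat p-p)^2] = O(p/((N-1)T))$ and $\mu^2 = O(Tp(\sigma^2+1))$, adding the negligible bad-event contribution from (i) (crudely bounded via $\hat p \ge 1/((N-1)T)$), and dividing by $(N-1)T$ yields $\text{MSE}(\bhM) = O\big(k(\sigma^2+1)/((N-1)p)\big)$ up to the lower-order term stated; in the regime of interest $k/((N-1)p) = O(1)$ this is loosened to the claimed $C_1\sqrt{k/((N-1)p)}$, the sharper square-root form with a genuinely universal $C_1$ being the nuclear-norm refinement carried out in \cite{usvt}, which I would cite rather than redo.

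The main obstacle is ingredient (i): obtaining an operator-norm bound for $\boldsymbol{W}$ --- a random matrix with independent, heteroscedastic, merely bounded (not sub-Gaussian) entries --- that is sharp both in its leading constant, so that $(2+\omega)\sqrt{T\cdot(\text{variance})}$ is an upper bound, and in its tail, so that the rare event on which the thresholded estimator misbehaves does not dominate the averaged error. This is exactly where $p = \Omega(T^{-1+\zeta}/(\sigma^2+1))$ is indispensable: it keeps the effective per-entry variance large enough ($T$ times it being $\gtrsim T^{\zeta}\to\infty$) that the concentration tail beats the $1/((N-1)T)$ the bad-event term must clear. Since Theorem~\ref{thm:imputation_lowrank} is quoted verbatim from \cite{usvt}, the actual write-up reduces to verifying the hypotheses above and identifying \eqref{eq:goldilocks} with Chatterjee's threshold, then invoking his Theorem~2.1.
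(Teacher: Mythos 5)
Your proposal is sound in its main plan (reduce to the observation model of \cite{usvt} and invoke/adapt its Theorem 2.1), which is indeed the spirit of the statement, but the self-contained argument you sketch takes a genuinely different route from the paper at the key step. On the good event $\{\norm{\bY-p\bM}\le\mu\}$ you use the hard-thresholding argument: $|S|\le k$ by Weyl, $\mathrm{rank}(\hat p\,\bhM-p\bM)\le 2k$, $\norm{\hat p\,\bhM-p\bM}\le 2\mu$, hence a Frobenius bound $8k\mu^2$ and, after rescaling, $\mathrm{MSE}(\bhM)=\mathcal{O}\bigl(k(\sigma^2+1)/((N-1)p)\bigr)$. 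The paper instead routes everything through the nuclear-norm lemma (Lemma 3.5 of \cite{usvt}, restated as Lemma \ref{lemma:usvt_key_lemma}), first proving the general bound $\mathrm{MSE}(\bhM)\le C\norm{\bM}_*/((N-1)\sqrt{Tp})+\mathcal{O}(1/((N-1)T))$ (Lemma \ref{lemma:imputation}, with the same events $E_1,E_2,E_3$ and $\hat p$, $\hat\sigma^2$ bookkeeping you describe) and then specializing via $\norm{\bM}_*\le\sqrt{k(N-1)T}$; this is what produces the stated $\sqrt{k/((N-1)p)}$ with a universal constant in all regimes. Your rank-based bound is more elementary and is actually \emph{sharper} when $k\ll (N-1)p$, but it does not imply the stated square-root bound when $k\gtrsim(N-1)p$, and your ``loosening'' step is only valid under the extra assumption $k/((N-1)p)=\mathcal{O}(1)$ — you correctly flag this and fall back on citing \cite{usvt} for the square-root form, which is acceptable here since the theorem is explicitly an adaptation of that result, but be aware that as a standalone proof of the inequality as written your sketch has that gap, and closing it requires exactly the nuclear-norm refinement the paper reproduces. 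The concentration ingredients (Bernstein for $\hat p$, Hoeffding for $\hat\sigma^2$, the Talagrand-type operator-norm bound for $\bY-p\bM$, and the crude bad-event bound via $\hat p\ge 1/((N-1)T)$) match the paper's treatment.
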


Suppose that the latent row and column feature vectors, $\{\theta_i\}$ and $\{\rho_j\}$, belong to some bounded, closed sets $K \subset \mathbb{R}^d$, where $d$ is some arbitrary but fixed dimension. If we assume $f: K \times K \rightarrow [-1, 1]$ possesses desirable smoothness properties such as Lipschitzness, then $\bhM$ is again a good estimate of $\bM$. \\

\begin{thm} \label{thm:imputation_lipschitz} {\em (\textbf{Theorem 2.7 of \cite{usvt}})}
	Suppose $f$ is a $\mathcal{L}$-Lipschitz function. Suppose that $p \ge \frac{T^{-1 + \zeta}}{\sigma^2 + 1}$ for some $\zeta > 0$. Then using $\mu$ as defined in \eqref{eq:goldilocks}, 
	\begin{align}
		\emph{MSE}(\bhM) &\le C(K, d, \mathcal{L}) \dfrac{(N-1)^{-\frac{1}{d+2}}}{\sqrt{p}} + \mathcal{O}\Big( \frac{1}{(N-1)T} \Big),
	\end{align}
	where $C(K, d, \mathcal{L})$ is a constant depending on $K, d,$ and $\mathcal{L}$.
\end{thm}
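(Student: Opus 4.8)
The plan is to deduce the Lipschitz case from the already-established low-rank case (Theorem~\ref{thm:imputation_lowrank}) by discretizing the latent space, in the spirit of the argument in \cite{usvt}. \emph{Step 1 (low-rank surrogate).} First I would replace $f$ by a piecewise-constant approximation. Since the latent vectors $\{\theta_i\}$ and $\{\rho_j\}$ all lie in the bounded, closed set $K \subset \mathbb{R}^d$, for any mesh width $\epsilon > 0$ one can partition $K$ into $r_\epsilon \le (1 + c_K \mathcal{L}/\epsilon)^d$ cells of diameter at most $\epsilon/\mathcal{L}$, where $c_K$ depends only on $\mathrm{diam}(K)$ and $d$. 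Fixing a representative point in each cell and letting $f_\epsilon$ be the induced piecewise-constant function, $\mathcal{L}$-Lipschitzness gives $\norm{f - f_\epsilon}_{\infty} \le \epsilon$ (absolute constants absorbed into $\epsilon$). The matrix $\bM_\epsilon = [f_\epsilon(\theta_i,\rho_j)]$ then has at most $r_\epsilon$ distinct rows, so $\mathrm{rank}(\bM_\epsilon) \le r_\epsilon$; moreover its entries lie in $[-1,1]$ and $\norm{\bM - \bM_\epsilon}_{\max} \le \epsilon$, hence $\norm{\bM - \bM_\epsilon}_F \le \epsilon\sqrt{(N-1)T}$.

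\emph{Step 2 (error bound with an approximation penalty).} Writing $\bM = \bM_\epsilon + \boldsymbol{\Delta}$ with $\norm{\boldsymbol{\Delta}}_{\max} \le \epsilon$, the zero-filled observation matrix decomposes as $\bY = \bM_\epsilon^{\text{obs}} + \boldsymbol{\Delta}^{\text{obs}}$, where the superscript denotes the subsampled, zero-filled version. I would then revisit the analysis behind Theorem~\ref{thm:imputation_lowrank} under this decomposition: the operator-norm concentration of $\bM_\epsilon^{\text{obs}} - p\bM_\epsilon$, together with the handling of $\hat{p}$ and $\hat{\sigma}^2$ inside the universal threshold~\eqref{eq:goldilocks}, carries over essentially verbatim and contributes the rank-$r_\epsilon$ term $C\sqrt{r_\epsilon/((N-1)p)}$, while the deterministic perturbation $\boldsymbol{\Delta}$ is observed only on a $p$-fraction of its entries before being rescaled by $1/\hat{p} \approx 1/p$, so (using the same Weyl/Wedin-type perturbation inequalities) its net contribution to $\norm{\bhM - \bM}_F$ is of order $\norm{\boldsymbol{\Delta}}_F/\sqrt{p} \asymp \epsilon\sqrt{(N-1)T/p}$. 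Combining these,
\begin{align}
	\text{MSE}(\bhM) \;\le\; C\sqrt{\frac{r_\epsilon}{(N-1)p}} \;+\; C'\,\frac{\epsilon}{\sqrt{p}} \;+\; \mathcal{O}\!\Big(\frac{1}{(N-1)T}\Big).
\end{align}

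\emph{Step 3 (optimize the mesh).} Finally, with $r_\epsilon \asymp (\mathcal{L}/\epsilon)^d$, the two leading terms balance when $\sqrt{(\mathcal{L}/\epsilon)^d/((N-1)p)} \asymp \epsilon/\sqrt{p}$, i.e. when $\epsilon \asymp \mathcal{L}^{d/(d+2)}(N-1)^{-1/(d+2)}$; substituting this choice back yields $\text{MSE}(\bhM) \le C(K,d,\mathcal{L})(N-1)^{-1/(d+2)}p^{-1/2} + \mathcal{O}(1/((N-1)T))$, which is the claim. The hypothesis $p \ge T^{-1+\zeta}/(\sigma^2+1)$ is precisely what is needed so that the low-rank machinery applies to the surrogate $\bM_\epsilon$ over the relevant range of $\epsilon$.

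The main obstacle is Step~2: one must verify that the max-norm approximation error really propagates as $\epsilon/\sqrt{p}$ (and not, say, $\epsilon/p$) through the singular value thresholding and the $1/\hat{p}$ rescaling. This requires carefully tracking the interplay between the deterministic bias $\boldsymbol{\Delta}$, the random observation mask, and the spectral truncation — exactly the technical core of the Lipschitz bound in \cite{usvt}; everything else is bookkeeping layered on top of Theorem~\ref{thm:imputation_lowrank}.
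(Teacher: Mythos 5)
Your overall strategy --- discretize the latent space, trade the rank of a piecewise-constant surrogate against a sup-norm approximation error, and optimize the mesh at $\epsilon \asymp (N-1)^{-1/(d+2)}$ --- is the same discretization idea underlying the paper's proof, and your Step 3 arithmetic reproduces the paper's choice $\delta = (N-1)^{-1/(d+2)}$. The gap is exactly where you flag it, in Step 2, and it is a genuine one rather than bookkeeping. First, your propagation claim is internally inconsistent: if the perturbation contributes $\norm{\boldsymbol{\Delta}}_F/\sqrt{p} \asymp \epsilon\sqrt{(N-1)T/p}$ to $\norm{\bhM - \bM}_F$, then after squaring and dividing by $(N-1)T$ it contributes $\epsilon^2/p$ to $\text{MSE}(\bhM)$, not the $\epsilon/\sqrt{p}$ you display; balancing $\sqrt{r_\epsilon/((N-1)p)}$ against $\epsilon^2/p$ forces $\epsilon \asymp (N-1)^{-1/(d+4)}p^{1/(d+4)}$ and an MSE of order $(N-1)^{-2/(d+4)}p^{2/(d+4)-1}$, which for $d \ge 1$ has a strictly worse $p$-dependence than the claimed $p^{-1/2}$. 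Second, the assertion that the low-rank machinery "carries over essentially verbatim" to $\bM_\epsilon$ does not hold: the thresholding step (Lemma \ref{lemma:prescription_threshold}, Lemma \ref{lemma:usvt_key_lemma}) requires the threshold $\mu \asymp \sqrt{Tq}$ to dominate $\norm{\bY - p\bM_\epsilon}$, and this operator norm now contains the deterministic bias $p\norm{\boldsymbol{\Delta}}$, which can be as large as $p\,\epsilon\sqrt{(N-1)T}$. At your optimized $\epsilon \asymp (N-1)^{-1/(d+2)}$ this exceeds $\sqrt{Tq}$ unless $p \lesssim (N-1)^{-d/(d+2)}$, so for constant $p$ and large $N$ the surrogate-plus-perturbation analysis breaks precisely at its crux.

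The paper closes the argument by never splitting $\bM$ inside the spectral analysis. It first proves a bound that is \emph{linear} in the nuclear norm of the true mean matrix, $\text{MSE}(\bhM) \le C_1\norm{\bM}_*/((N-1)\sqrt{Tp}) + \mathcal{O}(1/((N-1)T))$ (Lemma \ref{lemma:imputation}, an application of Lemma \ref{lemma:usvt_key_lemma} with $\bB = p\bM$, so the threshold condition only ever involves $\norm{\bY - p\bM}$, which is controlled by Theorem \ref{thm:talagrand}). The discretization then enters only through a purely deterministic estimate (Lemma \ref{lemma:usvt_lipschitz}): $\norm{\bM}_* \le \norm{\bM - \bM_\epsilon}_* + \norm{\bM_\epsilon}_* \le \sqrt{N-1}\,\norm{\bM - \bM_\epsilon}_F + \sqrt{r_\epsilon}\,\norm{\bM_\epsilon}_F \le \epsilon(N-1)\sqrt{T} + C(K,d,\mathcal{L})\sqrt{(N-1)T\epsilon^{-d}}$. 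Because the MSE bound is linear in $\norm{\bM}_*$, the approximation error enters linearly as $\epsilon/\sqrt{p}$ --- this is the mechanism that actually produces the term you wanted in Step 2 --- and substituting $\epsilon = (N-1)^{-1/(d+2)}$ yields the theorem. To repair your outline, replace Step 2 by this nuclear-norm bound fed into Lemma \ref{lemma:imputation}, rather than running Theorem \ref{thm:imputation_lowrank} on the surrogate and tracking $\boldsymbol{\Delta}$ through the thresholding.
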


It is important to observe that the models under consideration for both Theorems \ref{thm:imputation_lowrank} and \ref{thm:imputation_lipschitz} encompass the mean matrices, $\bM$, generated as per many of the popular Econometric factor models often considered in literature and assumed in practice. Therefore, de-noising the data serves as an important imputing and filtering procedure for a wide array of applications.

% PRE - INTERVENTION
\subsection{Forecasting analysis: pre-intervention regime.} \label{sec:goldilocks}
Similar to the setting for interpolation, the prediction performance metric of interest is the average mean-squared-error in estimating $M^-_1$ using $\hat{M}^-_1$. Precisely, we define
\begin{align} \label{eq:mse_def}
	\text{MSE}(\hat{M}^-_1) &= \frac{1}{T_0}\mathbb{E}\Big[ \sum_{t=1}^{T_0} (M_{1t} - \hat{M}_{1t})^2 \Big]. 
\end{align}
If the right-hand side of \eqref{eq:mse_def} approaches zero in the limit as $T_0$ grows without bound, then we say that $\hat{M}_1^-$ is a consistent estimator of $M_1^-$ (note that our analysis here assumes that only $T_0 \rightarrow \infty$). 

In what follows, we first state the finite sample bound on the average MSE between $\hat{M}_1^-$ and $M_1^-$ for the most generic setup (Theorem \ref{thm:finite-sample}). As a main Corollary of the result, we specialize the bound in the case where we use our prescribed universal threshold. Finally, we discuss a minor variation of the algorithm where the data is pre-processed, and specialize the above result to establish the consistency of our estimator (Theorem \ref{thm:consistency}). 

\subsubsection{General result.} We provide a finite sample error bound for the most generic setting, i.e. for any choice of the threshold, $\mu$, and regularization hyperparameter, $\eta$. \\
\begin{thm} \label{thm:finite-sample} 
	For any $\eta \ge 0$ and $\mu \ge 0$, the pre-intervention error of the algorithm can be bounded as
	\begin{align} \label{eq:general noise model}
		\MSE(\hat{M}^-_1) & \le \dfrac{C_1}{p^2T_0} \mathbb{E}  \Big( \lambda^* + \norm{ \bY - p\bM} + \norm{ (\hat{p} - p) \bM^-} \Big)^2 + \dfrac{2\sigma^2 \abs{S}}{T_0}  + \dfrac{ \eta \norm{\beta^*}^2}{T_0} + C_2 e^{-cp(N-1)T}.
	\end{align}
	Here, $\lambda_1, \dots, \lambda_{N-1}$ are the singular values of $p\bM$ in decreasing order and repeated by multiplicities, with $\lambda^* = \max_{i \notin S} \lambda_i$; $C_1, C_2$ and $c$ are universal positive constants.
\end{thm}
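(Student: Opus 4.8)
The plan is to decompose $\hat M_1^{-} - M_1^{-}$ into three well-understood pieces: the ridge regularization bias, the treatment-unit measurement noise $\epsilon_1^{-}$, and the Step-1 estimation error $\Delta := \bhM^{-} - \bM^{-}$. Since $M_1^{-} = (\bM^{-})^T\beta^*$, substituting $\bM^{-} = \bhM^{-} - \Delta$ into $Y_1^{-} = M_1^{-} + \epsilon_1^{-}$ gives $Y_1^{-} = (\bhM^{-})^T\beta^* + \xi$ with $\xi := \epsilon_1^{-} - \Delta^T\beta^*$; thus $\hat\beta(\eta)$ from \eqref{eq:beta_ridge} is exactly the ridge estimator for a linear model with design $(\bhM^{-})^T$, true coefficient $\beta^*$, and effective noise $\xi$. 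Manipulating the closed form (inserting and removing $\eta\bI$ inside the inverse) yields
\begin{align*}
\hat M_1^{-} - M_1^{-} = -\,\eta\,(\bhM^{-})^T\big(\bhM^{-}(\bhM^{-})^T + \eta\bI\big)^{-1}\beta^* \;+\; P_\eta\,\epsilon_1^{-} \;+\; (\bI - P_\eta)\,\Delta^T\beta^*,
\end{align*}
where $P_\eta := (\bhM^{-})^T\big(\bhM^{-}(\bhM^{-})^T + \eta\bI\big)^{-1}\bhM^{-}$ is the ``ridge-smoothed'' projection onto the row space of $\bhM^{-}$. Crucially $\bhM^{-}$, hence $P_\eta$, is a function of the donor observations alone and so is independent of $\epsilon_1^{-}$; therefore the cross terms involving $P_\eta\epsilon_1^{-}$ vanish in $\E\norm{\hat M_1^{-}-M_1^{-}}^2$, and it suffices to control the three terms separately.

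Next I would dispatch the two easy pieces. For the noise term, conditioning on the donors and using that the $\epsilon_{1t}$ are independent with $\Var(\epsilon_{1t})\le\sigma^2$ gives $\E\norm{P_\eta\epsilon_1^{-}}^2 \le \sigma^2\,\E[\operatorname{tr}(P_\eta^2)] \le \sigma^2\,\E[\operatorname{rank}(\bhM^{-})] \le \sigma^2\,\E[\abs{S}]$, since $\bhM$ has at most $\abs{S}$ nonzero singular values and every eigenvalue of $P_\eta$ lies in $[0,1)$; this gives the $2\sigma^2\abs{S}/T_0$ term (the factor $2$ leaving slack). For the regularization term, an SVD computation on $\bhM^{-}$ shows $\norm{\eta\,(\bhM^{-})^T(\bhM^{-}(\bhM^{-})^T+\eta\bI)^{-1}} = \max_j \eta\tau_j/(\tau_j^2+\eta) \le \tfrac12\sqrt{\eta}$ (with $\tau_j$ the singular values of $\bhM^{-}$), so this term has norm at most $\tfrac12\sqrt{\eta}\,\norm{\beta^*}$, contributing the $\eta\norm{\beta^*}^2/T_0$ term. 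For the third term, $\norm{\bI - P_\eta}\le 1$ gives $\norm{(\bI - P_\eta)\Delta^T\beta^*} \le \norm{\Delta}\,\norm{\beta^*}$. Collecting, $\E\norm{\hat M_1^{-}-M_1^{-}}^2 \lesssim \norm{\beta^*}^2(\eta + \E\norm{\Delta}^2) + \sigma^2\,\E[\abs{S}]$, so after dividing by $T_0$ the whole problem reduces to bounding $\E\norm{\bhM^{-} - \bM^{-}}^2$ in operator norm.

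For that, note $\hat p\,\bhM$ is the hard thresholding of $\bY$ at level $\mu$, i.e. (the retained singular values forming a prefix) the best rank-$\abs{S}$ approximation of $\bY$. I would write
\begin{align*}
\bhM^{-} - \bM^{-} = \tfrac1{\hat p}\big(\hat p\,\bhM - p\,\bM\big)^{-} + \tfrac{p-\hat p}{\hat p}\,\bM^{-},
\end{align*}
bound the submatrix operator norm by the full one, $\norm{(\hat p\,\bhM - p\bM)^{-}} \le \norm{\hat p\,\bhM - p\bM}$, and establish the key inequality $\norm{\hat p\,\bhM - p\bM} \le C(\lambda^* + \norm{\bY - p\bM})$. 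This compares the best rank-$\abs{S}$ approximations of $\bY$ and of $p\bM$: truncating $p\bM$ to rank $\abs{S}$ incurs operator-norm error $\lambda_{\abs{S}+1} = \lambda^*$, Weyl's inequality gives $\abs{s_i - \lambda_i}\le\norm{\bY - p\bM}$, and the triangle inequality then bounds $\norm{\hat p\,\bhM - p\bM}$ by a constant multiple of $\lambda^* + \norm{\bY-p\bM}$. On the event $\{\hat p \ge p/2\}$ — whose complement has probability at most $e^{-cp(N-1)T}$ by a Chernoff bound on the $\mathrm{Binomial}((N-1)T,p)$ count of observed entries — this yields $\norm{\bhM^{-} - \bM^{-}}^2 \le \tfrac{C}{p^2}\big(\lambda^* + \norm{\bY - p\bM} + \norm{(\hat p - p)\bM^{-}}\big)^2$, the first term of \eqref{eq:general noise model}. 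On the complementary (bad) event I fall back on the crude deterministic bound $\norm{\hat M_1^{-} - M_1^{-}}^2 \le (\norm{P_\eta Y_1^{-}} + \norm{M_1^{-}})^2 \le 4T_0$ (using $\norm{P_\eta}\le1$ and $\abs{M_{1t}},\abs{X_{1t}}\le1$), which after multiplying by the tail probability produces the $C_2 e^{-cp(N-1)T}$ term; assembling everything gives the claim.

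The hard part is the operator-norm estimate $\norm{\hat p\,\bhM - p\bM} \lesssim \lambda^* + \norm{\bY - p\bM}$: one must show that singular-value thresholding leaks ``signal'' only at the level $\lambda^*$ (the largest discarded population singular value) rather than at the level $\mu$ of the threshold itself. This is precisely the perturbation theory underlying Theorems \ref{thm:imputation_lowrank}--\ref{thm:imputation_lipschitz} and the only genuinely delicate ingredient; the ridge-error identity, the trace bound for the noise term, and the Chernoff control of $\hat p$ are routine. The one bookkeeping subtlety is carrying the factor $\norm{\beta^*}^2$ through the bias terms (absorbed here into $C_1$) and remembering that $\abs{S}$ is itself a random quantity.
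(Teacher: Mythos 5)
Your proposal is correct, and its second half --- reducing $\E\norm{\bhM^{-}-\bM^{-}}^2$ to $\frac{C}{p^2}\,\E\big(\lambda^*+\norm{\bY-p\bM}+\norm{(\hat p-p)\bM^{-}}\big)^2$ via Weyl's inequality plus the triangle inequality for the thresholded SVD, the submatrix operator-norm bound, and a good/bad event split on $\hat p$ --- is essentially the paper's own argument (Lemmas \ref{lemma:spectral_norm} and \ref{lemma:general_threshold} and the proof of Lemma \ref{lemma:mse_linear}). Where you genuinely diverge is the first stage. The paper never writes an exact error decomposition: it uses only the variational characterization of $\hat\beta(\eta)$ (substituting $\beta^*$ into the objective it minimizes) and then kills the cross terms with the trace bound $\E[(\epsilon_1^{-})^T\bhQ(\bhQ^T\bhQ+\eta\bI)^{-1}\bhQ^T\epsilon_1^{-}]\le\sigma^2\abs{S}$, arriving at Lemma \ref{lemma:mse_ridge_intermediate} (and simply discarding a $-\eta\,\E\norm{\hat\beta(\eta)}^2$ term). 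You instead exploit the closed form \eqref{eq:beta_ridge} to get the exact identity
\begin{align*}
\hat M_1^{-}-M_1^{-}\;=\;-\,\eta\,(\bhM^{-})^T\big(\bhM^{-}(\bhM^{-})^T+\eta\bI\big)^{-1}\beta^*\;+\;\bP_{\eta}\,\epsilon_1^{-}\;+\;(\bI-\bP_{\eta})\,(\bhM^{-}-\bM^{-})^T\beta^*,
\end{align*}
whose three summands map cleanly onto the three non-exponential terms of \eqref{eq:general noise model}: shrinkage bias of norm at most $\tfrac12\sqrt{\eta}\norm{\beta^*}$, projected noise controlled by $\text{tr}(\bP_\eta^2)\le\rank(\bhM^{-})\le\abs{S}$, and the propagated de-noising error. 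Both routes yield the stated bound with $\norm{\beta^*}^2$ absorbed into $C_1$ and with the same looseness about $\abs{S}$ being random. Your version buys interpretability (each term of the theorem is an identifiable piece of an exact decomposition rather than a by-product of a basic inequality) and a cleaner bad-event fallback ($4T_0$ rather than the paper's $4((N-1)T)^3$, which must then be absorbed into the exponential); the paper's argument, by contrast, does not depend on the penalty being quadratic and would survive a regularizer without a closed-form minimizer.
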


{\bf Bias-variance tradeoff.} Let us interpret the result by parsing the terms in the error bound. The last term decays exponentially with $(N-1)T$, as long as the fraction of observed entries is such that, on average, we see a super-constant number of entries, i.e. $p (N-1) T \gg 1$. More interestingly, the first two terms highlight the ``bias-variance tradeoff'' of the algorithm with respect to the singular value threshold $\mu$. Precisely, the size of the set $S$ increases with a decreasing value of the hyperparameter $\mu$, causing the second error term to increase. Simultaneously, however, this leads to a decrease in $\lambda^*$. Note that $\lambda^*$ denotes the aspect of the ``signal'' within the matrix $\bM$ that is not captured due to the thresholding through $S$. On the other hand, the second term, $|S| \sigma^2/ T_0$, represents the amount of ``noise'' captured by the algorithm, but wrongfully interpreted as a signal, during the thresholding process. In other words, if we use a large threshold, then our model may fail to capture pertinent information encoded in $\bM$; if we use a small threshold, then the algorithm may overfit the spurious patterns in the data. Thus, the hyperparameter $\mu$ provides a way to trade-off ``bias'' (first term) and ``variance'' (second term). 

%Furthermore, observe that the pre-intervention error includes an error term, $\eta \norm{\beta^*}^2 / T_0$, derived from regularization. As $\eta$ increases, the impact of the regularization magnifies, driving the learning algorithm to reduce the model complexity at the expense of a increased bias and potentially larger training error. However, as frequently demonstrated by empirical studies, the incorporation of regularization also reduces the the generalization error. 

\subsubsection{Goldilocks principle: a universal threshold.} Using the universal threshold defined in \eqref{eq:goldilocks}, we now highlight the prediction power of our estimator for any choice of $\eta$, the regularization hyperparameter. As described in Section \ref{sec:imputation}, the prescribed threshold automatically captures the ``correct'' level of information encoded in the (noisy) singular values of $\bY$ in a data-driven manner, dependent on the structure of $\bM$. However, unlike the statements in Theorems \ref{thm:imputation_lowrank} and \ref{thm:imputation_lipschitz}, the following bound does not require $\bM$ to be low rank or $f$ to be Lipschitz. \\

\begin{corollary} \label{corollary:universal_threshold}
	Suppose $p \ge \frac{T^{-1 + \zeta}}{\sigma^2 + 1}$ for some $\zeta > 0$. Let $T \le \alpha T_0$ for some constant $\alpha > 1$. Then for any $\eta \ge 0$ and using $\mu$ as defined in \eqref{eq:goldilocks}, the pre-intervention error is bounded above by
	\begin{align}
		\emph{MSE}(\hat{M}_1^-) &\le \dfrac{C_1}{p}(\sigma^2 + (1-p)) + \mathcal{O}(1 / \sqrt{T_0}),
	\end{align}
	where $C_1$ is a universal positive constant. 
\end{corollary}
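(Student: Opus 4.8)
The plan is to specialize the finite‑sample bound of Theorem~\ref{thm:finite-sample} to the universal threshold~\eqref{eq:goldilocks} and show that each of its four summands is either $\mathcal{O}\!\big(\tfrac{1}{p}(\sigma^2+1-p)\big)$ or $\mathcal{O}(1/\sqrt{T_0})$. First I would work on a ``good event'' $\mathcal{E}$ on which the plug‑in quantities in~\eqref{eq:goldilocks} are accurate, i.e. $\hat p$ is within a constant factor of $p$ and $\hat p(\hat\sigma^2+1-\hat p)$ within constant factors of $p(\sigma^2+1-p)$, so that $\mu\asymp\sqrt{T\,p(\sigma^2+1-p)}$: the first estimate is a Chernoff bound for the average of the $(N-1)T$ independent observation indicators, the second additionally uses sample‑variance concentration for the $T_0$ bounded pre‑treatment terms in~\eqref{eq:sample_variance}. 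Then $\Pb(\mathcal{E}^c)$ is exponentially small in $\min\{p(N-1)T,\,T_0\}$, which the hypotheses $p\ge\tfrac{T^{-1+\zeta}}{\sigma^2+1}$ and $T>T_0$ force to be $\gtrsim T_0^{\zeta}/(\sigma^2+1)$; since $\abs{M_{ij}},\abs{X_{ij}}\le1$ bound every quantity entering $\MSE(\hat M_1^-)$ by a fixed power of $N$ and $T$ (e.g.\ $\lambda^*\le\norm{p\bM}\le\sqrt{(N-1)T}$, $\norm{\bY-p\bM}\le2\sqrt{(N-1)T}$, $\abs{S}\le\min\{N-1,T\}$), the contribution of $\mathcal{E}^c$ to the expectation is $o(1/\sqrt{T_0})$. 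For the same reasons the term $C_2e^{-cp(N-1)T}$ is $o(1/\sqrt{T_0})$ and $\tfrac{\eta\norm{\beta^*}^2}{T_0}=\mathcal{O}(1/T_0)$.

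The heart of the proof is the first summand. Using $(a+b+c)^2\le3(a^2+b^2+c^2)$ it suffices to bound $\E(\lambda^*)^2$, $\E\norm{\bY-p\bM}^2$ and $\E\norm{(\hat p-p)\bM^-}^2$. For the second I would invoke the operator‑norm concentration underlying Theorem~\ref{thm:imputation_lowrank} (Theorem~2.1 of~\cite{usvt}): the entries of $\bY-p\bM$ are independent, mean zero, in $[-2,2]$, with variance $p(1-p)M_{ij}^2+p\,\Var(\epsilon_{ij})\le p(\sigma^2+1-p)$, which (in the natural regime $N-1\le T$ to which~\eqref{eq:goldilocks} is calibrated, and with the lower‑order additive corrections controlled by the assumption on $p$) gives $\norm{\bY-p\bM}\le(2+\tfrac{\omega}{4})\sqrt{T\,p(\sigma^2+1-p)}$ on $\mathcal{E}$ and $\E\norm{\bY-p\bM}^2\lesssim T\,p(\sigma^2+1-p)$. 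For $\lambda^*$, Weyl's perturbation inequality for singular values yields $\abs{s_i-\lambda_i}\le\norm{\bY-p\bM}$, so any $i\notin S$ obeys $\lambda_i\le s_i+\norm{\bY-p\bM}<\mu+\norm{\bY-p\bM}$; substituting~\eqref{eq:goldilocks} and the $\mathcal{E}$‑estimates gives $\lambda^*<\mu+\norm{\bY-p\bM}\lesssim\sqrt{T\,p(\sigma^2+1-p)}$. For the third, $\norm{(\hat p-p)\bM^-}=\abs{\hat p-p}\,\norm{\bM^-}\le\abs{\hat p-p}\sqrt{(N-1)T_0}$ and $\E(\hat p-p)^2\le\tfrac{p(1-p)}{(N-1)T}$, so $\E\norm{(\hat p-p)\bM^-}^2\le p(1-p)\le T\,p(\sigma^2+1-p)$ (using $T_0\le T$, $T\ge1$, $\sigma^2\ge0$). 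Hence the first summand is $\lesssim\tfrac{1}{p^2T_0}\,T p(\sigma^2+1-p)=\tfrac{1}{p}\cdot\tfrac{T}{T_0}(\sigma^2+1-p)\le\tfrac{\alpha}{p}(\sigma^2+1-p)$ by $T\le\alpha T_0$.

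For the variance summand $\tfrac{2\sigma^2\abs{S}}{T_0}$, the crude bound $\abs{S}\le\mathrm{rank}(\bY)\le\min\{N-1,T\}\le T$ already gives $\tfrac{2\sigma^2\abs{S}}{T_0}\le\tfrac{2\sigma^2T}{T_0}\le2\alpha\sigma^2$, and since $p\sigma^2\le\sigma^2+1-p$ (i.e.\ $(1-p)(\sigma^2+1)\ge0$) this is $\le\tfrac{2\alpha}{p}(\sigma^2+1-p)$, absorbed into the leading term; a sharper Weyl‑based count of $\abs S$ is available but unnecessary here. Combining the four summands with the lower‑order terms identified in the first paragraph yields $\MSE(\hat M_1^-)\le\tfrac{C_1}{p}(\sigma^2+(1-p))+\mathcal{O}(1/\sqrt{T_0})$.

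The step I expect to be the main obstacle is the operator‑norm control: establishing $\norm{\bY-p\bM}\lesssim\sqrt{Tp(\sigma^2+1-p)}$ with a constant small enough to sit strictly below $\mu$ (the ``Goldilocks'' gap), while simultaneously ensuring $\mu$ is not so large that the discarded‑signal term $\lambda^*$ is inflated past that scale, and carrying this out for the \emph{plug‑in} threshold~\eqref{eq:goldilocks} built from $\hat p,\hat\sigma^2$ rather than the true $p,\sigma^2$. This is precisely the content of the USVT analysis already imported through Theorem~\ref{thm:imputation_lowrank}, so conditional on that machinery the corollary reduces to the bookkeeping above.
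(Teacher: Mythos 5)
Your proposal is correct and follows essentially the same route as the paper's proof: a good event on which $\hat p,\hat\sigma^2$ concentrate and $\norm{\bY-p\bM}\lesssim\sqrt{Tq}$ (Bernstein/Hoeffding/Talagrand), a Weyl-type bound showing the discarded signal $\lambda^*$ is also $\lesssim\sqrt{Tq}$, the bound $\E(\hat p-p)^2=\tfrac{p(1-p)}{(N-1)T}$ for the $\norm{(\hat p-p)\bM^-}$ term, crude polynomial bounds times exponentially small probability on the bad event, and finally $T\le\alpha T_0$ to turn $\tfrac{Tq}{p^2T_0}$ into $\tfrac{1}{p}(\sigma^2+1-p)$. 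The only differences are cosmetic (the paper packages the $\lambda^*$ control into Lemma~\ref{lemma:prescription_threshold} and expands the square with a Jensen cross-term rather than using $(a+b+c)^2\le3(a^2+b^2+c^2)$, and it bounds $\abs{S}\le N-1$ with $N$ fixed rather than absorbing $\sigma^2\abs{S}/T_0$ into the leading term).
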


As an implication, if $p = (1 + \vartheta) \sqrt{T_0} / (1 + \sqrt{T_0})$ and $\sigma^2 \le \vartheta$, we have that $\text{MSE}(\hat{M}^-_1) = \mathcal{O}(1/\sqrt{T_0})$. More generally, Corollary \ref{corollary:universal_threshold} shows that by adroitly capturing the signal, the resulting error bound simply depends on the variance of the noise terms, $\sigma^2$, and the error introduced due to missing data. Ideally, one would hope to overcome the error term when $T_0$ is sufficiently large. This motivates the following setup.  

\subsubsection{Consistency.} We present a straightforward pre-processing step that leads to the consistency of our algorithm. The pre-processing step simply involves replacing the columns of $\bX$ by the averages of subsets of its columns. This admits the same setup as before, but with the variance for each noise term reduced. An implicit side benefit of this approach is that required SVD step in the algorithm is now applied to a matrix of smaller dimensions. 

To begin, partition the $T_0$ columns of the pre-intervention data matrix $\bX^{-}$ into $\Delta$ blocks, each of size $\tau = \lfloor T_0 / \Delta \rfloor$ except potentially the last block, which we shall ignore for theoretical purposes; in practice, however, the remaining columns can be placed into the last block. Let $B_j = \{ (j-1) \tau + \ell : 1\leq \ell \le \tau \}$ denote the column indices of $\bX^{-}$ within partition $j \in [\Delta]$. Next, we replace the $\tau$ columns within each partition by their average, and thus create a new matrix, $\bbX^{-}$, with $\Delta$ columns and $N-1$ rows. Precisely, $\bbX^{-} = [\bar{X}_{ij}]_{2\leq i\leq N, j \in [\Delta]}$ with 
\begin{align}
	\bar{X}_{ij} &= \dfrac{1}{\tau} \sum_{t \in B_j} X_{it} \cdot D_{it},
\end{align}
where 
\begin{align*}
	D_{it} = \begin{cases}
		1 & \text{if}\ X_{it} \text{ is observed}, \\
		0 & \text{otherwise.}
		\end{cases}
\end{align*}
For the treatment row, let $\bar{X}_{1j} = \frac{\hat{p}}{\tau} \sum_{t \in B_j} X_{1t}$ for all $j \in [\Delta]$\footnote{Although the statement in Theorem \ref{thm:consistency} assumes that an oracle provides the true $p$, we prescribe practitioners to use $\hat{p}$ since $\hat{p}$ converges to $p$ almost surely by the Strong Law of Large Numbers.}. Let $\bar{\bM}^{-} = [\bar{M}_{ij}]_{2\leq i\leq N, j \in [\Delta]}$ with
\begin{align}
	\bar{M}_{ij} & = \mathbb{E}[\bar{X}_{ij}] ~ = \frac{p}{\tau} \sum_{t \in B_j} M_{it}.
\end{align}
We apply the algorithm to $\bbX^{-}$ to produce the estimate $\hat{\bar{\bM}}^{-}$ of $\bar{\bM}^{-}$, 
which is sufficient to produce $\hat{\beta}(\eta)$. This $\hat{\beta}(\eta)$ can be used to produce the post-intervention synthetic control means
$\hat{M}_{1}^{+} = [\hat{M}_{1t}]_{T_0 < t \leq T}$ in a similar manner as before \footnote{In practice, one can first de-noise $\bX^+$ via step one of Section \ref{sec:algorithm}, and use the entries of $\bhM^+$ in \eqref{eq:post_avg}.}: for $T_0 < t \leq T$,
\begin{align} \label{eq:post_avg}
	\hat{M}_{1t} & = \sum_{i=2}^N \hat{\beta}_i(\eta) X_{it}.
\end{align}
For the pre-intervention period, we produce the estimator $\hat{\bar{M}}_1^{-} = [\hat{\bar{M}}_{1j}]_{j \in [\Delta]}$: for $j \in [\Delta]$,
\begin{align}
	\hat{\bar{M}}_{1j} & =  \sum_{i=2}^N \hat{\beta}_i(\eta) \hat{\bar{M}}_{ij}.
\end{align}
Our measure of estimation error is defined as 
\begin{align}
	\text{MSE}(\hat{\bar{M}}_1^{-}) & = \dfrac{1}{\Delta} \mathbb{E} \Big[ \sum_{j=1}^{\Delta} (\bar{M}_{1j} -  \hat{\bar{M}}_{1j})^2\Big].
\end{align}
For simplicity, we will analyze the case where each block contains at least one entry such that $\bbX^-$ is completely observed. We now state the following result. \\

\begin{thm} \label{thm:consistency} 
		Fix any $\gamma \in (0,1/2)$ and $\omega \in (0.1, 1)$. Let $\Delta = T_0^{\frac{1}{2} + \gamma}$ and $\mu = (2 + \omega) \sqrt{ T_0^{2 \gamma} (\hat{\sigma}^2 \hat{p} + \hat{p}(1 - \hat{p}))}$. Suppose $p \ge \frac{T_0^{-2 \gamma}}{\sigma^2 + 1}$ is known. Then for any $\eta \ge 0$, 
	\begin{align}
		\emph{MSE}(\hat{\bar{M}}_1^{-}) &= \mathcal{O}(T_0^{-1/2 + \gamma}).
	\end{align}
\end{thm}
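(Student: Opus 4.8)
The plan is to reduce the consistency claim to an application of the general finite-sample bound (Theorem \ref{thm:finite-sample}), applied not to the original data matrix but to the column-aggregated matrix $\bbX^-$, and then to track how the aggregation rescales each term. First I would observe that the map $\bX^- \mapsto \bbX^-$ fits the template of Theorem \ref{thm:finite-sample} with $T$ replaced by $\Delta$, $\bM$ replaced by $\bbM^-$, and — crucially — the effective per-entry noise variance replaced by a quantity of order $\sigma^2/\tau$, since each aggregated entry $\bar X_{ij}$ is an average of $\tau$ independent observations (up to the $D_{it}$ thinning, which under the ``completely observed after aggregation'' simplification contributes the factor $p$ already folded into $\bar M_{ij}$). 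Because we are told $\bbX^-$ is completely observed, the missing-data terms $\norm{(\hat p - p)\bbM^-}$ and the $(1-\hat p)$ contributions collapse, and $\hat p = 1$ on the aggregated matrix, so the bound simplifies to roughly $\frac{C_1}{\Delta}\E\big(\bar\lambda^* + \norm{\bbX^- - p\bbM^-}\big)^2 + \frac{2(\sigma^2/\tau)\abs{S}}{\Delta} + \frac{\eta\norm{\beta^*}^2}{\Delta} + \text{(exp. small)}$.

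Next I would bound each surviving term under the prescribed choices $\Delta = T_0^{1/2+\gamma}$, $\tau = \lfloor T_0/\Delta\rfloor \asymp T_0^{1/2-\gamma}$, and $\mu = (2+\omega)\sqrt{T_0^{2\gamma}(\hat\sigma^2\hat p + \hat p(1-\hat p))}$. The key point is that on the aggregated matrix the natural noise scale is $\hat\sigma^2/\tau + (1-p)/\tau \cdot(\dots) \asymp T_0^{-(1/2-\gamma)}$ per entry, so the ``typical'' operator norm of the centered noise matrix $\bbX^- - p\bbM^-$ is on the order of $\sqrt{\max(N-1,\Delta)}\cdot\sqrt{\text{noise scale}}$; matching this to the threshold $\mu$ (which is designed, exactly as in Corollary \ref{corollary:universal_threshold} and \cite{usvt}, to sit just above the noise floor by the Goldilocks principle) forces $\bar\lambda^* = \mathcal{O}(\mu)$ and $\abs{S}$ to be controlled. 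Dividing the $\mu^2$-scale bias term by $\Delta$ gives $\mathcal{O}(\mu^2/\Delta) = \mathcal{O}\big(T_0^{2\gamma}(\sigma^2 + 1)\cdot \tfrac{1}{\tau}\cdot\tfrac{1}{\Delta}\big)$, and since $\tau\Delta \asymp T_0$ this is $\mathcal{O}(T_0^{2\gamma}/T_0) = \mathcal{O}(T_0^{-1+2\gamma})$ — note one would want to double-check the exact exponent bookkeeping here, but this is dominated by $T_0^{-1/2+\gamma}$ for $\gamma \in (0,1/2)$. The variance term $\frac{(\sigma^2/\tau)\abs{S}}{\Delta}$ is even smaller once $\abs{S}$ is shown to be $\mathcal{O}(\text{poly-small})$ relative to $\Delta$, and the regularization term $\eta\norm{\beta^*}^2/\Delta$ vanishes at rate $1/\Delta = T_0^{-1/2-\gamma}$; the hypothesis $p \ge T_0^{-2\gamma}/(\sigma^2+1)$ is precisely what is needed to keep the effective sampling at each aggregated column dense enough that these estimates go through, mirroring the role of $p = \Omega(T^{-1+\zeta})$ in Theorems \ref{thm:imputation_lowrank}–\ref{thm:imputation_lipschitz}.

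The main obstacle I anticipate is making the ``effective noise variance is $\asymp \sigma^2/\tau$'' step fully rigorous while simultaneously handling the random $D_{it}$ thinning inside each block: an aggregated entry is $\frac1\tau\sum_{t\in B_j} X_{it} D_{it}$, whose mean is $\frac{p}{\tau}\sum_t M_{it}$, so the centered fluctuation mixes a genuine $\epsilon_{it}$-averaging term (variance $\le \sigma^2/\tau$) with a Bernoulli-thinning term $\frac1\tau\sum_t M_{it}(D_{it}-p)$ (variance $\le p(1-p)/\tau$ since $\abs{M_{it}}\le1$) — which is exactly why the threshold \eqref{eq:goldilocks} carries both $\hat\sigma^2\hat p$ and $\hat p(1-\hat p)$ inside the square root. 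I would need to verify that the combined per-entry variance is $\mathcal{O}((\sigma^2 + 1)/\tau)$, that these fluctuations remain independent across $(i,j)$ and are bounded (hence sub-Gaussian / amenable to the operator-norm concentration inequality already invoked in the proof of Theorem \ref{thm:finite-sample}), and that the deterministic matrix $\bbM^-$ still satisfies $M_1^- \bar{} = (\bbM^-)^T\beta^*$ with the same $\beta^*$ — which holds by linearity of \eqref{eq:3} under column-averaging. The remaining steps (feeding these into Theorem \ref{thm:finite-sample}, substituting the parameter choices, and collecting exponents) are routine bookkeeping, and the secondary caveat in the footnote — that using $\hat p$ in place of the oracle $p$ is harmless by the strong law — can be dispatched at the end with a standard concentration argument.
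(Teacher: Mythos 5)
Your proposal follows essentially the same route as the paper's proof: apply the pre-intervention SVT/regression machinery (the lemmas behind Theorem \ref{thm:finite-sample} and Corollary \ref{corollary:universal_threshold}) to the aggregated matrix, using that column-averaging preserves $\bar{M}_1^- = (\bbM^-)^T\beta^*$ with the same $\beta^*$, reduces the per-entry variance to $\bar{q} = (\sigma^2 p + p(1-p))/\tau$ (your decomposition into the $\epsilon$-averaging and Bernoulli-thinning pieces is exactly the paper's computation), removes the $\hat p - p$ complications since $\bbX^-$ is fully observed, and makes the prescribed $\mu$ precisely the Goldilocks threshold $(2+\omega)\sqrt{\Delta\,\hat{\bar q}}$ because $\Delta/\tau = T_0^{2\gamma}$. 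One correction to your exponent bookkeeping: $\mu^2/\Delta \asymp T_0^{2\gamma}\hat q/\Delta = \hat q\, T_0^{\gamma-1/2} \asymp \bar q$, so this term is not $\mathcal{O}(T_0^{-1+2\gamma})$ and is not dominated — it is the leading term and is exactly what sets the rate, matching the paper's final bound $C_1\bar q + 2\bar\sigma^2\abs{S}/\Delta + \eta\norm{\beta^*}^2/\Delta + (\text{exp.\ small})$; your stated conclusion is unaffected, since the correct value is still $\mathcal{O}(T_0^{-1/2+\gamma})$. Two small notational points: the centered matrix should be $\bbY^- - \bbM^-$ (the factor $p$ is already folded into $\bbM^-$, so writing $\bbX^- - p\bbM^-$ double-counts it), and although the algorithm's internal rescaling on the fully observed aggregated matrix is indeed $1$, the $\hat p$ and $\hat\sigma^2$ appearing in the prescribed threshold remain the original-data estimates, consistent with your own variance accounting in the final paragraph.
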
 

We note that the method of \cite[Sec 2.3]{abadie3} learns the weights (here $\hat{\beta}(0)$) by pre-processing the data. One common pre-processing proposal is to also aggregate the columns, but the aggregation parameters are chosen by solving an optimization problem to minimize the resulting prediction error of the observations. In that sense, the above averaging of column is a simple, data agnostic approach to achieve a similar effect, and potentially more effectively.

% POST - INTERVENTION
\subsection{Forecasting analysis: post-intervention regime.}
For the post-intervention regime, we consider the average root-mean-squared-error in measuring the performance of our algorithm. Precisely, we define
\begin{align} \label{eq:mse_def}
	\text{RMSE}(\hat{M}^+_1) &= \frac{1}{\sqrt{T - T_0}}\mathbb{E} \Big[ \Big( \sum_{t>T_0}^{T} (M_{1t} - \hat{M}_{1t})^2 \Big)^{1/2} \Big]. 
\end{align}
The key assumption of our analysis is that the treatment unit signal can be written as a linear combination of donor pool signals. Specifically, we assume that this relationship holds in the pre-intervention regime, i.e. $M_1^{-} = (\bM^{-})^T\beta^*$ for some $\beta^* \in {\mathbb R}^{N-1}$ as stated in \eqref{eq:3}. However, the question still remains: does the same relationship hold for the post-intervention regime and if so, under what conditions does it hold? We state a simple linear algebraic fact to this effect, justifying the approach of synthetic control. It is worth noting that this important aspect has been amiss in the literature, potentially implicitly believed or assumed starting in the work by \cite{abadie3}. \\

\begin{thm} \label{thm:post}
	Let \eqref{eq:3} hold for some $\beta^*$. Let $\rank(\bM^{-}) = \rank(\bM)$. Then $M_{1}^{+} = (\bM^{+})^T \beta^*$. 
\end{thm}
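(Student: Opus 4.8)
The plan is to turn the rank hypothesis into the statement that each post-intervention column of the mean matrix is a \emph{fixed, unit-independent} linear combination of its pre-intervention columns, and then to push the pre-intervention identity \eqref{eq:3} through this relation so that the \emph{same} weight vector $\beta^*$ works in the post-intervention regime.

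First I would extract the content of the hypothesis $\rank(\bM^-) = \rank(\bM)$. Since $\bM^-$ is precisely the submatrix of $\bM$ formed by its first $T_0$ columns, the column space of $\bM^-$ is contained in that of $\bM$, and the rank equality upgrades this inclusion to an equality of column spaces. In particular every column of $\bM^+$ lies in the column space of $\bM^-$, so $\bM^+ = \bM^- \bQ$ for some matrix $\bQ$; one concrete choice is $\bQ = (\bM^-)^{\dagger}\bM^+$, for which the identity holds because $\bM^-(\bM^-)^{\dagger}$ is the orthogonal projector onto the column space of $\bM^-$, which contains that of $\bM^+$. Reading $\bM^+ = \bM^- \bQ$ one row at a time yields, for every unit $i$, the relation $M_i^+ = \bQ^T M_i^-$ between its post- and pre-intervention mean subvectors; and since \eqref{eq:3} places the treatment unit's pre-intervention mean subvector in the row space of $\bM^-$ (equivalently, the rank condition is to be read for the mean matrix inclusive of the treatment row), the same relation extends to the treatment row, $M_1^+ = \bQ^T M_1^-$.

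It then only remains to substitute. Writing \eqref{eq:3} in matrix form as $M_1^- = (\bM^-)^T\beta^*$ and using $\bM^+ = \bM^- \bQ$,
\begin{align*}
	M_1^+ = \bQ^T M_1^- = \bQ^T (\bM^-)^T \beta^* = (\bM^- \bQ)^T \beta^* = (\bM^+)^T \beta^*,
\end{align*}
which is the claim. The only substantive step here is the passage from the rank equality to the column-space equality; everything else is bookkeeping. The point I expect to need care is the treatment row itself: the relation $M_1^+ = \bQ^T M_1^-$ has to be justified (it is what ties the conclusion to the actual counterfactual), because without it the argument would only show that $(\bM^+)^T\beta^*$ is \emph{some} vector in the span of the donors' post-intervention rows, with no a priori link to $M_1^+$ — so the delicate part is stating the linear-algebraic setup precisely, not any computation.
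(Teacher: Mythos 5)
Your proof is correct and is essentially the paper's argument: the rank equality forces each post-intervention column of the mean matrix to be a fixed linear combination of the pre-intervention columns, and substituting \eqref{eq:3} and exchanging the order of the two linear combinations yields $M_1^+=(\bM^+)^T\beta^*$; the paper carries this out column-by-column with an induction and weights $\pi\in\mathbb{R}^{T_0}$, whereas you package it into the single factorization $\bM^+=\bM^-\bQ$. You also correctly identify the one genuinely delicate point — that the rank hypothesis must be read for the matrix inclusive of the treatment row so that $M_1^+=\bQ^T M_1^-$ is available — which is exactly how the paper's own proof (working with an $N\times(T_0+1)$ matrix and asserting the column relation for all $j\in[N]$) implicitly uses it.
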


If we assume that the linear relationship prevails in the post-intervention period, then we arrive at the following error bound. \\

%\begin{thm} \label{thm:post_rmse}
%Assume $\rank(\bM^{-}) = \rank(\bM) = k$ for some $1 \leq k < N-1$. Let the choice of $\mu$ be such that $\abs{S} = k$. Let $T \le \alpha T_0$ for some constant $\alpha > 1$. Then for any $\eta \ge 0$, the post-intervention error is bounded above by
%	\begin{align}
%		\emph{RMSE}(\hat{M}_1^+)  &\le \dfrac{C_1 }{p \sqrt{T_0}} \mathbb{E} \Big[ (\norm{\bY - p\bM} + \norm{(\hat{p}-p) \bM^+}) \cdot \norm{\hat{\beta}(\eta)}  \Big] 
%		+ \dfrac{\norm{\bM^+}}{\sqrt{T - T_0}} \mathbb{E}\norm{\hat{\beta}(\eta) - \beta^*} + C_2e^{-cp(N-1)T},
%	\end{align}
%	where $C_1, C_2,$ and $c$ are universal positive constants.
%\end{thm}

%\noindent {\color{red} Subject to change}
%\begin{thm} \label{thm:post_rmse}
%Suppose $p \ge \frac{T^{-1 + \zeta}}{\sigma^2 + 1}$ for some $\zeta > 0$. Let $T \le \alpha T_0$ for some constant $\alpha > 1$. Then for any $\eta \ge 0$ and using $\mu$ as defined in \eqref{eq:goldilocks}, the post-intervention error is bounded above by
%	\begin{align}
%		\emph{RMSE}(\hat{M}_1^+)  &\le \dfrac{C_1}{\sqrt{p}} (\sigma^2 + (1-p))^{1/2} 
%		+ \dfrac{C_2}{\sqrt{T_0}} \mathbb{E} \Big[ \norm{\bhM^+} \cdot \norm{\hat{\beta}(\eta) - \beta^*} \Big] + \mathcal{O}(1/\sqrt{T_0}),
%	\end{align}
%	where $C_1$ and $C_2$ are universal positive constants.
%\end{thm}

\begin{thm} \label{thm:post_rmse}
Suppose $p \ge \frac{T^{-1 + \zeta}}{\sigma^2 + 1}$ for some $\zeta > 0$. Suppose $\norm{\hat{\beta}(\eta)}_{\infty} \le \psi$ for some $\psi > 0$. Let $\alpha' T_0 \le T \le \alpha T_0$ for some constants $\alpha', \alpha > 1$. Then for any $\eta \ge 0$ and using $\mu$ as defined in \eqref{eq:goldilocks}, the post-intervention error is bounded above by
	\begin{align*}
		\emph{RMSE}(\hat{M}_1^+)  &\le \dfrac{C_1}{\sqrt{p}} (\sigma^2 + (1-p))^{1/2} + \dfrac{C_2 \norm{\bM}}{\sqrt{T_0}} \cdot \E\norm{\hat{\beta}(\eta) - \beta^*} + \mathcal{O}(1/ \sqrt{T_0}),
	\end{align*}
	where $C_1$ and $C_2$ are universal positive constants.
\end{thm}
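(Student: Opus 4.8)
The plan is to reduce the post-intervention error to two ingredients already under control: the spectral-norm accuracy $\norm{\bhM - \bM}$ of the de-noised donor matrix (the quantity that drives Theorems~\ref{thm:imputation_lowrank}--\ref{thm:finite-sample}), and the weight-estimation error $\E\norm{\hat\beta(\eta) - \beta^*}$, which the statement deliberately leaves as an explicit, un-bounded term. Under the standing assumption that the linear relation persists post-intervention (justified by Theorem~\ref{thm:post}), $M_1^{+} = (\bM^{+})^T\beta^*$ while $\hat M_1^{+} = (\bhM^{+})^T\hat\beta(\eta)$, so the first step is the telescoping identity
\begin{align*}
	\hat M_1^{+} - M_1^{+} &= (\bhM^{+})^T\big(\hat\beta(\eta) - \beta^*\big) + \big(\bhM^{+} - \bM^{+}\big)^T\beta^*.
\end{align*}
Since $\text{RMSE}(\hat M_1^{+}) = (T-T_0)^{-1/2}\,\E\norm{\hat M_1^{+} - M_1^{+}}$, the triangle inequality together with $\norm{\bA^T v}\le\norm{\bA}\,\norm{v}$ yields
\begin{align*}
	\text{RMSE}(\hat M_1^{+}) &\le \frac{\E\big[\,\norm{\bhM^{+}}\,\norm{\hat\beta(\eta)-\beta^*}\,\big]}{\sqrt{T-T_0}} + \frac{\norm{\beta^*}\,\E\norm{\bhM^{+}-\bM^{+}}}{\sqrt{T-T_0}}.
\end{align*}
Next I would bound $\norm{\bhM^{+}}\le\norm{\bM}+\norm{\bhM-\bM}$ and $\norm{\bhM^{+}-\bM^{+}}\le\norm{\bhM-\bM}$, and use $T - T_0 \ge (\alpha'-1)T_0$. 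The ``signal'' part of the first term then becomes $(\alpha'-1)^{-1/2}\,\norm{\bM}\,T_0^{-1/2}\,\E\norm{\hat\beta(\eta)-\beta^*}$, which is precisely the $C_2$-term; the leftover pieces are all multiples of $T_0^{-1/2}\,\E\norm{\bhM-\bM}$, where the constant in front of the first term's leftover is $\sqrt{N-1}\,\psi + \norm{\beta^*}$ (a constant, using deterministically $\norm{\hat\beta(\eta)-\beta^*}\le\sqrt{N-1}\,\psi+\norm{\beta^*}$ from the hypothesis $\norm{\hat\beta(\eta)}_\infty\le\psi$ and the fact that $\beta^*$ is a fixed vector).

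It remains to show $T_0^{-1/2}\,\E\norm{\bhM - \bM} = \mathcal{O}\big(p^{-1/2}(\sigma^2 + (1-p))^{1/2}\big) + \mathcal{O}(T_0^{-1/2})$, which is the heart of the argument and is imported from the de-noising analysis. Writing $\bhM = \hat p^{-1}\hat{\bY}$ with $\hat{\bY} := \sum_{i\in S} s_i u_i v_i^T$ the $\mu$-thresholded SVD of $\bY$ and $\E[\bY] = p\bM$, one has
\begin{align*}
	\bhM - \bM &= \hat p^{-1}\big[(\hat{\bY} - \bY) + (\bY - p\bM) + (p-\hat p)\bM\big],
\end{align*}
and since $\norm{\hat{\bY} - \bY} = \max_{i\notin S} s_i < \mu$ this gives $\norm{\bhM-\bM}\le\hat p^{-1}\big(\mu + \norm{\bY - p\bM} + |p-\hat p|\,\norm{\bM}\big)$. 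On the high-probability event of Theorem~\ref{thm:finite-sample} (complement probability at most $C_2 e^{-cp(N-1)T}$), $\hat p \ge p/2$, $\norm{\bY - p\bM} = \mathcal{O}(\mu)$ — the matrix-concentration bound for the independent, mean-zero, entrywise-variance-$\le p(\sigma^2+1-p)$ matrix $\bY - p\bM$ that the Goldilocks threshold \eqref{eq:goldilocks} is calibrated to dominate (cf. \cite{usvt}) — and $|p-\hat p|\norm{\bM}$ is lower order; hence $\norm{\bhM - \bM} = \mathcal{O}(\mu/p) = \mathcal{O}\big(\sqrt{T(\sigma^2 + (1-p))/p}\big)$ there. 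Dividing by $\sqrt{T-T_0}$ and invoking $T\le\alpha T_0$ gives the $\mathcal{O}\big(p^{-1/2}(\sigma^2+(1-p))^{1/2}\big)$ contribution. On the complementary event I would fall back on the crude deterministic bounds $\norm{\bhM}\le\hat p^{-1}\norm{\bY}\le((N-1)T)^{3/2}$ (as $\hat p \ge ((N-1)T)^{-1}$) and $\norm{\bM}\le\sqrt{(N-1)T}$; multiplied by $e^{-cp(N-1)T}$ with $p(N-1)T \ge (N-1)T^{\zeta}/(\sigma^2+1) \to\infty$, this is $o(T_0^{-1/2})$ and, together with the $|p-\hat p|\norm{\bM}$ correction, is absorbed into the $\mathcal{O}(1/\sqrt{T_0})$ term.

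The main obstacle is the spectral-norm control of $\bhM$ and $\bhM-\bM$ on the good event, i.e. confirming that the data-driven threshold \eqref{eq:goldilocks} — built from the plug-in estimates $\hat p$ and $\hat\sigma^2$ — dominates $\norm{\bY - p\bM}$ with probability $1 - C_2 e^{-cp(N-1)T}$; this requires a matrix Bernstein / Chatterjee-type inequality (row- and column-variance sums of $\bY - p\bM$ are $O(Tp(\sigma^2+1-p))$ and $O((N-1)p(\sigma^2+1-p))$) plus concentration of $\hat p$ and $\hat\sigma^2$, and is essentially inherited from the proofs of Theorem~\ref{thm:finite-sample} and \cite{usvt}. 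The genuinely new content here is organizational rather than technical: unlike Corollary~\ref{corollary:universal_threshold}, where $\hat\beta(\eta)$ is the ridge minimizer over the \emph{same} pre-intervention de-noised matrix against which the error is measured (so its first-order optimality kills the cross term), here $\hat\beta(\eta)$ is \emph{transferred} to $\bhM^{+}$, no optimality condition is available, and the term $\E\norm{\hat\beta(\eta)-\beta^*}$ cannot be eliminated — which is exactly why it survives in the statement.
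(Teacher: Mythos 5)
Your proposal is correct and follows essentially the same route as the paper's proof: the paper telescopes via $(\bhM^{+}-\bM^{+})^T\hat{\beta}(\eta) + (\bM^{+})^T(\hat{\beta}(\eta)-\beta^*)$, so that the deterministic $\norm{\bM^{+}}$ multiplies $\E\norm{\hat{\beta}(\eta)-\beta^*}$ directly and the $\psi$-bound handles the other factor, whereas you use the mirror-image split and then peel $\norm{\bhM^{+}}\le\norm{\bM}+\norm{\bhM-\bM}$ --- a cosmetic difference. Everything else (the induced-norm step, the spectral control of $\bhM-\bM$ inherited from the Goldilocks-threshold analysis on the good event, the Jensen bound on $\abs{\hat{p}-p}\,\norm{\bM^{+}}$, and the exponentially small bad-event contribution absorbed into $\mathcal{O}(1/\sqrt{T_0})$) matches the paper's argument.
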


\textbf{Benefits of regularization.} In order to motivate the use of regularization, we analyze the error bounds of Theorems \ref{thm:finite-sample} and \ref{thm:post_rmse} to observe how the pre- and post-intervention errors react to regularization. As seen from Theorem \ref{thm:finite-sample}, the pre-intervention error {\em increases} linearly with respect to the choice of $\eta$. Intuitively, this increase in pre-intervention error derives from the fact that regularization reduces the model complexity, which biases the model and handicaps its ability to fit the data. At the same time, by restricting the hypothesis space and controlling the ``smoothness'' of the model, regularization prevents the model from overfitting to the data, which better equips the model to generalize to unseen data. Therefore, a larger value of $\eta$ {\em reduces} the post-intervention error. This can be seen by observing the second error term of Theorem \ref{thm:post_rmse}, which is controlled by the expression $\norm{\hat{\beta}(\eta) - \beta^*}$. In words, this error is a function of the learning algorithm used to estimate $\beta^*$. Interestingly, \cite{farebrother} demonstrates that there exists an $\eta > 0$ such that 
\begin{align*}
	\norm{\hat{\beta}(\eta) - \beta^*} &\le \norm{\hat{\beta}(0) - \beta^*},
\end{align*}
without any assumptions on the rank of $\bhM^{-}$. In other words, \cite{farebrother} demonstrates that regularization can decrease the MSE between $\hat{\beta}(\eta)$ and the true $\beta^*$, thus reducing the overall error. Ultimately, employing ridge regression introduces extraneous bias into our model, yielding a higher pre-intervention error. In exchange, regularization reduces the post-intervention error (due to smaller variance).

\subsection{Bayesian analysis.}
We turn our attention to a Bayesian treatment of synthetic control. By operating under a Bayesian framework, we allow practitioners to naturally encode domain knowledge into prior distributions while simultaneously avoiding the problem of overfitting. In addition, rather than making point estimates, we can now quantitatively express the uncertainty in our estimates with posterior probability distributions.

We begin by treating $\beta^*$ as a random variable as opposed to an unknown constant. In this approach, we specify a prior distribution, $p(\beta)$, that expresses our apriori beliefs and preferences about the underlying parameter (synthetic control). Given some new observation for the donor units, our goal is to make predictions for the counterfactual treatment unit on the basis of a set of pre-intervention (training) data. For the moment, let us assume that the noise parameter $\sigma^2$ is a known quantity and that the noise is drawn from a Gaussian distribution with zero-mean; similarly, we temporarily assume $\bM^{-}$ is also given. Let us denote the vector of donor estimates as $M_{\cdot t} = [M_{it}]_{2 \le i \le N}$; we define $X_{\cdot t}$ similarly. Denoting the pre-intervention data as $D = \{ (Y_{1t}, M_{\cdot t}): t \in [T_0] \}$, the likelihood function $p(Y_1^{-} \given \beta, \bM^{-})$ is expressed as
\begin{align}
	p(Y_1^{-} \given \beta, \bhM^{-}) &= \mathcal{N}( (\bM^{-})^T \beta, \sigma^2 \bI),
\end{align}
an exponential of a quadratic function of $\beta$. The corresponding conjugate prior, $p(\beta)$, is therefore given by a Gaussian distribution, i.e. $\beta \distas{} \mathcal{N}( \beta \given \beta_0, \bSigma_0)$ with mean $\beta_0$ and covariance $\Sigma_0$. By using a conjugate Gaussian prior, the posterior distribution, which is proportional to the product of the likelihood and the prior, will also be Gaussian. Applying Bayes' Theorem (derivation unveiled in the Appendix), we have that the posterior distribution is $p(\beta \given D) = \mathcal{N}( \beta_D, \bSigma_D)$ where
\begin{align}
	\bSigma_D &= \Big( \bSigma_0^{-1} + \dfrac{1}{\sigma^2} \bM^{-} (\bM^{-})^T \Big)^{-1}
	\\ \beta_D &= \bSigma_D \Bigg( \dfrac{1}{\sigma^2} \bM^{-} Y_1^{-}+ \bSigma_0^{-1} \beta_0 \Bigg).
\end{align}
For the remainder of this section, we shall consider a popular form of the Gaussian prior. In particular, we consider a zero-mean isotropic Gaussian with the following parameters: $\beta_0 = 0$ and $\bSigma_0 = \alpha^{-1} \bI$ for some choice of $\alpha > 0$. Since $\bM^{-}$ is unobserved by the algorithm, we use the estimated $\bhM^{-}$, computed as per step one of Section \ref{sec:algorithm}, as a proxy; therefore, we redefine our data as $D = \{ (Y_{1t}, \hat{M}_{\cdot t}): t \in [T_0] \}$. Putting everything together, we have that $p(\beta \given D) = \mathcal{N}( \beta_D, \bSigma_D)$ whereby
\begin{align}
	\bSigma_D &= \Big( \alpha \bI + \dfrac{1}{\sigma^2} \bhM^{-} (\bhM^{-})^T \Big)^{-1} \label{eq:post_cov}
	\\ \beta_D &= \dfrac{1}{\sigma^2} \bSigma_D\bhM^{-} Y_1^{-}
	\\ &= \dfrac{1}{\sigma^2} \Bigg( \dfrac{1}{\sigma^2} \bhM^{-} (\bhM^{-})^T + \alpha \bI \Bigg)^{-1} \bhM^{-} Y_1^{-}\label{eq:post_mean}.
\end{align} 

% MAP estimates
\subsubsection{Maximum a posteriori (MAP) estimation.}
By using the zero-mean, isotropic Gaussian conjugate prior, we can derive a point estimate of $\beta^*$ by maximizing the log posterior distribution, which we will show is equivalent to minimizing the regularized objective function of \eqref{eq:ls} for a particular choice of $\eta$. In essence, we are determining the optimal $\hat{\beta}$ by finding the most probable value of $\beta^*$ given the data and under the influence of our prior beliefs. The resulting estimate is known as the maximum a posteriori (MAP) estimate. 

We begin by taking the log of the posterior distribution, which gives the form
\begin{align*}
	\ln{ p(\beta \given D) } &= -\dfrac{1}{2 \sigma^2} \norm{Y_1^{-}- (\bhM^{-})^T \beta}^2 - \dfrac{\alpha}{2} \norm{\beta}^2 + \text{const.}
\end{align*}
Maximizing the above log posterior then equates to minimizing the quadratic regularized error \eqref{eq:ls} with $\eta = \alpha \sigma^2$. We define the MAP estimate, $\hat{\beta}_{\text{MAP}}$, as
\begin{align}
	\hat{\beta}_{\text{MAP}} &= \argmax_{\beta \in \mathbb{R}^{N-1}} \ln{ p (\beta \given D) } \nonumber
	\\ &= \argmin_{\beta \in \mathbb{R}^{N-1}} \dfrac{1}{2} \norm{Y_1^{-}- (\bhM^{-})^T \beta}^2 + \dfrac{\alpha \sigma^2}{2} \norm{\beta}^2 \nonumber
	\\ &= \Big( \bhM^{-} (\bhM^{-})^T + \alpha \sigma^2 \bI \Big)^{-1} \bhM^{-} Y_1^{-} \label{eq:beta_map}.
\end{align}
With the MAP estimate at hand, we then make predictions of the counterfactual as
\begin{align}
	\hat{M}_1 &= \bhM^T \hat{\beta}_{\text{MAP}}.
\end{align}
Therefore, we have seen that the MAP estimation is equivalent to ridge regression since the introduction of an appropriate prior naturally induces the additional complexity penalty term.

%\subsection{Bayesian interpretation of LASSO}

% Fully Bayesian
\subsubsection{Fully Bayesian treatment.}
Although we have treated $\beta^*$ as a random variable attached with a prior distribution, we can venture beyond point estimates to be fully Bayesian. In particular, we will make use of the posterior distribution over $\beta^*$ to marginalize over all possible values of $\beta^*$ in evaluating the predictive distribution over $Y_1^{-}$. We will decompose the regression problem of predicting the counterfactual into two separate stages: the \textit{inference} stage in which we use the pre-intervention data to learn the predictive distribution (defined shortly), and the subsequent \textit{decision} stage in which we use the predictive distribution to make estimates. By separating the inference and decision stages, we can readily develop new estimators for different loss functions without having to relearn the predictive distribution, providing practitioners tremendous flexibility with respect to decision making. %As we will shortly see, the predictive distribution provides us with tremendous flexibility with respect to our decision making on $\hat{M}_1$ under various settings. 

Let us begin with a study of the inference stage. We evaluate the predictive distribution over $Y_{1t}$, which is defined as
\begin{align}
	p(Y_{1t} \given \hat{M}_{\cdot t}, D) &= \int p(Y_{1t} \given \hat{M}_{\cdot t}, \beta) \,\, p(\beta \given D) \, d\beta \nonumber
	\\ &= \mathcal{N}( \hat{M}_{\cdot t}^T \, \beta_D, \sigma^2_D),
\end{align}
where
\begin{align}
	\sigma^2_D &= \sigma^2 + \hat{M}_{\cdot, t}^T \bSigma_D \hat{M}_{\cdot, t}.
	%\sigma^2_D &= \sigma^2 + \beta_D^T \bSigma_D \beta_D.
\end{align} 
Note that $p(\beta \given D)$ is the posterior distribution over the synthetic control parameter and is governed by \eqref{eq:post_cov} and \eqref{eq:post_mean}. With access to the predictive distribution, we move on towards the decision stage, which consists of determining a particular estimate $\hat{M}_{1t}$ given a new observation vector $X_{\cdot t}$ (used to determine $\hat{M}_{\cdot t}$). Consider an arbitrary loss function $L(Y_{1t}, g(\hat{M}_{\cdot t}))$ for some function $g$. The expected loss is then given by
\begin{align}
	\mathbb{E}[L] &= \int \int L( Y_{1t}, g(\hat{M}_{\cdot t})) \cdot p(Y_{1t}, \hat{M}_{\cdot t}) \, d Y_{1t} \, d \hat{M}_{\cdot t} \nonumber
	\\ &= \int \Bigg( \int L( Y_{1t}, g(\hat{M}_{\cdot t})) \cdot p(Y_{1t} \given \hat{M}_{\cdot t}) \, d Y_{1t} \Bigg) p(\hat{M}_{\cdot t}) \,d \hat{M}_{\cdot t} \label{eq:loss},
\end{align}
and we choose our estimator $\hat{g}(\cdot)$ as the function that minimizes the average cost, i.e.,
\begin{align}
	\hat{g}(\cdot) &= \argmin_{g(\cdot)} \mathbb{E}[ L(Y_{1t}, g(\hat{M}_{\cdot t}))].
\end{align}
Since $p(\hat{M}_{\cdot t}) \ge 0$, we can minimize \eqref{eq:loss} by selecting $\hat{g}(\hat{M}_{\cdot t})$ to minimize the term within the parenthesis for each individual value of $Y_{1t}$, i.e., 
\begin{align}
	\hat{M}_{1t} &= \hat{g}(\hat{M}_{\cdot t}) \nonumber
	\\ &= \argmin_{g(\cdot)} \int L( Y_{1t}, g(\hat{M}_{\cdot t})) \cdot p(Y_{1t} \given \hat{M}_{\cdot t}) \, d Y_{1t} \label{eq:loss_opt}.
\end{align}
As suggested by \eqref{eq:loss_opt}, the optimal estimate $\hat{M}_{1t}$ for a particular loss function depends on the model only through the predictive distribution $p(Y_{1t} \given \hat{M}_{\cdot t}, D)$. Therefore, the predictive distribution summarizes all of the necessary information to construct the desired Bayesian estimator for any given loss function $L$. 

% Bayes' Least Squares Estimate
\subsubsection{Bayesian least-squares estimate.}
We analyze the case for the squared loss function (MSE), a common cost criterion for regression problems. In this case, we write the expected loss as
\begin{align*}
	\mathbb{E}[L] &= \int \Bigg( \int ( Y_{1t} - g(\hat{M}_{\cdot t}))^2 \cdot p(Y_{1t} \given \hat{M}_{\cdot t}) \, d Y_{1t} \Bigg) p(\hat{M}_{\cdot t}) \,d \hat{M}_{\cdot t}.
\end{align*}
Under the MSE cost criterion, the optimal estimate is the mean of the predictive distribution, also known as the Bayes' least-squares (BLS) estimate:
\begin{align}
	\hat{M}_{1t} &= \mathbb{E}[ Y_{1t} \given  \hat{M}_{\cdot t}, D] \nonumber
	\\ &= \int Y_{1t} \,\, p (Y_{1t} \given \hat{M}_{\cdot t}, D) dY_{1t} \nonumber
	\\ &= \hat{M}_{\cdot t}^T \, \beta_D.
\end{align}
%To see why the BLS estimate is the minimizer of a quadratic loss criterion, we analyze a simple scalar case where we denote $x$ as the given feature vector and $y$ as the target value. As a refresher, recall that
%\begin{align*}
%	\hat{y}_{\text{BLS}}(x) &= \argmin_{a} \int (y - a)^2 p(y \given x) dy.
%\end{align*}
%Differentiating the above integral, we obtain
%\begin{align*}
%	\pdv{}{a} \int (y - a)^2 p(y \given x) dy &= \int \pdv{}{a} (y - a)^2 p(y \given x) dy
%	\\ &= -2 \int (y-a) p(y \given x )dy.
%\end{align*}
%Setting the expression to zero at $a = \hat{y}_{\text{BLS}}(x)$ gives us our desired result:
%\begin{align*}
%	\Big[ \int (y-a) p(y \given x )dy \Big]_{a = \hat{y}_{\text{BLS}}(x)}
%	&= \int y p(y \given x) dy - \hat{y}_{\text{BLS}}(x) \int p(y \given x) dy
%	\\ &= \mathbb{E}[y \given x] - \hat{y}_{\text{BLS}}(x) \int p(y \given x) dy
%	\\ &= \mathbb{E}[y \given x]- \hat{y}_{\text{BLS}}(x) = 0.
%\end{align*}
%A similar derivation applies to vector case. \\

\begin{remark}
	Since the noise variance $\sigma^2$ is usually unknown in practice, we can introduce another conjugate prior distribution $p(\beta, 1/\sigma^2)$ given by the Gaussian-gamma distribution. This prior yields a Student's $t$-distribution for the predictive probability distribution. Alternatively, one can estimate $\sigma^2$ via \eqref{eq:sample_variance}. 
\end{remark}

%\begin{thm}
%	Let $\{X_n\}, X$ be random elements defined on a metric space $S$. Suppose a function $g: S \rightarrow S'$ (where $S'$ is another metric space) has the set of discontinuity points $D_g$ such that $\mathbb{P}(X \in D_g) = 0$. Then,
%	\begin{enumerate}
%		\item $X_n \xrightarrow{d} X \Rightarrow g(X_n) \xrightarrow{d} g(X);$
%		\item $X_n \xrightarrow{p} X \Rightarrow g(X_n) \xrightarrow{p} g(X);$
%		\item $X_n \xrightarrow{a.s.} X \Rightarrow g(X_n) \xrightarrow{a.s.} g(X).$
%	\end{enumerate}
%\end{thm}

% experiments
% experimental results
\section{Experiments} \label{sec:experiments}

We begin by exploring two real-world case studies discussed in \cite{abadie1, abadie2, abadie3} that demonstrate the ability of the original synthetic control's algorithm to produce a reliable counterfactual reality. We use the same case-studies to showcase the ``robustness'' property of our proposed algorithm. Specifically, we demonstrate that our algorithm reproduces similar results even in presence of missing data, and without knowledge of the extra covariates utilized by prior works. We find that our approach, surprisingly, also discovers a few subtle effects that seem to have been overlooked in prior studies. In the following empirical studies, we will employ three different learning procedures as described in the robust synthetic control algorithm: (1) linear regression ($\eta = 0$), (2) ridge regression ($\eta > 0$), and (3) LASSO ($\zeta > 0$). 

%For the purposes of this section, we refer to the algorithm presented in Section \ref{sec:algorithm} as {\em robust synthetic control (linear)}. %Additionally, we introduce a variation to our proposed algorithm by restricting $\hat{\beta}$ to have non-negative components that sum to one; we refer to this variation as {\em robust synthetic control (convex)} \footnote{In the Econometrics literature, an emphasis has been placed on having $\hat{\beta}$ being ``convex'' as it provides an intuitive interpretation: the treatment unit is {\em proportionately like} the donor units.}. 

As described in \cite{abadie1, abadie2, abadie4}, the synthetic control method allows a practitioner to evaluate the reliability of his or her case study results by running placebo tests. One such placebo test is to apply the synthetic control method to a donor unit. Since the control units within the donor pool are assumed to be unaffected by the intervention of interest (or at least much less affected in comparison), one would expect that the estimated effects of intervention for the placebo unit should be less drastic and divergent compared to that of the treated unit. Ideally, the counterfactuals for the placebo units would show negligible effects of intervention. Similarly, one can also perform exact inferential techniques that are similar to permutation tests. This can be done by applying the synthetic control method to every control unit within the donor pool and analyzing the gaps for every simulation, and thus providing a distribution of estimated gaps. In that spirit, we present the resulting placebo tests (for only the case of linear regression) for the Basque Country and California Prop. 99 case studies below to assess the significance of our estimates.

We will also analyze both case studies under a Bayesian setting. From our results, we see that our predictive uncertainty, captured by the standard deviation of the predictive distribution, is influenced by the number of singular values used in the de-noising process. Therefore, we have plotted the eigenspectrum of the two case study datasets below. Clearly, the bulk of the signal contained within the datasets is encoded into the top few singular values -- in particular, the top two singular values. Given that the validation errors computed via forward chaining are nearly identical for low-rank settings (with the exception of a rank-1 approximation), we shall use a rank-2 approximation of the data matrix. In order to exhibit the role of thresholding in the interplay between bias and variance, we also plot the cases where we use threshold values that are too high (bias) or too low (variance). For each figure, the dotted blue line will represent our posterior predictive means while the shaded light blue region spans one standard deviation on both sides of the mean. As we shall see, our predictive uncertainty is smallest in the neighborhood around the pre-intervention period. However, the level of uncertainty increases as we deviate from the the intervention point, which appeals to our intuition.

In order to choose an appropriate choice of the prior parameter $\alpha$, we first use forward-chaining for the ridge regression setting to find the optimal regularization hyperparameter $\eta$. By observing the expressions of \eqref{eq:beta_ridge} and \eqref{eq:beta_map}, we see that $\eta = \alpha \sigma^2$ since ridge regression is closely related to MAP estimation for a zero-mean, isotropic Gaussian prior. Consequently, we choose $\alpha = \eta / \hat{\sigma}^2$ where $\eta$ is the value obtained via forward chaining.

\begin{figure}[h]
	\centering
	\begin{subfigure}[b]{0.45\textwidth}
		\centering
		\includegraphics[width=\textwidth]{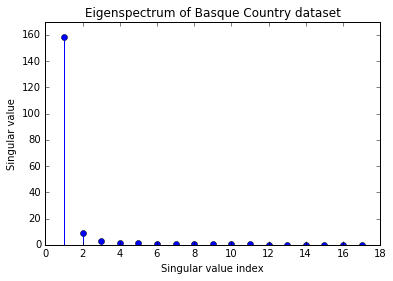}
		\caption{Eigenspectrum of Basque data.}
		\label{fig:basque_eig}
	\end{subfigure}
	\begin{subfigure}[b]{0.45\textwidth}
		\centering
		\includegraphics[width=\textwidth]{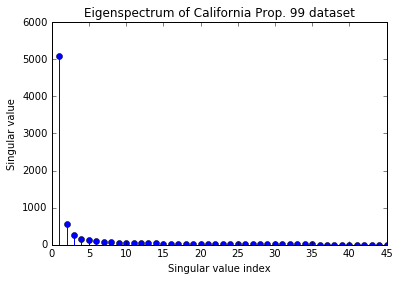}
		\caption{Eigenspectrum of California data.}
		\label{fig:cali_eig}
	\end{subfigure}
\end{figure}

\subsection{Basque Country}
The goal of this case-study is to investigate the effects of terrorism on the economy of Basque Country using the neighboring Spanish regions as the control group. 
In 1968, the first Basque Country victim of terrorism was claimed; however, it was not until the mid-1970s did the terrorist activity 
become more rampant \cite{abadie3}. To study the economic ramifications of terrorism on Basque Country, we only use as data the per-capita GDP (outcome variable) 
of 17 Spanish regions from 1955-1997. We note that in \cite{abadie3}, 13 additional predictor variables for 
each region were used including demographic information pertaining to one's educational status, and average shares for six industrial sectors.

\textbf{Results.} Figure \ref{fig:basque_a} shows that our method (for all three estimators) produces a very similar qualitative synthetic control 
to the original method even though we do not utilize additional predictor variables. Specifically, the synthetic control resembles the observed 
GDP in the pre-treatment period between 1955-1970. However, due to the large-scale terrorist activity in the mid-70s, there is a noticeable 
economic divergence between the synthetic and observed trajectories beginning around 1975. This deviation suggests that terrorist activity 
negatively impacted the economic growth of Basque Country. 
%Statistically, the p-value associated with all of our estimates is $< 10^{-8}$, 
%meaning that we will reject the null hypothesis under all levels of significance, implying that terrorist activity did have a statistically significant impact.

One subtle difference between our synthetic control -- for the case of linear and ridge regression -- and that of \cite{abadie3} is between 1970-75: our approach suggests that
there was a small, but noticeable economic impact starting just prior to 1970, potentially due to first terrorist attack in 1968. Notice, however, that the original synthetic control of \cite{abadie3} diverges only after 1975. Our LASSO estimator's trajectory also agrees with that of the original synthetic control method's, which is intuitive since both estimators seek sparse solutions.

To study the robustness of our approach with respect to missing entries, we discard each data point uniformly at random with probability $1-p$. The resulting control for different values of $p$ is presented in Figure \ref{fig:basque_b} suggesting the robustness of our (linear) algorithm. %Statistically, the p-value associated with all of the estimates is $< 10^{-7}$, 
%meaning that we will reject the null hypothesis under all levels of significance, for all values of $p$ presented in the plots. 
Finally, we produce Figure \ref{fig:basque_c} by applying our algorithm without the de-noising step. As evident from the Figure, the resulting predictions suffer drastically, reinforcing the value of de-noising. Intuitively, using an appropriate threshold $\mu$ equates to selecting the correct model complexity, which helps safeguard the algorithm from potentially overfitting to the training data. 

\begin{figure}[H]
	\centering
	\begin{subfigure}[b]{0.325\textwidth}
		\centering
		\includegraphics[width=\textwidth]{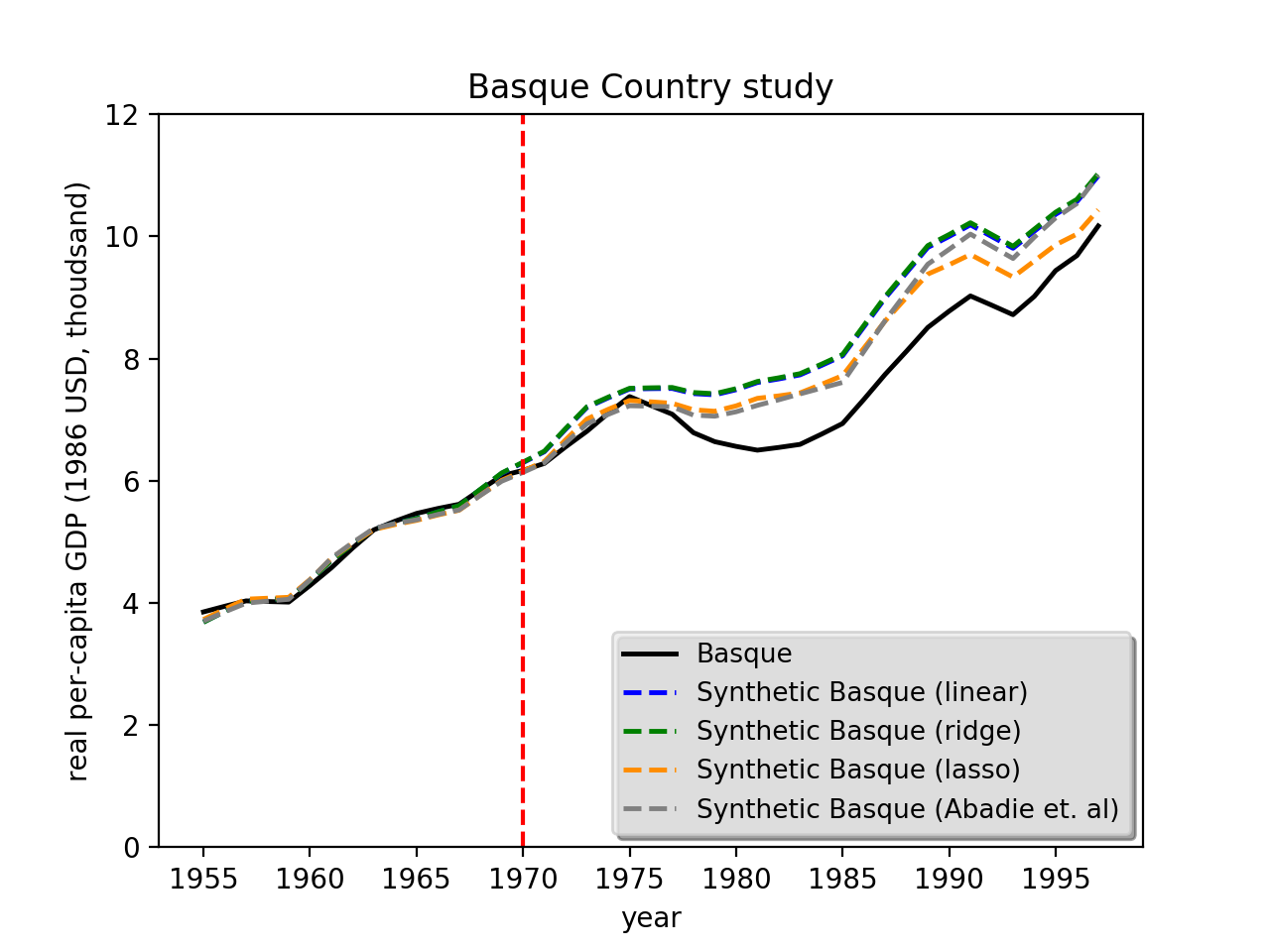}
		\caption{Comparison of methods.}
		\label{fig:basque_a}
	\end{subfigure}
	\begin{subfigure}[b]{0.325\textwidth}
		\centering
		\includegraphics[width=\textwidth]{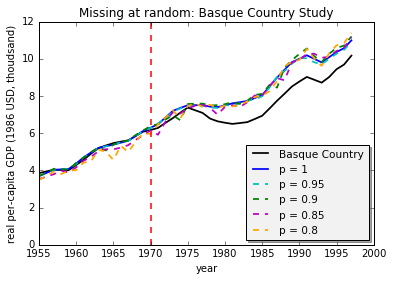}
		\caption{Missing data.}
		\label{fig:basque_b}
	\end{subfigure}
	\begin{subfigure}[b]{0.325\textwidth}
		\centering
		\includegraphics[width=\textwidth]{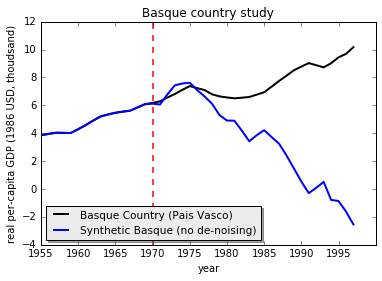}
		\caption{Impact of de-noising.}
		\label{fig:basque_c}
	\end{subfigure}
	\caption{Trends in per-capita GDP between Basque Country vs. synthetic Basque Country. }
\end{figure}

\textbf{Placebo tests.}
We begin by applying our robust algorithm to the Spanish region of Cataluna, a control unit that is not only similar to Basque Country, but also exposed to a much lower level of terrorism \cite{abadie2}. Observing both the synthetic and observed economic evolutions of Cataluna in Figure \ref{fig:cataluna}, we see that there is no identifiable treatment effect, especially compared to the divergence between the synthetic and observed Basque trajectories. We provide the results for the regions of Aragon and Castilla Y Leon in Figures \ref{fig:aragon} and \ref{fig:castilla}.%Statistically, the p-value associated with the placebo estimates is $0.16, 0.34, 0.04$ for Cataluna, Aragon and Castilla Y Leon, respectively. 
%This implies that we cannot reject the null hypothesis under all levels of significance above 0.01, implying that the synthetic control and actual observations are statistically indistinguishable for the placebo regions.

\begin{figure}[H]
	\centering
	\begin{subfigure}[b]{0.325\textwidth}
		\centering
		\includegraphics[width=\textwidth]{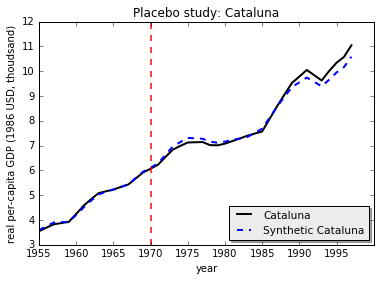}
		\caption{Cataluna.}
		\label{fig:cataluna}
	\end{subfigure}
	\begin{subfigure}[b]{0.325\textwidth}
		\centering
		\includegraphics[width=\textwidth]{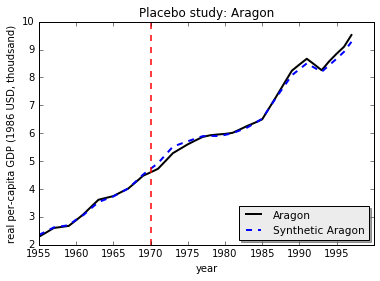}
		\caption{Aragon.}
		\label{fig:aragon}
	\end{subfigure}
	\begin{subfigure}[b]{0.325\textwidth}
		\centering
		\includegraphics[width=\textwidth]{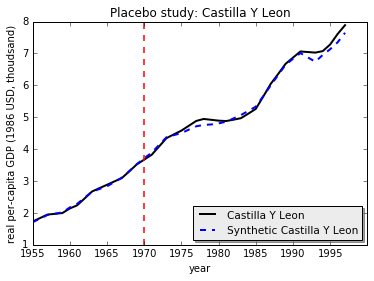}
		\caption{Castilla Y Leon.}
		\label{fig:castilla}
	\end{subfigure}
	\caption{Trends in per-capita GDP for placebo regions. }
\end{figure}

Finally, similar to \cite{abadie2}, we plot the differences between our estimates and the observations for Basque Country and all other regionals, individually, as placebos. Note that \cite{abadie2} excluded five regions that had a poor pre-intervention fit but we keep all regions. Figure  \ref{fig:basque_placebo1} shows the resulting plot for all regions with the solid black line being Basque Country. This plot helps visualize the extreme post-intervention divergence between the predicted means and the observed values for Basque. Up until about 1990, the divergence for Basque Country is the most extreme compared to all other regions (placebo studies) lending credence to the belief that the effects of terrorism on Basque Country were indeed significant. Refer to Figure \ref{fig:basque_placebo2} for the same test but with Madrid and Balearic Islands excluded, as per \cite{abadie2}. The conclusions drawn should remain the same, pointing to the robustness of our approach.

\begin{figure}[H]
	\centering
	\begin{subfigure}[b]{0.35\textwidth}
		\centering
		\includegraphics[width=\textwidth]{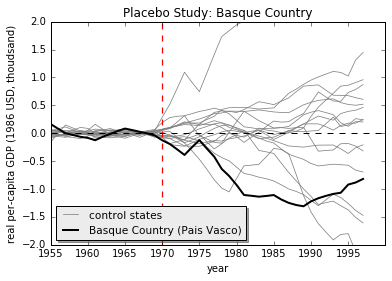}
		\caption{Includes all control regions.}
		\label{fig:basque_placebo1}
	\end{subfigure}
	\begin{subfigure}[b]{0.35\textwidth}
		\centering
		\includegraphics[width=\textwidth]{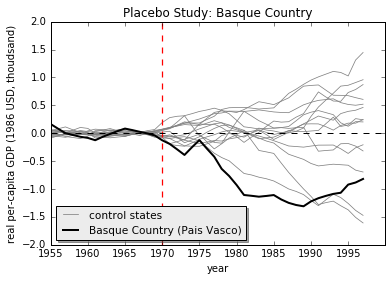}
		\caption{Excludes 2 regions.}
		\label{fig:basque_placebo2}
	\end{subfigure}
	\caption{Per-capita GDP gaps for Basque Country and control regions.}
\end{figure}

%\begin{figure}[H]
%	\centering
%	\begin{subfigure}[b]{0.3\textwidth}
%		\centering
%		\includegraphics[width=\textwidth]{}
%		\caption{Excludes 5 regions.}
%		\label{fig:abadie_basque_placebo3}
%	\end{subfigure}
%	\caption{Per-capita GDP gaps for Basque Country and control regions: results by \cite{abadie2}.}
%\end{figure}

\textbf{Bayesian approach.}
We plot the resulting Bayesian estimates in the figures below under varying thresholding conditions. It is interesting to note that our uncertainty grows dramatically once we include more than two singular values in the thresholding process. This confirms what our theoretical results indicated earlier: choosing a smaller threshold, $\mu$, would lead to a greater number of singular values retained which results in higher variance. On the other hand, notice that just selecting 1 singular value results in an apparently biased estimate which is overestimating the synthetic control. It appears that selecting the top two singular values balance the bias-variance tradeoff the best and is also agrees with our earlier finding that the data matrix appears to be of rank 2 or 3. Note that in this setting, we would find it hard to reject the null-hypothesis because the observations for the treated unit lie within the uncertainty band of the estimated synthetic control.

\begin{figure}[H]
	\centering
	\begin{subfigure}[b]{0.325\textwidth}
		\centering
		\includegraphics[width=\textwidth]{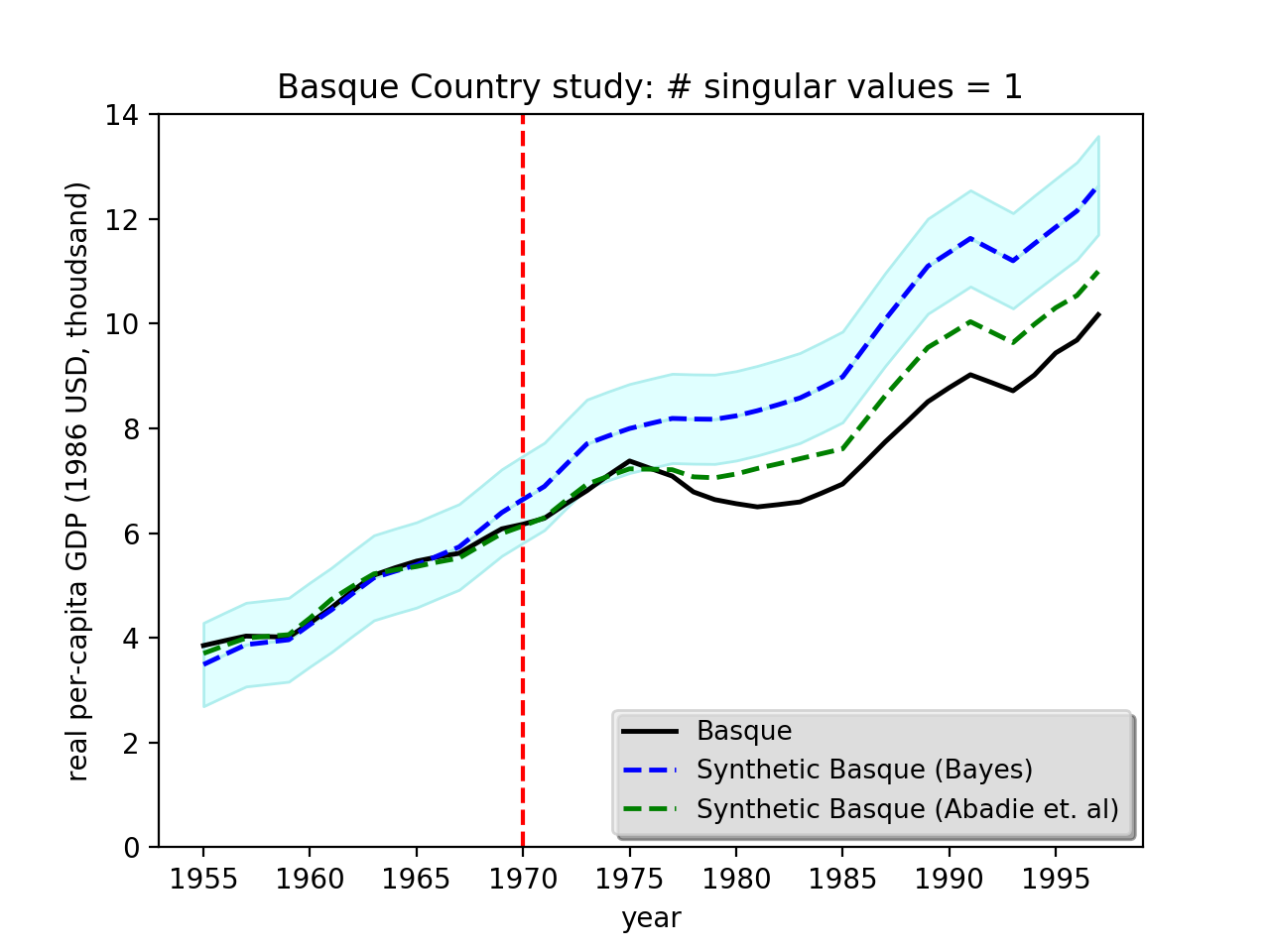}
		\caption{Top singular value.}
		\label{fig:basque_bayes1}
	\end{subfigure}
	\begin{subfigure}[b]{0.325\textwidth}
		\centering
		\includegraphics[width=\textwidth]{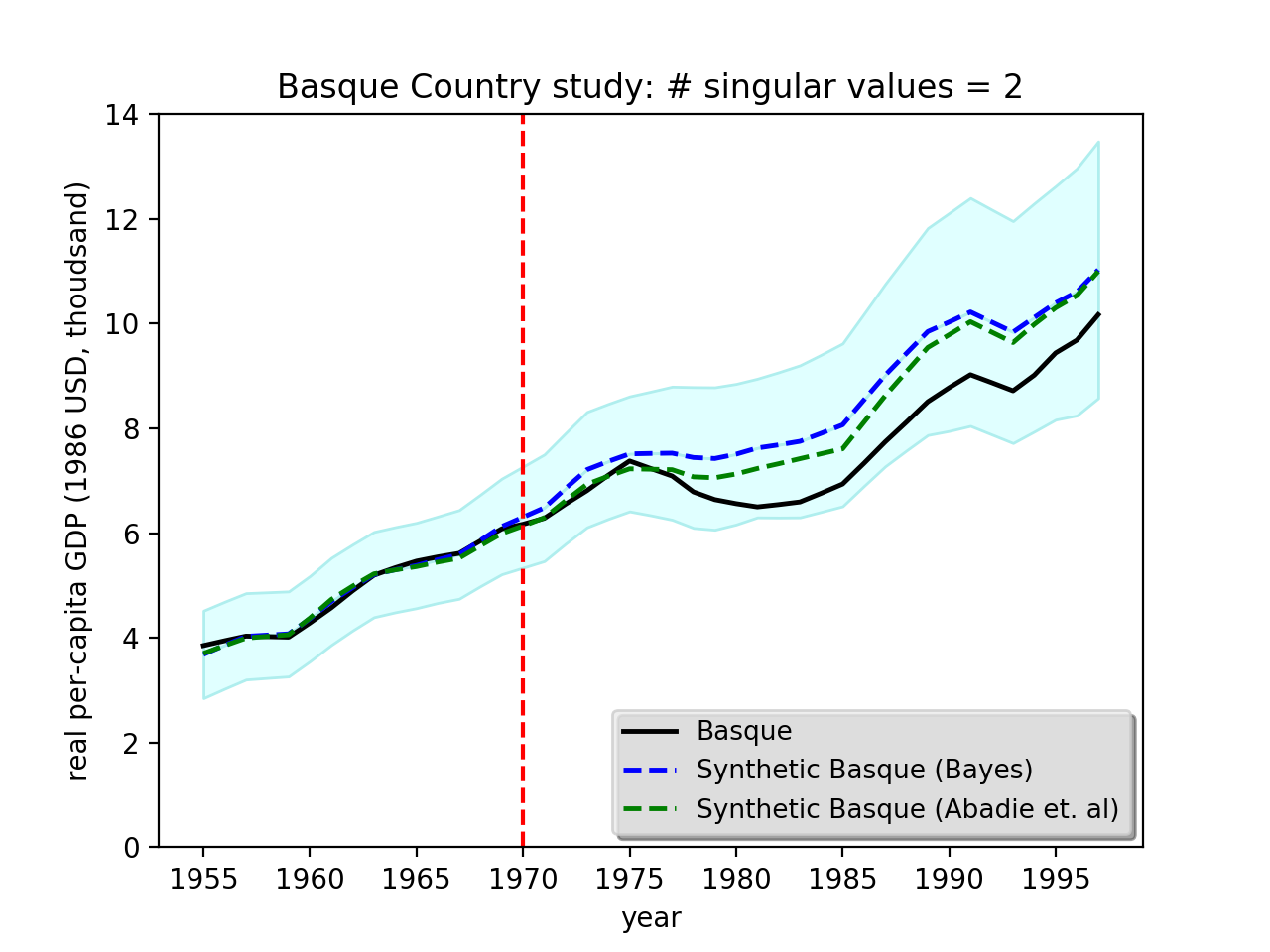}
		\caption{Top two singular values.}
		\label{fig:basque_bayes2}
	\end{subfigure}
	\begin{subfigure}[b]{0.325\textwidth}
		\centering
		\includegraphics[width=\textwidth]{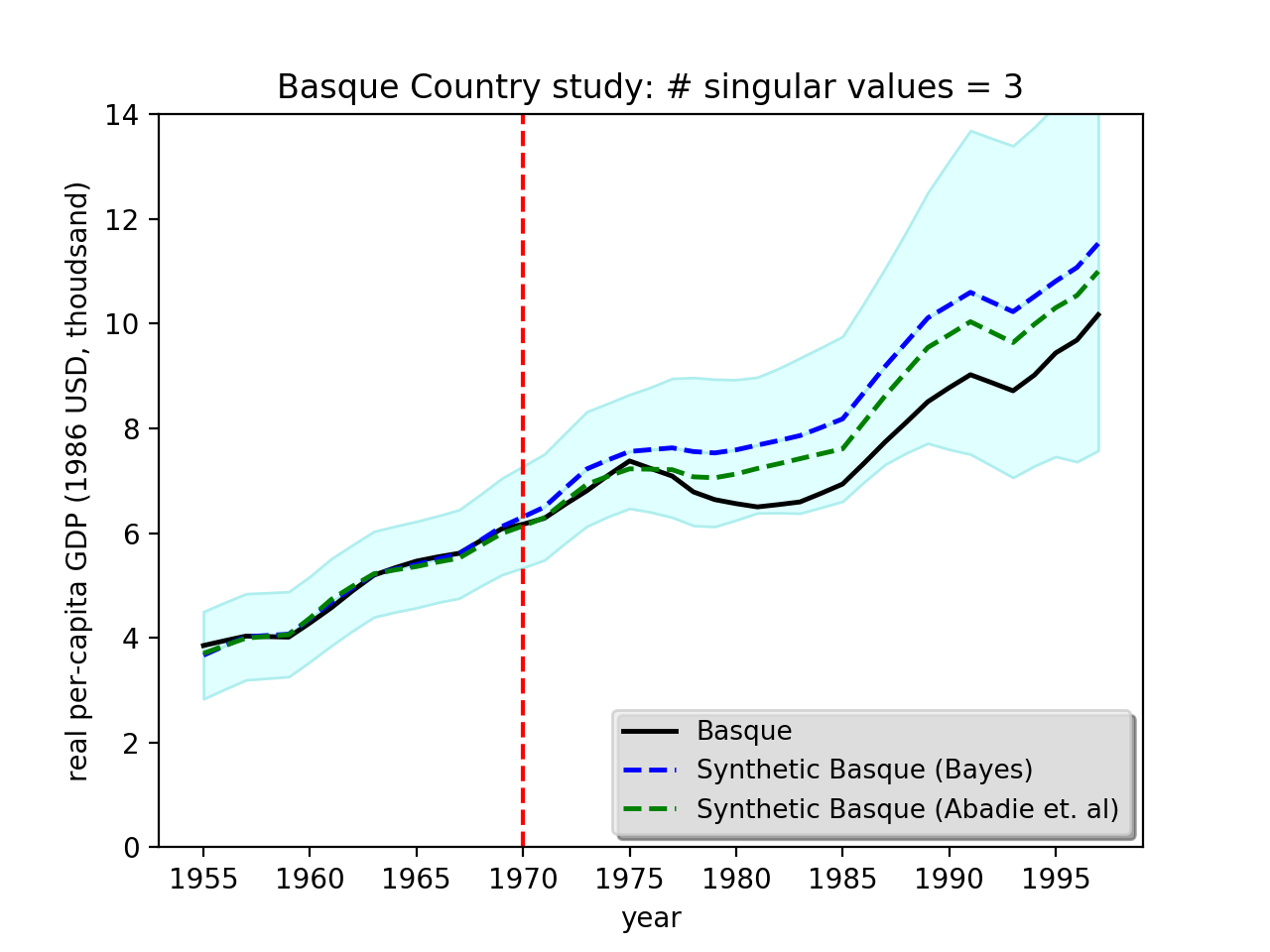}
		\caption{Top three singular values.}
		\label{fig:basque_bayes3}
	\end{subfigure}

	\centering
	\begin{subfigure}[b]{0.325\textwidth}
		\centering
		\includegraphics[width=\textwidth]{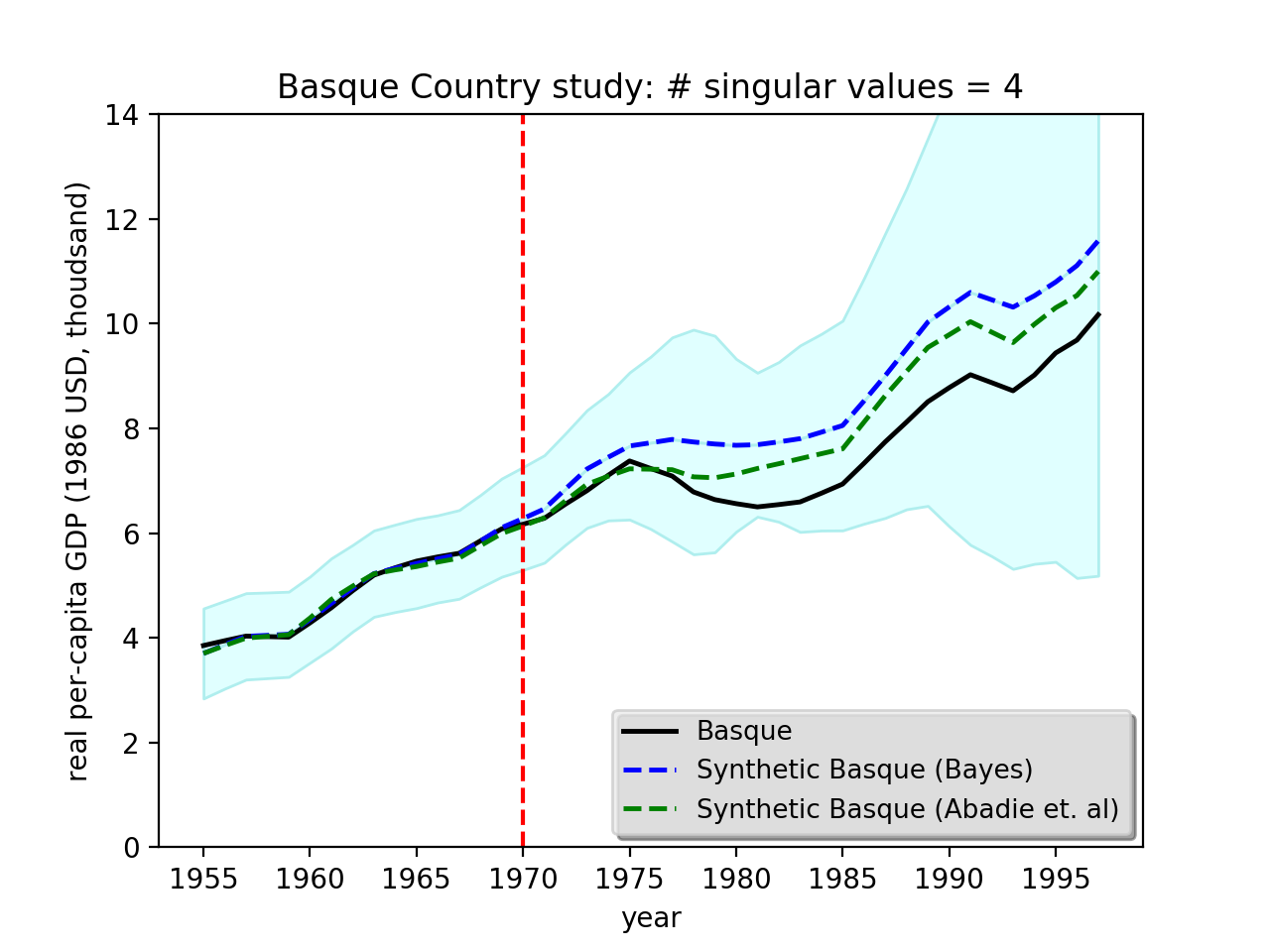}
		\caption{Top four values.}
		\label{fig:basque_bayes4}
	\end{subfigure}
	\begin{subfigure}[b]{0.325\textwidth}
		\centering
		\includegraphics[width=\textwidth]{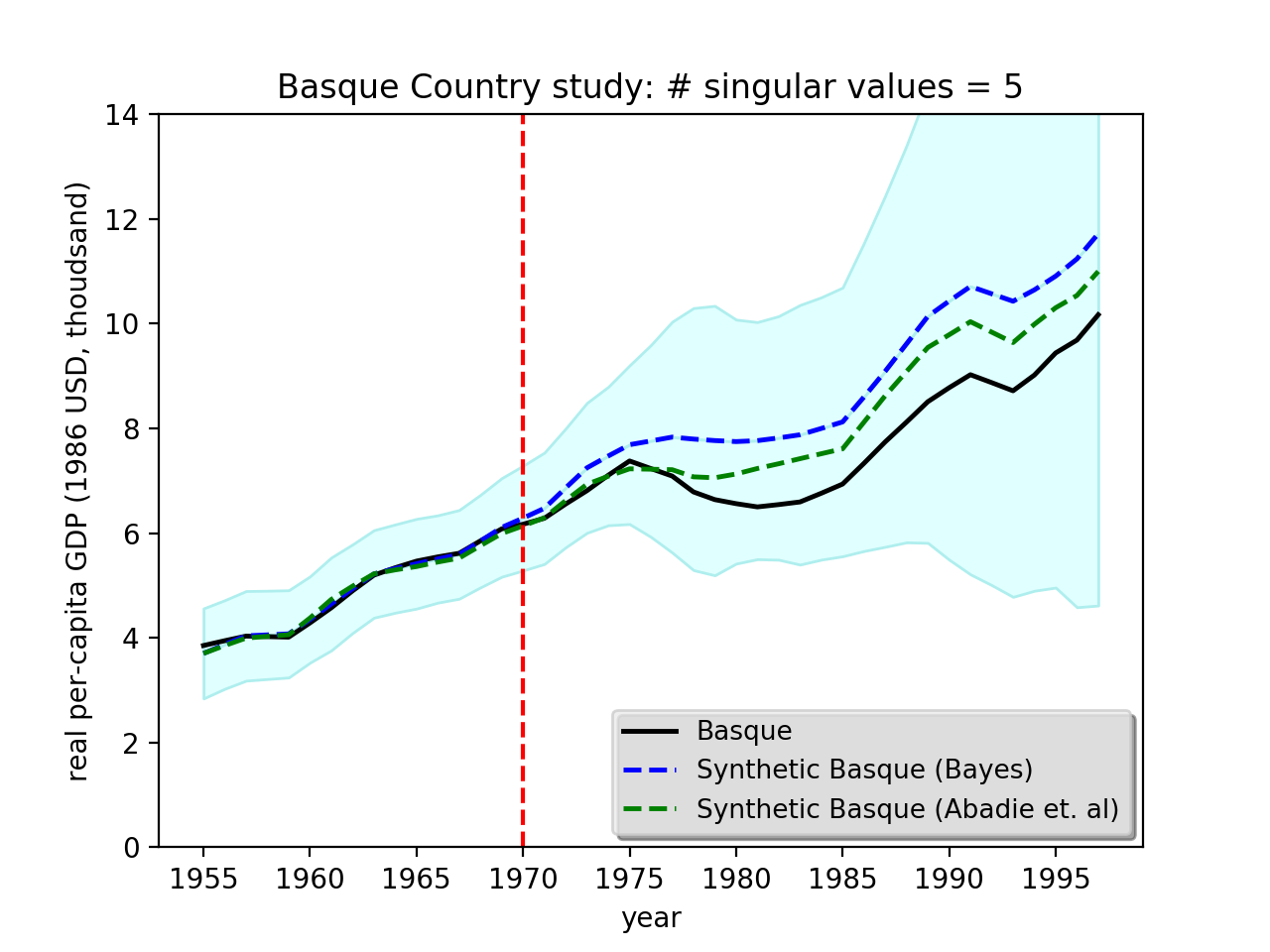}
		\caption{Top five singular values.}
		\label{fig:basque_bayes5}
	\end{subfigure}
	\begin{subfigure}[b]{0.325\textwidth}
		\centering
		\includegraphics[width=\textwidth]{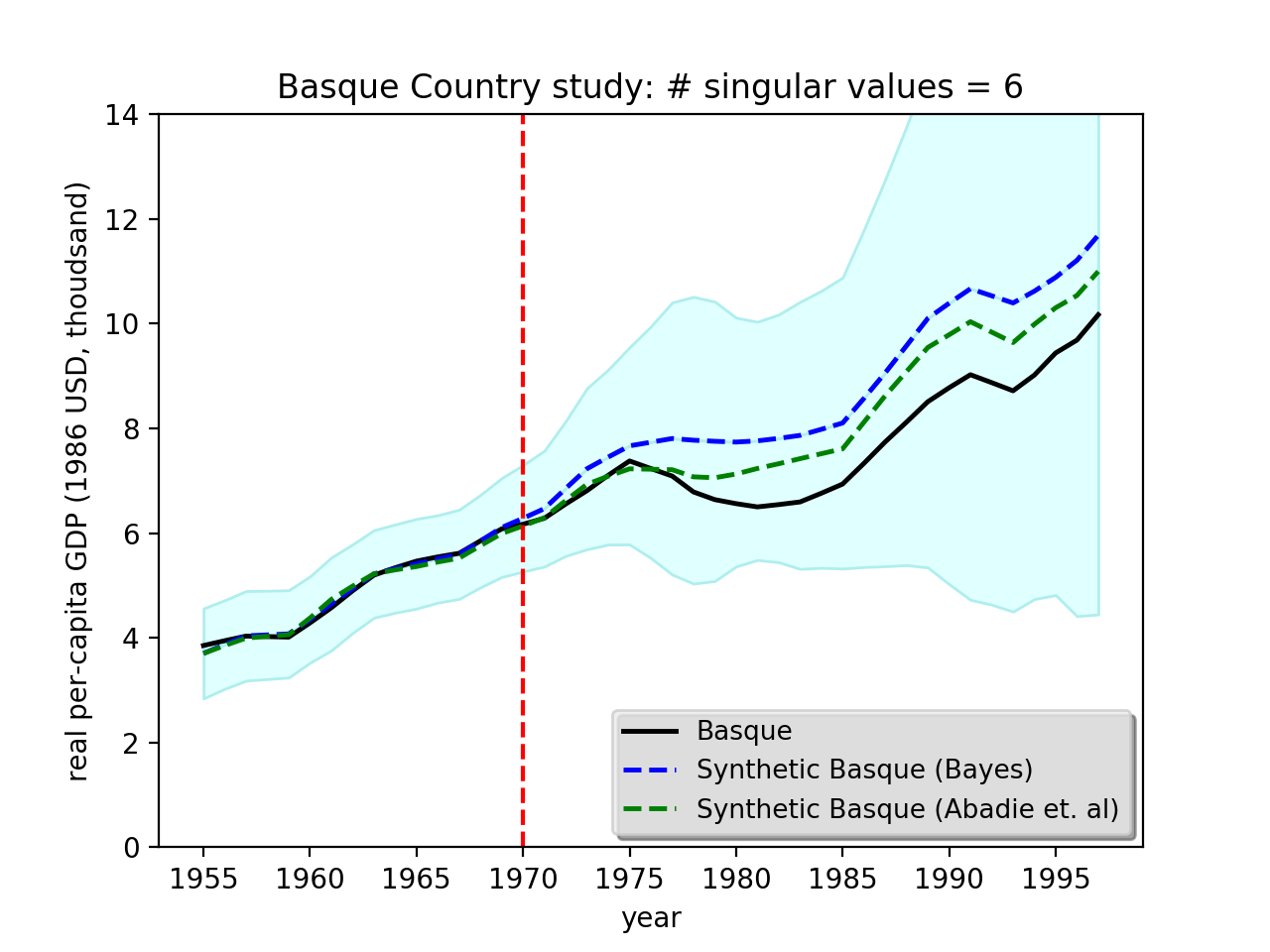}
		\caption{Top six singular values.}
		\label{fig:basque_bayes6}
	\end{subfigure}
	
	\caption{Trends in per-capita GDP between Basque Country vs. synthetic Basque Country. }
\end{figure}

% CALIFORNIA
\subsection{California Anti-tobacco Legislation}
We study the impact of California's anti-tobacco legislation, Proposition 99, on the per-capita cigarette consumption of California. In 1988, California introduced the first modern-time large-scale anti-tobacco legislation in the United States \cite{abadie1}. To analyze the effect of California's anti-tobacco legislation, we use the annual per-capita cigarette consumption at the state-level for all 50 states in the United States, as well as the District of Columbia, from 1970-2015. Similar to the previous case study, \cite{abadie3} uses 6 additional observable covariates per state, e.g. retail price, beer consumption per capita, and percentage of individuals between ages of 15-24, to predict their synthetic California. Furthermore, \cite{abadie3} discarded 12 states from the donor pool since some of these states also adopted anti-tobacco legislation programs or raised their state cigarette taxes, and discarded data after the year 2000 since many of the control units had implemented anti-tobacco measures by this point in time.
%Beginning in January of 1989, California's new tobacco control program implemented numerous measures to reduce cigarette consumption at the state-level.

\textbf{Results.} As shown in Figure \ref{fig:california_a}, in the pre-intervention period of 1970-88, our control matches the observed trajectory. Post 1988, however, there is a significant divergence suggesting that the passage of Prop. 99 helped reduce cigarette consumption. %Statistically, the p-value associated with all of our estimates is $< 10^{-12}$, 
%meaning that we will reject the null hypothesis under all levels of significance, implying that Proposition 99 did have a statistically significant impact.
 Similar to the Basque case-study, our estimated effect is similar to that of \cite{abadie3}. As seen in Figure \ref{fig:california_b}, our algorithm is again robust to randomly missing data. %, where the p-values are all below $10^{-4}$.

\begin{figure}[H]
	\centering
	\begin{subfigure}[b]{0.35\textwidth}
		\centering
		\includegraphics[width=\textwidth]{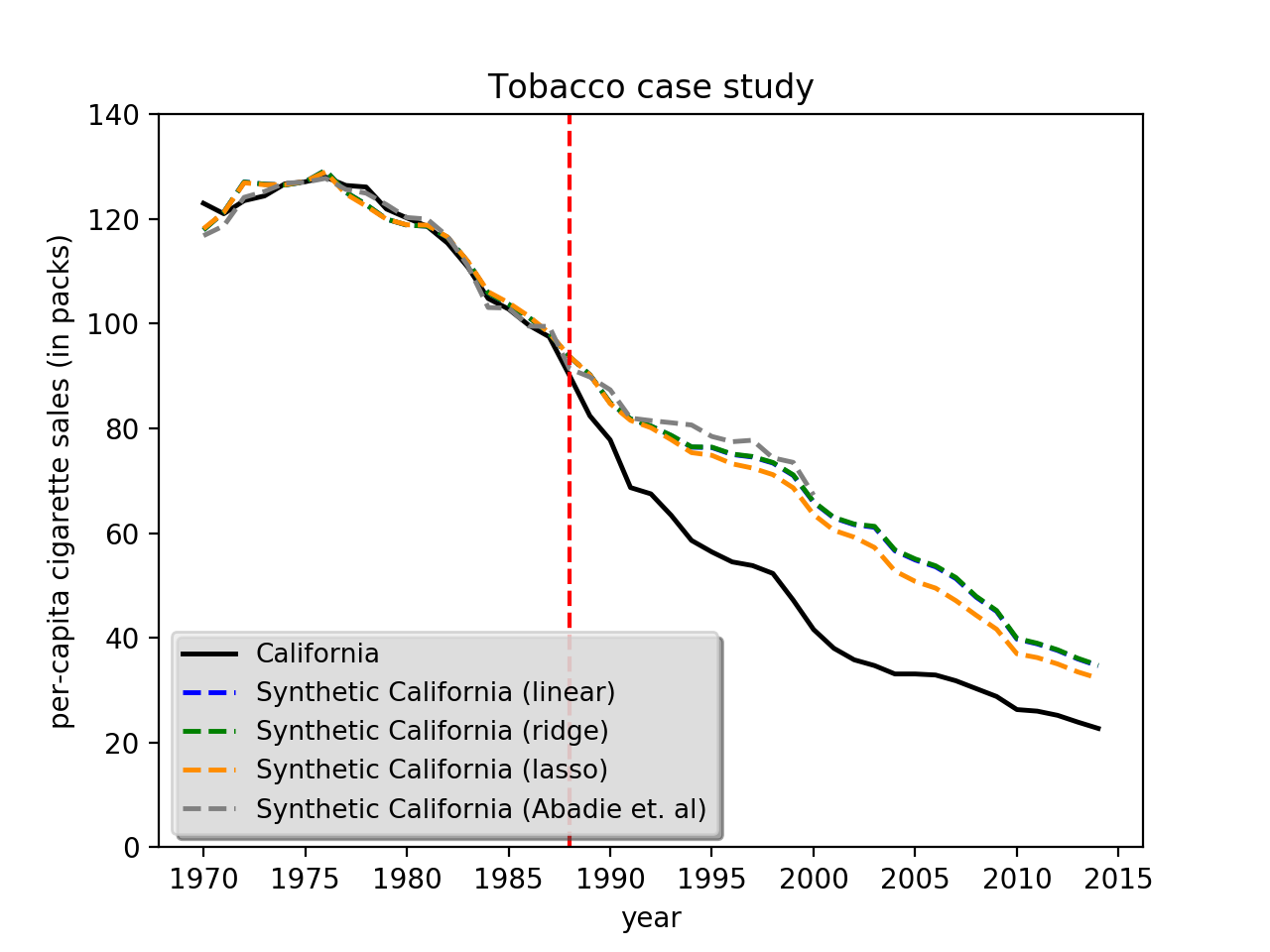}
		\caption{Comparison of methods.}
		\label{fig:california_a}
	\end{subfigure}
	\begin{subfigure}[b]{0.35\textwidth}
		\centering
		\includegraphics[width=\textwidth]{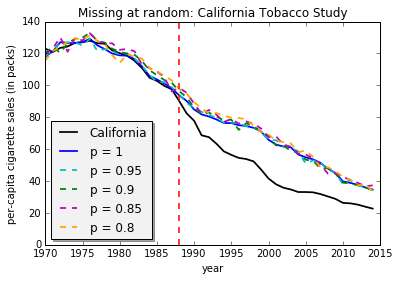}
		\caption{Missing data.}
		\label{fig:california_b}
	\end{subfigure}
	\caption{Trends in per-capita cigarette sales between California vs. synthetic California.}
\end{figure}

\textbf{Placebo tests.}
We now proceed to apply the same placebo tests to the California Prop 99 dataset. Figures \ref{fig:colorado}, \ref{fig:iowa}, and \ref{fig:wyoming} are three examples of the applied placebo tests on the remaining states (including District of Columbia) within the United States. %Statistically, the p-value associated with the placebo estimates is $0.9, 0.12, 0.017$ for Cataluna, Aragon and Castilla Y Leon, respectively. 
%This implies that we cannot reject the null hypothesis under all levels of significance above 0.01, implying that the synthetic control and actual observations are statistically indistinguishable for the placebo regions.
\begin{figure}[H]
	\centering
	\begin{subfigure}[b]{0.325\textwidth}
		\centering
		\includegraphics[width=\textwidth]{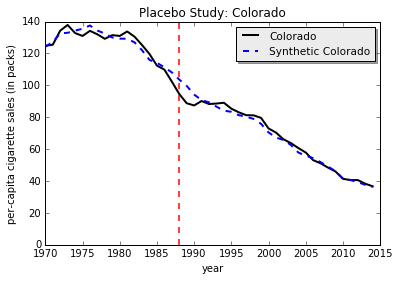}
		\caption{Colorado.}
		\label{fig:colorado}
	\end{subfigure}
	\begin{subfigure}[b]{0.325\textwidth}
		\centering
		\includegraphics[width=\textwidth]{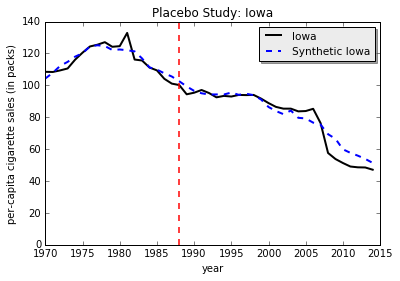}
		\caption{Iowa.}
		\label{fig:iowa}
	\end{subfigure}	
	\begin{subfigure}[b]{0.325\textwidth}
		\centering
		\includegraphics[width=\textwidth]{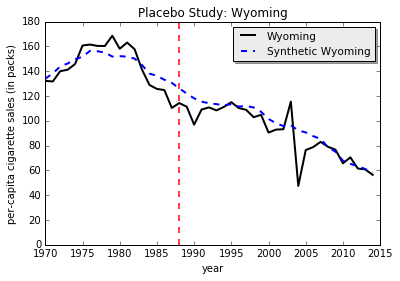}
		\caption{Wyoming.}
		\label{fig:wyoming}
	\end{subfigure}
	\caption{Placebo Study: trends in per-capita cigarette sales for Colorado, Iowa, and Wyoming.}
	\label{fig:tobacco_placebo}
\end{figure}

Finally, similar to \cite{abadie1}, we plot the differences between our estimates and the actual observations for California and all other states, individually, as placebos. Note that \cite{abadie1} excluded twelve states but we keep all states. Figure \ref{fig:cali_p1} shows the resulting plot for all states with the solid black line being California. This plot helps visualize the  extreme post-intervention divergence between the predicted means and the observed values for California. Up until about 1995, the divergence for California was clearly the most significant and consistent outlier compared to all other regions (placebo studies) lending credence to the belief that the effects of Proposition 99 were indeed significant. Refer to Figure \ref{fig:cali_p2} for the same test but with the same twelve states excludes as in \cite{abadie1}. Just like the Basque Country case study, the exclusion of states should not affect the conclusions drawn.

\begin{figure}[H]
	\centering
	\begin{subfigure}[b]{0.35\textwidth}
		\centering
		\includegraphics[width=\textwidth]{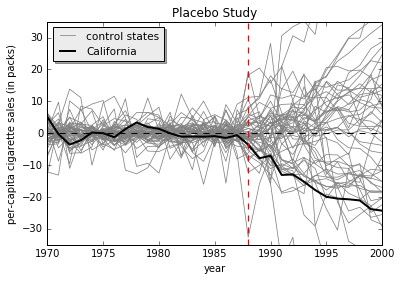}
		\caption{Includes all donors.}
		\label{fig:cali_p1}
	\end{subfigure}
	\begin{subfigure}[b]{0.35\textwidth}
		\centering
		\includegraphics[width=\textwidth]{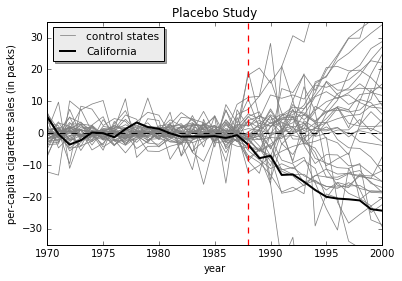}
		\caption{Excludes 12 states.}
		\label{fig:cali_p2}
	\end{subfigure}	
%	\begin{subfigure}[b]{0.325\textwidth}
%		\centering
%		\includegraphics[width=\textwidth]{}
%		\caption{Excludes 12 states during learning.}
%		\label{fig:cali_p3}
%	\end{subfigure}	
	\caption{Per-capita cigarette sales gaps in California and control regions.}
\end{figure}

%\begin{figure}[H]
%	\centering
%	\begin{subfigure}[b]{0.325\textwidth}
%		\centering
%		\includegraphics[width=\textwidth]{}
%		\caption{Excludes 12 states.}
%	\end{subfigure}
%	\caption{Per-capita cigarette sales gaps in California and control regions: results by \cite{abadie1}.}
%	\label{fig:abadie_cali_p}
%\end{figure}

\textbf{Bayesian approach.}
Similar to the Basque Country case study, our predictive uncertainty increases as the number of singular values used in the learning process exceeds two. In order to gain some new insight, however, we will focus our attention to the resulting figure associated with three singular values, which is particularly interesting. Specifically, we observe that our predictive means closely match the counterfactual trajectory produced by the classical synthetic control method in both the pre- and post-intervention periods (up to year 2000), and yet our uncertainty for this estimate is significantly greater than our uncertainty associated with the estimate produced using two singular values. As a result, it may be possible that the classical synthetic control method overestimated the effect of Prop. 99, even though the legislation did probably discourage the consumption of cigarettes -- a conclusion reached by both our robust approach and the classical approach.

\begin{figure}[H]
%	\centering
%	\begin{subfigure}[b]{0.325\textwidth}
%		\centering
%		\includegraphics[width=\textwidth]{cali_bayes1_2000.png}
%		\caption{Top singular value.}
%		\label{fig:cali_bayes1}
%	\end{subfigure}
%	\begin{subfigure}[b]{0.325\textwidth}
%		\centering
%		\includegraphics[width=\textwidth]{cali_bayes2_2000.png}
%		\caption{Top two singular values.}
%		\label{fig:cali_bayes2}
%	\end{subfigure}
%	\begin{subfigure}[b]{0.325\textwidth}
%		\centering
%		\includegraphics[width=\textwidth]{cali_bayes3_2000.png}
%		\caption{Top three singular values.}
%		\label{fig:cali_bayes3}
%	\end{subfigure}

	\begin{subfigure}[b]{0.325\textwidth}
		\centering
		\includegraphics[width=\textwidth]{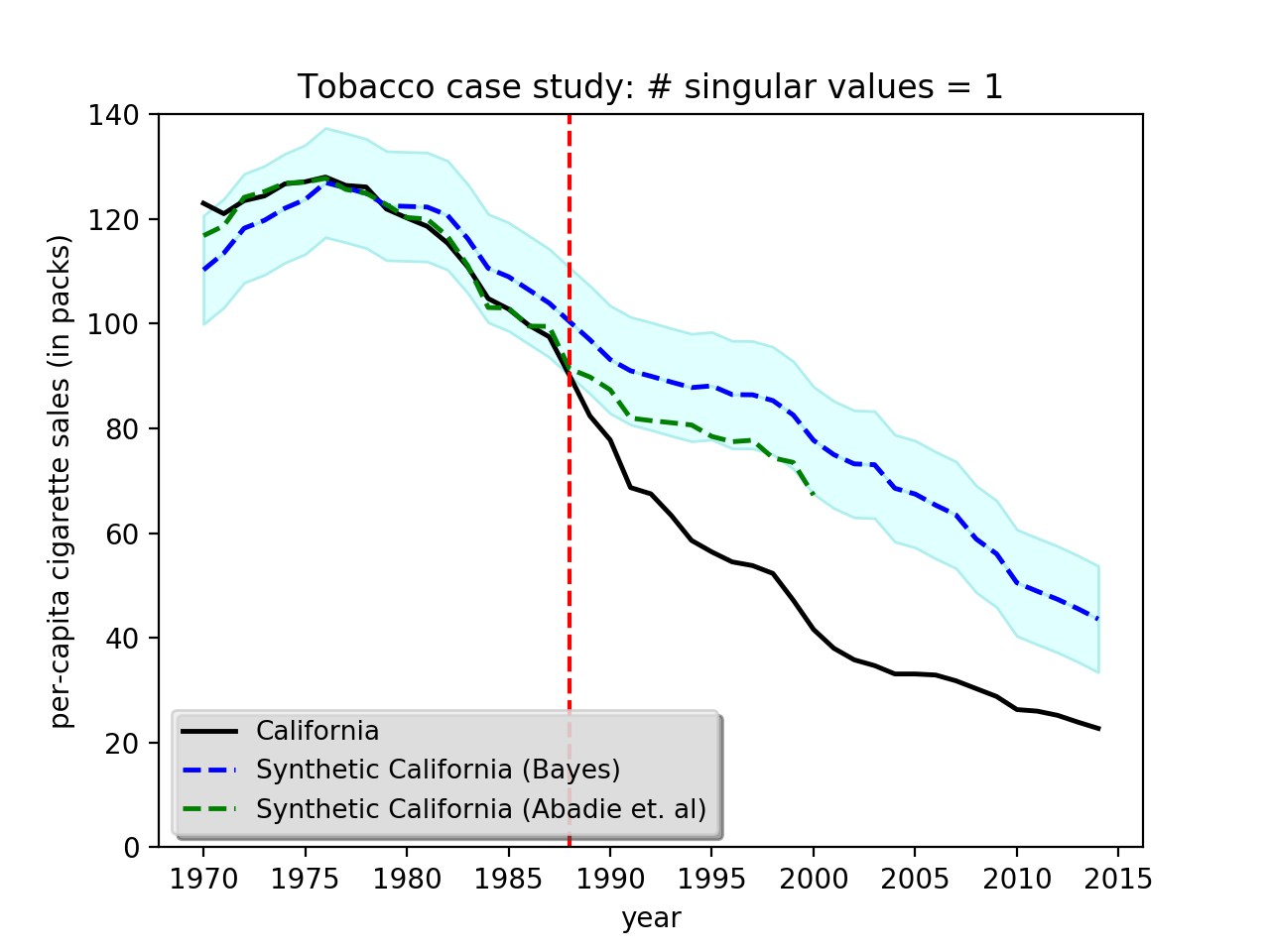}
		\caption{Top singular value.}
		\label{fig:cali_bayes11}
	\end{subfigure}
	\begin{subfigure}[b]{0.325\textwidth}
		\centering
		\includegraphics[width=\textwidth]{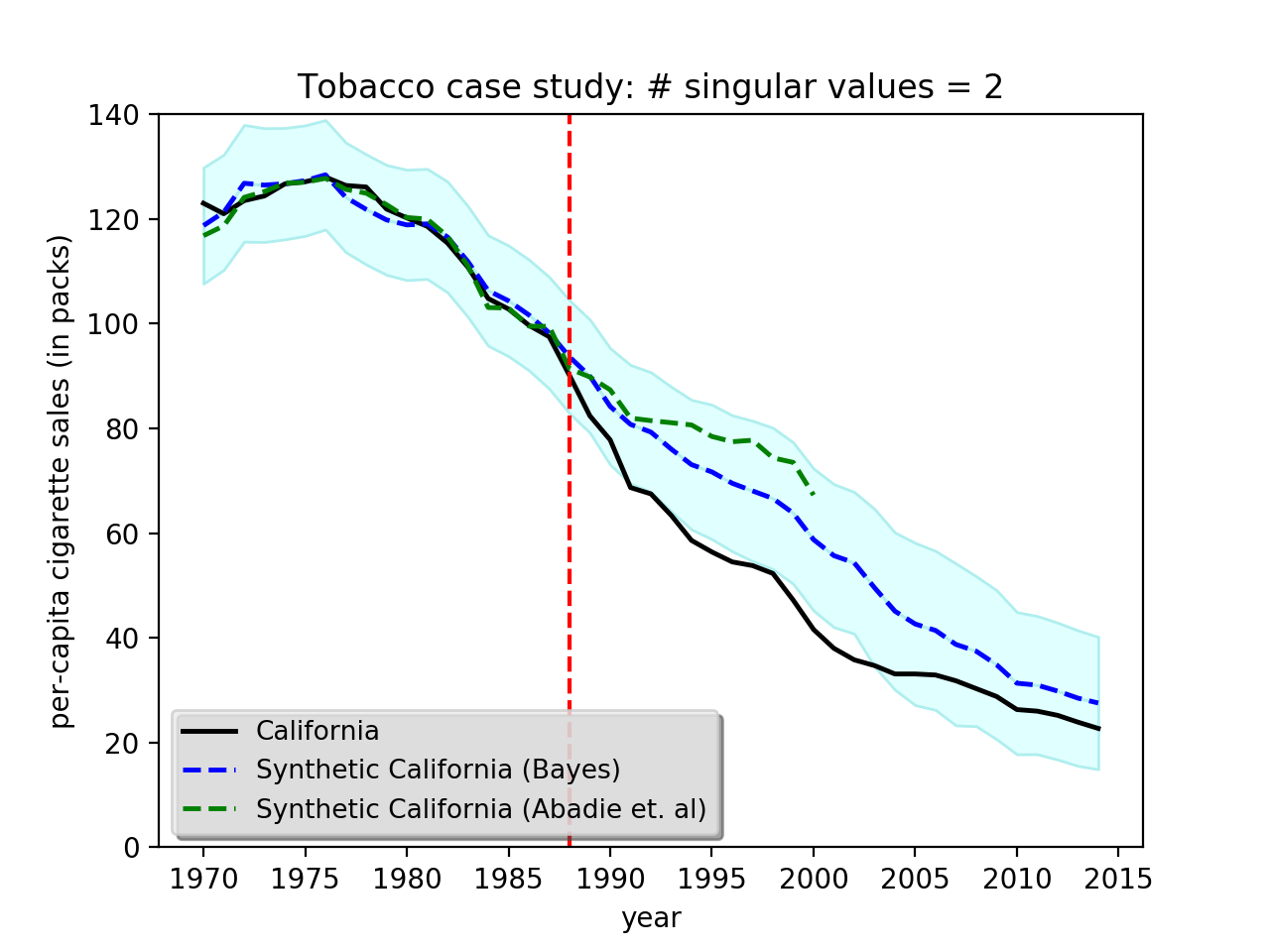}
		\caption{Top two singular values.}
		\label{fig:cali_bayes22}
	\end{subfigure}
	\begin{subfigure}[b]{0.325\textwidth}
		\centering
		\includegraphics[width=\textwidth]{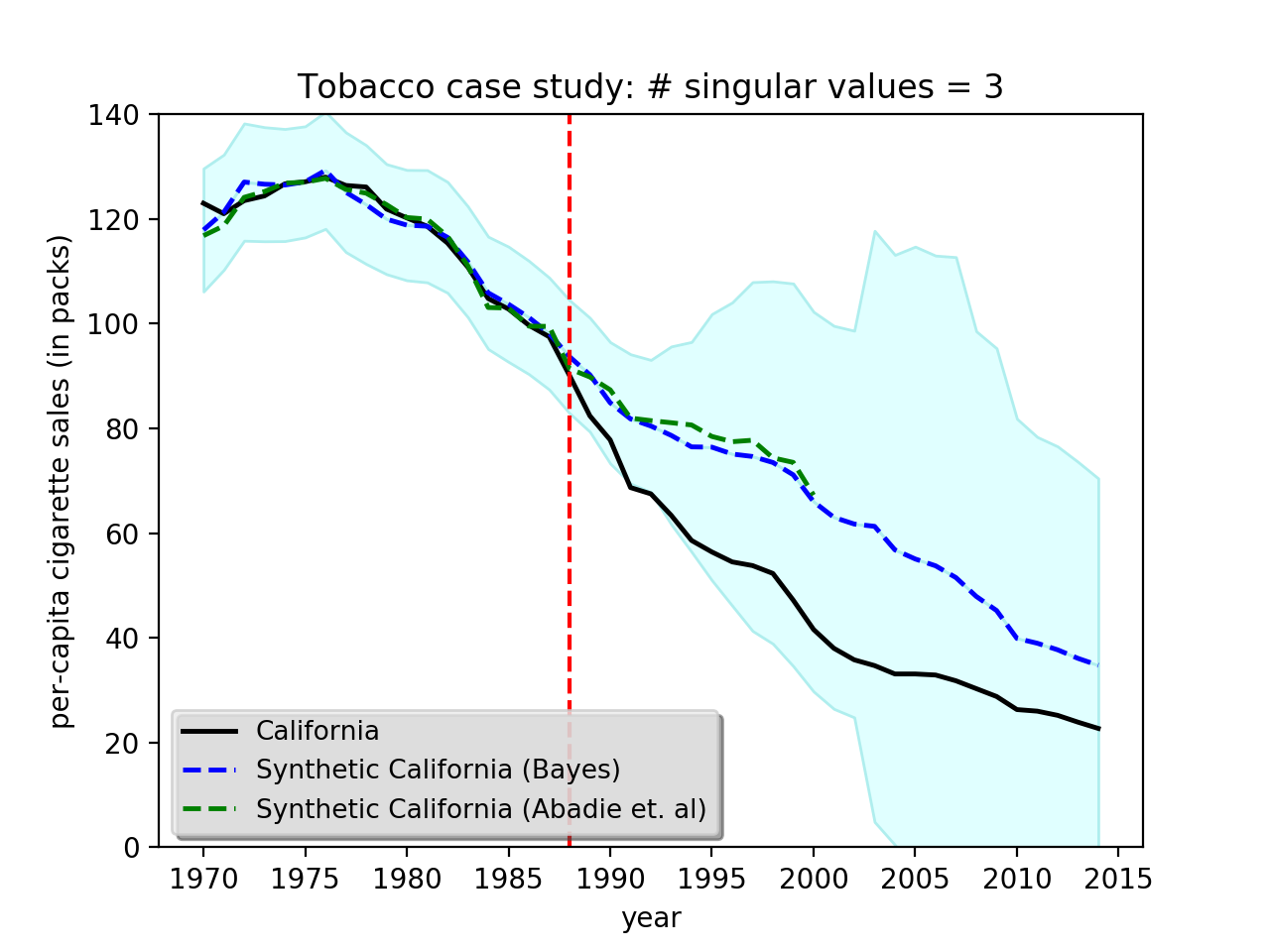}
		\caption{Top three singular values.}
		\label{fig:cali_bayes33}
	\end{subfigure}
	
	\caption{Trends in per-capita cigarette sales between California vs. synthetic California.}
\end{figure}

\begin{remark}
	We note that in \cite{abadie4}, the authors ran two robustness tests to examine the sensitivity of their results (produced via the original synthetic control method) to alterations in the estimated convex weights -- recall that the original synthetic control estimator produces a $\beta^*$ that lies within the simplex. In particular, the authors first iteratively reproduced a new synthetic West Germany by removing one of the countries that received a positive weight in each iteration, demonstrating that their synthetic model is fairly robust to the exclusion of any particular country with positive weight. Furthermore, \cite{abadie4} examined the trade-off between the original method's ability to produce a good estimate and the sparsity of the given donor pool. In order to examine this tension, the authors restricted their synthetic West Germany to be a convex combination of only four, three, two, and a single control country, respectively, and found that, relative to the baseline synthetic West Germany (composed of five countries), the degradation in their goodness of fit was moderate. 
\end{remark}

% SYNTHETIC SIMULATIONS
\subsection{Synthetic simulations}
We conduct synthetic simulations to establish the various properties of the estimates in both the pre- and post-intervention stages. 

\textbf{Experimental setup.}
For each unit $i \in [N]$, we assign latent feature $\theta_i$ by drawing a number uniformly at random in $[0,1]$. For each time $t \in [T]$, we assign latent variable
$\rho_t = t$. The mean value $m_{it} = f(\theta_i, \rho_t)$. In the experiments described in this section, we use the following:
\begin{align*}
f(\theta_i, \rho_t) =& \theta_i + (0.3 \cdot \theta_i \cdot \rho_t/T)*(\exp^{\rho_t/T}) + \\
				& \cos( f_1  \pi / 180) + 0.5 \sin( f_2  \pi / 180) + 
				1.5 \cos(f_3 \pi / 180) - 0.5 sin(f_4* \pi /180) \\
\end{align*}
where $f_1, f_2, f_3, f_4$ define the periodicities: $f_1 = \rho_t \bmod(360), f_2 = \rho_t \bmod(180), f_3 = 2\cdot \rho_t \bmod(360), f_4 = 2.0 \cdot \rho_t \bmod(180)$. The observed value $X_{it}$ is produced by adding i.i.d. Gaussian noise to
mean with zero mean and variance $\sigma^2$. For this set of experiments, we use $N = 100, T = 2000$, while assuming the treatment was performed at $t = 1600$. 

\begin{figure}[H]
\centering
\includegraphics[scale=0.4]{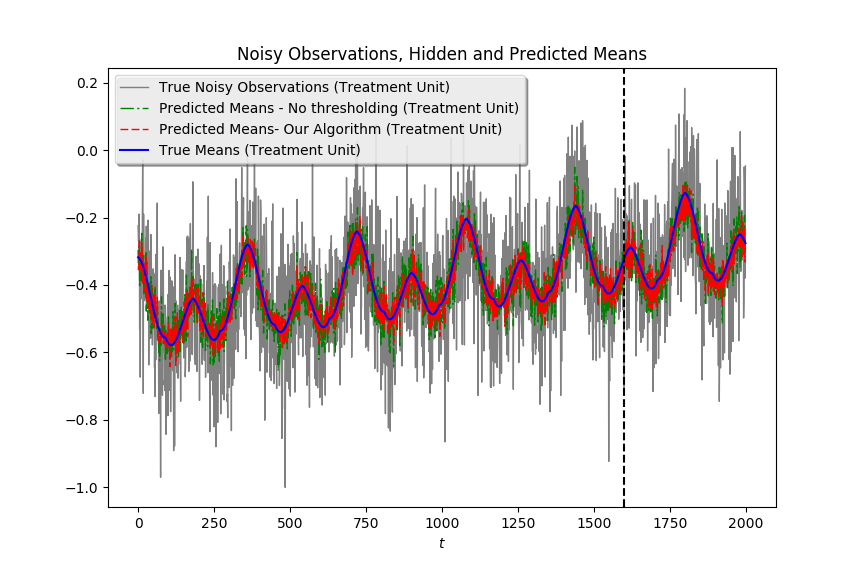}
\caption{Treatment unit: noisy observations (gray) and true means (blue) and the estimates from our algorithm (red) and one where no singular value thresholding is performed (green). The plots show all entries normalized to lie in range $[-1, 1]$. Notice that the estimates in red generated by our model are much better at estimating the true underlying mean (blue) when compared to an algorithm which performs no singular value thresholding.} \label{fig:treatment1}
\end{figure}

\begin{figure}[H]
\centering
\includegraphics[scale=0.5]{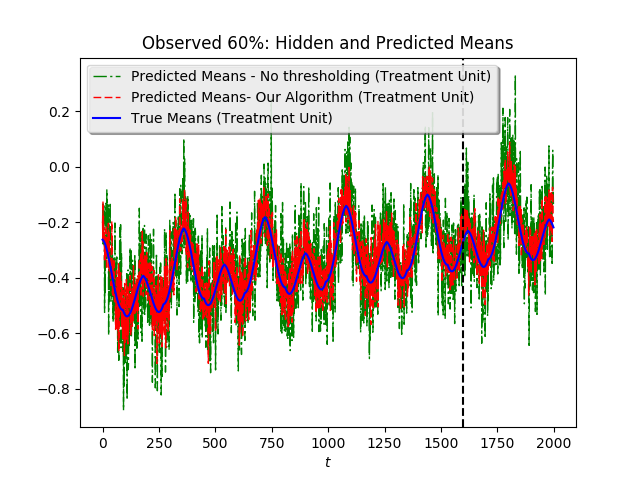}
\caption{Same dataset as shown in Figure \ref{fig:treatment1} but with 40\% data missing at random. Treatment unit: not showing the noisy observations for clarity; plotting true means (blue) and the estimates from our algorithm (red) and one where no singular value thresholding is performed (green). The plots show all entries normalized to lie in range $[-1, 1]$. } \label{fig:treatment2}
\end{figure}

\textbf{Training error approximates generalization error.}
For the first experimental study, we analyze the relationship between the pre-intervention MSE (training error) and the post-intervention MSE (generalization error). As seen in Table \ref{table:generalization}, the post-intervention MSE closely matches that of the pre-intervention MSE for varying noise levels, $\sigma^2$. Thus suggesting efficacy of our algorithm. Figures \ref{fig:treatment1} and \ref{fig:treatment2} plot the estimates
of algorithm with no missing data (Figure \ref{fig:treatment1}) and with 40\% randomly missing data (Figure \ref{fig:treatment2}) on the same underlying dataset. All entries in the plots were normalized to lie within $[-1, 1]$. These plots confirm the robustness of our algorithm. Our algorithm outperforms the algorithm with no singular value thresholding under all proportions of missing data. The estimates from the algorithm which performs no singular value thresholding (green) degrade significantly with missing data while our algorithm remains robust.

\begin{table}[H]
		\caption{Training vs. generalization error}
		\label{table:generalization}
		\centering
		\begin{tabular}{lll}
			\toprule
			\cmidrule{1-3}
			Noise   & Training error  &  Generalization error  \\
			\midrule
			3.1		&  0.48		&  0.53 	\\
			2.5		&  0.31		&  0.34 	\\
			1.9		&  0.19		&  0.22 	\\
			1.3		&  0.09		&  0.1 		\\
			0.7		&  0.027	&  0.03 	\\
			0.4		&  0.008	&  0.009 	\\
			0.1		&  0.0005	&  0.0006 	\\
			\bottomrule 
		\end{tabular}	
	\end{table}

\textbf{Benefits of de-noising.}
We now analyze the benefit of de-noising the data matrix, which is the main contribution of this work compared to the prior work. Specifically, we study the generalization
error of method using de-noising via thresholding and without thresholding as in prior work. The results summarized in Table \ref{table:threshold} show that for range of parameters the generalization error with de-noising is consistency better than that without de-noising. 

\begin{table}[H]
		\caption{Impact of thresholding}
		\label{table:threshold}
		\centering
		\begin{tabular}{lll}
			\toprule
			\cmidrule{1-3}
			Noise    & De-noising error   &  No De-noising  error\\
			\midrule
			3.1	&  0.122	&  0.365 	\\
			2.5	&  0.079	&  0.238 	\\
			1.9	&  0.046	&  0.138 	\\
			1.6	&  0.032	&  0.098 	\\
			1		&  0.013	&  0.038 	\\
			0.7	&  0.006	&  0.018 	\\
			0.4	&  0.002	&  0.005 	\\
			\bottomrule 
		\end{tabular}		
\end{table}

\textbf{Bayesian approach.}
From the synthetic simulations (figures below), we see that the number of singular values included in the thresholding process plays a crucial role in the model's prediction capabilities. If not enough singular values are used, then there is a significant loss of information (high bias) resulting in a higher MSE. On the other hand, if we include too many singular values, then the model begins to overfit to the dataset by misinterpreting noise for signal (high variance). As emphasized before, the goal is to find the simplest model that both fits the data and is also plausible, which is achieved when four singular values are employed.  

\begin{figure}[H]
	\centering
	\begin{subfigure}[b]{0.325\textwidth}
		\centering
		\includegraphics[width=\textwidth]{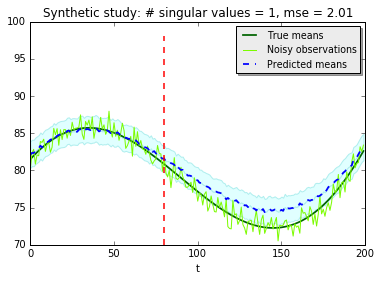}
		\caption{Top singular value.}
		\label{fig:synth_bayes1}
	\end{subfigure}
	\begin{subfigure}[b]{0.325\textwidth}
		\centering
		\includegraphics[width=\textwidth]{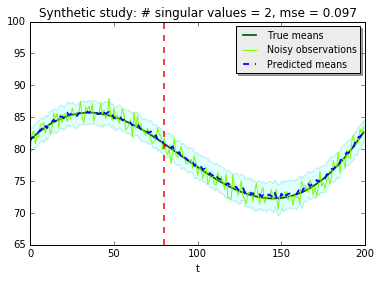}
		\caption{Top two singular values.}
		\label{fig:synth_bayes2}
	\end{subfigure}
	\begin{subfigure}[b]{0.325\textwidth}
		\centering
		\includegraphics[width=\textwidth]{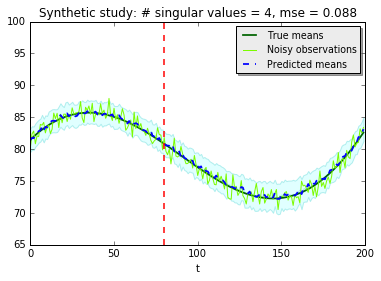}
		\caption{Top four singular values.}
		\label{fig:synth_bayes3}
	\end{subfigure}
	\begin{subfigure}[b]{0.325\textwidth}
		\centering
		\includegraphics[width=\textwidth]{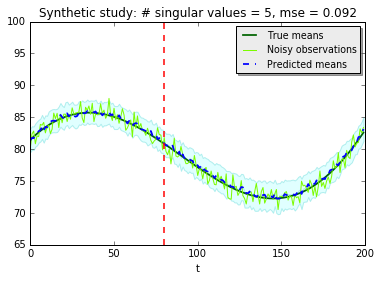}
		\caption{Top five singular values.}
		\label{fig:synth_bayes4}
	\end{subfigure}
	\begin{subfigure}[b]{0.325\textwidth}
		\centering
		\includegraphics[width=\textwidth]{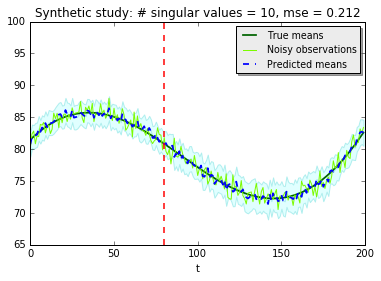}
		\caption{Top 10 singular values.}
		\label{fig:synth_bayes5}
	\end{subfigure}
	\begin{subfigure}[b]{0.325\textwidth}
		\centering
		\includegraphics[width=\textwidth]{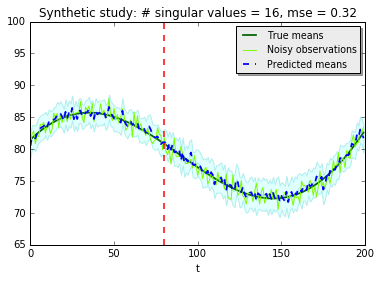}
		\caption{Top 16 singular values.}
		\label{fig:synth_bayes6}
	\end{subfigure}
	%\caption{Trends in per-capita cigarette sales between California vs. synthetic California.}
\end{figure}

% DISCUSSION
%\subsection{Discussion}
%Although the experimental results suggest that our robust algorithm performs on par with that of the original synthetic control algorithm, we want to emphasize that we are not suggesting that practitioners should abandon the use of any additional covariate information or the application of domain knowledge. Rather, we believe that our key algorithmic feature -- the de-noising step -- may render covariates and domain expertise as luxuries as opposed to necessities for many practical applications. If the practitioner has access to supplementary predictor variables, we propose that step one of our algorithm be used as a pre-processing routine for de-noising the data before incorporating additional information. Moreover, other than the obvious benefit of narrowing the donor pool, domain expertise can also come in handy in various settings, such as determining the appropriate method for imputing the missing entries in the data. For instance, if it is known a priori that there is a trend or periodicity in the time series evolution for the units, it may behoove the practitioner to impute the missing entries using ``nearest-neighbors'' or linear interpolation.

% conclusion
% CONCLUSION
\section{Conclusion} \label{sec:conclusion}

The classical synthetic control method is recognized as a powerful and effective technique for causal inference for comparative case studies. In this work, we motivate a robust synthetic control algorithm, which attempts to improve on the classical method in the following regimes: (a) randomly missing data and (b) large levels of noise. We also demonstrate that the algorithm performs well even in the absence of covariate or expert information, but do \textit{not} propose ignoring information which may eliminate ``bad'' donors. Our data-driven algorithm, and its Bayesian counterpart, uses singular value thresholding to impute missing data and ``de-noise'' the observations. Once ``de-noised'', we use regularized linear regression to determine the synthetic control. Motivating our algorithm is a modeling framework, specifically the Latent Variable Model, which is a generalization of the various factor models used in related work. We establish finite-sample bounds on the MSE between the estimated ``synthetic'' control and the latent \textit{true} means of the treated unit of interest. In situations with plentiful data, we show that a simple data aggregation method can lead to an asymptotically consistent estimator. Experiments on synthetically generated data (where the \textit{truth} is known) and on real-world case-studies allow us to demonstrate the promise of our approach, which is an improvement over the classical method.

\newpage
\bibliographystyle{abbrv}
\bibliography{mainbib}

\begin{thebibliography}{10}

\bibitem{abadie1}
A.~Abadie, A.~Diamond, and J.~Hainmueller.
\newblock Synthetic control methods for comparative case studies: Estimating
  the effect of california{\^a}s tobacco control program.
\newblock {\em Journal of the American Statistical Association}, 2010.

\bibitem{abadie2}
A.~Abadie, A.~Diamond, and J.~Hainmueller.
\newblock Synth: An r package for synthetic control methods in comparative case
  studies.
\newblock {\em Journal of Statistical Software}, 2011.

\bibitem{abadie4}
A.~Abadie, A.~Diamond, and J.~Hainmueller.
\newblock Comparative politics and the synthetic control method.
\newblock {\em American Journal of Political Science}, 2014.

\bibitem{abadie3}
A.~Abadie and J.~Gardeazabal.
\newblock The economic costs of conflict: A case study of the basque country.
\newblock {\em American Economic Review}, 2003.

\bibitem{tax}
B.~Adhikari and J.~Alm.
\newblock Evaluating the economic effects of flat tax reforms using synthetic
  control methods.
\newblock {\em Southern Economic Association}, 2016.

\bibitem{aldous}
D.~J. Aldous.
\newblock Representations for partially exchangeable arrays of random
  variables.
\newblock {\em Journal of Multivariate Analysis}, 11(4):581--598, 1981.

\bibitem{athey1}
S.~Athey, M.~Bayati, N.~Doudchenko, and G.~Imbens.
\newblock Matrix completion methods for causal panel data models.
\newblock 2017.

\bibitem{athey}
S.~Athey and G.~Imbens.
\newblock The state of applied econometrics - causality and policy evaluation.
\newblock {\em The Journal of Economic Perspectives}, 31(2):3--32, 2016.

\bibitem{turkey}
H.~Aytug, M.~Kutuk, A.~Oduncu, and S.~Togan.
\newblock Twenty years of the eu-turkey customs union: A synthetic control
  method analysis.
\newblock {\em Journal of Common Market Studies}, 2016.

\bibitem{prop63}
BallotPedia.
\newblock California proposition 63, background checks for ammunition purchases
  and large-capacity ammunition magazine ban (2016).
\newblock {\em www.ballotpedia.org}, 2016.

\bibitem{economic_liberalization}
A.~Billmeier and T.~Nannicini.
\newblock Assessing economic liberalization episodes: A synthetic control
  approach.
\newblock {\em The Review of Economics and Statistics}, 2013.

\bibitem{bishop}
C.~Bishop.
\newblock {\em Pattern Recognition and Machine Learning}.
\newblock Springer, 2006.

\bibitem{lee2}
C.~Borgs, J.~Chayes, C.~E. Lee, and D.~Shah.
\newblock Thy friend is my friend: Iterative collaborative filtering for sparse
  matrix estimation.
\newblock {\em Advances in Neural Information Processing Systems}, 2017.

\bibitem{bayes_sc}
K.~Brodersen, F.~Gallusser, J.~Koehler, N.~Remy, and S.~Scott.
\newblock Inferring causal impact using bayesian structural time-series models.
\newblock {\em Annals of Applied Statistics}, 2015.

\bibitem{recht1}
E.~J. Cand{\`{e}}s and B.~Recht.
\newblock Exact matrix completion via convex optimization.
\newblock {\em CoRR}, abs/0805.4471, 2008.

\bibitem{usvt}
S.~Chatterjee.
\newblock Matrix estimation by universal singular value thresholding.
\newblock {\em Annals of Statistics}, 43:177--214, 2015.

\bibitem{doudchenko}
N.~Doudchenko and G.~Imbens.
\newblock Balancing, regression, difference-in-differences and synthetic
  control methods: A synthesis.
\newblock {\em NBER Working Paper No. 22791}, 2016.

\bibitem{farebrother}
R.~Farebrother.
\newblock Further results on the mean square error of ridge regression.
\newblock {\em Journal of the Royal Statistical Society. Series B
  (Methodological)}, 38(3):248--250, 1976.

\bibitem{ferman2}
B.~Ferman and C.~Pinto.
\newblock Revisiting the synthetic control estimator.
\newblock 2016.

\bibitem{ferman1}
B.~Ferman, C.~Pinto, and V.~Possebom.
\newblock Cherry picking with synthetic controls.
\newblock 2016.

\bibitem{comparison}
J.~Gardeazabal and A.~Vega-Bayo.
\newblock An empirical comparison between the synthetic control method and
  hsiao et al.'s panel data approach to program evaluation.
\newblock {\em Journal of Applied Econometrics}, 2016.

\bibitem{hoover1}
D.~N. Hoover.
\newblock Relations on probability spaces and arrays of random variables.
\newblock 1979.

\bibitem{hoover2}
D.~N. Hoover.
\newblock Row-columns exchangeability and a generalized model for
  exchangeability.
\newblock {\em Exchangeability in probability and statistics}, (281-291), 1981.

\bibitem{hsiao1}
C.~Hsiao.
\newblock {\em Analysis of panel data}.
\newblock Cambridge University Press, 2014.

\bibitem{hsiao2}
C.~Hsiao, H.~Ching, and S.~Wan.
\newblock A panel data approach for program evaluation: Measuring the benefits
  of political and economic integration of hong kong with mainland china.
\newblock {\em Journal of Applied Econometrics}, 2011.

\bibitem{svdjha1}
Jha, S.~K., and R.~D.~S. Yadava.
\newblock Denoising by singular value decomposition and its application to
  electronic nose data processing.
\newblock {\em IEEE Sensors Journal}, 11:35--44, June 2010.

\bibitem{health}
N.~Kreif, R.~Grieve, D.~Hangartner, A.~Turner, S.~Nikolova, and M.~Sutton.
\newblock Examination of the synthetic control method for evaluating health
  policies with multiple treated units.
\newblock {\em Health Economics}, 2015.

\bibitem{lee_song}
C.~E. Lee, Y.~Li, D.~Shah, and D.~Song.
\newblock Blind regression via nearest neighbors under latent variable models.
\newblock {\em Advances in Neural Information Processing Systems}, 2016.

\bibitem{mazumder2010}
R.~Mazumder, T.~Hastie, and R.~Tibshirani.
\newblock Spectral regularization algorithms for learning large incomplete
  matrices.
\newblock {\em Journal of machine learning research}, 11(Aug):2287--2322, 2010.

\bibitem{prop63latimes}
P.~McGreevy.
\newblock California voters approve gun control measure proposition 63.
\newblock {\em Los Angeles Times}, Nov. 2016.

\bibitem{crime}
J.~Saunders, R.~Lundberg, A.~Braga, G.~Ridgeway, and J.~Miles.
\newblock A synthetic control approach to evaluating place-based crime
  interventions.
\newblock {\em Journal of Quantitative Criminology}, 2014.

\bibitem{log_rank}
M.~Udell and A.~Townsend.
\newblock Nice latent variable models have log-rank.
\newblock 2017.

\bibitem{wolfe1}
P.~J. Wolfe and S.~C. Olhede.
\newblock Nonparametric graphon estimation.
\newblock {\em https://arxiv.org/abs/1309.5936}.

\bibitem{xu}
Y.~Xu.
\newblock Generalized synthetic control method: Causal inference with
  interactive fixed effects models.
\newblock {\em Political Analysis}, 25(1), 2017.

\bibitem{yang1}
J.~Yang, Q.~Han, and E.~M. Airoldi.
\newblock Nonparametric estimation and testing of exchangeable graph models.
\newblock {\em Journal of Machine Learning Research, Conference and Workshop
  Proceedings}, 33:1060--1067.

\end{thebibliography}

% Appendix -- PROOFS
\newpage
\appendix
% ***************************
% USEFUL THEOREMS
% ***************************
\section{Useful Theorems}

We present useful theorems that we will frequently employ in our proofs. \\

\begin{thm} \label{thm:singular values} {\bf Perturbation of singular values.} \\
	Let $\bA$ and $\bB$ be two $m \times n$ matrices. Let $k = \min\{m,n\}$. Let $\lambda_1,\dots, \lambda_k$ be the singular values of $\bA$ in decreasing order and repeated by multiplicities, and let $\tau_1, \dots, \tau_k$ be the singular values of $\bB$ in decreasing order and repeated by multiplicities. Let $\delta_1, \dots, \delta_k$ be the singular values of $\bA - \bB$, in any order but still repeated by multiplicities. Then,
	\begin{align*}
	\max_{1 \le i \le k} \abs{ \lambda_i - \tau_i} &\le \max_{1 \le i \le k} \abs{ \delta_i}.
	\end{align*}
\end{thm}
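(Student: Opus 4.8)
The plan is to prove the statement via the variational (min--max) characterization of singular values, which reduces the matrix inequality to an elementary vector inequality. First I would recall that for an $m\times n$ real matrix $\bA$ with $k=\min\{m,n\}$, writing $\sigma_i(\bA)$ for its $i$-th largest singular value (so $\lambda_i=\sigma_i(\bA)$ and $\tau_i=\sigma_i(\bB)$), the Courant--Fischer representation gives
\[
\sigma_i(\bA) \;=\; \min_{\substack{S\subseteq\mathbb{R}^n\\ \dim S \,=\, n-i+1}}\ \max_{\substack{x\in S\\ \norm{x}=1}}\ \norm{\bA x}.
\]
I would also record the trivial but crucial fact that $\max_{1\le j\le k}\abs{\delta_j}=\norm{\bA-\bB}$, since singular values are nonnegative.

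The key step is the reverse triangle inequality in the Euclidean norm: for every unit vector $x$,
\[
\bigl|\,\norm{\bA x}-\norm{\bB x}\,\bigr|\;\le\;\norm{(\bA-\bB)x}\;\le\;\norm{\bA-\bB}.
\]
Fixing $i$ and letting $S^\star$ be an optimal $(n-i+1)$-dimensional subspace for $\bB$ in the display above, I would bound
\[
\sigma_i(\bA)\;\le\;\max_{x\in S^\star,\,\norm{x}=1}\norm{\bA x}\;\le\;\max_{x\in S^\star,\,\norm{x}=1}\bigl(\norm{\bB x}+\norm{\bA-\bB}\bigr)\;=\;\sigma_i(\bB)+\norm{\bA-\bB}.
\]
Swapping the roles of $\bA$ and $\bB$ (using $\norm{\bA-\bB}=\norm{\bB-\bA}$) gives the matching inequality, hence $\abs{\lambda_i-\tau_i}\le\norm{\bA-\bB}$; taking the maximum over $i\in[k]$ yields the claim.

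Alternatively --- and this is the route I would fall back on if I wanted to avoid quoting the rectangular min--max theorem --- one can pass to the symmetric dilation $\widetilde{\bA}=\begin{pmatrix}0&\bA\\ \bA^T&0\end{pmatrix}$, whose eigenvalues are exactly $\pm\sigma_1(\bA),\dots,\pm\sigma_k(\bA)$ together with $\abs{m-n}$ zeros, and similarly for $\bB$ and $\bA-\bB$; since $\norm{\widetilde{\bA}-\widetilde{\bB}}=\norm{\bA-\bB}$, the standard Weyl perturbation inequality for ordered eigenvalues of symmetric matrices (itself a consequence of Courant--Fischer in the symmetric case) transfers directly to the singular values.

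I do not anticipate a genuine obstacle here: the only point requiring care is the bookkeeping in the min--max formula --- getting the subspace dimension $n-i+1$ and the direction of optimization right so that the \emph{same} subspace $S^\star$ can be tested against both $\bA$ and $\bB$ --- and, if the dilation route is taken, checking that padding with the correct number of zero eigenvalues preserves the ordered comparison. Everything else is the reverse triangle inequality together with a symmetry argument.
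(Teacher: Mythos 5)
Your argument is correct. Note that the paper itself does not prove this theorem; it only points to \cite{usvt} for references, so there is no in-paper proof to compare against. Your min--max route is the standard Weyl perturbation argument for singular values: the Courant--Fischer characterization with subspaces of dimension $n-i+1$ is stated correctly, the identification $\max_{1\le j\le k}\abs{\delta_j}=\norm{\bA-\bB}$ is the right reduction, and testing the subspace that is optimal for $\bB$ against $\bA$ (then swapping roles) gives $\abs{\lambda_i-\tau_i}\le\norm{\bA-\bB}$ for every $i$, which is exactly the claim. The fallback via the symmetric dilation, whose eigenvalues are $\pm\sigma_i$ padded with $\abs{m-n}$ zeros, is an equally standard and correct alternative, with the only care point being the one you already flag: the ordered top-$k$ eigenvalues of the dilation coincide with the singular values, so the symmetric Weyl inequality transfers directly. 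Either route supplies a complete, self-contained proof of the cited fact.
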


\noindent References for the proof of the above result can be found in \cite{usvt}, for example. \\

\begin{thm} \label{thm:cauchy} {\bf Poincar\'{e} separation Theorem.} \\
	Let $\bA$ be a symmetric $n \times n$ matrix. Let $\bB$ be the $m \times m$ matrix with $m \le n$, where $\bB = \bP^T \bA \bP$ for some orthogonal projection matrix $\bP$. If the eigenvalues of $\bA$ are $\sigma_1 \le \dots \le \sigma_n$, and those of $\bB$ are $\tau_1 \le \dots \le \tau_m$, then for all $j < m+1$,
	\[ \sigma_j \le \tau_j \le \sigma_{n-m+j}. \]	
\end{thm}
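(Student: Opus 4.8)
The plan is to derive both inequalities from the Courant--Fischer min--max characterization of the eigenvalues of a symmetric matrix, after observing that passing from $\bA$ to $\bB = \bP^{T}\bA\bP$ (with $\bP \in \mathbb{R}^{n \times m}$ having orthonormal columns, so that $\bP^{T}\bP = \bI$ and $\bB$ is genuinely $m \times m$ --- the only reading consistent with the dimensions in the statement) amounts to restricting the Rayleigh quotient of $\bA$ to the $m$-dimensional subspace $\mathcal{P} := \mathrm{range}(\bP) \subseteq \mathbb{R}^{n}$. Concretely, writing $R_{\bA}(x) := \langle \bA x, x\rangle/\langle x,x\rangle$ and $R_{\bB}(y)$ analogously, the identities $\langle \bB y, y\rangle = \langle \bA(\bP y), \bP y\rangle$ and $\langle y,y\rangle = \langle \bP y, \bP y\rangle$ give $R_{\bB}(y) = R_{\bA}(\bP y)$ for every $y \neq 0$; moreover $\bP$ is injective, so it carries each $k$-dimensional subspace $W \subseteq \mathbb{R}^{m}$ bijectively onto a $k$-dimensional subspace $\bP W \subseteq \mathcal{P}$, and conversely every $k$-dimensional subspace of $\mathcal{P}$ arises this way.

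First I would record the two Courant--Fischer formulas in the increasing-order convention of the statement: $\sigma_{j} = \min_{\dim U = j}\,\max_{0 \neq x \in U} R_{\bA}(x)$ and, dually, $\sigma_{j} = \max_{\dim U = n-j+1}\,\min_{0 \neq x \in U} R_{\bA}(x)$, together with the corresponding pair for $\tau_{j}$ over subspaces of $\mathbb{R}^{m}$. Substituting $R_{\bB}(y) = R_{\bA}(\bP y)$ into the first formula for $\tau_{j}$ and using the subspace correspondence above rewrites it as $\tau_{j} = \min\{\max_{0 \neq x \in U} R_{\bA}(x) : U \subseteq \mathcal{P},\ \dim U = j\}$, a minimum over a subfamily of all $j$-dimensional subspaces of $\mathbb{R}^{n}$; hence $\tau_{j} \geq \min_{\dim U = j}\max_{0\neq x \in U} R_{\bA}(x) = \sigma_{j}$, which is the left inequality.

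For the right inequality I would apply the same substitution to the \emph{dual} (max--min) formula, obtaining $\tau_{j} = \max\{\min_{0 \neq x \in U} R_{\bA}(x) : U \subseteq \mathcal{P},\ \dim U = m-j+1\}$. This is a maximum over a subfamily of all $(m-j+1)$-dimensional subspaces of $\mathbb{R}^{n}$, and by the dual Courant--Fischer formula the unrestricted such maximum equals $\sigma_{\ell}$ with $n - \ell + 1 = m-j+1$, i.e.\ $\ell = n-m+j$. Therefore $\tau_{j} \leq \sigma_{n-m+j}$, completing the chain $\sigma_{j} \leq \tau_{j} \leq \sigma_{n-m+j}$.

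I do not expect a genuine obstacle: the argument is a direct two-line consequence of Courant--Fischer once the Rayleigh-quotient restriction is in place. The only point needing care is the index bookkeeping in the dual formula --- matching a subspace of dimension $m-j+1$ to the eigenvalue index $n-m+j$ under the increasing-order convention --- and verifying the edge cases $j=1$ and $j=m$. If one prefers to avoid the dual min--max, an equally routine alternative is to iterate the scalar Cauchy eigenvalue interlacing inequality $n-m$ times (deleting one ambient coordinate per step) after an orthogonal change of basis of $\mathbb{R}^{n}$ that sends $\mathcal{P}$ onto a coordinate subspace, reducing $\bP$ to a coordinate-selection matrix.
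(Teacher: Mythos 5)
Your proof is correct: the identity $R_{\bB}(y)=R_{\bA}(\bP y)$, the bijection between $k$-dimensional subspaces of $\mathbb{R}^m$ and of $\mathrm{range}(\bP)$, and the two Courant--Fischer formulas (with the index match $n-\ell+1=m-j+1$ giving $\ell=n-m+j$) together yield both inequalities, and you rightly note that $\bP$ must be read as an $n\times m$ matrix with orthonormal columns for the dimensions to make sense. The paper itself states this result without proof, as a standard auxiliary fact (remarking only that the principal-submatrix case is Cauchy's interlacing law), so there is no in-paper argument to compare against; your Courant--Fischer derivation is the standard textbook proof and is complete.
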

\begin{remark} 
In the case where $\bB$ is the principal submatrix of $\bA$ with dimensions $(n-1) \times (n-1)$, the above Theorem is also known as Cauchy's interlacing law. \\
\end{remark}

\begin{thm} \label{thm:bernstein} {\bf Bernstein's Inequality.} \\
	Suppose that $X_1, \dots, X_n$ are independent random variables with zero mean, and M is a constant such that $\abs{X_i} \le M$ with probability one for each $i$. Let $S := \sum_{i=1}^n X_i$ and $v := \text{Var}(S)$. Then for any $t \ge 0$,
	\begin{align*}
		\mathbb{P}(\abs{S} \ge t) &\le 2 \exp(- \dfrac{3 t^2}{6v + 2Mt} ).
	\end{align*}
\end{thm}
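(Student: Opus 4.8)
The plan is to prove Bernstein's inequality by the classical exponential-moment (Chernoff) method: bound the moment generating function of each summand using only the boundedness $\abs{X_i} \le M$ and the zero-mean assumption, exploit independence to multiply these bounds, and then optimize over the free exponential parameter. First I would apply the exponential Markov inequality, namely for any $\lambda > 0$, $\mathbb{P}(S \ge t) = \mathbb{P}(e^{\lambda S} \ge e^{\lambda t}) \le e^{-\lambda t}\,\mathbb{E}[e^{\lambda S}]$, and then use independence to factor $\mathbb{E}[e^{\lambda S}] = \prod_{i=1}^n \mathbb{E}[e^{\lambda X_i}]$.

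The heart of the argument is controlling each factor. Writing $\sigma_i^2 = \mathbb{E}[X_i^2]$ and using $\mathbb{E}[X_i]=0$, I would Taylor-expand to get $\mathbb{E}[e^{\lambda X_i}] = 1 + \sum_{k \ge 2} \lambda^k \mathbb{E}[X_i^k]/k!$, where the linear term vanishes. Boundedness yields the moment estimate $\abs{\mathbb{E}[X_i^k]} \le M^{k-2}\sigma_i^2$ for every $k \ge 2$. Combining this with the combinatorial inequality $k! \ge 2 \cdot 3^{k-2}$ (valid for all $k \ge 2$, provable by a one-line induction) lets me dominate the tail of the series by a geometric series and conclude, for $0 < \lambda < 3/M$, that $\mathbb{E}[e^{\lambda X_i}] \le 1 + \tfrac{\sigma_i^2 \lambda^2/2}{1 - \lambda M/3} \le \exp\!\big(\tfrac{\sigma_i^2 \lambda^2/2}{1 - \lambda M/3}\big)$, where the last step uses $1+x \le e^x$. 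Taking the product over $i$ and using $\sum_i \sigma_i^2 = \text{Var}(S) = v$ (again by independence and zero means) gives $\mathbb{E}[e^{\lambda S}] \le \exp\!\big(\tfrac{v\lambda^2/2}{1 - \lambda M/3}\big)$, hence $\mathbb{P}(S \ge t) \le \exp\!\big({-\lambda t} + \tfrac{v\lambda^2/2}{1 - \lambda M/3}\big)$. I would then substitute the near-optimal value $\lambda = t/(v + Mt/3)$, verify that it lies in the admissible range $(0, 3/M)$, and compute that the exponent simplifies exactly to $-\tfrac{t^2/2}{v + Mt/3} = -\tfrac{3t^2}{6v + 2Mt}$. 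Running the identical argument on the variables $\{-X_i\}$ bounds $\mathbb{P}(S \le -t)$ by the same quantity, and a union bound over the two tails supplies the factor of $2$ in the stated two-sided inequality.

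The main obstacle is matching the precise constant in the denominator $6v + 2Mt$: this is exactly what the sharper combinatorial bound $k! \ge 2 \cdot 3^{k-2}$ delivers, whereas a cruder estimate such as $k! \ge 2^{k-1}$ would only recover a weaker constant. The delicate bookkeeping is therefore in pinning down the geometric-series bound on the moment generating function so that the $1/3$ emerges, and in checking that the chosen $\lambda$ is admissible so that the series (and the closed-form for $e^{\lambda M}-1-\lambda M$ that it replaces) converges; the remaining algebraic simplification of the exponent at the optimizing $\lambda$ is routine.
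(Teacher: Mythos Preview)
Your proof is correct and is the standard Chernoff--Cram\'er argument for Bernstein's inequality, with the constants carefully tracked. The paper, however, does not prove this statement at all: Bernstein's inequality appears in the appendix under ``Useful Theorems'' as a classical result quoted without proof or even a reference, and is simply invoked later (e.g., to control $\mathbb{P}(E_1)$ in the preliminaries). So there is nothing in the paper to compare against; you have supplied a complete proof where the authors chose to cite the result as folklore.

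One very minor remark on your write-up: when you assert that the chosen $\lambda = t/(v + Mt/3)$ lies in $(0,3/M)$, this requires $v>0$. The degenerate case $v=0$ forces each $X_i=0$ almost surely (zero mean, zero variance, bounded), so $S=0$ and the inequality is vacuous for $t>0$ and reads $1 \le 2$ for $t=0$. It is worth a half-sentence to dispose of this edge case explicitly.
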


\begin{thm} \label{thm:hoeffding} {\bf Hoeffding's Inequality.} \\
	Suppose that $X_1, \dots, X_n$ are independent random variables that are strictly bounded by the intervals $[a_i, b_i]$. Let $S := \sum_{i=1}^n X_i$. Then for any $t > 0$, 
	\begin{align*}
		\Pb (\abs{S - \E[S]} \ge t) &\le 2 \exp( - \frac{2 t^2}{\sum_{i=1}^n (b_i - a_i)^2}). 
	\end{align*}
\end{thm}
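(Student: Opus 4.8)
\textbf{The plan} is to prove this via the \emph{Chernoff bounding method}: exponentiate, apply Markov's inequality, exploit independence to factorize the moment generating function, control each factor with a bound on the MGF of a bounded centered random variable (Hoeffding's lemma), and finally optimize over the free exponential parameter. First I would reduce to a one-sided tail bound: it suffices to show $\Pb(S - \E[S] \ge t) \le \exp\big(-2t^2/\sum_{i=1}^n (b_i-a_i)^2\big)$, since applying the same estimate to the variables $-X_i$ controls the lower tail $\Pb(S - \E[S] \le -t)$, and a union bound over the two events produces the factor of $2$ in the statement.

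For the one-sided bound, fix any $s > 0$ and write $Y_i = X_i - \E[X_i]$, so that each $Y_i$ has zero mean and lies in an interval of width $b_i - a_i$. Applying Markov's inequality to the nonnegative random variable $e^{s(S - \E[S])}$ and then using independence gives
\begin{align*}
	\Pb(S - \E[S] \ge t) &\le e^{-st} \, \E\big[ e^{s \sum_{i=1}^n Y_i} \big] = e^{-st} \prod_{i=1}^n \E\big[ e^{s Y_i} \big].
\end{align*}

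The crux is \textbf{Hoeffding's lemma}: if $Y$ is a zero-mean random variable with $Y \in [c,d]$ almost surely, then $\E[e^{sY}] \le \exp\big(s^2 (d-c)^2 / 8\big)$ for every $s \in \mathbb{R}$. I would establish this by setting $\psi(s) = \ln \E[e^{sY}]$ and verifying $\psi(0) = 0$, $\psi'(0) = \E[Y] = 0$, and $\psi''(s) \le (d-c)^2/4$ uniformly in $s$; here $\psi''(s)$ is the variance of $Y$ under the exponentially tilted measure $d\tilde{\Pb} \propto e^{sY}\, d\Pb$, and the variance of \emph{any} random variable supported on $[c,d]$ is at most $(d-c)^2/4$. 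A second-order Taylor expansion with remainder then yields $\psi(s) \le s^2 (d-c)^2 / 8$. Applying this with $Y = Y_i$ on the interval $[a_i - \E[X_i], b_i - \E[X_i]]$ of width $b_i - a_i$ gives $\E[e^{s Y_i}] \le \exp\big(s^2 (b_i - a_i)^2 / 8\big)$.

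Substituting back and collapsing the product yields
\begin{align*}
	\Pb(S - \E[S] \ge t) &\le \exp\Big( -st + \tfrac{s^2}{8} \sum_{i=1}^n (b_i - a_i)^2 \Big),
\end{align*}
whose exponent is quadratic in $s$ and minimized at $s^* = 4t / \sum_{i=1}^n (b_i - a_i)^2$, where it equals $-2t^2 / \sum_{i=1}^n (b_i-a_i)^2$; this proves the one-sided bound and hence the theorem. \textbf{The main obstacle} is Hoeffding's lemma, and within it the uniform curvature estimate $\psi''(s) \le (d-c)^2/4$; everything else (Markov, independence, the scalar optimization) is routine. I would dispatch the variance bound by noting that under the tilted law $\E_{\tilde{\Pb}}[(Y - \E_{\tilde{\Pb}} Y)^2] \le \E_{\tilde{\Pb}}[(Y - m)^2]$ for any constant $m$, and choosing $m = (c+d)/2$ so that $Y$ stays within distance $(d-c)/2$ of the midpoint, giving the bound $(d-c)^2/4$.
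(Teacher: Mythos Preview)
Your argument is correct and is in fact the standard textbook proof of Hoeffding's inequality via the Chernoff method and Hoeffding's lemma. Note, however, that the paper does not supply its own proof of this statement: it is listed in the appendix among ``Useful Theorems'' as a classical result and is stated without proof, so there is no in-paper argument to compare against.
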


\begin{thm} \label{thm:talagrand} {\bf Theorem 3.4 of \cite{usvt}} \\
	Take any two numbers $m$ and $n$ such that $1 \le m \le n$. Suppose that $\bA = [A_{ij}]_{1 \le i \le m, 1 \le j \le n}$ is a matrix whose entries are independent random variables that satisfy, for some $\delta^2 \in [0,1]$,
	\[ \mathbb{E}[A_{ij}] = 0, \quad \mathbb{E}[A_{ij}^2] \le \delta^2, \quad \text{and} \quad \abs{A_{ij}} \le 1 \quad \text{a.s.}  \]
	Suppose that $\delta^2 \ge n^{-1 + \zeta}$ for some $\zeta > 0$. Then, for any $\omega \in (0,1)$, 
	\[ \mathbb{P}( \norm{\bA} \ge (2 + \omega) \delta \sqrt{n}) \le C(\zeta) e^{-c\delta^2n},  \]
	where $C(\zeta)$ depends only on $\omega$ and $\zeta$, and $c$ depends only on $\omega$. The same result is true when $m = n$ and $A$ is symmetric or skew-symmetric, with independent entries on and above the diagonal, all other assumptions remaining the same. Lastly, all results remain true if the assumption $\delta^2 \ge n^{-1 + \zeta}$ is changed to $\delta^2 \ge n^{-1} (\log{n})^{6 + \zeta}$. \\
\end{thm}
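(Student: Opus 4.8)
The plan is to decompose the statement into two essentially independent pieces: a concentration step showing that $\norm{\bA}$ is tightly clustered around its mean, and a first-moment step controlling $\E\norm{\bA}$ by $2\delta\sqrt{n}$ up to lower-order corrections. Combining the two at the scale $t = (\omega/2)\delta\sqrt{n}$ then produces the claimed tail bound, with the failure probability governed by $e^{-c\delta^2 n}$.

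For the concentration step, the key structural observation is that the map $\bA \mapsto \norm{\bA}$ is simultaneously convex and $1$-Lipschitz with respect to the Frobenius distance on the entries. Convexity follows from writing $\norm{\bA} = \sup_{\norm{u} = \norm{v} = 1} u^T \bA v$ as a supremum of linear functionals of the entries, and the Lipschitz constant is $1$ because $\abs{\norm{\bA} - \norm{\bB}} \le \norm{\bA - \bB} \le \norm{\bA - \bB}_F$. Since each entry satisfies $\abs{A_{ij}} \le 1$ almost surely, I would apply Talagrand's concentration inequality for convex Lipschitz functions of independent bounded random variables, which gives
\[ \Pb\big( \abs{\norm{\bA} - \E\norm{\bA}} \ge t \big) \le C e^{-c t^2} \]
for universal constants $C, c > 0$. (Concentration around the median and around the mean are equivalent at this scale, so I phrase everything in terms of the mean.)

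The first-moment step is where the real work lies, since I need the sharp leading constant $2$ and not merely the correct order $\delta\sqrt{n}$. Here I would invoke a sharp bound on the expected operator norm of a matrix with independent, mean-zero, bounded entries and a nontrivial variance profile, of the Latała / Bandeira--van Handel type, which yields
\[ \E\norm{\bA} \le \max_i \sqrt{\textstyle\sum_j \E[A_{ij}^2]} + \max_j \sqrt{\textstyle\sum_i \E[A_{ij}^2]} + C' \sqrt{\log n}. \]
Under $\E[A_{ij}^2] \le \delta^2$ and $m \le n$, the first two terms are each at most $\delta\sqrt{n}$, so $\E\norm{\bA} \le 2\delta\sqrt{n} + C'\sqrt{\log n}$. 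A more self-contained alternative is the moment (trace) method: bound $\E[\operatorname{tr}((\bA \bA^T)^k)]$ for $k$ growing like $\log n$ and extract the norm via $\norm{\bA}^{2k} \le \operatorname{tr}((\bA\bA^T)^k)$, where the dominant closed-walk contributions reproduce the Catalan-number count that pins the spectral edge at $2\delta\sqrt{n}$ while $\abs{A_{ij}} \le 1$ controls the error terms. Either route makes clear that \emph{the main obstacle is ensuring the correction} ($C'\sqrt{\log n}$, or the sub-leading walk contributions) \emph{is negligible relative to} $\delta\sqrt{n}$.

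This is precisely where the hypothesis $\delta^2 \ge n^{-1 + \zeta}$ enters: it guarantees $\delta\sqrt{n} \ge n^{\zeta/2}$, so that $\sqrt{\log n} = o(\delta\sqrt{n})$ and hence $\E\norm{\bA} \le (2 + \omega/2)\delta\sqrt{n}$ for all sufficiently large $n$. Taking $t = (\omega/2)\delta\sqrt{n}$ in the concentration inequality then gives
\[ \Pb\big( \norm{\bA} \ge (2 + \omega)\delta\sqrt{n} \big) \le \Pb\big( \norm{\bA} \ge \E\norm{\bA} + (\omega/2)\delta\sqrt{n} \big) \le C e^{-c (\omega/2)^2 \delta^2 n}, \]
which is the asserted bound, with $c$ depending only on $\omega$ and $C(\zeta)$ absorbing the finitely many small-$n$ exceptions. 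The symmetric and skew-symmetric variants follow identically: the same convexity and $1$-Lipschitz properties hold when only the entries on and above the diagonal are varied (or, equivalently, one passes to the Hermitian dilation), and the relaxed condition $\delta^2 \ge n^{-1}(\log n)^{6+\zeta}$ is exactly the weakest hypothesis under which the logarithmic correction remains of lower order than $\delta\sqrt{n}$.
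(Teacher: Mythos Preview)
The paper does not supply its own proof of this statement: it is quoted as Theorem~3.4 of \cite{usvt}, and the remark immediately following simply directs the reader to that reference. So there is no in-paper argument to compare against.

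Your two-step outline---Talagrand concentration for the convex, $1$-Lipschitz functional $\bA\mapsto\norm{\bA}$ of independent bounded entries, followed by a sharp first-moment bound $\E\norm{\bA}\le 2\delta\sqrt{n}+o(\delta\sqrt{n})$ and the choice $t=(\omega/2)\delta\sqrt{n}$---is correct and is in fact the structure of Chatterjee's proof in the cited reference. Two minor remarks: (i) Lata\l a's inequality as usually stated carries a non-sharp universal constant in front of the row/column sums, so to secure the leading constant $2$ you really do need either Bandeira--van~Handel or the full trace-moment combinatorics you mention as the alternative; (ii) the specific exponent in the relaxed hypothesis $\delta^2\ge n^{-1}(\log n)^{6+\zeta}$ reflects the polylogarithmic correction arising from Chatterjee's moment computation, which is larger than the $\sqrt{\log n}$ correction from Bandeira--van~Handel, so your closing sentence that this is ``exactly the weakest hypothesis'' is not quite right under your own version of the expectation bound.
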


\begin{remark}
The proof of Theorem \ref{thm:talagrand} can be found in \cite{usvt} under Theorem 3.4. 
\end{remark}

%***********************
% USEFUL LEMMAS
%***********************
\section{Useful Lemmas} 
We begin by proving (and providing) a series of useful lemmas that we will frequently use to derive our desired results. \\

% Cauchy's Interlacing Law
\begin{lemma} \label{lemma:spectral_norm}
	Suppose $\bC$ is an $m \times n$ matrix composed of an $m \times p$ submatrix $\bA$ and an $m \times (n-p)$ submatrix $\bB$, i.e., $\bC = \left[ \begin{array}{c|c}
	\bA & \bB
	\end{array}
	\right]$. Then, the spectral (operator) norms of $\bA$ and $\bB$ are bounded above by the spectral norm of $\bC$,
	\begin{align*}
	\max \{\norm{\bA}, \norm{\bB} \} &\le \norm{\bC}.
	\end{align*}
\end{lemma}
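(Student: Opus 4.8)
The plan is to use the variational (operator-norm) characterization of the spectral norm: for any $m \times n$ matrix $\bA$, $\norm{\bA} = \max_{v \in \mathbb{R}^n,\, \norm{v} = 1} \norm{\bA v}$. The idea is simply that any unit vector acting on $\bA$ (resp.\ $\bB$) can be ``lifted'' to a unit vector acting on $\bC$ by padding with zeros, and this padding does not change the image.

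Concretely, first I would handle $\bA$. Let $v \in \mathbb{R}^p$ be a unit vector achieving $\norm{\bA} = \norm{\bA v}$ (such a maximizer exists by compactness of the unit sphere and continuity). Define $\tilde v = \big[\begin{smallmatrix} v \\ 0 \end{smallmatrix}\big] \in \mathbb{R}^n$, where the zero block has length $n-p$; then $\norm{\tilde v} = \norm{v} = 1$, and by the block structure $\bC \tilde v = \bA v + \bB \mathbf{0} = \bA v$. Hence $\norm{\bC} \ge \norm{\bC \tilde v} = \norm{\bA v} = \norm{\bA}$. The argument for $\bB$ is symmetric: take a unit vector $w \in \mathbb{R}^{n-p}$ with $\norm{\bB w} = \norm{\bB}$, set $\tilde w = \big[\begin{smallmatrix} 0 \\ w \end{smallmatrix}\big]$, and observe $\bC \tilde w = \bB w$, so $\norm{\bC} \ge \norm{\bB}$. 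Taking the maximum of the two bounds yields $\max\{\norm{\bA}, \norm{\bB}\} \le \norm{\bC}$.

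An equivalent route, if one prefers to avoid maximizers, is to note $\bC \bC^T = \bA \bA^T + \bB \bB^T$ and that $\bA \bA^T$ and $\bB \bB^T$ are positive semidefinite; hence $\bC \bC^T \succeq \bA \bA^T$ and $\bC \bC^T \succeq \bB \bB^T$ in the Loewner order, so $\lambda_{\max}(\bC \bC^T) \ge \lambda_{\max}(\bA \bA^T)$ and likewise for $\bB$, and taking square roots gives the claim. Either way, there is no real obstacle here — the only thing to be careful about is getting the block dimensions and the zero-padding right so that $\bC \tilde v = \bA v$ genuinely holds; this is the one place a sign/index slip could creep in, so I would write out the block multiplication explicitly.
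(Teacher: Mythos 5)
Your proof is correct, but it takes a different route from the paper's. The paper forms $\bC^T\bC$, observes that $\bA^T\bA$ is a principal submatrix of it, and invokes the Poincar\'{e} separation theorem (stated as one of its ``useful theorems'') to interlace the eigenvalues, concluding $\tau_p \le \sigma_n$ and hence $\norm{\bA} \le \norm{\bC}$. You instead use the variational characterization of the operator norm and zero-pad a maximizing unit vector so that $\bC\tilde{v} = \bA v$, which gives the bound immediately; your alternative via $\bC\bC^T = \bA\bA^T + \bB\bB^T$ and monotonicity of $\lambda_{\max}$ under the Loewner order is equally valid. Your argument is the more elementary and self-contained of the two -- it needs nothing beyond the definition of the spectral norm -- whereas the paper's fits the interlacing machinery it has already set up in its appendix; both are complete and correct, and the zero-padding computation you flag as the one delicate point does indeed check out ($\bC \big[\begin{smallmatrix} v \\ 0 \end{smallmatrix}\big] = \bA v + \bB\,\mathbf{0} = \bA v$).
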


\begin{proof}
	Without loss of generality, we prove the case for $\norm{\bA} \le \norm{\bC}$, since the same argument applies for $\norm{\bB}$. By definition, 
	\begin{align*}
	\bC^T \bC &= \left[
	\begin{array}{c c}
	\bA^T \bA & \bA^T \bB \\
	%\hline
	\bB^T \bA & \bB^T \bB
	\end{array}
	\right].
	\end{align*}
	Let $\sigma_1, \dots, \sigma_n$ be the eigenvalues of $\bC^T \bC$ in increasing order and repeated by multiplicities. Let $\tau_1, \dots, \tau_p$ be the eigenvalues of $\bA^T \bA$ in increasing order and repeated by multiplicities. By the Poincar\'{e} separation Theorem \ref{thm:cauchy}, we have for all $j < p + 1$,
	\begin{align*}
	\sigma_j &\le \tau_j \le \sigma_{n - p + j}.
	\end{align*}
	Thus, $\tau_p \le \sigma_n$. Since the eigenvalues of $\bC^T \bC$ and $\bA^T \bA$ are the squared singular values of $\bC$ and $\bA$ respectively, we have
	\begin{align*}
	\sqrt{\tau_p} &= \norm{\bA} \le \norm{\bC} = \sqrt{\sigma_n}.
	\end{align*}
	We complete the proof by applying an identical argument for the case of $\norm{\bB}$. 
\end{proof}

% Proofs of Projection Matrices
\begin{lemma} \label{lemma:projection}
	Let $\bA$ be any $m$ by $n$ matrix, and let $\bA^{\dagger}$ be its corresponding pseudoinverse. Then, the matrices $\bP_1 = \bA \bA^{\dagger}$ and $\bP_2 = \bA^{\dagger} \bA$ are projection matrices. 
\end{lemma}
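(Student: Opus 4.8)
The plan is to work directly from the singular value decomposition and reduce everything to the orthonormality of singular vectors. Let $r$ denote the number of nonzero singular values of $\bA$, and write $\bA = \sum_{i=1}^{r} \sigma_i x_i y_i^T$ with $\sigma_i > 0$, where $\{x_1,\dots,x_r\}$ and $\{y_1,\dots,y_r\}$ are orthonormal collections of left and right singular vectors, respectively (degenerate singular values pose no issue since the corresponding singular vectors may always be chosen orthonormal). By the definition of the Moore--Penrose pseudoinverse given in the preliminaries, $\bA^{\dagger} = \sum_{i=1}^{r} (1/\sigma_i)\, y_i x_i^T$, the sum being taken over the nonzero singular values.

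First I would compute the two products explicitly. Expanding $\bP_1 = \bA \bA^{\dagger}$ and using $y_i^T y_j = \delta_{ij}$ gives
\begin{align*}
	\bP_1 &= \sum_{i=1}^{r} \sum_{j=1}^{r} \frac{\sigma_i}{\sigma_j}\, x_i (y_i^T y_j) x_j^T = \sum_{i=1}^{r} x_i x_i^T,
\end{align*}
and symmetrically, using $x_i^T x_j = \delta_{ij}$,
\begin{align*}
	\bP_2 &= \bA^{\dagger}\bA = \sum_{i=1}^{r} \sum_{j=1}^{r} \frac{\sigma_j}{\sigma_i}\, y_i (x_i^T x_j) y_j^T = \sum_{i=1}^{r} y_i y_i^T.
\end{align*}

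It then remains to verify the two defining properties of an (orthogonal) projection matrix, namely symmetry and idempotency, for each of $\bP_1$ and $\bP_2$. Symmetry is immediate since each rank-one term $x_i x_i^T$ (resp. $y_i y_i^T$) is symmetric, hence so is the finite sum. For idempotency, again invoking orthonormality,
\begin{align*}
	\bP_1^2 &= \sum_{i=1}^{r} \sum_{j=1}^{r} x_i (x_i^T x_j) x_j^T = \sum_{i=1}^{r} x_i x_i^T = \bP_1,
\end{align*}
and the identical computation with $y_i$ in place of $x_i$ shows $\bP_2^2 = \bP_2$. This completes the argument.

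There is no serious obstacle here: the only point requiring a little care is the convention that the pseudoinverse sum runs over the nonzero singular values only (so that $1/\sigma_i$ is well defined), and the fact that the sets $\{x_i\}$ and $\{y_i\}$ can be taken orthonormal even in the presence of repeated singular values. Once these are in place, the result is a one-line consequence of $x_i^T x_j = y_i^T y_j = \delta_{ij}$. As a remark, the same computation identifies $\bP_1$ as the orthogonal projection onto the column space of $\bA$ and $\bP_2$ as the orthogonal projection onto its row space, which is the form in which this lemma will be used later.
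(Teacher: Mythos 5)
Your proof is correct and follows essentially the same route as the paper: both arguments pass through the singular value decomposition (yours in outer-product form, the paper's in the matrix form $\bA = \bQ_1\bSigma\bQ_2^T$, so that $\bP_1 = \bQ_1\bSigma\bSigma^{+}\bQ_1^T$) and then verify symmetry and idempotency directly. The explicit identification $\bP_1 = \sum_i x_i x_i^T$ and $\bP_2 = \sum_i y_i y_i^T$ is a nice touch but does not change the substance of the argument.
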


\begin{proof}
	We first prove that $\bP_1$ is a projection matrix. In order to show $\bP_1$ is a projection matrix, we must demonstrate that $\bP_1$ satisfies two properties: namely, (1) $\bP_1$ is symmetric, i.e. $\bP_1^T = \bP_1$, and (2) $\bP_1$ is idempotent, i.e. $\bP_1^2 = \bP_1$. 
	
	Let $\bA = \bQ_1 \bSigma \bQ_2^T$ represent the SVD of $\bA$, with the pseudoinverse expressed as $\bA^{\dagger} = \bQ_2 \bSigma^{+} \bQ_1^T$. As a result, 
	\begin{align*}
		\bP_1 &= \bA \bA^{\dagger}
		\\ &= \bQ_1 \bSigma \bQ_2^T \bQ_2 \bSigma^+ \bQ_1^T
		\\ &= \bQ_1 \bSigma \bSigma^+ \bQ_1^T.
	\end{align*}
	Note that
	\begin{align*}
		\bP_1^T &= ( \bQ_1 \bSigma \bSigma^+ \bQ_1^T) ^T
		\\ &=  \bQ_1 \bSigma \bSigma^+ \bQ_1^T
		\\ &= \bP_1,
	\end{align*}
	which proves that $\bP_1$ is symmetric. Furthermore, 
	\begin{align*}
		\bP_1^2 &= ( \bQ_1 \bSigma \bSigma^+ \bQ_1^T) \cdot ( \bQ_1 \bSigma \bSigma^+ \bQ_1^T) 
		\\ &=  \bQ_1 \bSigma \bSigma^+ \bSigma \bSigma^+ \bQ_1^T 
		\\ &= \bQ_1 \bSigma \bSigma^+ \bQ_1^T 
		\\ &= \bP_1,
	\end{align*}
	which proves that $\bP_1$ is idempotent. The same argument can be applied for $\bP_2$. 
\end{proof}

\begin{lemma} \label{lemma:projection_eigenvalues}
	The eigenvalues of a projection matrix are $1$ or $0$. 
\end{lemma}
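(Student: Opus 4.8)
The plan is to use only the defining idempotency property of a projection matrix, namely $\bP^2 = \bP$; the symmetry established in Lemma \ref{lemma:projection} plays no role in this particular statement. First I would take an arbitrary eigenvalue $\lambda$ of $\bP$ together with a corresponding nonzero eigenvector $v$, so that $\bP v = \lambda v$. Applying $\bP$ once more and using linearity gives $\bP^2 v = \lambda \, \bP v = \lambda^2 v$, while idempotency gives $\bP^2 v = \bP v = \lambda v$. Equating these two expressions yields $(\lambda^2 - \lambda) v = 0$, and since $v \neq 0$ we may cancel $v$ to obtain $\lambda^2 = \lambda$, i.e. $\lambda(\lambda - 1) = 0$, hence $\lambda \in \{0, 1\}$.

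There is essentially no obstacle here; the only point requiring a little care is to phrase the argument so that it applies to \emph{every} eigenvalue without implicitly assuming that $\bP$ is diagonalizable. The derivation above works directly from a single eigenpair $(\lambda, v)$, so it is valid for any square matrix satisfying $\bP^2 = \bP$ regardless of diagonalizability. (In the symmetric case relevant to Lemma \ref{lemma:projection}, $\bP$ is moreover orthogonally diagonalizable, but that fact is not needed for the present claim and I would not invoke it.)
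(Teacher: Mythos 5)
Your argument is correct and is essentially identical to the paper's own proof: both apply $\bP$ to the eigenvalue equation, use idempotency to equate $\lambda^2 v$ with $\lambda v$, and cancel the nonzero eigenvector to conclude $\lambda \in \{0,1\}$. Your remark that symmetry and diagonalizability are not needed is a nice observation but does not change the substance.
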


\begin{proof}
	Let $\lambda$ be an eigenvalue of the projection matrix $\bP$ for some eigenvector $v$. Then, by definition of eigenvalues,
	\begin{align*}
		\bP v &= \lambda v.
	\end{align*}
	However, by the idempotent property of projection matrices ($\bP^2 = \bP$), if we multiply the above equality by $\bP$ on the left, then we have
	\begin{align*}
		\bP (\bP v) &= \bP (\lambda v) 
		\\ &= \lambda^2 v.
	\end{align*}
	Since $v \neq 0$, the eigenvalues of $\bP$ can only be members $\mathbb{R}$ whereby $\lambda^2 = \lambda$. Ergo, we must have that $\lambda \in \{0, 1\}$. 
\end{proof}

% Key Lemmas (similar format to Chatterjee)
%lemma 1 (general \mu)
\begin{lemma} \label{lemma:general_threshold}
	Let $\bA = \sum_{i=1}^{m} \sigma_i x_i y_i^T$ be the singular value decomposition of $\bA$ with $\sigma_1, \dots, \sigma_m$ in decreasing order and with repeated multiplicities. For any choice of $\mu \ge 0$, let $S = \{i : \sigma_i \ge \mu \}$. Define
	\begin{align*}
		\bhB &= \sum_{i \in S} \sigma_i x_i y_i^T.
	\end{align*}
	Let $\tau_1, \dots, \tau_m$ be the singular values of $\bB$ in decreasing order and repeated by multiplicities, with $\tau^* = \max_{i \notin S} \tau_i$. Then
	\begin{align*}
		\norm{ \bhB - \bB } &\le \tau^* + 2\norm{\bA - \bB}. 
	\end{align*}
\end{lemma}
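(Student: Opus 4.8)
The plan is to decompose the error $\bhB - \bB$ into two pieces: the part of $\bB$ that gets discarded by the thresholding (the ``tail'' below the threshold), and the perturbation between $\bhB$ and the corresponding thresholded version of $\bA$. More precisely, I would write $\bhB - \bB = (\bhB - \bhA) + (\bhA - \bB)$, where $\bhA := \sum_{i \in S} \sigma_i x_i y_i^T$ is literally the matrix $\bhB$ from the statement (the thresholded $\bA$), so the first term is zero and this decomposition is trivial — instead the right move is to compare against a thresholded $\bB$. So I would define $\bhB' := \sum_{j \in S'} \tau_j w_j z_j^T$ where $S' = \{ j : \tau_j \ge \mu\}$ and $\sum_j \tau_j w_j z_j^T$ is the SVD of $\bB$, and then bound
\[
\norm{\bhB - \bB} \le \norm{\bhB - \bhB'} + \norm{\bhB' - \bB}.
\]
The second term, $\norm{\bhB' - \bB}$, is exactly the spectral norm of the part of $\bB$ below the threshold, which is $\max_{j \notin S'} \tau_j \le \tau^*$ (in fact it equals the largest singular value of $\bB$ strictly below $\mu$, and since $S' \supseteq$ something controllable, this is at most $\tau^*$; I'd want to be a bit careful whether it is $\tau^*$ or the largest $\tau_j < \mu$, but either way it is $\le \tau^*$ since $\tau^* = \max_{i \notin S}\tau_i$ and... actually this needs the relationship between $S$ and $S'$).

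The key structural input is Theorem~\ref{thm:singular values} (Weyl-type perturbation of singular values): since $\bA$ and $\bB$ differ by $\bA - \bB$, their singular values satisfy $\abs{\sigma_i - \tau_i} \le \norm{\bA - \bB}$ for all $i$. This is what lets me control how different the index sets $S$ and $S'$ can be, and more importantly it is what I would use to bound $\norm{\bhB - \bhB'}$. The cleanest route for that term: both $\bhB$ and $\bhB'$ are truncated SVDs with the same threshold $\mu$ but of matrices $\bA$ and $\bB$ respectively; the difference in their truncations can be bounded by $\norm{\bA - \bB}$ plus a term accounting for singular values of $\bB$ near the threshold. A standard way is to note $\bhB = \bA - \bA_{\mathrm{tail}}$ and $\bhB' = \bB - \bB_{\mathrm{tail}}$, so $\bhB - \bhB' = (\bA - \bB) - (\bA_{\mathrm{tail}} - \bB_{\mathrm{tail}})$, and then $\norm{\bA_{\mathrm{tail}} - \bB_{\mathrm{tail}}} \le \norm{\bA_{\mathrm{tail}}} + \norm{\bB_{\mathrm{tail}}}$ where $\norm{\bB_{\mathrm{tail}}} \le \tau^*$-ish and $\norm{\bA_{\mathrm{tail}}} = \max_{i \notin S} \sigma_i \le \tau^* + \norm{\bA - \bB}$ by Weyl again. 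Adding up: $\norm{\bhB - \bB} \le \norm{\bA - \bB} + (\tau^* + \norm{\bA - \bB}) + \tau^* = 2\tau^* + 2\norm{\bA - \bB}$, which is slightly weaker than claimed, so I would need to be more economical — probably by comparing $\bhB$ directly to $\bB$ rather than routing through $\bhB'$, i.e. $\bhB - \bB = \bhB - \bA + \bA - \bB = -\bA_{\mathrm{tail}} + (\bA - \bB)$, giving $\norm{\bhB - \bB} \le \norm{\bA_{\mathrm{tail}}} + \norm{\bA - \bB} \le (\tau^* + \norm{\bA-\bB}) + \norm{\bA - \bB} = \tau^* + 2\norm{\bA - \bB}$, matching the statement exactly.

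So the actual proof is short: (i) write $\bhB - \bB = (\bhB - \bA) + (\bA - \bB)$ and apply the triangle inequality; (ii) observe $\bhB - \bA = -\sum_{i \notin S}\sigma_i x_i y_i^T$, whose spectral norm is $\max_{i \notin S}\sigma_i$; (iii) invoke Theorem~\ref{thm:singular values} to get $\max_{i \notin S}\sigma_i \le \max_{i\notin S}\tau_i + \max_i\abs{\sigma_i - \tau_i} \le \tau^* + \norm{\bA - \bB}$; (iv) combine. The one subtlety I expect to be the main obstacle is step (ii)/(iii): I need that $\max_{i \notin S} \sigma_i$ is controlled by $\tau^*$ where $\tau^*$ is defined using the \emph{same index set} $S$ (which was defined via the $\sigma_i$'s), so there is no mismatch of indices — I just need $\abs{\sigma_i - \tau_i} \le \delta_1 = \norm{\bA - \bB}$ from the Weyl-type bound applied termwise, then maximize over $i \notin S$. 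That is clean. The only thing to double-check is that $\norm{\sum_{i \notin S} \sigma_i x_i y_i^T} = \max_{i \notin S}\sigma_i$, which holds because $\{x_i\}, \{y_i\}$ are orthonormal families, so this is just a (sub-)SVD and its operator norm is its largest singular value.
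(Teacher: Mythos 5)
Your final argument is correct and is essentially identical to the paper's proof: decompose $\bhB - \bB = (\bhB - \bA) + (\bA - \bB)$, note $\norm{\bhB - \bA} = \max_{i \notin S}\sigma_i$, and apply the Weyl-type perturbation bound of Theorem~\ref{thm:singular values} to get $\max_{i\notin S}\sigma_i \le \tau^* + \norm{\bA - \bB}$. The detour through a thresholded version of $\bB$ was unnecessary, as you yourself discovered, but the route you settled on matches the paper's exactly.
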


\begin{proof}
	By Theorem \ref{thm:singular values}, we have that $\sigma_i \le \tau_i + \norm{\bA - \bB}$ for all $i$. Applying triangle inequality, we obtain
	\begin{align*}
		\norm{ \bhB - \bB} &\le \norm{\bhB - \bA} + \norm{\bA - \bB}
		\\ &= \max_{i \notin S} \sigma_i + \norm{\bA - \bB}
		\\ &\le \max_{i \notin S} \Big(\tau_i + \norm{\bA - \bB} \Big) + \norm{\bA - \bB}
		\\ &= \tau^* + 2 \norm{\bA - \bB}. 
	\end{align*} 
\end{proof}

% lemma 2 (prescription \mu)
\begin{lemma} \label{lemma:prescription_threshold}
	Let $\bA = \sum_{i=1}^{m} \sigma_i x_i y_i^T$ be the singular value decomposition of $\bA$. Fix any $\delta >0$ such that $\mu = (1+\delta) \norm{\bA - \bB}$, and let $S = \{i : \sigma_i \ge \mu \}$. Define
	\begin{align*}
		\bhB &= \sum_{i \in S} \sigma_i x_i y_i^T.
	\end{align*}
	Then
	\begin{align*}
		\norm{ \bhB - \bB } &\le (2 + \delta) \norm{\bA - \bB}. 
	\end{align*}
\end{lemma}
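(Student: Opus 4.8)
The plan is to argue directly from the definition of the threshold set $S$ rather than to invoke Lemma~\ref{lemma:general_threshold} as a black box, since that lemma's bound $\norm{\bhB - \bB} \le \tau^* + 2\norm{\bA - \bB}$ is too lossy here: combining it with Weyl's perturbation inequality (Theorem~\ref{thm:singular values}) only yields $\tau^* \le \mu + \norm{\bA - \bB} = (2+\delta)\norm{\bA-\bB}$, hence a final bound of $(4+\delta)\norm{\bA - \bB}$. Instead I would recycle only the first half of that lemma's proof and stop before the step where $\sigma_i$ is traded for $\tau_i$.

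Concretely, first apply the triangle inequality in the operator norm:
\begin{align*}
	\norm{\bhB - \bB} &\le \norm{\bhB - \bA} + \norm{\bA - \bB}.
\end{align*}
Second, observe that $\bA - \bhB = \sum_{i \notin S} \sigma_i x_i y_i^T$, and because the $\{x_i\}$ and $\{y_i\}$ are orthonormal families this is already a valid singular value decomposition of $\bA - \bhB$; hence its spectral norm is exactly $\max_{i \notin S} \sigma_i$ (interpreted as $0$ when $S = [m]$). Third, by the very definition $S = \{i : \sigma_i \ge \mu\}$, every index $i \notin S$ satisfies $\sigma_i < \mu$, so $\norm{\bhB - \bA} = \max_{i\notin S}\sigma_i \le \mu = (1+\delta)\norm{\bA - \bB}$. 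Substituting into the triangle inequality gives
\begin{align*}
	\norm{\bhB - \bB} &\le (1+\delta)\norm{\bA - \bB} + \norm{\bA - \bB} = (2+\delta)\norm{\bA - \bB},
\end{align*}
which is the claim.

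There is essentially no genuine obstacle here; the only thing to be careful about is precisely the point above — not to route the argument through Lemma~\ref{lemma:general_threshold}, whose extra $\tau^*$ term would spoil the constant. The prescription $\mu = (1+\delta)\norm{\bA-\bB}$ is exactly what makes the discarded singular values controllable by the perturbation size, and the proof should make that transparent. One may also note for completeness that $\bA - \bhB$ having the stated SVD, and the bound degenerating gracefully when no singular values fall below $\mu$, covers all cases.
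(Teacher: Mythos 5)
Your proof is correct and is essentially identical to the paper's own argument: both apply the triangle inequality, identify $\norm{\bhB - \bA}$ as $\max_{i \notin S} \sigma_i$, and bound that by $\mu = (1+\delta)\norm{\bA - \bB}$ directly from the definition of $S$. Your side remark about why routing through Lemma~\ref{lemma:general_threshold} would inflate the constant is a fair observation but not needed for the proof itself.
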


\begin{proof}
	By the definition of $\mu$ and hence the set of singular values $S$, we have that
	\begin{align*}
		\norm{ \bhB - \bB} &\le \norm{\bhB - \bA} + \norm{\bA - \bB}
		\\ &= \max_{i \notin S} \sigma_i + \norm{\bA - \bB}
		\\ &\le (1 + \delta)\norm{\bA - \bB} + \norm{\bA - \bB}
		\\ &= (2 + \delta) \norm{\bA - \bB}. 
	\end{align*} 
\end{proof}

% CHATTERJEE LEMMA
\begin{lemma}  \label{lemma:usvt_key_lemma} {\textbf{Lemma 3.5 of \cite{usvt}}}
	Let $\bA = \sum_{i=1}^{m} \sigma_i x_i y_i^T$ be the singular value decomposition of $\bA$. Fix any $\delta > 0$ and define $S = \{ i : \sigma_i \ge (1 + \delta) \norm{\bA - \bB} \}$ such that
	\begin{align*}
		\bhB &= \sum_{i \in S} \sigma_i x_i y_i^T.
	\end{align*} 
	Then
	\begin{align*}
		\norm{\bhB - \bB}_F &\le K(\delta) ( \norm{\bA - \bB} \norm{\bB}_*)^{1/2},
	\end{align*}
	where $K(\delta) = (4 + 2\delta) \sqrt{2 / \delta} + \sqrt{2 + \delta}$. 
\end{lemma}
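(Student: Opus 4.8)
The plan is to combine a bound on $\abs{S}$ with a triangle-inequality decomposition that compares $\bhB$ to the best rank-$\abs{S}$ approximation of $\bB$. Set $R := \norm{\bA - \bB}$ and $r := \abs{S}$; since the $\sigma_i$ are listed in decreasing order, $S = \{1,\dots,r\}$, so $\bhB = \sum_{i\le r}\sigma_i x_i y_i^T$ is the rank-$r$ SVD truncation of $\bA$, whence $\bA - \bhB = \sum_{i>r}\sigma_i x_i y_i^T$ and $\norm{\bA - \bhB} = \sigma_{r+1}$. Let $\tau_1 \ge \tau_2 \ge \cdots$ be the singular values of $\bB$. By the perturbation bound of Theorem \ref{thm:singular values}, $\abs{\sigma_i - \tau_i} \le R$ for every $i$; applied at the indices $i \in S$ this gives $\tau_i \ge \sigma_i - R \ge \delta R$, whence $\norm{\bB}_* = \sum_j \tau_j \ge r\delta R$, i.e. $r \le \norm{\bB}_*/(\delta R)$; applied at index $r+1 \notin S$ it gives $\tau_{r+1} \le \sigma_{r+1} + R < (2+\delta)R$.

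Next, let $\bB_r$ be the best rank-$r$ approximation of $\bB$ (its own rank-$r$ SVD truncation), so $\norm{\bB - \bB_r} = \tau_{r+1}$ and $\norm{\bB - \bB_r}_F^2 = \sum_{j>r}\tau_j^2$. I would split $\norm{\bhB - \bB}_F \le \norm{\bhB - \bB_r}_F + \norm{\bB_r - \bB}_F$ and bound each piece. For the tail piece, $\sum_{j>r}\tau_j^2 \le \tau_{r+1}\sum_{j>r}\tau_j \le \tau_{r+1}\norm{\bB}_* < (2+\delta)R\norm{\bB}_*$, so $\norm{\bB_r - \bB}_F \le \sqrt{2+\delta}\,(R\norm{\bB}_*)^{1/2}$. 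For the head piece, $\bhB - \bB_r$ has rank at most $2r$, hence $\norm{\bhB - \bB_r}_F \le \sqrt{2r}\,\norm{\bhB - \bB_r}$, and by the triangle inequality $\norm{\bhB - \bB_r} \le \norm{\bhB - \bA} + \norm{\bA - \bB} + \norm{\bB - \bB_r} = \sigma_{r+1} + R + \tau_{r+1} < (1+\delta)R + R + (2+\delta)R = (4+2\delta)R$. Substituting $r \le \norm{\bB}_*/(\delta R)$ turns this into $\norm{\bhB - \bB_r}_F < (4+2\delta)\sqrt{2/\delta}\,(R\norm{\bB}_*)^{1/2}$, and adding the two pieces gives the claim with $K(\delta) = (4+2\delta)\sqrt{2/\delta} + \sqrt{2+\delta}$.

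The computations are all elementary norm inequalities; the one move that is not automatic is the choice of the intermediary. Comparing $\bhB$ directly to $\bA$ or to $\bB$ is useless in Frobenius norm — either distance can be of order $\sqrt{\min(m,n)}$ — so the point is to route through $\bB_r$, whose distance to $\bB$ is controlled by $\tau_{r+1}$ and thus by the thresholding rule, while its distance to $\bhB$ lives in a space of rank at most $2r$ and so converts a cheap operator-norm estimate into a Frobenius bound. The only other thing to watch is the degenerate case $r = \min(m,n)$, where $\bhB = \bA$, $\bB_r = \bB$, and one reads $\sigma_{r+1} = \tau_{r+1} = 0$; the argument then trivializes.
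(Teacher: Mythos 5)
Your proof is correct and is essentially the argument of Lemma 3.5 in the cited USVT reference (the paper itself gives no proof, deferring entirely to that source): bound $\abs{S}$ via singular value perturbation, route through the best rank-$\abs{S}$ approximation of $\bB$, and convert operator-norm control to Frobenius control using the rank. The constants track exactly to the stated $K(\delta) = (4+2\delta)\sqrt{2/\delta} + \sqrt{2+\delta}$, and the degenerate cases are handled.
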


\begin{proof}
	The proof can be found in \cite{usvt}. 
\end{proof}

% PRELIMINARIES

\section{Preliminaries.}
To simplify the following exposition, we assume that $\abs{M_{ij}} \le 1$ and $\abs{X_{ij}} \le 1$. Recall that all entries of the pre-intervention treatment row are observed such that $Y_1^{-} = X_1^{-} = M_1^- + \epsilon_1^{-}$. On the other hand, every entry within the pre- and post-intervention periods for the donor units are observed independently of the other entries with some arbitrary probability $p$. Specifically, for all $2 \le i \le N$ and $j \in [T]$, we define $Y_{ij} = X_{ij}$ if $X_{ij}$ is observed, and $Y_{ij} = 0$ otherwise. Consequently, observe that for all $i > 1$ and $j$, 
\begin{align*}
	\mathbb{E}[Y_{ij}] = p M_{ij}
\end{align*}
and
\begin{align*}
	\text{Var}(Y_{ij}) &= \mathbb{E}[Y_{ij}^2] - ( \mathbb{E}[Y_{ij}] )^2
	\\ &= p \mathbb{E}[X_{ij}^2] - (p M_{ij})^2
	\\ &\le p (\sigma^2 + M_{ij}^2) - (p M_{ij})^2
	\\ &= p \sigma^2 + p M_{ij}^2 (1 - p)
	\\ &\le p \sigma^2 + p (1-p).
\end{align*}
Recall that $\hat{p}$ denotes the proportion of observed entries in $\bX$ and $\hat{\sigma}^2$ represents the (unbiased) sample variance computed from the pre-intervention treatment row \eqref{eq:sample_variance}. Given the information above, we define, for any $\omega \in (0.1, 1)$, three events $E_1, E_2,$ and $E_3$ as
\begin{align*}
	E_1 &:= \{ \abs{\hat{p} - p} \le \omega p / z \},
	\\ E_2 &:= \{ \abs{\hat{\sigma}^2 - \sigma^2} \le \omega \sigma^2 / z \},
	\\ E_3 &:= \{ \norm{\bY - p \bM} \le (2 + \omega/2) \sqrt{T q} \},
\end{align*}
where $q = \sigma^2 p + p(1-p)$; for reasons that will be made clear later, we choose $z = 60 (\frac{\sigma^2 + 1}{\sigma^2})$. By Bernstein's Inequality, we have that
\begin{align*}
	\Pb(E_1) &\ge 1 - 2e^{-c_1(N-1)Tp},
\end{align*}
for appropriately defined constant $c_1$. By Hoeffding's Inequality, we obtain
\begin{align*}
	\Pb(E_2) &\ge 1 - 2e^{-c_2 T \sigma^2}
\end{align*}
for some positive constant $c_2$. Moreover, by Theorem \ref{thm:talagrand},
\begin{align*}
	\mathbb{P}(E_3) &\ge 1 - C e^{-c_3 T q}
\end{align*}
as long as $q = \sigma^2 p + p(1-p) \ge T^{-1 + \zeta}$ for some $\zeta > 0$. In other words, 
\begin{align*}
	p (\sigma^2 + 1) &\ge p( \sigma^2 + (1-p))
	\\ & \ge T^{-1 + \zeta}.
\end{align*}
Consequently, assuming the event $E_3$ occurs, we require that $p \ge \frac{T^{-1 + \zeta}}{\sigma^2 + 1}$ for some $\zeta > 0$. 

Finally, as previously discussed, we will assume that both $N$ and $T$ grow without bound in our imputation analysis. However, in our forecasting analysis, only $T_0 \rightarrow \infty$. 

%Finally, we assume that the total number of units $N$, and hence the size of the donor pool, is fixed. In other words, the only dimension that increases is the number of pre-intervention periods $T_0$ (and hence $T$).

\section{Imputation Analysis}

In this section, we prove that our key de-noising procedure produces a consistent estimator of the underlying mean matrix, thereby adroitly imputing the missing entries and filtering corrupted observations within our data matrix. \\

% LIPSCHITZ LOW RANK LEMMA
\begin{lemma} \label{lemma:usvt_lipschitz} 
Let $\bM = [M_{ij}]$ be defined as before. Suppose $f$ is a Lipschitz function with Lipschitz constant $\mathcal{L}$ and the latent row and column feature vectors come from a compact space $K$ of dimension $d$. Then for any small enough $\delta > 0$, 
\begin{align*}
	\norm{\bM}_* &\le \delta (N-1) \sqrt{T} + C(K, d, \mathcal{L}) \sqrt{(N-1)T \delta^{-d}},
	%\norm{\bM}_* &\le (1 + C(K, d, L)) \sqrt{n} m^{\frac{d+1}{d+2}},
\end{align*}
where $C(K, d, L)$ is a constant that depends on $K$, $d$, and $\mathcal{L}$. 
\end{lemma}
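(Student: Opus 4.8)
The plan is to approximate $\bM$ in nuclear norm by a low-rank matrix obtained from a discretization of the latent space, and then bound the two resulting pieces separately. Since $K \subset \mathbb{R}^d$ is bounded and closed, its covering number at scale $\eta$ satisfies $N(K,\eta) \le C(K,d)\,\eta^{-d}$ for every $\eta$ not exceeding $\mathrm{diam}(K)$ -- this is the sense in which $K$ has ``dimension $d$''. First I would set $\eta = \delta / (\sqrt{2}\,\mathcal{L})$ and partition $K$ into cells $K_1, \dots, K_r$, each of diameter at most $\eta$, with $r \le C(K,d)\,\eta^{-d} = C(K,d,\mathcal{L})\,\delta^{-d}$; this requires $\delta$ small enough that $\eta \le \mathrm{diam}(K)$, which is where the ``small enough $\delta$'' hypothesis is used. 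Choosing a representative latent point in each cell and letting $\tilde f$ be the piecewise-constant function agreeing with $f$ at those representatives, the $\mathcal{L}$-Lipschitzness of $f$ (in the joint variable, with the product metric) gives $\abs{f(\theta,\rho) - \tilde f(\theta,\rho)} \le \sqrt{2}\,\mathcal{L}\,\eta = \delta$ for all $\theta,\rho \in K$.

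Next, let $\tilde{\bM} = [\tilde f(\theta_i,\rho_j)]_{2 \le i \le N,\, j \in [T]}$. Because $\tilde f(\theta_i,\rho_j)$ depends only on which cell contains $\theta_i$ and which contains $\rho_j$, the matrix $\tilde{\bM}$ factors as $\bU\bC\bV^T$, where $\bU \in \mathbb{R}^{(N-1)\times r}$ and $\bV \in \mathbb{R}^{T\times r}$ are cell-membership indicator matrices and $\bC \in \mathbb{R}^{r\times r}$ records the representative values; hence $\mathrm{rank}(\tilde{\bM}) \le r \le C(K,d,\mathcal{L})\,\delta^{-d}$. Since $f$, and therefore $\tilde f$, is $[-1,1]$-valued, $\norm{\tilde{\bM}}_F \le \sqrt{(N-1)T}$, so the standard inequality $\norm{\cdot}_* \le \sqrt{\mathrm{rank}(\cdot)}\,\norm{\cdot}_F$ yields $\norm{\tilde{\bM}}_* \le \sqrt{r}\,\sqrt{(N-1)T} \le C(K,d,\mathcal{L})\,\sqrt{(N-1)T\,\delta^{-d}}$.

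For the approximation error $\bM - \tilde{\bM}$, every entry is bounded by $\delta$ in absolute value, so each of its $N-1$ rows has Euclidean norm at most $\delta\sqrt{T}$. Writing $\bM - \tilde{\bM} = \sum_{i=2}^{N} e_i\,(\text{row } i)^T$ as a sum of $N-1$ rank-one matrices and using $\norm{uv^T}_* = \norm{u}_2\norm{v}_2$ together with the triangle inequality for $\norm{\cdot}_*$ gives $\norm{\bM - \tilde{\bM}}_* \le (N-1)\,\delta\sqrt{T}$. Combining the two estimates via $\norm{\bM}_* \le \norm{\tilde{\bM}}_* + \norm{\bM - \tilde{\bM}}_*$ produces the claimed bound. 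The only genuinely delicate step is the covering argument: one must take the cell diameter to scale like $\delta/\mathcal{L}$ so that the Lipschitz truncation error is $O(\delta)$ while the cell count, and hence the rank of $\tilde{\bM}$, stays $O(\delta^{-d})$ with the $\mathcal{L}$-dependence absorbed into the constant; everything else is routine bookkeeping with nuclear-norm inequalities. This is essentially the argument underlying the Lipschitz case in \cite{usvt}.
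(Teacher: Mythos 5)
Your proposal is correct and follows essentially the same route as the paper: cover the compact latent space at scale proportional to $\delta/\mathcal{L}$, replace $\bM$ by the resulting piecewise-constant matrix of rank at most $C(K,d,\mathcal{L})\delta^{-d}$, bound that piece by $\sqrt{\mathrm{rank}}\cdot\norm{\cdot}_F$, and control the entrywise-$\delta$ residual in nuclear norm. The only cosmetic difference is that you bound the residual's nuclear norm by summing rank-one row contributions, whereas the paper uses $\norm{\cdot}_* \le \sqrt{\mathrm{rank}(\cdot)}\,\norm{\cdot}_F$ on the residual directly; both give the identical $(N-1)\delta\sqrt{T}$ term.
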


\begin{proof}
The proof is a straightforward adaptation of the arguments from [\cite{usvt}, Lemma 3.6]; however, we provide it here for completeness. By the Lipschitzness assumption, every entry in $\bM = [M_{ij}] = [f(\theta_i, \rho_j)]$ is Lipschitz in both its arguments, space ($i$) and time ($j$). For any $\delta > 0$, it is not hard to see that one can find a finite covering $P_1 (\delta)$ and $P_2(\delta)$ of $K$ so that for any $\theta, \rho \in K$, there exists $\theta' \in P_1(\delta)$ and $\rho' \in P_2(\delta)$ such that
\begin{align*}
	\abs{ f(\theta, \rho) - f(\theta', \rho')} &\le \delta.
\end{align*}
Without loss of generality, let us consider the case where $P(\delta) = P_1(\delta) = P_2(\delta)$. For every latent row feature $\theta_i$, let $p_1(\theta_i)$ be the unique element in $P(\delta)$ that is closest to $\theta_i$. Similarly, for the latent column feature $\rho_j$, find the corresponding closest element in $P(\delta)$ and denote it by $p_2(\rho_j)$. Let $\bB = [B_{ij}]$ be the matrix where $B_{ij} = f(p_1(\theta_i), p_2(\rho_j))$. Using the arguments from above, we have that for all $i$ and $j$,
\begin{align*}
	\norm{\bM - \bB}_F^2 &= \sum_{i, j} (f(\theta_i, \rho_j) - f(p_1(\theta_i), p_2(\rho_j)))^2 ~\le (N-1) T \delta^2. 
\end{align*}
Therefore,
\begin{align*}
	\norm{\bM}_* &\le \norm{\bM - \bB}_* + \norm{\bB}_*
	\\ &\stackrel{(a)}{\le} \sqrt{N-1} \norm{\bM - \bB}_F + \norm{\bB}_*
	\\ &\le \delta (N-1) \sqrt{T} + \norm{\bB}_*,
\end{align*}
where (a) follows from the fact that $\norm{\bQ}_* \le \sqrt{ \text{rank}(\bQ)} \norm{\bQ}_F$ for any real-valued matrix $\bQ$. In order to bound the nuclear norm of $\bB$, note that (by its construction) for any two columns, say $j, j' \in [N-1]$, if $p_2(\rho_j) = p_2(\rho_j')$ then it follows that the columns of $j$ and $j'$ of $\bB$ are identical. Thus, there can be at most $\abs{P(\delta)}$ distinct columns (and rows) of $\bB$. In other words, $\text{rank}(\bB) \le \abs{P(\delta)}$. Ergo,
\begin{align*}
	\norm{\bB}_* &\le \sqrt{ \abs{P(\delta)}} \norm{\bB}_F
	\\ &\le  \sqrt{ \abs{P(\delta)}} \sqrt{(N-1)T}.
\end{align*}
Due to the Lipschitzness property of $f$ and the compactness of the latent space, it can be shown that $\abs{P(\delta)} \le C(K, d, \mathcal{L}) \delta^{-d}$ where $ C(K, d, \mathcal{L})$ is a constant that depends only on $K, d,$ and $\mathcal{L}$ (the Lipschitz constant of $f$).
\end{proof}

\begin{lemma} \label{lemma:imputation} {\em (\textbf{Theorem 1.1 of \cite{usvt}})}
	Let $\bhM$ and $\bM$ be defined as before. Suppose that $p \ge \frac{T^{-1 + \zeta}}{\sigma^2 + 1}$ for some $\zeta > 0$. Then using $\mu$ as defined in \eqref{eq:goldilocks}
	\begin{align*}
		\emph{MSE}(\bhM) &\le \dfrac{C_1 \norm{\bM}_*}{(N-1) \sqrt{T p}} + \mathcal{O}\Big( \frac{1}{(N-1)T} \Big),
	\end{align*}
	where $C_1$ is a universal positive constant. 
\end{lemma}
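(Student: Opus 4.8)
The plan is to run the argument behind Theorem~1.1 of \cite{usvt}, specialised to our $(N-1)\times T$ donor matrix with independent Bernoulli($p$) sampling and noise of variance at most $\sigma^2$, carried out on the high-probability events $E_1,E_2,E_3$ already constructed in the Preliminaries. Write $\bhB=\sum_{i\in S}s_iu_iv_i^T$ for the singular-value-thresholded copy of $\bY$, so that $\bhM=\bhB/\hat p$; I will use repeatedly that $\E[Y_{ij}]=pM_{ij}$ and $\Var(Y_{ij})\le q:=\sigma^2p+p(1-p)$ for the donor rows, and that $\abs{X_{ij}}\le 1$ forces $\sigma^2\le C$, hence $q\le(\sigma^2+1)p\le Cp$.

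First I would pin down where the threshold sits. On $E_1\cap E_2$ the bounds $\abs{\hat p-p}\le\omega p/z$ and $\abs{\hat\sigma^2-\sigma^2}\le\omega\sigma^2/z$ (with $z=60(\sigma^2+1)/\sigma^2\ge 60$) give $\hat\sigma^2\hat p+\hat p(1-\hat p)\in[(1-c_0\omega)q,(1+c_0\omega)q]$ with $c_0$ small, so $\mu$ of \eqref{eq:goldilocks} lies in a fixed constant-factor window $[c_1\sqrt{Tq},\,c_2\sqrt{Tq}]$, $c_1=(2+\omega)\sqrt{1-c_0\omega}$. On $E_3$, $\norm{\bY-p\bM}\le(2+\omega/2)\sqrt{Tq}$. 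Because $z$ is large, $c_0$ is small enough that $c_1>(1+\delta)(2+\omega/2)$ for some $\delta=\delta(\omega)>0$, and therefore, on $E_1\cap E_2\cap E_3$,
\[ (1+\delta)\,\norm{\bY-p\bM}\ \le\ \mu\ \le\ c_2\sqrt{Tq}. \]
(This is precisely the purpose of the choice of $z$: it manufactures a constant-order gap $\delta$.)

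Next I would feed this into the Frobenius-norm thresholding estimate. On the event above, Lemma~\ref{lemma:usvt_key_lemma} (Lemma~3.5 of \cite{usvt}) --- or, since our $\mu$ need not equal $(1+\delta)\norm{\bY-p\bM}$ exactly, the variant of it used inside the proof of Theorem~2.1 of \cite{usvt}, which accommodates any threshold in the window above via a Markov bound on $\abs{S}$ --- gives
\[ \norm{\bhB-p\bM}_F\ \le\ K(\delta)\big(\norm{\bY-p\bM}\,\norm{p\bM}_*\big)^{1/2}\ \le\ K(\delta)\big(C\sqrt{Tq}\cdot p\norm{\bM}_*\big)^{1/2}\ \le\ K'\,T^{1/4}p^{3/4}\norm{\bM}_*^{1/2}, \]
using $q\le Cp$. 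Since $\hat p\ge p/2$ on $E_1$, $\norm{\tfrac1p\bhB-\bM}_F=\tfrac1p\norm{\bhB-p\bM}_F\le K'\,T^{1/4}\norm{\bM}_*^{1/2}p^{-1/4}$, and squaring and dividing by $(N-1)T$ yields exactly $K'^2\norm{\bM}_*/((N-1)\sqrt{Tp})$, the advertised main term. To pass from $p$ to $\hat p$ I would write $\bhM-\bM=(\tfrac1p\bhB-\bM)+(\tfrac1{\hat p}-\tfrac1p)\bhB$, bound $\norm{\bhB}_F\le\norm{p\bM}_F+\norm{\bhB-p\bM}_F$ with $\norm{\bM}_F^2\le\norm{\bM}\,\norm{\bM}_*\le\sqrt{(N-1)T}\,\norm{\bM}_*$, and control the cross term through the \emph{second moment} $\E[(\hat p-p)^2]=p(1-p)/((N-1)T)$ rather than the relative bound from $E_1$; together with $p\ge T^{-1+\zeta}/(\sigma^2+1)$ this is at most a constant times the main term. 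Off $E_1\cup E_2\cup E_3$ I would use $\hat p\ge\frac1{(N-1)T}$ and $\norm{\bhB}_F\le\norm{\bY}_F\le\sqrt{(N-1)T}$, so $\norm{\bhM-\bM}_F^2=\mathcal O(((N-1)T)^3)$, which is absorbed by the $\mathcal O(e^{-c\min((N-1)Tp,\,Tq)})$ probability of that event (again using $p\ge T^{-1+\zeta}/(\sigma^2+1)$), leaving the $\mathcal O(1/((N-1)T))$ residual; summing the contributions gives the claim.

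The hard part will be the first step --- matching the data-driven $\mu$ of \eqref{eq:goldilocks}, which depends on the estimators $\hat p,\hat\sigma^2$, against the unobservable perturbation $\norm{\bY-p\bM}$ --- because the constant $K(\delta)=(4+2\delta)\sqrt{2/\delta}+\sqrt{2+\delta}$ in Lemma~\ref{lemma:usvt_key_lemma} explodes as $\delta\downarrow 0$, so one really must convert the numerical gap between the $2+\omega$ in $\mu$ and the $2+\omega/2$ in the bound on $\norm{\bY-p\bM}$ into an \emph{absolute} constant $\delta$, which is exactly what the large $z$ secures. The rescaling by $\hat p$ and the exceptional-event cleanup are routine, provided one works with second moments of $\hat p-p$ in place of its crude relative deviation. (Theorems~\ref{thm:imputation_lowrank} and \ref{thm:imputation_lipschitz} then fall out by substituting $\norm{\bM}_*\le\sqrt k\,\norm{\bM}_F\le\sqrt{k(N-1)T}$ and the Lipschitz nuclear-norm estimate of Lemma~\ref{lemma:usvt_lipschitz}, respectively.)
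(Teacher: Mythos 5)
Your proposal is correct and follows essentially the same route as the paper's proof: the same events $E_1,E_2,E_3$, the same reduction of the data-driven threshold \eqref{eq:goldilocks} to the hypothesis of Lemma~\ref{lemma:usvt_key_lemma} by manufacturing an absolute gap $\delta=\delta(\omega)>0$ via the choice of $z$, the same nuclear-norm bound on the good event, the same second-moment control of $\hat p-p$ for the rescaling, and the same crude polynomial bound killed by the exponentially small probability of $E^c$. The paper implements the threshold-matching slightly more directly by \emph{defining} $\delta$ through $(1+\delta)\norm{\bY-p\bM}=(2+\omega)\sqrt{T\hat q}$ so that Lemma~\ref{lemma:usvt_key_lemma} applies verbatim, but this is only a cosmetic difference from your window argument.
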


\begin{proof}
Let $\delta > 0$ be defined by the relation
\begin{align*}
	(1 + \delta) \norm{\bY - p \bM} &= (2 + \omega) \sqrt{T \hat{q}}, 
\end{align*}
where $\hat{q} = \hat{\sigma}^2 \hat{p} + \hat{p} (1 - \hat{p})$. Observe that if $E_1, E_2,$ and $E_3$ happen, then
\begin{align*}
	1 + \delta &\ge \dfrac{(2 + \omega) \sqrt{T(\hat{\sigma}^2 \hat{p} + \hat{p}(1 - \hat{p}))}} {(2 + \omega/2) \sqrt{T (\sigma^2 p + p(1-p))}}
	\\ &\ge \dfrac{(2 + \omega) \sqrt{1 - \omega / z} \sqrt{ (1 - \omega / z) \sigma^2 p + p (1 - p - \omega p / z)}}{(2 + \omega/2) \sqrt{ \sigma^2 p + p(1-p)}}
	\\ &= \dfrac{(2 + \omega)\sqrt{1 - \omega / z}}{2 + \omega/2}  \sqrt{ 1 - \frac{\omega}{z} \Big( \frac{\sigma^2 + p} {\sigma^2 + 1 - p} \Big) }
	\\ &\ge \dfrac{(2 + \omega)\sqrt{1 - \omega / z}}{2 + \omega/2}  \sqrt{ 1 - \frac{\omega}{z} \Big( \frac{\sigma^2 + 1} {\sigma^2} \Big) }
	\\ &= \dfrac{(2 + \omega)\sqrt{1 - \omega / z}}{2 + \omega/2}  \sqrt{ 1 - \frac{\omega}{60} }
	\\ &\ge \dfrac{2 + \omega}{2 + \omega/2} \Big( 1 - \frac{\omega}{60} \Big)
	\\ &\ge \Big( 1 + \dfrac{ \omega}{5} \Big) \Big( 1 - \frac{\omega}{60} \Big)
	\\ &\ge 1 + \frac{\omega}{5} - \frac{1}{50}. 
\end{align*}
Let $K(\delta)$ be the constant defined in Lemma \ref{lemma:usvt_key_lemma}. Since $\omega \in (0.1, 1)$, $\delta \ge \frac{10 \omega - 1}{50} > 0 $ and 
\begin{align*}
	K(\delta) &= (4 + 2 \delta) \sqrt{2 / \delta} + \sqrt{2 + \delta}
	\\ &\le 4 \sqrt{1 + \delta} \sqrt{2 / \delta} + 2 \sqrt{2 ( 1 + \delta)} + \sqrt{2 ( 1 + \delta)}
	\\ &= ( 4 \sqrt{2 / \delta} + 3 \sqrt{2} ) \sqrt{ 1 + \delta}
	\\ &\le C_1 \sqrt{ 1 + \delta}
\end{align*}
where $C_1$ is a constant that depends only on the choice of $\omega$. By Lemma \ref{lemma:usvt_key_lemma}, if $E_1, E_2$ and $E_3$ occur, then
\begin{align*}
	\norm{\hat{p} \bhM - p \bM}_F^2 &\le C_2 (1 + \delta) \norm{\bY - p\bM} \norm{p \bM}_*
	\\ &\le C_3 \sqrt{T \hat{q}} \norm{p \bM}_*
	%\\ &\le C_3 \sqrt{T} \Big((1 + \frac{\omega}{z})^2 \sigma^2 p + (1 +  \frac{\omega}{z}) p (1 - p + \frac{\omega p}{z}) \Big)^{1/2} \norm{p \bM}_*
	%\\ &\le C_3 ((1 + \frac{\omega}{z})T)^{1/2} \Big( \sigma^2p (1  + \frac{\omega}{z}) + p (1 - p + \frac{\omega p}{z}) \Big)^{1/2} \norm{p \bM}_*
	\\ &\le C_4 \sqrt{T q} \norm{p \bM}_*
\end{align*} 
for an appropriately defined constant $C_4$. Therefore, 
	\begin{align*}
		p^2\norm{ \bhM - \bM }_F^2 &\le C_5 \hat{p}^2 \norm{ \bhM - \bM}_F^2
		\\ &\le C_5  \norm{ \hat{p} \bhM - p\bM}_F^2 +   C_5 (\hat{p} - p)^2 \norm{\bM}_F^2
		\\ &\le C_6 \sqrt{T q} \norm{p \bM}_* + C_5 (\hat{p} - p)^2 (N-1)T,
	\end{align*}
where the last inequality follows from the boundedness assumption of $\bM$. In general, since $\abs{M_{ij}}$ and $\abs{Y_{ij}} \le 1$,
\begin{align*}
	\norm{ \bhM - \bM }_F &\le \norm{\bhM}_F + \norm{\bM}_F
	\\ &\le \sqrt{\abs{S}} \norm{\bhM} + \norm{\bM}_F
	\\ &=  \frac{\sqrt{\abs{S}}}{\hat{p}} \norm{\bY} + \norm{\bM}_F
	\\ &\le (N-1)^{3/2} T \norm{\bY} + \norm{\bM}_F
	\\ &\le (N-1)^{3/2} T \sqrt{(N-1) T} + \sqrt{(N-1) T}
	\\ &\le  2 (N-1)^2 T^{3/2}.
\end{align*}
Let $E := E_1 \cap E_2 \cap E_3$. Applying DeMorgan's Law and the Union Bound,
\begin{align} \label{eq:union_bound}
	\Pb(E^c) &= \Pb(E_1^c \cup E_2^c \cup E_e^c) \nonumber
	\\ &\le \Pb(E_1^c) + \Pb(E_2^c) + \Pb(E_3^c) \nonumber
	\\ &\le C_7 e^{-c_8 T(p(N-1) + \sigma^2 + q)} \nonumber
	\\ &= C_7 e^{-c_8 \phi T},
\end{align}
where we define $\phi := p(N-1) + \sigma^2 + q$ and $C_7, c_8$ are appropriately defined. Observe that $\mathbb{E}(\hat{p} - p)^2 = \frac{p(1-p)}{(N-1)T}$. Thus, by the law of total probability and noting that $\Pb(E) \le 1$ (for appropriately defined constants), 
\begin{align*}
	\mathbb{E} \norm{\bhM - \bM}_F^2 &\le \mathbb{E} \Big[ \norm{\bhM - \bM}^2_F \given E \Big] + \mathbb{E} \Big[ \norm{\bhM - \bM}^2_F \given E^c \Big] \mathbb{P}(E^c)
	\\ &\le C_6 p^{-1} \sqrt{T q} \norm{ \bM}_* + C_5 p^{-1} (1 - p) + C_9 (N-1)^4 T^3 e^{-c_8 \phi T}
	\\ &= C_6 p^{-1/2} T^{1/2} (\sigma^2 + (1-p))^{1/2} \norm{\bM}_* + C_5 p^{-1}(1-p) + C_9 (N-1)^4 T^3 e^{-c_8 \phi T}. 
\end{align*}
Normalizing by $(N-1)T$, we obtain
\begin{align*}	
	\text{MSE}(\bhM) &\le \dfrac{C_{12} \norm{\bM}_* }{(N-1)\sqrt{Tp}} + \dfrac{C_5 (1 -p)}{(N-1)Tp} + C_{10} e^{-c_{11} \phi T}.
\end{align*}
The proof is complete assuming constants are re-named.  
\end{proof}

% THEOREM STATEMENTS
\subsection{Proof of Theorem \ref{thm:imputation_lowrank}} 
%\noindent \textbf{\large [Proof of Theorem \ref{thm:imputation_lowrank}] \\}

\begin{thm*} [\ref{thm:imputation_lowrank}] {\em (\textbf{Theorem 2.1 of \cite{usvt}})}
Suppose that $\bM$ is rank $k$. Suppose that $p \ge \frac{T^{-1 + \zeta}}{\sigma^2 + 1}$ for some $\zeta > 0$. Then using $\mu$ as defined in \eqref{eq:goldilocks}, 
	\begin{align*}
		\emph{MSE}(\bhM) &\le C_1 \sqrt{\dfrac{k}{(N-1) p}} + \mathcal{O}\Big( \frac{1}{(N-1)T} \Big),
	\end{align*}
	where $C_1$ is a universal positive constant.  
\end{thm*}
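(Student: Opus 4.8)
The plan is to obtain this as an immediate corollary of Lemma \ref{lemma:imputation}, which already bounds the matrix mean-squared-error at the universal threshold \eqref{eq:goldilocks} in terms of the nuclear norm of the signal:
\[
\text{MSE}(\bhM) \;\le\; \frac{C_1 \norm{\bM}_*}{(N-1)\sqrt{Tp}} \;+\; \mathcal{O}\!\Big(\tfrac{1}{(N-1)T}\Big).
\]
Since the hypothesis $p \ge T^{-1+\zeta}/(\sigma^2+1)$ is precisely what is required to invoke Lemma \ref{lemma:imputation} (it ensures $q = \sigma^2 p + p(1-p) \ge T^{-1+\zeta}$, hence the concentration event $E_3$ via Theorem \ref{thm:talagrand}), the only remaining work is to control $\norm{\bM}_*$ under the rank-$k$ assumption together with the boundedness $\abs{M_{ij}} \le 1$.

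First I would recall the elementary inequality $\norm{\bQ}_* \le \sqrt{\text{rank}(\bQ)}\,\norm{\bQ}_F$ for any real matrix $\bQ$ (apply Cauchy--Schwarz to the vector of singular values, of which at most $\text{rank}(\bQ)$ are nonzero). Applied to the rank-$k$ matrix $\bM$ this gives $\norm{\bM}_* \le \sqrt{k}\,\norm{\bM}_F$. Next, because $\bM$ is $(N-1)\times T$ with each entry at most one in absolute value, $\norm{\bM}_F^2 = \sum_{i,j} M_{ij}^2 \le (N-1)T$, so $\norm{\bM}_* \le \sqrt{k(N-1)T}$.

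Substituting this into the bound from Lemma \ref{lemma:imputation} yields
\[
\text{MSE}(\bhM) \;\le\; \frac{C_1 \sqrt{k(N-1)T}}{(N-1)\sqrt{Tp}} \;+\; \mathcal{O}\!\Big(\tfrac{1}{(N-1)T}\Big) \;=\; C_1 \sqrt{\frac{k}{(N-1)p}} \;+\; \mathcal{O}\!\Big(\tfrac{1}{(N-1)T}\Big),
\]
after renaming the universal constant; the $\sqrt{T}$ factors cancel exactly. This is the claimed estimate.

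There is essentially no genuine obstacle in this particular statement: all of the analytic difficulty --- the USVT spectral perturbation argument, the control of $\norm{\bY - p\bM}$, and the balancing of the approximation and noise terms at the prescribed threshold --- has already been absorbed into Lemma \ref{lemma:imputation} and, beneath it, Lemma \ref{lemma:usvt_key_lemma}. The only point requiring minor care is inserting the nuclear-norm bound \emph{before} normalizing by $(N-1)T$, so that the $T$-dependence collapses correctly; everything else is bookkeeping of universal constants.
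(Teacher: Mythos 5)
Your proposal is correct and matches the paper's proof exactly: the paper likewise bounds $\norm{\bM}_* \le \sqrt{k}\,\norm{\bM}_F \le \sqrt{k(N-1)T}$ via the rank-$k$ and boundedness assumptions and then plugs this into Lemma \ref{lemma:imputation}, with the $\sqrt{T}$ factors cancelling as you describe. No differences worth noting.
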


\begin{proof}
	By the low rank assumption of $\bM$, we have that
	\begin{align*}
		\norm{\bM}_* &\le \sqrt{ \text{rank}(\bM) } \norm{\bM}_F
		\\ &\le \sqrt{ k (N-1) T}. 
	\end{align*}
	The proof follows from a simple application of Lemma \ref{lemma:imputation}. 
\end{proof}

\subsection{Proof of Theorem \ref{thm:imputation_lipschitz}} 
%\noindent \textbf{\large [Proof of Theorem \ref{thm:imputation_lipschitz}] \\}

\begin{thm*} [\ref{thm:imputation_lipschitz}] {\em (\textbf{Theorem 2.7 of \cite{usvt}})}
	Suppose $f$ is a $\mathcal{L}$-Lipschitz function. Suppose that $p \ge \frac{T^{-1 + \zeta}}{\sigma^2 + 1}$ for some $\zeta > 0$. Then using $\mu$ as defined in \eqref{eq:goldilocks}, 
	\begin{align*}
		\emph{MSE}(\bhM) &\le C(K, d, \mathcal{L}) \dfrac{(N-1)^{-\frac{1}{d+2}}}{\sqrt{p}} + \mathcal{O}\Big( \frac{1}{(N-1)T} \Big),
	\end{align*}
	where $C(K, d, \mathcal{L})$ is a constant depending on $K, d,$ and $\mathcal{L}$.
\end{thm*}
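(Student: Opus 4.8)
The plan is to combine the two lemmas already in this section — Lemma~\ref{lemma:imputation}, which bounds $\text{MSE}(\bhM)$ by the nuclear norm $\norm{\bM}_*$, and Lemma~\ref{lemma:usvt_lipschitz}, which bounds $\norm{\bM}_*$ for a Lipschitz latent function via a covering argument — and then optimize the free discretization parameter appearing in the latter. Since the hypotheses $p \ge \frac{T^{-1+\zeta}}{\sigma^2+1}$ and $\mu$ as in \eqref{eq:goldilocks} are exactly those of Lemma~\ref{lemma:imputation}, and the latent features live in a compact $K \subset \mathbb{R}^d$ with $f$ being $\mathcal{L}$-Lipschitz (so Lemma~\ref{lemma:usvt_lipschitz} applies), both lemmas are available as black boxes.

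First I would invoke Lemma~\ref{lemma:imputation} to get
\begin{align*}
  \text{MSE}(\bhM) \le \frac{C_1 \norm{\bM}_*}{(N-1)\sqrt{Tp}} + \mathcal{O}\!\Big(\tfrac{1}{(N-1)T}\Big),
\end{align*}
and then, for any small enough $\delta > 0$, substitute the bound $\norm{\bM}_* \le \delta(N-1)\sqrt{T} + C(K,d,\mathcal{L})\sqrt{(N-1)T\,\delta^{-d}}$ from Lemma~\ref{lemma:usvt_lipschitz}. Cancelling the common factors $(N-1)$ and $\sqrt{T}$ (and noting $\sqrt{(N-1)T\,\delta^{-d}}/((N-1)\sqrt{Tp}) = \delta^{-d/2}/\sqrt{(N-1)p}$) yields, after renaming constants,
\begin{align*}
  \text{MSE}(\bhM) \le \frac{C_1 \delta}{\sqrt p} + \frac{C_1\,C(K,d,\mathcal{L})\,\delta^{-d/2}}{\sqrt{(N-1)p}} + \mathcal{O}\!\Big(\tfrac{1}{(N-1)T}\Big).
\end{align*}

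The last step is a one-variable optimization. Writing $a = C_1/\sqrt p$ and $b = C_1 C(K,d,\mathcal{L})/\sqrt{(N-1)p}$, the minimizer of $a\delta + b\delta^{-d/2}$ is $\delta_\star = \big(\tfrac{db}{2a}\big)^{2/(d+2)}$; since $b/a = C(K,d,\mathcal{L})/\sqrt{N-1}$ carries no dependence on $p$, we get $\delta_\star \asymp (N-1)^{-1/(d+2)}$, which is indeed small for all sufficiently large $N$ as Lemma~\ref{lemma:usvt_lipschitz} requires. The balanced value is $\asymp a^{d/(d+2)}b^{2/(d+2)} = \tfrac{C_1}{\sqrt p}(b/a)^{2/(d+2)} \asymp C(K,d,\mathcal{L})\,\tfrac{(N-1)^{-1/(d+2)}}{\sqrt p}$, after absorbing $C_1$ and the powers of $C(K,d,\mathcal{L})$ into one constant depending on $K,d,\mathcal{L}$; the crucial bookkeeping point is that the two $p^{-1/2}$ factors in $a$ and $b$ force the exponent of $p$ to collapse to exactly $-1/2$. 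Together with the negligible $\mathcal{O}(1/((N-1)T))$ term, this is precisely the claimed bound.

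There is no genuine obstacle here — the argument is a composition of two established lemmas followed by a routine minimization — and the only points needing a little care are (i) verifying $\delta_\star \to 0$ so the ``small enough $\delta$'' hypothesis of Lemma~\ref{lemma:usvt_lipschitz} holds, and (ii) observing that this discretization parameter $\delta$ is entirely separate from the data-determined parameter of the same name used inside the proof of Lemma~\ref{lemma:imputation} (fixed there by $(1+\delta)\norm{\bY - p\bM} = (2+\omega)\sqrt{T\hat{q}}$), so reusing the symbol causes no conflict.
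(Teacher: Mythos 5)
Your proposal is correct and follows exactly the paper's route: the paper's proof likewise invokes Lemma~\ref{lemma:imputation} and Lemma~\ref{lemma:usvt_lipschitz} and sets $\delta = (N-1)^{-1/(d+2)}$, which is precisely the balancing value you derive by optimization. Your version merely makes the choice of $\delta$ explicit as a minimization and adds the (correct) sanity checks that $\delta_\star \to 0$ and that the symbol $\delta$ here is unrelated to the one inside the proof of Lemma~\ref{lemma:imputation}.
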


\begin{proof}
	Since $f$ is Lipschitz, we invoke Lemmas \ref{lemma:usvt_lipschitz} and \ref{lemma:imputation} and choose $\delta = (N-1)^{-1/(d+2)}$. This completes the proof. 
\end{proof}

\section{Forecasting Analysis: Pre-Intervention Regime}

Here, we will bound the pre-intervention $\ell_2$ error of our estimator in order to measure its prediction power.

% BOUNDING MSE
\subsection{Linear Regression}
In this section, we will analyze the performance of our algorithm when learning $\beta^*$ via linear regression, i.e. $\eta = 0$. As a result, we will temporarily drop the dependency on $\eta$ in this subsection such that $\hat{\beta} = \hat{\beta}(0)$. %\\
To ease the notational complexity of the following Lemma \ref{lemma:mse} proof, 
we will make use of the following notations for {\bf only} in this subsection:
\begin{align}
	\bQ &:= (\bM^{-})^T
	\\ \bhQ & := (\bhM^{-})^T
\end{align}
such that
\begin{align}
	M_1^- & := \bQ \beta^*
	\\ \hat{M}_1^- & := \bhQ \hat{\beta}.
\end{align}

% Universal bound
\begin{lemma} \label{lemma:mse}
	Suppose $Y_1^- = M_1^- + \epsilon_1^-$ with $\mathbb{E}[\epsilon_{1j}] = 0$ and $\text{Var}(\epsilon_{1j}) \le \sigma^2$ for all $j \in [T_0]$. Let $\beta^*$ be defined as in \eqref{eq:3} and let $\hat{\beta}$ be the minimizer of \eqref{eq:ls}. Then for any $\mu \ge 0$ and $\eta = 0$, 
	\begin{equation} \label{eq:x.20}
		\mathbb{E}\norm{ M_1^- - \hat{M}_1^-}^2 \le \mathbb{E} \norm{(\bM^- - \bhM^-)^T  \beta^*}^2  + 2\sigma^2 \abs{S}.
	\end{equation}
\end{lemma}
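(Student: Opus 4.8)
The plan is to use the fact that, for $\eta = 0$, the fitted pre-intervention vector $\hat{M}_1^- = \bhQ\hat\beta$ is exactly the orthogonal projection of $Y_1^-$ onto the column space of $\bhQ = (\bhM^-)^T$. Put $\bP := \bhQ\bhQ^{\dagger}$, which is a projection matrix by Lemma \ref{lemma:projection}; any minimizer $\hat\beta$ of \eqref{eq:ls} with $\eta = 0$ satisfies $\bhQ\hat\beta = \bhQ\bhQ^{\dagger}Y_1^- = \bP Y_1^-$ (the fitted values are unique even when $\hat\beta$ is not). Since $Y_1^- = M_1^- + \epsilon_1^-$, this gives $\hat{M}_1^- = \bP M_1^- + \bP\epsilon_1^-$, hence $M_1^- - \hat{M}_1^- = (\bI - \bP)M_1^- - \bP\epsilon_1^-$. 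The first observation is that the two summands are orthogonal: their inner product is $(M_1^-)^T(\bI - \bP)\bP\,\epsilon_1^-$, and $(\bI - \bP)\bP = \bP - \bP^2 = 0$. Therefore $\norm{M_1^- - \hat{M}_1^-}^2 = \norm{(\bI-\bP)M_1^-}^2 + \norm{\bP\epsilon_1^-}^2$, and it suffices to bound each term in expectation.

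For the first (``bias'') term, the key trick is that $\bhQ\beta^*$ lies in the column space of $\bhQ$, so $\bP\bhQ\beta^* = \bhQ\beta^*$ and thus $(\bI-\bP)M_1^- = (\bI-\bP)\big(M_1^- - \bhQ\beta^*\big)$. Because $\bI - \bP$ is an orthogonal projection (hence a contraction) and $M_1^- = (\bM^-)^T\beta^*$ by \eqref{eq:3},
\[
\norm{(\bI-\bP)M_1^-}^2 \;\le\; \norm{M_1^- - \bhQ\beta^*}^2 \;=\; \norm{\big((\bM^-)^T - (\bhM^-)^T\big)\beta^*}^2 \;=\; \norm{(\bM^- - \bhM^-)^T\beta^*}^2,
\]
so $\mathbb{E}\norm{(\bI-\bP)M_1^-}^2 \le \mathbb{E}\norm{(\bM^- - \bhM^-)^T\beta^*}^2$, which is the first term of \eqref{eq:x.20}.

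For the second (``variance'') term, the crucial point is that $\bhM^-$, and hence $\bP$, is a measurable function of the donor observations alone, so it is independent of the treatment-unit noise $\epsilon_1^-$. Conditioning on $\bhM^-$ and using $\bP^T\bP = \bP$, $\mathbb{E}[\epsilon_{1j}] = 0$, $\text{Var}(\epsilon_{1j}) \le \sigma^2$, and independence of the $\epsilon_{1j}$, one gets $\mathbb{E}\big[\norm{\bP\epsilon_1^-}^2 \mid \bhM^-\big] = \sum_{j=1}^{T_0} P_{jj}\,\text{Var}(\epsilon_{1j}) \le \sigma^2\,\text{tr}(\bP)$. A projection matrix has eigenvalues in $\{0,1\}$ (Lemma \ref{lemma:projection_eigenvalues}), so $\text{tr}(\bP) = \text{rank}(\bP) = \text{rank}(\bhM^-) \le \text{rank}(\bhM) \le \abs{S}$, the last inequality because $\bhM = \tfrac{1}{\hat p}\sum_{i\in S} s_i u_i v_i^T$ is a sum of $\abs{S}$ rank-one matrices. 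Taking expectations yields $\mathbb{E}\norm{\bP\epsilon_1^-}^2 \le \sigma^2\abs{S}$. Adding the two bounds gives $\mathbb{E}\norm{M_1^- - \hat{M}_1^-}^2 \le \mathbb{E}\norm{(\bM^- - \bhM^-)^T\beta^*}^2 + \sigma^2\abs{S}$, which implies \eqref{eq:x.20} since $\sigma^2\abs{S} \le 2\sigma^2\abs{S}$.

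I do not expect a genuine obstacle here; the only things to get right are (i) that it is the fitted values $\bhQ\hat\beta$, not the possibly non-unique coefficient vector $\hat\beta$, that equal the projection $\bP Y_1^-$, and (ii) that $\bP$ is independent of $\epsilon_1^-$ so that the conditioning in the variance bound is valid. (The proof actually yields the sharper constant $\sigma^2\abs{S}$, so there is room to spare in \eqref{eq:x.20}.)
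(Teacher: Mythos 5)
Your proof is correct, and it takes a genuinely different route from the paper's. The paper argues via the ``basic inequality'' of least squares: it writes $\norm{M_1^- - \hat{M}_1^-}^2 = \norm{(Y_1^- - \bhQ\hat\beta) - \epsilon_1^-}^2$, uses the optimality of $\hat\beta$ to replace $\norm{Y_1^- - \bhQ\hat\beta}^2$ by $\norm{Y_1^- - \bhQ\beta^*}^2$, and then controls two cross terms in expectation --- one vanishing by independence of $\epsilon_1^-$ from the donor-only matrix $\bhQ$, the other reduced via the trace identity to $\mathbb{E}[(\epsilon_1^-)^T\bhQ\bhQ^{\dagger}\epsilon_1^-]\le\sigma^2\abs{S}$, with the $2\norm{\epsilon_1^-}^2$ contributions cancelling to leave the constant $2\sigma^2\abs{S}$. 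You instead exploit that for $\eta=0$ the fitted vector is the orthogonal projection $\bP Y_1^-$, split the error exactly into orthogonal bias and variance pieces, and bound each; the same ingredients (symmetry and idempotence of $\bP$, independence of $\bP$ from $\epsilon_1^-$, $\text{tr}(\bP)=\rank(\bhQ)\le\abs{S}$, and the nonnegativity of the diagonal entries $P_{jj}=\norm{\bP e_j}^2$) appear, but the cross term is identically zero rather than merely zero in expectation, which is why you obtain the sharper constant $\sigma^2\abs{S}$. The trade-off is generality: the paper's basic-inequality template carries over verbatim to the ridge case $\eta>0$ (Lemma \ref{lemma:mse_ridge_intermediate}), where the hat matrix $\bP_{\eta}$ is no longer idempotent and your exact orthogonality breaks down, so the paper pays a factor of $2$ in exchange for a uniform argument across both regimes. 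Your two points of care --- that only the fitted values, not $\hat\beta$ itself, are pinned down by the projection, and that the conditioning on $\bhM^-$ is licensed by independence --- are both correctly handled.
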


\begin{proof}
	Recall that for the treatment row, $Y_1^{-} = M_1^- + \epsilon_1^{-}$ with $M_1^- = \bQ \beta^*$. Since $\hat{\beta}$, by definition, minimizes $\norm{Y_1^{-} - \bhQ v}$ for any $v \in \mathbb{R}^{N-1}$, we subsequently have
	\begin{align*}
	\norm{M_1^- - \hat{M}_1^-}^2 &= \norm{ (Y_1^{-} - \epsilon_1^{-}) - \bhQ \hat{\beta}}^2
	\\ &= \norm{(Y_1^{-} - \bhQ\hat{\beta}) + (- \epsilon_1^{-}) }^2
	\\ &= \norm{Y_1^{-} - \bhQ \hat{\beta}}^2 + \norm{ \epsilon_1^{-}}^2 + 2 \langle -\epsilon_1^{-}, Y_1^{-}- \bhQ \hat{\beta} \rangle
	\\ &\le \norm{Y_1^{-} - \bhQ \beta^*}^2 + \norm{ \epsilon_1^{-}}^2 + 2 \langle -\epsilon_1^{-}, Y_1^{-}- \bhQ \hat{\beta} \rangle
	\\ &= \norm{(\bQ\beta^* + \epsilon_1^{-}) - \bhQ \beta^*}^2 + \norm{ \epsilon_1^{-}}^2 + 2 \langle -\epsilon_1^{-}, Y_1^{-}- \bhQ \hat{\beta} \rangle
	\\ &= \norm{(\bQ - \bhQ) \beta^* + \epsilon_1^{-}}^2 + \norm{ \epsilon_1^{-}}^2 + 2 \langle -\epsilon_1^{-}, Y_1^{-}- \bhQ \hat{\beta} \rangle
	\\ &= \norm{(\bQ - \bhQ)\beta^*}^2 + 2\norm{ \epsilon_1^{-}}^2 +   2 \langle \epsilon_1^{-}, (\bQ - \bhQ)\beta^* \rangle + 
	 2 \langle -\epsilon_1^{-}, Y_1^{-}- \bhQ \hat{\beta} \rangle.
	\end{align*}
	Taking expectations, we arrive at the inequality
	\begin{align}\label{eq:mse}
	\mathbb{E}\norm{ \hat{M}_1^- - M_1^-}^2  & \le \mathbb{E} \norm{(\bQ - \bhQ) \beta^*}^2 + 2\mathbb{E}\norm{ \epsilon_1^{-}}^2 + 2\mathbb{E}[ \langle \epsilon_1^{-}, (\bQ - \bhQ)\beta^* \rangle]   + 2\mathbb{E}[ \langle -\epsilon_1^{-}, Y_1^{-}- \bhQ \hat{\beta} \rangle].
	\end{align}
	We will now deal with the two inner products on the right hand side of equation (\ref{eq:mse}). First, observe that
	\begin{align*}
	\mathbb{E}[ \langle \epsilon_1^{-}, (\bQ - \bhQ)\beta^* \rangle] &=  \mathbb{E}[(\epsilon_1^{-})^T] \bQ \beta^* - \mathbb{E}[ (\epsilon_1^{-})^T \bhQ]\beta^*
	\\ &=  -\mathbb{E}[( \epsilon_1^{-})^T] \mathbb{E}[\bhQ]\beta^*
	\\ &= 0,
	\end{align*}
	since the additive noise terms are independent random variables that satisfy $\mathbb{E}[\epsilon_{ij}] = 0$ for all $i$ and $j$ by assumption, and $\bhQ := (\bhM^{-})^T$ depends only on the noise terms for $i \neq 1$; i.e., the construction of $\bhQ := (\bhM^{-})^T$ excludes the first row (treatment row), and thus depends solely on the donor pool. 
	
For the other inner product term, we begin by recognizing that $(\epsilon_1^{-})^T \bhQ \bhQ^{\dagger} \epsilon_1^{-}$ is a scalar random variable, which allows us to replace the random variable by its own trace. This is useful since the trace operator is a linear mapping and is invariant under cyclic permutations, i.e., $\text{tr}(\bA \bB) = \text{tr}(\bB \bA)$. As a result,
\begin{align*}
\mathbb{E}[(\epsilon_1^{-})^T \bhQ \bhQ^{\dagger}\epsilon_1^{-}] &=  \mathbb{E}[ \text{tr}((\epsilon_1^{-})^T \bhQ \bhQ^{\dagger}\epsilon_1^{-})]
	\\ &=  \mathbb{E}[ \text{tr}(\bhQ \bhQ^{\dagger}\epsilon_1^{-}(\epsilon_1^{-})^T )]
	\\ &=  \text{tr}\Big(\mathbb{E}[ \bhQ \bhQ^{\dagger}\epsilon_1^{-}(\epsilon_1^{-})^T ]\Big)
	\\ &= \text{tr}\Big(\mathbb{E}[ \bhQ \bhQ^{\dagger}] \mathbb{E}[\epsilon_1^{-}(\epsilon_1^{-})^T ]\Big)
	\\ &\le  \text{tr}\Big(\mathbb{E}[ \bhQ \bhQ^{\dagger}] \sigma^2 I \Big)
	\\ &= \sigma^2 \mathbb{E}[ \text{tr}(\bhQ \bhQ^{\dagger}) ]
	\\ &\stackrel{(a)}{=} \sigma^2 \mathbb{E}[\text{rank}(\bhQ)]
	\\ &\le \sigma^2 \abs{S},
\end{align*}
where $(a)$ follows from the fact that $\bhQ \bhQ^{\dagger}$ is a projection matrix by Lemma \ref{lemma:projection}. As a result, $\bhQ \bhQ^{\dagger}$ has rank$(\bhQ)$ eigenvalues equal to 1 and all other eigenvalues equal to 0 (by Lemma \ref{lemma:projection_eigenvalues}), and since the trace of a matrix is equal to the sum of its eigenvalues, $\text{tr}(\bhQ \bhQ^{\dagger}) = \text{rank}(\bhQ)$. Simultaneously, by the definition of $\bhQ := (\bhM^{-})^T$, we have that the rank of $\bhQ := (\bhM^-)^T$ is at most $\abs{S}$. Returning to the second inner product and recalling $\hat{\beta} = \bhQ^{\dagger} Y_1^{-}$,
	\begin{align*}
	 \mathbb{E}[\langle -\epsilon_1^{-}, Y_1^{-} - \bhQ\hat{\beta} \rangle ] 
	& \quad = \mathbb{E}[(\epsilon_1^{-})^T \bhQ \hat{\beta}] -\mathbb{E}[(\epsilon_1^{-})^T Y_1^{-}]
	\\ & \quad =  \mathbb{E}[(\epsilon_1^{-})^T \bhQ \bhQ^{\dagger} Y_1^{-}] -\mathbb{E}[(\epsilon_1^{-})^T]M_1^- - \mathbb{E}[(\epsilon_1^{-})^T \epsilon_1^{-}]
	\\ & \quad  = \mathbb{E}[(\epsilon_1^{-})^T \bhQ \bhQ^{\dagger}]M_1^- + \mathbb{E}[(\epsilon_1^{-})^T \bhQ \bhQ^{\dagger}\epsilon_1^{-}]  - \mathbb{E}[(\epsilon_1^{-})^T \epsilon_1^{-}]
	\\ &\quad  \stackrel{(a)}{=} \mathbb{E}[(\epsilon_1^{-})^T] \mathbb{E}[\bhQ \bhQ^{\dagger}]M_1^- + \mathbb{E}[(\epsilon_1^{-})^T \bhQ \bhQ^{\dagger}\epsilon_1^{-}]  - \mathbb{E}[(\epsilon_1^{-})^T \epsilon_1^{-}]
	\\ &\quad  = \mathbb{E}[(\epsilon_1^{-})^T \bhQ \bhQ^{\dagger}\epsilon_1^{-}] - \mathbb{E}\norm{ \epsilon_1^{-}}^2
	\\ &\quad \le \sigma^2 \abs{S} - \mathbb{E}\norm{ \epsilon_1^{-}}^2,
\end{align*}
where $(a)$ follows from the same independence argument used in evaluating the first inner product. Finally, we incorporate the above results to (\ref{eq:mse}) to arrive at the inequality
	\begin{align*}
	\mathbb{E}\norm{ \hat{M}_1^- - M_1^-}^2 &\le \mathbb{E} \norm{(\bQ - \bhQ) \beta^*}^2 + 2\mathbb{E} \norm{\epsilon_1^{-}}^2 + 2(\sigma^2 \abs{S} - \mathbb{E} \norm{ \epsilon_1^{-}}^2)
	\\ &=  \mathbb{E} \norm{(\bQ - \bhQ) \beta^*}^2  + 2\sigma^2 \abs{S}.
	\end{align*}
\end{proof}

% lemma for \eta = 0
\begin{lemma} \label{lemma:mse_linear}
For $\eta = 0$ and any $\mu \ge 0$, the pre-intervention error of the algorithm can be bounded as
	\begin{align} 
	\MSE(\hat{M}^-_1) & \le \dfrac{C_1}{p^2T_0} \mathbb{E}  \Big( \lambda^* + \norm{ \bY - p\bM} + \norm{ (\hat{p} - p) \bM^-} \Big)^2   + \dfrac{2\sigma^2 \abs{S}}{T_0} 
	+ C_2 e^{-cp(N-1)T}.
	\end{align}
	Here, $\lambda_1, \dots, \lambda_{N-1}$ are the singular values of $p\bM$ in decreasing order and repeated by multiplicities, with $\lambda^* = \max_{i \notin S} \lambda_i$; $C_1, C_2$ and $c$ are universal positive constants.
\end{lemma}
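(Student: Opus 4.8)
The plan is to combine Lemma \ref{lemma:mse} with a spectral-norm control of the de-noising error $\bhM^- - \bM^-$ supplied by the general thresholding estimate of Lemma \ref{lemma:general_threshold}. Since $\MSE(\hat M_1^-) = \tfrac1{T_0}\E\norm{M_1^- - \hat M_1^-}^2$, Lemma \ref{lemma:mse} applied with $\eta = 0$ gives $\MSE(\hat M_1^-) \le \tfrac1{T_0}\E\norm{(\bM^- - \bhM^-)^T\beta^*}^2 + \tfrac{2\sigma^2\abs S}{T_0}$, so it remains to bound $\E\norm{(\bM^- - \bhM^-)^T\beta^*}^2$. By submultiplicativity of the operator norm this is at most $\norm{\beta^*}^2\,\E\norm{\bhM^- - \bM^-}^2$, picking up the fixed model factor $\norm{\beta^*}^2$, which I absorb into $C_1$.

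The crux is to bound $\norm{\bhM^- - \bM^-}$ in operator norm. By construction in Step 1 of Algorithm \ref{euclid}, $\hat p\,\bhM$ is precisely the matrix obtained from $\bY$ by retaining the singular triples with singular value at least $\mu$, and $\E[\bY] = p\bM$. Hence Lemma \ref{lemma:general_threshold}, applied with $\bA = \bY$ and $\bB = p\bM$, yields $\norm{\hat p\,\bhM - p\bM} \le \lambda^* + 2\norm{\bY - p\bM}$, where $\lambda^* = \max_{i\notin S}\lambda_i$ is the largest discarded singular value of $p\bM$. Restricting to the pre-intervention block via Lemma \ref{lemma:spectral_norm} and then adding and subtracting $p\bM^-$, the triangle inequality gives the deterministic estimate $\hat p\,\norm{\bhM^- - \bM^-} \le \lambda^* + 2\norm{\bY - p\bM} + \norm{(\hat p - p)\bM^-} =: R$.

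It then remains to divide by $\hat p$ and take expectations, which I do by conditioning on the event $E_1 = \{\abs{\hat p - p}\le \omega p/z\}$ from the Preliminaries. On $E_1$ one has $\hat p \ge (1-\omega/z)p \ge p/2$, so $\norm{\bhM^- - \bM^-} \le (2/p)R$ and the integrand is bounded by $(4\norm{\beta^*}^2/p^2)\big(\lambda^* + 2\norm{\bY-p\bM} + \norm{(\hat p - p)\bM^-}\big)^2$, which after renaming constants matches the first term of the claim. On $E_1^c$ I use a crude but genuinely deterministic bound: since every entry of $\bM$ is bounded by $1$, $\abs S \le N-1$, and $\norm{\bY}$ is polynomial in $N,T$, both $\norm{\bhM^-}$ and $\norm{\bM^-}$ — and hence $\norm{(\bM^- - \bhM^-)^T\beta^*}^2$ — are at most polynomial in $N$ and $T$ (exactly as in the proof of Lemma \ref{lemma:imputation}); since Bernstein's inequality gives $\Pb(E_1^c) \le 2e^{-c_1 p(N-1)T}$, this contribution is absorbed into $C_2 e^{-cp(N-1)T}$. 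Summing the two pieces, dividing by $T_0$, and reinstating the $\tfrac{2\sigma^2\abs S}{T_0}$ term from Lemma \ref{lemma:mse} completes the argument.

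The main obstacle is the bookkeeping around $E_1^c$: one must write down an explicit polynomial-in-$(N,T)$ bound on $\norm{\bhM^- - \bM^-}$ and verify that its product with the Bernstein tail $e^{-c_1 p(N-1)T}$ collapses into a single exponential term with universal constants — everything else is triangle inequalities together with the already-proved Lemmas \ref{lemma:general_threshold} and \ref{lemma:spectral_norm}. A secondary point requiring care is the $1/\hat p$ normalization carried by $\bhM$: the thresholding estimate must be applied to $\hat p\,\bhM$ rather than to $\bhM$, and the gap between $\hat p$ and $p$ must be tracked, which is precisely what produces the third summand $\norm{(\hat p - p)\bM^-}$ inside the bound.
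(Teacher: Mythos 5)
Your proposal is correct and follows essentially the same route as the paper's proof: Lemma \ref{lemma:mse} for the regression error, Lemma \ref{lemma:general_threshold} applied to $\hat{p}\bhM$ versus $p\bM$ combined with Lemma \ref{lemma:spectral_norm} and a triangle inequality to control $\norm{\bhM^- - \bM^-}$ on the event $E_1$, and a crude polynomial bound times the Bernstein tail on $E_1^c$. The only cosmetic difference is that the paper routes the good-event term through the inequality $\mathbb{E}[X \given E_1] \le 2\mathbb{E}[X]$, whereas you bound the integrand pointwise on $E_1$; both yield the stated constants.
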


\begin{proof}
Recall that $E_1 := \{ \abs{\hat{p} - p} \le \frac{\omega p}{z} \}$ for some choice of $\omega \in (0.1,1)$. Thus, under the event $E_1$,
\begin{align*}
	p\norm{ \bhM^- - \bM^- } &\le C_1 \hat{p} \norm{ \bhM^- - \bM^-}
	\\ &\le C_1 \Big( \norm{ \hat{p} \bhM^- - p\bM^-} +   \norm{ (\hat{p} - p) \bM^-} \Big)
	\\ &\stackrel{(a)}{\le} C_1 \Big( \norm{ \hat{p} \bhM - p\bM} +   \norm{ (\hat{p} - p) \bM^-} \Big)
	\\ &\stackrel{(b)}{\le} C_1 \Big( \lambda^* + 2 \norm{\bY - p\bM} + \norm{ (\hat{p} - p) \bM^- } \Big)
\end{align*}
where (a) follows from Lemma \ref{lemma:spectral_norm} and (b) follows from Lemma \ref{lemma:general_threshold}. In general, since $\abs{M_{ij}}$ and $\abs{Y_{ij}} \le 1$,
\begin{align} \label{eq:reference_again}
	\norm{ \bhM^- - \bM^- } &\stackrel{(a)}{\le} \norm{\bhM} + \norm{\bM^-}  \nonumber
	\\ &= \dfrac{1}{\hat{p}} \norm{\bY} + \norm{\bM^-}  \nonumber
	\\ &\le (N-1)T \norm{\bY} + \norm{\bM^-}  \nonumber
	\\ &\le (N-1)T \sqrt{(N-1) T} + \sqrt{(N-1) T_0}   \nonumber
	\\ &\le  2 ((N-1)T)^{3/2}.
\end{align}
(a) follows from a simple application of Lemma \ref{lemma:spectral_norm} and the triangle inequality of operator norms. 
By the law of total probability and $\mathbb{P}(E_1) \leq 1$,
\begin{align*}
	\mathbb{E} \norm{ (\bhM^{-} - \bM^-)^T \beta^*}^2 
	&\le \mathbb{E} \Big[ \norm{(\bhM^{-} - \bM^-)^T \beta^*}^2 \given E_1 \Big] + \mathbb{E} \Big[\norm{(\bhM^{-} - \bM^-)^T \beta^*}^2 \given E_1^c \Big] \mathbb{P}(E_1^c)
	\\ &\stackrel{(a)}{\le} \mathbb{E} \Big[ \norm{\bhM^{-} - \bM^-}^2 \given E_1 \Big] \norm{\beta^*}^2 + \mathbb{E} \Big[\norm{\bhM^{-} - \bM^-}^2 \given E_1^c \Big] \norm{\beta^*}^2 \mathbb{P}(E_1^c)
	\\ &\le \dfrac{C_2}{p^2} \mathbb{E} \Big[ \Big( \lambda^* + 2 \norm{\bY - p\bM} + \norm{ (\hat{p} - p) \bM^- } \Big)^2 \given E_1 \Big] + C_3((N-1)T)^{3/2} e^{-cp(N-1)T},
\end{align*}
where (a) follows because the spectral norm is an induced norm, and the last inequality makes use of the results from above. Note that $C_2$ and $C_3$ are appropriately defined to depend on $\beta^*$. Moreover, for any non-negative valued random variable $X$ and event $E$ with $\mathbb{P}(E) \geq 1/2$,
\begin{align}\label{eq:trick}
	\mathbb{E}[X \given E] & \leq \frac{\mathbb{E}[X]}{\mathbb{P}(E)}   ~\le 2 \mathbb{E}[X].
\end{align}
Using the fact that $\mathbb{P}(E_1) \geq 1/2$ for large enough $T, N$, we apply Lemma \ref{lemma:mse} to obtain (with appropriately defined constants $C_4, C_5, c_6$)
\begin{align} \label{eq:x.3}
	\text{MSE}(\hat{M}_1^-) &\le \dfrac{1}{T_0} \mathbb{E} \norm{ (\bM^- - \bhM^-)^T \beta^*}^2 + \dfrac{2 \sigma^2 \abs{S}}{T_0} \nonumber
	\\ &\le \dfrac{C_4}{p^2 T_0} \mathbb{E} \Big( \lambda^* +  \norm{\bY - p\bM} + \norm{ (\hat{p} - p) \bM^- } \Big)^2 + \dfrac{2 \sigma^2 \abs{S}}{T_0} + C_5 e^{-c_6p(N-1)T}.
\end{align}
The proof is completed assuming we re-label constants $C_4, C_5, c_6$ as $C_1, C_2,$ and $c$, respectively. 
\end{proof}

%******************
% Regularization
%******************
\subsection{Ridge Regression}
In this section, we will prove our results for the ridge regression setting, i.e. $\eta > 0$. Let us begin by deriving the closed form expression of $\hat{\beta}(\eta)$. 

\noindent
\textbf{Derivation of $\hat{\beta}(\eta)$.}
We derive the closed form solution for $\hat{\beta}(\eta)$ under the new objective function with the additional complexity penalty term:
\begin{align*} 
	 \norm{ Y_1^{-} - (\bhM^{-})^T v}^2 + \eta \norm{v}^2 &= (Y_1^{-})^T Y_1^{-} - 2 v^T \bhM^{-} Y_1^{-} + v^T \bhM^{-} (\bhM^{-})^T v + \eta v^T v. 
\end{align*}
Setting the gradient of the above expression to zero and solving for $v$, we obtain
\begin{align*}
	\nabla_v \Bigg\{ \norm{ Y_1^{-} - (\bhM^{-})^T v}^2 + \eta \norm{v}^2 \Bigg\}_{v = \hat{\beta}(\eta)} 
	&= -2 \bhM^{-} Y_1^{-} + 2 \bhM^{-} (\bhM^{-})^T v + 2 \eta v = 0.
\end{align*}
Therefore, 
\begin{align*}
	 \hat{\beta}(\eta) &= \Big( \bhM^{-} (\bhM^{-})^T + \eta \bI \Big)^{-1} \bhM^{-} Y_1^{-}.
\end{align*}
\begin{remark}
To ease the notational complexity of the following Lemmas \ref{lemma:projection_matrix_regularization} and \ref{lemma:mse_ridge} proofs, we will make use of the following notations for {\bf only} this derivation: Let
\begin{align}
	\bQ &:= (\bM^{-})^T
	\\ \bhQ & := (\bhM^{-})^T
\end{align}
such that
\begin{align}
	M_1^- & := \bQ \beta^*
	\\ \hat{M}_1^- & := \bhQ \hat{\beta}.
\end{align}
\end{remark}

% Projection Matrix under regularization setting
\begin{lemma} \label{lemma:projection_matrix_regularization}
	Let $\bP_{\eta} = \bhQ (\bhQ^T \bhQ + \eta \bI)^{-1} \bhQ^T$ denote the projection matrix under the quadratic regularization setting. Then, the non-zero singular values of $\bP_{\eta}$ are $s_{i}^2 / (s_{i}^2 + \eta)$ for all $i \in S$. 
\end{lemma}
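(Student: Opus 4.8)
The plan is to diagonalize $\bP_{\eta}$ explicitly from the singular value decomposition of $\bhQ = (\bhM^{-})^T$. First I would record the relevant SVD: since $\bhM = \frac{1}{\hat{p}} \sum_{i \in S} s_i u_i v_i^T$ by Step 1 of the algorithm, the column-submatrix $\bhM^{-}$, and hence $\bhQ$, has rank at most $\abs{S}$, so I may write $\bhQ = \sum_{i \in S} s_i a_i b_i^T$, where $\{a_i\}_{i \in S}$ is an orthonormal set in $\mathbb{R}^{T_0}$, $\{b_i\}_{i \in S}$ is an orthonormal set in $\mathbb{R}^{N-1}$, and the $s_i$ are the singular values of $\bhQ$ (this is the abusive reuse of $s_i$ intended in the statement, now referring to $\bhM^{-}$ rather than $\bY$). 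I would then extend $\{b_i\}_{i \in S}$ to an orthonormal basis $\{b_i\}_{i=1}^{N-1}$ of $\mathbb{R}^{N-1}$, setting $s_i = 0$ for $i \notin S$.

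Next I would substitute into $\bP_{\eta}$. Using orthonormality of the $a_i$, one has $\bhQ^T \bhQ = \sum_{i=1}^{N-1} s_i^2 b_i b_i^T$, hence $\bhQ^T \bhQ + \eta \bI = \sum_{i=1}^{N-1} (s_i^2 + \eta) b_i b_i^T$, which for $\eta > 0$ is positive definite with inverse $\sum_{i=1}^{N-1} (s_i^2 + \eta)^{-1} b_i b_i^T$. Plugging this in and using $b_i^T b_j = \delta_{ij}$ to collapse the resulting triple sum gives
\begin{align*}
	\bP_{\eta} &= \Big( \sum_{i \in S} s_i a_i b_i^T \Big) \Big( \sum_{j=1}^{N-1} \frac{1}{s_j^2 + \eta} b_j b_j^T \Big) \Big( \sum_{k \in S} s_k b_k a_k^T \Big) = \sum_{i \in S} \frac{s_i^2}{s_i^2 + \eta}\, a_i a_i^T.
\end{align*}

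Finally I would conclude. Since $\{a_i\}_{i \in S}$ is orthonormal, the last display is a spectral decomposition of $\bP_{\eta}$; moreover $\bP_{\eta}$ is symmetric (because $\bhQ^T\bhQ + \eta \bI$ is) and positive semidefinite (indeed $x^T \bP_{\eta} x = \norm{(\bhQ^T \bhQ + \eta \bI)^{-1/2} \bhQ^T x}^2 \ge 0$), so its singular values coincide with its eigenvalues, namely $s_i^2 / (s_i^2 + \eta)$ for $i \in S$ (together with zeros for any $i \in S$ with $s_i = 0$, and for the complementary directions). Discarding the zeros yields the claimed list of nonzero singular values. The only real obstacle is bookkeeping — keeping the rectangular shapes and index sets straight and reconciling the symbol $s_i$ in the statement with the singular values of $\bhM^{-}$ — since the computation itself is a one-line orthogonal diagonalization.
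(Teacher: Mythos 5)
Your proposal is correct and follows essentially the same route as the paper: both diagonalize $\bP_{\eta}$ via the singular value decomposition of $\bhQ$, using orthonormality of the right singular vectors to collapse $(\bhQ^T\bhQ + \eta\bI)^{-1}$ and arrive at $\bP_{\eta} = \sum_{i \in S} \frac{s_i^2}{s_i^2+\eta} a_i a_i^T$. Your added remark that $\bP_{\eta}$ is symmetric positive semidefinite (so its singular values coincide with its eigenvalues) is a small point of extra care that the paper leaves implicit, but it does not change the argument.
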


\begin{proof}
Recall that the singular values of $\bY$ are $s_{i}$, while the singular values of $\bhQ$ are those $s_i \ge \mu$. Let $\bhQ = \bU \bSigma \bV^T$ be the singular value decomposition of $\bhQ$. Since $\bV \bV^T = \bI$, we have that
\begin{align*}
	\bP_{\eta} &= \bhQ (\bhQ^T \bhQ + \eta \bI)^{-1} \bhQ^T
	\\ &= \bU \bSigma \bV^T (\bV \bSigma^2 \bV^T + \eta \bI)^{-1} V \bSigma \bU^T
	\\ &= \bU \bSigma \bV^T (\bV \bSigma^2 \bV^T + \eta \bV \bV^T)^{-1} \bV \bSigma \bU^T
	\\ &= \bU \bSigma \bV^T \bV (\bSigma^2 + \eta \bI)^{-1} \bV^T \bV \bSigma \bU^T
	\\ &= \bU \bSigma (\bSigma^2 + \eta \bI)^{-1} \bSigma \bU^T
	\\ &= \bU \bD \bU^T,
\end{align*}
where
\begin{align*}
	\bD &= \text{diag}\Bigg(\dfrac{s_1^2}{s_1^2 + \eta}, \dots, \dfrac{s_{\abs{S}}^2}{s_{\abs{S}}^2 + \eta}, 0, \dots, 0 \Bigg).
\end{align*}
\end{proof}

% lemma 
\begin{lemma} \label{lemma:mse_ridge_intermediate}
	Suppose $Y_1^- = M_1^- + \epsilon_1^-$ with $\mathbb{E}[\epsilon_{1j}] = 0$ and $\text{Var}(\epsilon_{1j}) \le \sigma^2$ for all $j \in [T_0]$. Let $\beta^*$ be defined as in \eqref{eq:3}, i.e. $M_1^- = (\bM^{-})^T \beta^*$, and let $\hat{\beta}(\eta)$ be the minimizer of \eqref{eq:ls}. Then for any $\mu \ge 0$ and $\eta > 0$, 
	\begin{equation} \label{eq:x.21}
		\mathbb{E}\norm{ M_1^- - \hat{M}_1^-}^2 \le \mathbb{E} \norm{(\bM^- - \bhM^-)^T \beta^*}^2  + \eta \norm{\beta^*}^2 - \eta \mathbb{E}\norm{\hat{\beta}(\eta)}^2 +  2\sigma^2 \abs{S}.
	\end{equation}
\end{lemma}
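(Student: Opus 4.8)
The plan is to mirror the proof of Lemma~\ref{lemma:mse} almost line for line, the only genuinely new ingredient being the optimality inequality for the \emph{regularized} objective. Keep the shorthand $\bQ = (\bM^-)^T$, $\bhQ = (\bhM^-)^T$, $M_1^- = \bQ\beta^*$, $\hat M_1^- = \bhQ\hat\beta(\eta)$. Since $\hat\beta(\eta)$ minimizes $J(v) := \norm{Y_1^- - \bhQ v}^2 + \eta\norm{v}^2$ over $v \in \mathbb{R}^{N-1}$, we have $J(\hat\beta(\eta)) \le J(\beta^*)$, i.e.
\[ \norm{Y_1^- - \bhQ\hat\beta(\eta)}^2 \;\le\; \norm{Y_1^- - \bhQ\beta^*}^2 + \eta\norm{\beta^*}^2 - \eta\norm{\hat\beta(\eta)}^2. \]
This is exactly where the two extra terms $\eta\norm{\beta^*}^2 - \eta\,\mathbb{E}\norm{\hat\beta(\eta)}^2$ of the claimed bound come from; in the $\eta = 0$ case the analogous step produced only $\norm{Y_1^- - \bhQ\hat\beta}^2 \le \norm{Y_1^- - \bhQ\beta^*}^2$.

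First I would repeat the algebraic expansion of Lemma~\ref{lemma:mse} verbatim: write $\norm{M_1^- - \hat M_1^-}^2 = \norm{(Y_1^- - \epsilon_1^-) - \bhQ\hat\beta(\eta)}^2$, expand the square, substitute the regularized optimality inequality above for $\norm{Y_1^- - \bhQ\hat\beta(\eta)}^2$, then use $Y_1^- - \bhQ\beta^* = (\bQ - \bhQ)\beta^* + \epsilon_1^-$ and expand once more. Taking expectations, the cross term $\mathbb{E}\langle \epsilon_1^-,\,(\bQ - \bhQ)\beta^*\rangle$ vanishes by exactly the independence argument of Lemma~\ref{lemma:mse}: $\bhQ = (\bhM^-)^T$ is built solely from the donor rows $i \ge 2$ of $\bY$, hence is independent of the treatment-row noise $\epsilon_1^-$, which has mean zero. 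What remains is
\begin{align*}
\mathbb{E}\norm{M_1^- - \hat M_1^-}^2 &\le \mathbb{E}\norm{(\bQ - \bhQ)\beta^*}^2 + 2\,\mathbb{E}\norm{\epsilon_1^-}^2 + \eta\norm{\beta^*}^2 \\
&\quad - \eta\,\mathbb{E}\norm{\hat\beta(\eta)}^2 + 2\,\mathbb{E}\big\langle -\epsilon_1^-,\, Y_1^- - \bhQ\hat\beta(\eta)\big\rangle.
\end{align*}

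The remaining task is to control the last inner product. Using the closed form $\hat\beta(\eta) = (\bhQ^T\bhQ + \eta\bI)^{-1}\bhQ^T Y_1^-$, we have $\bhQ\hat\beta(\eta) = \bP_\eta Y_1^-$ with $\bP_\eta = \bhQ(\bhQ^T\bhQ + \eta\bI)^{-1}\bhQ^T$ the symmetric positive semidefinite matrix of Lemma~\ref{lemma:projection_matrix_regularization}. Hence $Y_1^- - \bhQ\hat\beta(\eta) = (\bI - \bP_\eta)(M_1^- + \epsilon_1^-)$; expanding the inner product, every term containing $M_1^-$ has zero expectation (again because $\bP_\eta$ depends only on the donor data and $\mathbb{E}[\epsilon_1^-] = 0$), leaving $\mathbb{E}[(\epsilon_1^-)^T\bP_\eta\epsilon_1^-] - \mathbb{E}\norm{\epsilon_1^-}^2$. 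For the quadratic form I would use the trace trick exactly as before: $\mathbb{E}[(\epsilon_1^-)^T\bP_\eta\epsilon_1^-] = \text{tr}\!\big(\mathbb{E}[\bP_\eta]\,\mathbb{E}[\epsilon_1^-(\epsilon_1^-)^T]\big) \le \sigma^2\,\text{tr}(\mathbb{E}[\bP_\eta])$, where the inequality uses $\bP_\eta \succeq 0$ together with $\mathbb{E}[\epsilon_1^-(\epsilon_1^-)^T] \preceq \sigma^2\bI$ (the treatment noise is independent across time with variance at most $\sigma^2$). By Lemma~\ref{lemma:projection_matrix_regularization}, $\text{tr}(\bP_\eta) = \sum_{i\in S} s_i^2/(s_i^2 + \eta) \le \abs{S}$, so $\mathbb{E}[(\epsilon_1^-)^T\bP_\eta\epsilon_1^-] \le \sigma^2\abs{S}$ and therefore $\mathbb{E}\langle -\epsilon_1^-,\, Y_1^- - \bhQ\hat\beta(\eta)\rangle \le \sigma^2\abs{S} - \mathbb{E}\norm{\epsilon_1^-}^2$. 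Plugging this into the displayed bound, the $2\,\mathbb{E}\norm{\epsilon_1^-}^2$ term cancels against $-2\,\mathbb{E}\norm{\epsilon_1^-}^2$, and since $(\bQ - \bhQ)\beta^* = (\bM^- - \bhM^-)^T\beta^*$ we are left with exactly $\mathbb{E}\norm{(\bM^- - \bhM^-)^T\beta^*}^2 + \eta\norm{\beta^*}^2 - \eta\,\mathbb{E}\norm{\hat\beta(\eta)}^2 + 2\sigma^2\abs{S}$, as claimed.

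The argument is essentially parallel to the $\eta = 0$ case, so there is no deep obstacle; the one point requiring care is that $\bP_\eta$ is \emph{not} idempotent, so the ``$\text{tr} = \text{rank}$'' identity used for $\bhQ\bhQ^\dagger$ in Lemma~\ref{lemma:mse} must be replaced by the shrinkage-eigenvalue computation of Lemma~\ref{lemma:projection_matrix_regularization} together with $\bP_\eta \succeq 0$, which is what keeps the noise contribution at $\sigma^2\abs{S}$ rather than $\sigma^2 T_0$. The only other thing to verify is that the extra $\eta$-terms are carried through the optimality inequality with the correct signs, so that the regularizer enters as a net $-\eta\,\mathbb{E}\norm{\hat\beta(\eta)}^2 \le 0$ correction.
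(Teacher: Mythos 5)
Your proposal is correct and follows essentially the same route as the paper's proof: the regularized optimality inequality $J(\hat\beta(\eta)) \le J(\beta^*)$ supplies the $\eta\norm{\beta^*}^2 - \eta\,\mathbb{E}\norm{\hat\beta(\eta)}^2$ terms, the cross terms vanish by the same donor/treatment independence argument, and the noise quadratic form is bounded by $\sigma^2\abs{S}$ via the trace of $\bP_\eta$ exactly as in Lemma~\ref{lemma:projection_matrix_regularization}. Your explicit remark that $\bP_\eta$ is not idempotent, so the shrinkage eigenvalues $s_i^2/(s_i^2+\eta) \le 1$ replace the rank identity, is precisely the point the paper's proof relies on as well.
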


\begin{proof}
The following proof is a slight modification for the proof of Lemmas \ref{lemma:mse}. In particular, observe that $\hat{\beta}(\eta)$ minimizes $\norm{Y_1^{-} - \bhQ v} + \eta \norm{v}^2 $ for any $v \in \mathbb{R}^{N-1}$. As a result, 
\begin{align*}
	& \norm{M_1^- - \hat{M}_1^-}^2 + \eta \norm{\hat{\beta}(\eta)}^2 \\
	& \quad  =   \norm{ (Y_1^{-} - \epsilon_1^{-}) - \bhQ \hat{\beta}(\eta)}^2 +  \eta \norm{\hat{\beta}(\eta)}^2
	\\ &\quad = \norm{(Y_1^{-} - \bhQ\hat{\beta}(\eta))  + (- \epsilon_1^{-}) }^2 +   \eta \norm{\hat{\beta}(\eta)}^2
	\\ &\quad = \norm{Y_1^{-} - \bhQ \hat{\beta}(\eta)}^2   + \eta \norm{\hat{\beta}(\eta)}^2 + \norm{ \epsilon_1^{-}}^2 + 2 \langle -\epsilon_1^{-}, Y_1^{-}- \bhQ \hat{\beta}(\eta) \rangle
	\\ &\quad \le \norm{Y_1^{-} - \bhQ \beta^*}^2 + \eta \norm{\beta^*}^2 + \norm{ \epsilon_1^{-}}^2  + 2 \langle -\epsilon_1^{-}, Y_1^{-}- \bhQ \hat{\beta}(\eta) \rangle
	\\ &\quad = \norm{(\bQ\beta^* + \epsilon_1^{-}) - \bhQ \beta^*}^2 + \eta \norm{\beta^*}^2 + \norm{ \epsilon_1^{-}}^2 + 2 \langle -\epsilon_1^{-}, Y_1^{-}- \bhQ \hat{\beta}(\eta) \rangle
	\\ &\quad = \norm{(\bQ - \bhQ) \beta^* + \epsilon_1^{-}}^2 + \eta \norm{\beta^*}^2 + \norm{ \epsilon_1^{-}}^2 + 2 \langle -\epsilon_1^{-}, Y_1^{-}- \bhQ \hat{\beta}(\eta) \rangle
	\\ &\quad = \norm{(\bQ - \bhQ)\beta^*}^2 + \eta \norm{\beta^*}^2 + 2\norm{ \epsilon_1^{-}}^2 + 2 \langle \epsilon_1^{-}, (\bQ - \bhQ)\beta^* \rangle  + 2 \langle -\epsilon_1^{-}, Y_1^{-}- \bhQ \hat{\beta}(\eta) \rangle
\end{align*}
Taking expectations, we have
\begin{align*}
	 & \mathbb{E}\norm{ \hat{M}_1^- - M_1^-}^2  \\
	& \quad \le \mathbb{E}  \norm{(\bQ - \bhQ)\beta^*}^2 + \eta \Big(\norm{\beta^*}^2 - \mathbb{E}\norm{\hat{\beta}(\eta)}^2\Big) + 2 \mathbb{E} \norm{ \epsilon_1^{-}}^2 + 2 \mathbb{E} \langle \epsilon_1^{-}, (\bQ - \bhQ)\beta^* \rangle + 2 \mathbb{E} \langle -\epsilon_1^{-}, Y_1^{-}- \bhQ \hat{\beta}(\eta) \rangle.
\end{align*}
As before, we have that $\mathbb{E} \langle \epsilon_1^{-}, (\bQ - \bhQ)\beta^* \rangle = 0$ by the zero-mean and independence assumptions of the noise random variables. Similarly, note that
\begin{align*}
\mathbb{E}[(\epsilon_1^{-})^T \bhQ \hat{\beta}(\eta)] 
&=  \mathbb{E}[(\epsilon_1^{-})^T \bhQ ( \bhQ^T \bhQ + \eta \bI)^{-1} \bhQ^T Y_1^{-}]
\\ &= \mathbb{E}[(\epsilon_1^{-})^T \bhQ ( \bhQ^T \bhQ + \eta \bI)^{-1} \bhQ^T] M_1^- + \mathbb{E}[(\epsilon_1^{-})^T \bhQ ( \bhQ \bhQ^T + \eta \bI)^{-1} \bhQ^T \epsilon_1^{-}]
\\ &=  \mathbb{E}[(\epsilon_1^{-})^T \bhQ ( \bhQ^T \bhQ + \eta \bI)^{-1} \bhQ^T \epsilon_1^{-}]
\\ &= \mathbb{E}[ \text{tr}( (\epsilon_1^{-})^T \bhQ ( \bhQ^T \bhQ + \eta \bI)^{-1} \bhQ^T \epsilon_1^{-})]
\\ &= \mathbb{E}[ \text{tr}( \bhQ ( \bhQ^T \bhQ + \eta \bI)^{-1} \bhQ^T \epsilon_1^{-} (\epsilon_1^{-})^T)]
\\ &= \text{tr}( \mathbb{E} [  \bhQ ( \bhQ^T \bhQ + \eta \bI)^{-1} \bhQ^T \epsilon_1^{-} (\epsilon_1^{-})^T])
\\ &= \text{tr}( \mathbb{E} [  \bhQ ( \bhQ^T \bhQ + \eta \bI)^{-1} \bhQ^T] \mathbb{E}[ \epsilon_1^{-} (\epsilon_1^{-})^T])
\\ &\le \sigma^2 \mathbb{E}[ \text{tr}(\bhQ ( \bhQ^T \bhQ + \eta \bI)^{-1} \bhQ^T )] 
\\ &\stackrel{(a)}{\le}  \sigma^2 \mathbb{E}[ \text{tr}(\bhQ \bhQ^{\dagger})] 
\\ &\stackrel{(b)}{=} \sigma^2 \text{rank}(\bhQ)
\\ &\le \sigma^2 \abs{S},
\end{align*}
where $(a)$ follows from Lemma \ref{lemma:projection_matrix_regularization}, and as before, $(b)$ follows because $\bhQ \bhQ^{\dagger}$ is a projection matrix. 
\end{proof}

% lemma
\begin{lemma} \label{lemma:mse_ridge}
	For any $\eta > 0$ and $\mu \ge 0$, the pre-intervention error of the regularized algorithm can be bounded as
	\begin{align*} 
		\MSE(\hat{M}^-_1) & \le \dfrac{C_1}{p^2T_0} \mathbb{E}  \Big( \lambda^* + \norm{ \bY - p\bM} + \norm{ (\hat{p} - p) \bM^-} \Big)^2 + \dfrac{2\sigma^2 \abs{S}}{T_0}
		+ \dfrac{\eta \norm{\beta^*}^2} {T_0} + C_2 e^{-cp(N-1)T} .
	\end{align*}
	Here, $\lambda_1, \dots, \lambda_{N-1}$ are the singular values of $p\bM$ in decreasing order and repeated by multiplicities, with $\lambda^* = \max_{i \notin S} \lambda_i$; $C_1, C_2$ and $c$ are universal positive constants.
\end{lemma}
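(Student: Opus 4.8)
The plan is to mirror the proof of Lemma \ref{lemma:mse_linear} almost verbatim, replacing the role played there by Lemma \ref{lemma:mse} with its regularized analogue, Lemma \ref{lemma:mse_ridge_intermediate}. First I would observe that the term $-\eta \mathbb{E}\norm{\hat{\beta}(\eta)}^2$ appearing in the bound \eqref{eq:x.21} is non-positive, so discarding it only weakens the inequality; hence Lemma \ref{lemma:mse_ridge_intermediate} gives
\[
	\mathbb{E}\norm{ M_1^- - \hat{M}_1^-}^2 \;\le\; \mathbb{E} \norm{(\bM^- - \bhM^-)^T \beta^*}^2 + \eta \norm{\beta^*}^2 + 2\sigma^2 \abs{S}.
\]
Dividing by $T_0$ yields $\MSE(\hat{M}_1^-) \le \tfrac{1}{T_0}\mathbb{E}\norm{(\bM^- - \bhM^-)^T\beta^*}^2 + \tfrac{\eta\norm{\beta^*}^2}{T_0} + \tfrac{2\sigma^2\abs{S}}{T_0}$, so it remains only to control the first term, and this term is identical to the one handled in Lemma \ref{lemma:mse_linear}.

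For that term I would carry out exactly the estimate from the proof of Lemma \ref{lemma:mse_linear}. Conditioning on the event $E_1 = \{\abs{\hat{p} - p} \le \omega p/z\}$, one writes $p\norm{\bhM^- - \bM^-} \le C\hat{p}\norm{\bhM^- - \bM^-} \le C(\norm{\hat{p}\bhM^- - p\bM^-} + \norm{(\hat{p}-p)\bM^-})$, bounds $\norm{\hat{p}\bhM^- - p\bM^-} \le \norm{\hat{p}\bhM - p\bM}$ by Lemma \ref{lemma:spectral_norm} (the pre-intervention block is a submatrix of the full matrix), and then invokes Lemma \ref{lemma:general_threshold} to obtain $\norm{\hat{p}\bhM - p\bM} \le \lambda^* + 2\norm{\bY - p\bM}$, using that $\hat{p}\bhM$ is the truncated SVD of $\bY$. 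Off the event $E_1$ one uses the crude deterministic bound $\norm{\bhM^- - \bM^-} \le 2((N-1)T)^{3/2}$ from \eqref{eq:reference_again}. Combining via the law of total probability, together with $\Pb(E_1^c) \le 2e^{-c_1 p(N-1)T}$ from the Preliminaries and the inequality $\mathbb{E}[X \given E] \le 2\mathbb{E}[X]$ of \eqref{eq:trick} (valid since $\Pb(E_1) \ge 1/2$ for $N,T$ large enough), and using $\norm{(\bhM^- - \bM^-)^T\beta^*} \le \norm{\bhM^- - \bM^-}\,\norm{\beta^*}$, one arrives at $\tfrac{1}{T_0}\mathbb{E}\norm{(\bM^- - \bhM^-)^T\beta^*}^2 \le \tfrac{C_1}{p^2 T_0}\mathbb{E}\big(\lambda^* + \norm{\bY - p\bM} + \norm{(\hat{p}-p)\bM^-}\big)^2 + C_2 e^{-cp(N-1)T}$, with the constants absorbing $\norm{\beta^*}$.

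Adding the two displays and relabeling constants gives the stated bound. I do not expect a genuine obstacle: the delicate work — in particular the treatment of the regularization term, which is what produces the extra $\eta\norm{\beta^*}^2/T_0$ summand and the harmless $-\eta\mathbb{E}\norm{\hat{\beta}(\eta)}^2$ — has already been done in Lemma \ref{lemma:mse_ridge_intermediate}; this lemma is purely the assembly step. The only point requiring mild care is checking that $\Pb(E_1) \ge 1/2$ so that \eqref{eq:trick} may be applied, which holds for $N,T$ sufficiently large by the Bernstein estimate on $\Pb(E_1)$ recorded in the Preliminaries.
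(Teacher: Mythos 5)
Your proposal is correct and is essentially the paper's own argument: the paper's proof of this lemma is precisely "follow the proof of Lemma \ref{lemma:mse_linear}," with Lemma \ref{lemma:mse_ridge_intermediate} playing the role of Lemma \ref{lemma:mse} and the non-positive term $-\eta\,\mathbb{E}\norm{\hat{\beta}(\eta)}^2$ discarded, exactly as you describe.
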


\begin{proof}
The proof follows the same arguments as that of Lemma \ref{lemma:mse_linear}. 
\end{proof}

\subsection{Combining linear and ridge regression.}
% Finite Sample Analysis
\subsubsection{Proof of Theorem \ref{thm:finite-sample}}
%\noindent \textbf{\large [Proof of Theorem \ref{thm:finite-sample}] \\}

\begin{thm*} [\ref{thm:finite-sample}]
For any $\eta \ge 0$ and $\mu \ge 0$, the pre-intervention error of the algorithm can be bounded as
\begin{align*} 
	\MSE(\hat{M}^-_1) & \le \dfrac{C_1}{p^2T_0} \mathbb{E}  \Big( \lambda^* + \norm{ \bY - p\bM} + \norm{ (\hat{p} - p) \bM^-} \Big)^2  + \dfrac{2\sigma^2 \abs{S}}{T_0} + \dfrac{\eta \norm{\beta^*}^2}{T_0}
	+ C_2 e^{-cp(N-1)T}.
	\end{align*}
	Here, $\lambda_1, \dots, \lambda_{N-1}$ are the singular values of $p\bM$ in decreasing order and repeated by multiplicities, with $\lambda^* = \max_{i \notin S} \lambda_i$; $C_1, C_2$ and $c$ are universal positive constants.
\end{thm*}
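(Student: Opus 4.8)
The plan is to reduce to the two regimes already isolated above: for $\eta = 0$ invoke Lemma~\ref{lemma:mse_linear}, and for $\eta > 0$ invoke Lemma~\ref{lemma:mse_ridge}. Each delivers exactly the asserted inequality (the term $\eta\norm{\beta^*}^2/T_0$ being vacuous when $\eta = 0$), so after taking the worst of the two regimes and relabelling the universal constants $C_1, C_2, c$, the theorem follows. The substantive work is therefore the chain of estimates behind those two lemmas, which I would organize as a deterministic oracle decomposition followed by probabilistic control of $\norm{\bhM^- - \bM^-}$.

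First, the deterministic step. Because $\hat\beta(\eta)$ minimizes $\norm{Y_1^- - (\bhM^-)^T v}^2 + \eta\norm{v}^2$, feeding in the feasible choice $v = \beta^*$ and expanding $\norm{M_1^- - \hat M_1^-}^2 = \norm{(Y_1^- - (\bhM^-)^T\hat\beta(\eta)) - \epsilon_1^-}^2$ produces, upon taking expectations, a bound of the shape $\E\norm{(\bM^- - \bhM^-)^T\beta^*}^2 + \eta\norm{\beta^*}^2 + 2\sigma^2\abs{S}$ (this is Lemma~\ref{lemma:mse} when $\eta=0$, Lemma~\ref{lemma:mse_ridge_intermediate} otherwise). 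The cross terms either vanish because $\bhM^-$ is built from the donor rows only and hence is independent of the treatment-row noise $\epsilon_1^-$ with $\E[\epsilon_1^-]=0$, or are bounded via $\E[(\epsilon_1^-)^T \bP \epsilon_1^-] = \sigma^2\,\text{tr}(\E[\bP]) \le \sigma^2\abs{S}$, where $\bP$ is the hat matrix $\bhQ\bhQ^\dagger$ (a projection, by Lemmas~\ref{lemma:projection} and \ref{lemma:projection_eigenvalues}) in the ridge-free case and its regularized analogue $\bhQ(\bhQ^T\bhQ + \eta\bI)^{-1}\bhQ^T$ (whose nonzero eigenvalues $s_i^2/(s_i^2+\eta) \le 1$ for $i \in S$, by Lemma~\ref{lemma:projection_matrix_regularization}) otherwise; in both cases the rank is at most $\abs{S}$.

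Second, the probabilistic step. I would bound $\E\norm{(\bM^- - \bhM^-)^T\beta^*}^2 \le \norm{\beta^*}^2\,\E\norm{\bhM^- - \bM^-}^2$ and then control $\norm{\bhM^- - \bM^-}$ on the event $E_1 = \{\abs{\hat p - p} \le \omega p/z\}$. There $p\norm{\bhM^- - \bM^-} \le C\hat p\norm{\bhM^- - \bM^-} \le C\big(\norm{\hat p\bhM^- - p\bM^-} + \norm{(\hat p - p)\bM^-}\big)$; the first term is $\le \norm{\hat p\bhM - p\bM}$ since it is a submatrix (Lemma~\ref{lemma:spectral_norm}), and $\norm{\hat p\bhM - p\bM} \le \lambda^* + 2\norm{\bY - p\bM}$ by Lemma~\ref{lemma:general_threshold} applied with $\bA = \bY$, $\bB = p\bM$, $\bhB = \hat p\bhM$ (which is precisely the hard SVD-thresholding of $\bY$ at level $\mu$, by \eqref{eq:p_hat}). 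On $E_1^c$, whose probability is $\le 2e^{-c_1 p(N-1)T}$ by Bernstein's inequality, the crude deterministic bound $\norm{\bhM^- - \bM^-} \le 2((N-1)T)^{3/2}$ (from $\abs{M_{ij}}, \abs{X_{ij}} \le 1$ and $\hat p \ge 1/((N-1)T)$) makes that contribution absorbable into $C_2 e^{-cp(N-1)T}$. A law-of-total-probability split, combined with the elementary fact $\E[X\mid E] \le 2\E[X]$ when $\Pb(E) \ge 1/2$ (valid for large $N, T$), then assembles everything; dividing by $T_0$ and reinstating the $2\sigma^2\abs{S}/T_0$ and $\eta\norm{\beta^*}^2/T_0$ terms from the first step yields the claim.

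I expect the main obstacle to be bookkeeping rather than a deep idea: one must check that the polynomial prefactor $((N-1)T)^{3/2}$ (squared, after the oracle step) coming from the bad event is genuinely dominated by $e^{-cp(N-1)T}$, which requires $p(N-1)T$ to be at least super-logarithmic — consistent with the standing assumption $p \ge T^{-1+\zeta}/(\sigma^2+1)$ — and that the $E_1^c$ constants can be folded into the final universal $C_2$. The only genuinely model-specific point is confirming that the regularized hat matrix still has trace at most $\abs{S}$, which is exactly Lemma~\ref{lemma:projection_matrix_regularization}.
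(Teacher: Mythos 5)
Your proposal is correct and follows essentially the same route as the paper: the theorem is obtained by combining Lemma~\ref{lemma:mse_linear} ($\eta = 0$) and Lemma~\ref{lemma:mse_ridge} ($\eta > 0$), and your sketch of the underlying chain — the oracle decomposition via Lemmas~\ref{lemma:mse} and \ref{lemma:mse_ridge_intermediate} with the trace bound $\sigma^2\abs{S}$ on the (regularized) hat matrix, followed by the probabilistic control of $\norm{\bhM^- - \bM^-}$ on $E_1$ using Lemmas~\ref{lemma:spectral_norm} and \ref{lemma:general_threshold}, Bernstein's inequality, the crude bound on $E_1^c$, and the conditioning trick $\E[X \given E] \le 2\E[X]$ — matches the paper's argument step for step.
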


\begin{proof}
	The proof follows from a simple amalgamation of Lemmas \ref{lemma:mse_linear} and \ref{lemma:mse_ridge}. 
\end{proof}

%**************************************
% PRESCRIPTION THRESHOLD
%**************************************
\subsubsection{Proof of Corollary \ref{corollary:universal_threshold}}
%\noindent \textbf{\large [Proof of Corollary \ref{corollary:universal_threshold}] \\}
\begin{corollary*} [\ref{corollary:universal_threshold}]
	Suppose $p \ge \frac{T^{-1 + \zeta}}{\sigma^2 + 1}$ for some $\zeta > 0$. Let $T \le \alpha T_0$ for some constant $\alpha > 1$. Then for any $\eta \ge 0$ and using $\mu$ as defined in \eqref{eq:goldilocks}, the pre-intervention error is bounded above by

	\begin{align*}
		\emph{MSE}(\hat{M}_1^-) &\le \dfrac{C_1}{p}(\sigma^2 + (1-p)) + \mathcal{O}(1 / \sqrt{T_0}),
	\end{align*}
	where $C_1$ is a universal positive constant. 
\end{corollary*}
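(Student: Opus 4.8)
The plan is to specialize the finite-sample bound of Theorem~\ref{thm:finite-sample} and show that, under the stated hypotheses, the single piece coming from $\E\norm{\bY-p\bM}^{2}$ produces exactly $\tfrac{C_{1}}{p}(\sigma^{2}+(1-p))$ while every remaining piece is $\mathcal{O}(1/\sqrt{T_{0}})$ or is dominated by that leading term. I would condition throughout on the high-probability events $E_{1},E_{2},E_{3}$ of the Preliminaries, whose union-complement has probability at most $Ce^{-c\phi T}$ with $\phi=p(N-1)+\sigma^{2}+q$ and $q=\sigma^{2}p+p(1-p)=p(\sigma^{2}+(1-p))$, and I would use repeatedly that in this forecasting regime $N$, $\eta$ and $\norm{\beta^{*}}$ are fixed while $T_{0}<T\le\alpha T_{0}$.

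First I would dispatch the cheap terms. Since $p\ge\tfrac{T^{-1+\zeta}}{\sigma^{2}+1}$ and $N\ge 2$, one has $p(N-1)T\ge pT\ge\tfrac{T^{\zeta}}{\sigma^{2}+1}$, which grows polynomially in $T$ (hence in $T_{0}$, as $T>T_{0}$), so $C_{2}e^{-cp(N-1)T}$ decays faster than any power of $T_{0}$; and $\eta\norm{\beta^{*}}^{2}/T_{0}=\mathcal{O}(1/T_{0})$ for fixed $\eta$. The real content among the lower-order terms is the bound $\E\abs{S}=\mathcal{O}(1)$ controlling $2\sigma^{2}\abs{S}/T_{0}$. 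Here I would argue that on $E:=E_{1}\cap E_{2}\cap E_{3}$ the data-driven threshold satisfies $\mu=(2+\omega)\sqrt{T\hat q}\ge c_{0}\sqrt{Tq}$ with $c_{0}>2+\tfrac{\omega}{2}$, exactly by the chain of inequalities carried out in the proof of Lemma~\ref{lemma:imputation}. Then, by Weyl's perturbation inequality (Theorem~\ref{thm:singular values}) and $\norm{\bY-p\bM}\le(2+\tfrac{\omega}{2})\sqrt{Tq}$ on $E_{3}$, every $i\in S$ obeys $\lambda_{i}\ge s_{i}-\norm{\bY-p\bM}\ge\mu-\norm{\bY-p\bM}\ge c_{1}\sqrt{Tq}$ with $c_{1}:=c_{0}-(2+\tfrac{\omega}{2})>0$, so $\abs{S}\,c_{1}^{2}Tq\le\sum_{i}\lambda_{i}^{2}=\norm{p\bM}_{F}^{2}\le p^{2}(N-1)T$, whence $\abs{S}\le\tfrac{p(N-1)}{c_{1}^{2}(\sigma^{2}+(1-p))}=\mathcal{O}(1)$ on $E$; on $E^{c}$ I would use the trivial bound $\abs{S}\le N-1$ times the exponentially small $\Pb(E^{c})$. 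Thus $2\sigma^{2}\E\abs{S}/T_{0}=\mathcal{O}(1/T_{0})$.

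For the leading term $\tfrac{C_{1}}{p^{2}T_{0}}\E\big(\lambda^{*}+\norm{\bY-p\bM}+\norm{(\hat p-p)\bM^{-}}\big)^{2}$ I would expand via $(a+b+c)^{2}\le 3(a^{2}+b^{2}+c^{2})$ and treat each expectation. For $\E\norm{\bY-p\bM}^{2}$: on $E_{3}$ it is at most $(2+\tfrac{\omega}{2})^{2}Tq$, and on $E_{3}^{c}$ at most the crude Frobenius bound $2(N-1)T$ weighted by the tiny $\Pb(E_{3}^{c})$, so $\E\norm{\bY-p\bM}^{2}\le C'Tq$; dividing by $p^{2}T_{0}$, writing $q=p(\sigma^{2}+(1-p))$ and using $T\le\alpha T_{0}$ gives $\tfrac{C'\alpha}{p}(\sigma^{2}+(1-p))$, which is the advertised leading term after renaming constants. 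For $\E(\lambda^{*})^{2}$: if $i\notin S$ then $s_{i}<\mu$, so $\lambda_{i}\le s_{i}+\norm{\bY-p\bM}\le\mu+\norm{\bY-p\bM}\le C''\sqrt{Tq}$ on $E$ (again comparing $\sqrt{T\hat q}$ with $\sqrt{Tq}$), while $\lambda^{*}\le\norm{p\bM}_{F}\le\sqrt{(N-1)T}$ on $E^{c}$, so $\E(\lambda^{*})^{2}\le C'''Tq$, contributing at the same order. For $\E\norm{(\hat p-p)\bM^{-}}^{2}$: this factor does not involve $\bhM^{-}$, and $\norm{\bM^{-}}^{2}\le\norm{\bM^{-}}_{F}^{2}\le(N-1)T_{0}$ together with $\E(\hat p-p)^{2}=\tfrac{p(1-p)}{(N-1)T}$ gives expectation at most $p(1-p)T_{0}/T\le p(1-p)$, so the corresponding piece $\tfrac{C_{1}(1-p)}{pT_{0}}\le\tfrac{1}{T_{0}}\cdot\tfrac{C_{1}}{p}(\sigma^{2}+(1-p))$ is dominated by the leading term. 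Collecting these estimates yields $\MSE(\hat M_{1}^{-})\le\tfrac{C_{1}}{p}(\sigma^{2}+(1-p))+\mathcal{O}(1/\sqrt{T_{0}})$.

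I expect the bound $\E\abs{S}=\mathcal{O}(1)$ to be the main obstacle: one has to verify that the prescribed data-dependent threshold $\mu$ genuinely sits a constant factor above $\norm{\bY-p\bM}$ on the good events --- so that thresholding keeps only ``signal'' singular values of $p\bM$ --- and then invoke the Frobenius budget $\norm{p\bM}_{F}^{2}\le p^{2}(N-1)T$ to cap their number. This reuses, essentially verbatim, the delicate comparison of $\sqrt{T\hat q}$, $\sqrt{Tq}$ and $\norm{\bY-p\bM}$ from the proof of Lemma~\ref{lemma:imputation} together with Weyl's inequality; the remaining steps are routine ``split on the good event'' estimates.
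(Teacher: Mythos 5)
Your proposal is correct and follows essentially the same route as the paper's proof: condition on $E_1\cap E_2\cap E_3$, observe that the data-driven threshold $\mu=(2+\omega)\sqrt{T\hat q}$ sits within a constant factor of $\norm{\bY-p\bM}\le(2+\omega/2)\sqrt{Tq}$, so the dominant term $\E\big(\lambda^*+\norm{\bY-p\bM}+\norm{(\hat p-p)\bM^-}\big)^2/(p^2T_0)$ collapses to $C\,Tq/(p^2T_0)=\frac{C}{p}(\sigma^2+(1-p))$ after using $T\le\alpha T_0$ and $\E(\hat p-p)^2=p(1-p)/((N-1)T)$, with all remaining terms absorbed into $\mathcal{O}(1/\sqrt{T_0})$. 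The only cosmetic differences are that you control $\lambda^*$ directly via Weyl's inequality and the explicit form of $\mu$, whereas the paper re-invokes Lemma~\ref{lemma:prescription_threshold} with $\delta\ge 0$ (which handles the discarded-signal and retained-noise contributions in one stroke), and your $|S|=\mathcal{O}(1)$ refinement is unnecessary since the paper simply uses $|S|\le N-1$ with $N$ fixed in the forecasting regime.
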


\begin{proof}
Since the singular value threshold $\mu = (2 + \omega) \sqrt{T \hat{q}}$, let us define $\delta$ so that
	\begin{align*}
		(1 + \delta) \norm{\bY - p \bM} &= (2 + \omega) \sqrt{T \hat{q}}, 
	\end{align*}
	where $\hat{q} = \hat{\sigma}^2 \hat{p} + \hat{p} (1 - \hat{p})$; recall that $q = \sigma^2 p + p(1-p)$. If $E_3$ happens, then we know
	that $\delta \geq 0$. 
	Therefore, assuming $E_1, E_2,$ and $E_3$ happens, Lemma \ref{lemma:prescription_threshold} states that
	\begin{align} \label{eq:thresh_1}
		\norm{ \hat{p} \bhM - p\bM} &\le (2 + \delta) \norm{\bY - p\bM} \nonumber
		\\ &\le 2 (1 + \delta) \norm{\bY - p\bM} \nonumber
		\\ &= (4 + 2 \omega) \sqrt{T \hat{q}} \nonumber
		\\ &\le C_1 \sqrt{T q}
	\end{align}
	for an appropriately defined constant $C_1$. Therefore, 
	\begin{align} \label{eq:thresh_2}
		p\norm{ \bhM^- - \bM^- } &\le C_2 \hat{p} \norm{ \bhM^- - \bM^-} \nonumber
		\\ &\le C_2 \Big( \norm{ \hat{p} \bhM^- - p\bM^-} +   \norm{ (\hat{p} - p) \bM^-} \Big) \nonumber
		\\ &\stackrel{(a)}{\le} C_2 \Big( \norm{ \hat{p} \bhM - p\bM} +   \norm{ (\hat{p} - p) \bM^-} \Big) \nonumber
		\\ &\le C_2 \Big( C_1 \sqrt{T q} + \norm{ (\hat{p} - p) \bM^-} \Big)
	\end{align}
	where (a) follows from Lemma \ref{lemma:spectral_norm}. Applying the logic that led to \eqref{eq:reference_again}, we have that, in general, 
\begin{align} 
	\norm{ \bhM^- - \bM^- } &\le 2 ((N-1)T)^{3/2}.
\end{align}
Let $E := E_1 \cap E_2 \cap E_3$. Further, using the same argument that led to \eqref{eq:union_bound}, we have
\begin{align*}
	\Pb(E^c) &\le C_3 e^{-c_4 \phi T}
\end{align*}
where we define $\phi := p(N-1) + \sigma^2 + q$ and $C_3, c_4$ are appropriately defined. Thus, by the law of total probability and noting that $\Pb(E) \le 1$,
\begin{align} \label{eq:thresh_3}
	\mathbb{E} \norm{ (\bhM^{-} - \bM^-)^T \beta^*}^2 
	&\le \mathbb{E} \Big[ \norm{(\bhM^{-} - \bM^-)^T \beta^*}^2 \given E \Big] + \mathbb{E} \Big[\norm{(\bhM^{-} - \bM^-)^T \beta^*}^2 \given E^c \Big] \mathbb{P}(E^c) \nonumber
	\\ &\stackrel{(a)}{\le} \mathbb{E} \Big[ \norm{\bhM^{-} - \bM^-}^2 \given E \Big] \norm{\beta^*}^2 + \mathbb{E} \Big[\norm{\bhM^{-} - \bM^-}^2 \given E^c \Big] \norm{\beta^*}^2 \mathbb{P}(E^c) \nonumber
	\\ &\le \dfrac{C_5}{p^2} \mathbb{E} \Big[ \Big( \sqrt{T q} + \norm{ (\hat{p} - p) \bM^- } \Big)^2 \given E \Big] + C_6((N-1)T)^{3/2} e^{-c_4 \phi T},
\end{align}
where (a) follows because the spectral norm is an induced norm and the last inequality makes use of the results from above. Note that $C_5$ and $C_6$ are appropriately defined to depend on $\beta^*$. Using the fact that $\mathbb{P}(E) \geq 1/2$ for large enough $T, N$, we apply Lemmas \ref{lemma:mse} and \ref{lemma:mse_ridge} as well as \eqref{eq:trick} to obtain (with appropriately defined constants $C_7, C_8, c_9$)
\begin{align} \label{eq:x.3}
	\text{MSE}(\hat{M}_1^-) &\le \dfrac{1}{T_0} \mathbb{E} \norm{ (\bM^- - \bhM^-)^T \beta^*}^2 + \dfrac{2 \sigma^2 \abs{S}}{T_0} + \dfrac{\eta \norm{\beta^*}^2}{T_0} \nonumber
	\\ &\le \dfrac{C_7}{p^2 T_0} \mathbb{E} \Big( \sqrt{T q} + \norm{ (\hat{p} - p) \bM^- } \Big)^2 + \dfrac{2 \sigma^2 (N-1)}{T_0} + \dfrac{\eta \norm{\beta^*}^2}{T_0} + C_8 e^{-c_9 \phi T}.
\end{align}
From Jensen's Inequality, $\mathbb{E}\abs{\hat{p} - p} \le \sqrt{\text{Var}(\hat{p})}$ where $\text{Var}(\hat{p}) = \frac{p(1-p)}{(N-1)T}$. Therefore, 
\begin{align*}
	\mathbb{E} \Big(  \sqrt{Tq}  \norm{(\hat{p} - p) \bM^-} \Big) &\le \dfrac{ q^{1/2} \sqrt{p(1-p)}}{\sqrt{N-1}} \norm{\bM^-}
	\\ &\le \sqrt{q p(1-p) T_0}.
\end{align*}
At the same time, 
\begin{align*}
	\mathbb{E} \norm{ (\hat{p} - p) \bM^-}^2 &= \mathbb{E}(\hat{p} - p)^2 \cdot \norm{ \bM^-}^2 
	\\ &\le \dfrac{p (1-p) T_0}{T}
	\\ &\le p(1-p). 
\end{align*}
Putting everything together, we arrive at the inequality
\begin{align*}
	\text{MSE}(\hat{M}_1^-) &\le \dfrac{C_7}{p^2 T_0} \Big( qT + p(1-p) + 2\sqrt{q p(1-p) T_0} \Big) + \dfrac{2 \sigma^2 (N-1)}{T_0} + \dfrac{\eta \norm{\beta^*}^2}{T_0} + C_8 e^{-c_9 \phi T}
	\\ &= \dfrac{C_{10} q}{p^2 } + \dfrac{C_7 (1-p)}{p T_0} + \dfrac{C_{11} (q(1-p))^{1/2}}{p^{3/2}\sqrt{T_0}} +  \dfrac{2 \sigma^2 (N-1)}{T_0} + \dfrac{\eta \norm{\beta^*}^2}{T_0} + C_8 e^{-c_9 \phi T}
	\\ &= \dfrac{C_{10}}{p }(\sigma^2 + (1-p)) + \mathcal{O}(1 / \sqrt{T_0}). 
\end{align*}
The proof is complete assuming we re-label $C_{10}$ as $C_1$. 
\end{proof}

%********************
% CONSISTENCY
%********************
\subsubsection{Proof of Theorem \ref{thm:consistency}}
%\noindent \textbf{\large [Proof of Theorem \ref{thm:consistency}] \\}
\begin{thm*} [\ref{thm:consistency}]
	Fix any $\gamma \in (0,1/2)$ and $\omega \in (0.1, 1)$. Let $\Delta = T_0^{\frac{1}{2} + \gamma}$ and $\mu = (2 + \omega) \sqrt{ T_0^{2 \gamma} (\hat{\sigma}^2 \hat{p} + \hat{p}(1 - \hat{p}))}$. Suppose $p \ge \frac{T_0^{-2 \gamma}}{\sigma^2 + 1}$ is known. Then for any $\eta \ge 0$, 
	\begin{align*}
		\emph{MSE}(\hat{\bar{M}}_1^{-}) &= \mathcal{O}(T_0^{-1/2 + \gamma}).
	\end{align*}
\end{thm*}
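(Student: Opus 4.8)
The plan is to recognize the column-averaging step as manufacturing a fresh instance of the pre-intervention forecasting problem of Section~\ref{sec:goldilocks} --- one that is \emph{fully observed} and has a \emph{reduced} per-entry noise variance --- and then to apply the machinery of Theorem~\ref{thm:finite-sample} / Corollary~\ref{corollary:universal_threshold} to it with the substitutions dictated by $\Delta = T_0^{1/2+\gamma}$ and $\tau = T_0^{1/2-\gamma}$. First I would record the structure of the averaged data. Writing $\bar\epsilon_{ij} := \bar X_{ij} - \bar M_{ij}$, the matrix $\bbX^{-} = \bbM^{-} + \bbE^{-}$ has entries that are independent across $(i,j)$, zero-mean, and bounded by $1$ in absolute value (since $\abs{X_{it}}\le 1$); being a $\tau^{-1}$-scaled sum of $\tau$ independent summands each of variance at most $q := \sigma^2 p + p(1-p)$, each donor entry has variance at most $\bar\sigma^2 := q/\tau$ (the treatment-row variance is the even smaller $p^2\sigma^2/\tau \le q/\tau$, so $\bar\sigma^2$ is a uniform bound). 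Under the simplifying assumption that every block is nonempty, $\bbX^{-}$ is fully observed; and, crucially, linearity of column-averaging transports \eqref{eq:3} verbatim to the averaged scale: $\bar M_{1j} = \sum_{i=2}^{N}\beta^*_i \bar M_{ij}$, i.e. $\bar M_1^{-} = (\bbM^{-})^{T}\beta^*$ with the \emph{same} $\beta^*$. Thus running Algorithm~\ref{euclid} on $\bbX^{-}$ is exactly the setup of Section~\ref{sec:goldilocks} under the correspondence $T = T_0 \mapsto \Delta$, $\; p \mapsto 1$, $\;\sigma^2 \mapsto \bar\sigma^2$, and $\MSE(\hat{\bbM}_1^{-})$ is precisely the quantity bounded there.

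Next I would check that the prescribed $\mu$ behaves as the universal threshold \eqref{eq:goldilocks} for this new instance. Since $\tau\Delta = T_0$ we have $T_0^{2\gamma} = \Delta/\tau$, so $\mu = (2+\omega)\sqrt{T_0^{2\gamma}\hat q} = (2+\omega)\sqrt{\Delta\,(\hat q/\tau)}$ with $\hat q := \hat\sigma^2\hat p + \hat p(1-\hat p)$; on the events $E_1, E_2$ of the Preliminaries, $\hat q \to q = \tau\bar\sigma^2$, so $\mu$ is within a constant factor of $(2+\omega)\sqrt{\Delta\bar\sigma^2}$. This is all the proof of Corollary~\ref{corollary:universal_threshold} needs about the threshold: with high probability $\mu \ge \norm{\bbE^{-}}$ and $\mu = \mathcal{O}(\sqrt{\Delta\bar\sigma^2})$ --- the first because Theorem~\ref{thm:talagrand} applied to the $(N-1)\times\Delta$ noise matrix $\bbE^{-}$ gives $\norm{\bbE^{-}}\le (2+\omega/2)\sqrt{\Delta\bar\sigma^2}$ w.h.p., with the gap between $(2+\omega)$ and $(2+\omega/2)$ absorbing the concentration error in $\hat q$. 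The \emph{only} place the hypothesis $p \ge T_0^{-2\gamma}/(\sigma^2+1)$ is used is to license this invocation of Theorem~\ref{thm:talagrand} at the new column-dimension $\Delta$: one needs $\bar\sigma^2 \gtrsim \Delta^{-1+\zeta'}$ for some $\zeta'>0$, and substituting $\Delta = T_0^{1/2+\gamma}$, $\tau = T_0^{1/2-\gamma}$, $\bar\sigma^2 = q/\tau$ reduces this to $q \gtrsim T_0^{-2\gamma}$, which is exactly what $p \ge T_0^{-2\gamma}/(\sigma^2+1)$ delivers through $q \ge c\,p(\sigma^2+1)$.

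With these in hand, I would apply Theorem~\ref{thm:finite-sample} (equivalently Corollary~\ref{corollary:universal_threshold}) to the averaged instance. Because $p' = 1$, the term $\norm{(\hat p' - p')\bbM^{-}}$ drops and, bounding $\bar\lambda^* + \norm{\bbE^{-}} = \mathcal{O}(\sqrt{\Delta\bar\sigma^2})$ w.h.p. (using $\bar\lambda^* \le \mu + \norm{\bbE^{-}}$, Lemma~\ref{lemma:general_threshold}) exactly as in the proof of Corollary~\ref{corollary:universal_threshold}, the $\mathcal{O}(1/\sqrt{T_0})$-order contributions --- which all carry a factor $(1-p')$ --- vanish, leaving
\begin{align*}
	\MSE(\hat{\bbM}_1^{-}) \;\le\; C_1\,\bar\sigma^2 \;+\; \frac{2\,\bar\sigma^2\,\abs{S}}{\Delta} \;+\; \frac{\eta\,\norm{\beta^*}^2}{\Delta} \;+\; C_2\, e^{-c(N-1)\Delta},
\end{align*}
where $\abs{S}\le N-1 = \mathcal{O}(1)$ since only $T_0\to\infty$ (so $N$ is fixed). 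Finally, substituting $\bar\sigma^2 = q/\tau \le (\sigma^2+1)\,T_0^{\gamma-1/2}$ and $\Delta = T_0^{1/2+\gamma}$ makes the first term $\mathcal{O}(T_0^{-1/2+\gamma})$, the second $\mathcal{O}(T_0^{-1})$, the third $\mathcal{O}(T_0^{-1/2-\gamma})$, and the last exponentially small; every term is $\mathcal{O}(T_0^{-1/2+\gamma})$, proving the claim.

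I expect the only genuine work to be bookkeeping: translating each ingredient of the pre-intervention analysis (noise variance, observation probability, the structural identity \eqref{eq:3}, the concentration events $E_1, E_2, E_3$) correctly into the averaged instance --- in particular that the \emph{donor} variance $q/\tau$, not the smaller treatment-row variance, is the quantity controlling $\norm{\bbE^{-}}$ --- and confirming that the hypothesis on $p$ buys precisely the condition Theorem~\ref{thm:talagrand} requires at dimension $\Delta$. No new estimate is needed; the theorem is simply a re-instantiation of the pre-intervention bound on a variance-reduced, fully observed matrix of smaller dimension.
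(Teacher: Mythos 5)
Your proposal is correct and follows essentially the same route as the paper's proof: both treat the column-averaged matrix $\bbX^{-}$ as a fresh, fully observed pre-intervention instance with per-entry variance $q/\tau$, verify that \eqref{eq:3} transfers linearly to $\bar M_1^{-} = (\bbM^{-})^T\beta^*$, re-invoke Theorem \ref{thm:talagrand} at column dimension $\Delta$ (which is exactly where the hypothesis on $p$ is consumed), and then rerun the Lemma \ref{lemma:mse}/\ref{lemma:mse_ridge} machinery with the substitutions $T_0\mapsto\Delta$, $p\mapsto 1$, $\sigma^2\mapsto q/\tau$ before collecting terms. The only cosmetic discrepancy is the exponent of your tail term ($e^{-c(N-1)\Delta}$ versus the paper's $e^{-c\,\bar q\,\Delta}$), which is immaterial to the stated $\mathcal{O}(T_0^{-1/2+\gamma})$ bound.
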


\begin{proof}
To establish Theorem \ref{thm:consistency}, we shall follow the proof of Corollary \ref{corollary:universal_threshold}, 
using the block partitioned matrices instead. Recall that $\tau = T_0 / \Delta$ where $\Delta = T_0^{1/2 + \gamma}$. For analytical simplicity, we define the random variable 
\begin{align*}
	D_{it} = \begin{cases}
	1 & \text{w.p. } p, \\
	0 & \text{otherwise,}
	\end{cases}
\end{align*}
whose definition will soon prove to be useful. As previously described in Section \ref{sec:results}, for all $i > 1$ and $j \in [\Delta]$, we define
\begin{align*}
	\bar{X}_{ij} &= \dfrac{1}{\tau} \sum_{t \in B_j} X_{it} \cdot D_{it}
\end{align*}
and
\begin{align*}
	\bar{M}_{ij} &= \dfrac{p}{\tau} \sum_{t \in B_j} M_{it}.
\end{align*}
Let us also define $\bar{\bE}^{-} = [\bar{\epsilon}_{ij}]_{2 \le i \le N, j \le \Delta}$ with entries
\begin{align} \label{eq:avgnoise}
	\bar{\epsilon}_{ij} &= \dfrac{1}{\tau} \sum_{t \in B_j} \epsilon_{it} \cdot D_{it}.
\end{align}	
For the first row (treatment unit), since we know $p$ by assumption, we define for all $j \in [\Delta]$
\begin{align}
	\bar{X}_{1j} &= \dfrac{p}{\tau} \sum_{t \in B_j} X_{1t}
	\\ &= \dfrac{p}{\tau} \sum_{t \in B_j} (M_{1t} + \epsilon_{1t}) \nonumber
	\\ &= \dfrac{p}{\tau} \sum_{t \in B_j} M_{1t} + \dfrac{p}{\tau} \sum_{t \in B_j} \epsilon_{1t} \nonumber
	\\ &= \bar{M}_{1j} + \bar{\epsilon}_{1j},
\end{align}
whereby $\bar{M}_{1j} = \frac{p}{\tau} \sum_{t \in B_j} M_{1t}$ and $\bar{\epsilon}_{1j} = \frac{p}{\tau} \sum_{t \in B_j} \epsilon_{1t}$. Under these constructions, the noise entries remain zero-mean random variables for all $i,j$, i.e. $\mathbb{E} [\bar{\epsilon}_{ij}] = 0$. However, the variance of each noise term is now rescaled, i.e. for $i = 1$
\begin{align*}
	\text{Var}(\bar{\epsilon}_{1j}) &= \dfrac{p^2}{\tau^2} \sum_{t \in B_j} \text{Var}( \epsilon_{1t})
	\\ &\le \dfrac{\sigma^2}{\tau},
\end{align*}
and for $i > 1$, 
\begin{align*}
	\text{Var}(\bar{\epsilon}_{ij}) &= \dfrac{1}{\tau^2} \sum_{t \in B_j} \text{Var}( \epsilon_{it} \cdot D_{it})
	\\ &\stackrel{(a)} {\le} \dfrac{1}{\tau^2} \sum_{t \in B_j} (\sigma^2 p (1-p) + \sigma^2 p^2)
	\\ &\le \dfrac{\sigma^2}{\tau}.
\end{align*}
(a) used the fact that for any two independent random variables, $X$ and $Y$, $\text{Var}(XY) = \text{Var}(X) \text{Var}(Y) + \text{Var}(X) (\mathbb{E} [Y])^2 + \text{Var}(Y) (\mathbb{E}[X])^2$. Thus, for all $i, j$, $\text{Var}(\bar{\epsilon}_{ij}) \le \sigma^2 / \tau := \bar{\sigma}^2$. 

We now show that the key assumption of (\ref{eq:3}) still holds under this setting with respect to the newly defined variables. In particular, for every partition $j \in [\Delta]$ of row one,
	\begin{align*} 
		\bar{M}_{1j} &= \dfrac{p}{\tau} \sum_{t \in B_j} M_{1t}
		\\ &= \dfrac{p}{\tau} \sum_{t \in B_j} \Big( \sum_{k=2}^N \beta_k^* M_{kt} \Big)
		\\ &= \sum_{k=2}^{N}\beta_k^* \Big(\dfrac{p}{\tau}\sum_{t \in B_j} M_{kt}  \Big)
		\\ &= \sum_{k=2}^N \beta_k^* \bar{M}_{kj}.
	\end{align*}
As a result, we can express $\bar{M}_1^- = (\bbM^-)^T \beta^*$ for the same $\beta^*$ as in (\ref{eq:3}). 
	
Following a similar setup as before, we define the matrix $\bbY^- = [\bar{Y}_{ij}]_{2 \le i \le N, j \le \Delta}$. Since we have assumed that each block contains at least one observed entry, we subsequently have that $\bar{Y}_{ij} = \bar{X}_{ij}$ for all $i$ and $j$. We now proceed with our analysis in the exact same manner with the only difference being our newly defined set of variables and parameters. For completeness, we will highlight certain details below. 

To begin, observe that $\mathbb{E}[\bar{Y}_{ij}] = \bar{M}_{ij}$ while
\begin{align*}
	\text{Var}(\bar{Y}_{ij}) &\le \dfrac{\sigma^2p + p(1-p)}{\tau}.
\end{align*}
Consequently, we redefine the event $E_3 := \{ \norm{ \bbY^- -  \bbM^-} \le (2 + \omega) \sqrt{ \Delta \bar{q}} \}$ for some choice $\omega \in (0.1,1)$ and for $\bar{q} = \frac{\sigma^2p + p(1-p)}{\tau}$. By Theorem \ref{thm:talagrand}, it follows that $\Pb(E_3) \geq 1 - C' e^{-c \bar{q} \Delta}$. 

Similar to before, let $\delta$ be defined by the relation
\begin{align*}
	(1 + \delta) \norm{\bbY^- - \bbM^-} &= (2 + \omega) \sqrt{ \Delta \hat{\bar{q}}},
\end{align*}
where $\hat{\bar{q}} = \frac{\hat{\sigma}^2 \hat{p} + \hat{p} (1 - \hat{p})}{\tau}$. Letting $E = E_1 \cap E_2 \cap E_3$ and using arguments (\eqref{eq:thresh_1}, \eqref{eq:thresh_2}, \eqref{eq:reference_again}) that led us to \eqref{eq:thresh_3}, we obtain 
\begin{align*} 
	\mathbb{E} \norm{ (\hat{\bbM}^{-} - \bbM^-)^T \beta^*}^2 
	&\le \mathbb{E} \Big[ \norm{(\hat{\bbM}^{-} - \bbM^-)^T \beta^*}^2 \given E \Big] + \mathbb{E} \Big[\norm{(\hat{\bbM}^{-} - \bbM^-)^T \beta^*}^2 \given E^c \Big] \mathbb{P}(E^c) \nonumber
	\\ &\le C_1 \Delta \bar{q}  + C_2 e^{-c_3 \phi \Delta},
\end{align*}
where $\phi := p(N-1) + \sigma^2 + \bar{q}$. Utilizing Lemmas \ref{lemma:mse} and \ref{lemma:mse_ridge} gives us (for appropriately defined constants and defining $q = \sigma^2p + p(1-p)$ as before such that $\bar{q} = q / \tau$)
\begin{align*}
	\text{MSE}(\hat{\bar{M}}_1^-) &\le C_1 \bar{q}   + \dfrac{2 \bar{\sigma}^2 k}{\Delta} + \dfrac{\eta \norm{\beta^*}^2}{\Delta} + C_4 e^{-c_5 \phi \Delta}.
	\\ &= \dfrac{C_1 q}{\tau} + \dfrac{2 \sigma^2 k}{\tau \Delta} + \dfrac{\eta \norm{\beta^*}^2}{\Delta} + C_4 e^{-c_5 \frac{q}{\tau} \Delta}
	\\ &= \dfrac{C_1 q}{T_0^{1/2 - \gamma}} + \dfrac{2 \sigma^2 k}{T_0} + \dfrac{\eta \norm{\beta^*}^2}{T_0^{1/2 + \gamma}} + C_4 e^{- c_5 q T_0^{2 \gamma}}
	\\ &= \mathcal{O} (T_0^{-1/2 + \gamma}). 
\end{align*}
This concludes the proof.
\end{proof}

\section{Forecasting Analysis: Post-Intervention Regime}

We now bound the post-intervention $\ell_2$ error of our estimator. 

% same linear relationship
%\subsection{Proof of Theorem \ref{thm:post}}
\subsection{Proof of Theorem \ref{thm:post}}

\begin{thm*}[\ref{thm:post}]
	Let \eqref{eq:3} hold for some $\beta^* \in {\mathbb R}^{N-1}$. Let $\rank(\bM^{-}) = \rank(\bM)$. Then $M_{1}^{+} = (\bM^{+})^T \beta^*$.
\end{thm*}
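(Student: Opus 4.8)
The plan is to exploit the rank condition $\rank(\bM^-) = \rank(\bM)$ to show that adding the post-intervention columns does not enlarge the column space of $\bM$, hence every column of $\bM^+$ already lies in the column space of $\bM^-$. Concretely, write $\bM = [\bM^-, \bM^+]$ where I include the treatment row; the relation \eqref{eq:3} says exactly that the first coordinate of each column of $\bM^-$ is the same fixed linear combination (with weights $\beta^*$) of the remaining coordinates of that column. I want to propagate this identity to the columns of $\bM^+$.

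First I would observe that $\rank(\bM^-) \le \rank(\bM)$ always, and the hypothesis forces equality, so the column space of $\bM^-$ (as a subspace of $\mathbb{R}^N$, columns indexed over the pre-intervention times) equals the column space of $\bM$. In particular, each column $M_{\cdot t}$ of $\bM^+$ for $t > T_0$ is a linear combination of the columns of $\bM^-$: there exist coefficients $\{c_s\}_{s \le T_0}$ with $M_{\cdot t} = \sum_{s \le T_0} c_s M_{\cdot s}$. Now split each of these column vectors into its treatment-row entry (index $1$) and its donor-row entries (indices $2,\dots,N$), i.e. $M_{\cdot t} = (M_{1t}, (M^{\mathrm{donor}}_{\cdot t}))$. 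The donor part gives $M^{\mathrm{donor}}_{\cdot t} = \sum_s c_s M^{\mathrm{donor}}_{\cdot s}$, and the treatment part gives $M_{1t} = \sum_s c_s M_{1s}$.

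The key step is to substitute \eqref{eq:3} into the treatment-part equation: since $M_{1s} = \sum_{i=2}^N \beta_i^* M_{is}$ for each $s \le T_0$, we get
\begin{align*}
M_{1t} &= \sum_{s \le T_0} c_s \sum_{i=2}^N \beta_i^* M_{is} = \sum_{i=2}^N \beta_i^* \Big( \sum_{s \le T_0} c_s M_{is} \Big) = \sum_{i=2}^N \beta_i^* M_{it},
\end{align*}
where the last equality is the donor-part equation read coordinate-by-coordinate. This is precisely $M_1^+ = (\bM^+)^T \beta^*$ once we range over all $t > T_0$.

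I expect the only subtle point — the "main obstacle," though it is mild — is making the column-space argument airtight: one must be careful that $\rank(\bM^-) = \rank(\bM)$ is taken with $\bM$ including the treatment row (so the ambient space is $\mathbb{R}^N$), because it is this full-matrix rank equality that guarantees the post-intervention columns are spanned by pre-intervention columns \emph{as vectors in $\mathbb{R}^N$}, not merely that the donor submatrices have matching ranks. Once that is pinned down, the rest is the short linear-combination manipulation above, and no smoothness or probabilistic assumptions are needed — it is a pure linear-algebra fact, as the theorem statement advertises.
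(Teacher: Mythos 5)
Your proof is correct and follows essentially the same route as the paper's: the rank equality is used to write each post-intervention column as a linear combination of pre-intervention columns, and then \eqref{eq:3} is substituted into the treatment-row coordinate and the two sums are exchanged (the paper does this one column at a time with an induction, you do all columns at once, which is immaterial). Your closing remark is also well taken: the argument does require $\rank(\bM^-)=\rank(\bM)$ to be read with the treatment row included (so that the post-intervention columns are spanned by the pre-intervention ones as vectors in $\mathbb{R}^N$), a point the paper's proof uses implicitly when it invokes the column relation ``for $j\in[N]$'' even though its Section~2 notation defines $\bM$ as the donor-only matrix.
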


\begin{proof}
	Suppose we begin with only the matrix $\bM^-$, i.e. $\bM = \bM^-$. From the assumption that $M_1^- =( \bM^- )^T\beta^*$, we have for $t \le T_0$
	\begin{align*}
	M_{1t} &= \sum_{j=2}^N \beta_j^* M_{jt}.
	\end{align*}
	Suppose that we now add an extra column to $\bM^-$ so that $\bM$ is of dimension $N \times (T_0+1)$. Since $\rank(\bM^-) = \rank(\bM)$, we have for $j \in [N]$
	\begin{align*}
	M_{j,T_0+1} &= \sum_{t=1}^{T_0} \pi_t M_{jt},
	\end{align*}
	for some set of weights $\pi \in \mathbb{R}^{T_0}$. In particular, for the first row we have
	\begin{align*}
	M_{1, T_0+1} &= \sum_{t=1}^{T_0} \pi_t M_{1t}
	\\ &= \sum_{t=1}^{T_0} \pi_t \Big( \sum_{j=2}^N \beta_j^* M_{jt}  \Big)
	\\ &= \sum_{j=2}^N \beta_j^* \Big( \sum_{t=1}^{T_0} \pi_t M_{jt}  \Big)
	\\ &= \sum_{j=2}^N \beta_j^* M_{j, T_0+1}.
	\end{align*}
	By induction, we observe that for any number of columns added to $\bM^-$ such that $\rank(\bM^-) = \rank(\bM)$, we must have $M_{1}^{+} = (\bM^{+})^T \beta^*$ where $\bM^+ = [M_{it}]_{2 \le i \le N, T_0 < t \le T}$. 	
\end{proof}

% THEOREM RMSE

\subsection{Proof of Theorem \ref{thm:post_rmse}}

\begin{thm*} [\ref{thm:post_rmse}]
Suppose $p \ge \frac{T^{-1 + \zeta}}{\sigma^2 + 1}$ for some $\zeta > 0$. Suppose $\norm{\hat{\beta}(\eta)}_{\infty} \le \psi$ for some $\psi > 0$. Let $\alpha' T_0 \le T \le \alpha T_0$ for some constants $\alpha', \alpha > 1$. Then for any $\eta \ge 0$ and using $\mu$ as defined in \eqref{eq:goldilocks}, the post-intervention error is bounded above by
	\begin{align*}
		\emph{RMSE}(\hat{M}_1^+)  &\le \dfrac{C_1}{\sqrt{p}} (\sigma^2 + (1-p))^{1/2} + \dfrac{C_2 \norm{\bM}}{\sqrt{T_0}} \cdot \E\norm{\hat{\beta}(\eta) - \beta^*} + \mathcal{O}(1/ \sqrt{T_0}),
	\end{align*}
	where $C_1$ and $C_2$ are universal positive constants.
\end{thm*}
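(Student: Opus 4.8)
The plan is to reduce everything to the spectral‑norm control of the de‑noised matrix already obtained in the proof of Corollary~\ref{corollary:universal_threshold}, combined with the post‑intervention identity of Theorem~\ref{thm:post}. Since we assume the linear relationship carries over, $M_1^+ = (\bM^+)^T\beta^*$ (this is exactly Theorem~\ref{thm:post}), while by construction $\hat M_1^+ = (\bhM^+)^T\hat\beta(\eta)$. Because $\text{RMSE}(\hat M_1^+) = \tfrac{1}{\sqrt{T-T_0}}\,\E\norm{M_1^+ - \hat M_1^+}$, it suffices to bound $\E\norm{M_1^+ - \hat M_1^+}$; I would split
\[ M_1^+ - \hat M_1^+ \;=\; (\bM^+ - \bhM^+)^T\beta^* \;+\; (\bhM^+)^T\big(\beta^* - \hat\beta(\eta)\big), \]
apply the triangle inequality inside the expectation, and treat the two pieces separately. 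Throughout one works on the good event $E := E_1\cap E_2\cap E_3$ from the Preliminaries (with $q = \sigma^2 p + p(1-p)$), on whose complement the crude polynomial bounds of type~\eqref{eq:reference_again} together with $\Pb(E^c)\le C e^{-c\phi T}$ (for $\phi = p(N-1)+\sigma^2+q$) yield only an exponentially small, hence $\mathcal{O}(1/\sqrt{T_0})$, contribution.

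For the first piece, $\E\norm{(\bM^+-\bhM^+)^T\beta^*}\le \norm{\beta^*}\,\E\norm{\bhM^+-\bM^+}$, and I would bound $\norm{\bhM^+-\bM^+}$ verbatim as in Corollary~\ref{corollary:universal_threshold}: pass from $\bhM^+$ to the full $\bhM$ via Lemma~\ref{lemma:spectral_norm}, use Lemma~\ref{lemma:prescription_threshold} with the threshold $\mu$ of~\eqref{eq:goldilocks} to get $\norm{\hat p\,\bhM - p\bM}\le C\sqrt{Tq}$ on $E$, control $\norm{(\hat p - p)\bM^+}$ via $\E\abs{\hat p - p}\le\sqrt{p(1-p)/((N-1)T)}$ and $\norm{\bM^+}_F\le\sqrt{(N-1)(T-T_0)}$, and divide by $\sqrt{T-T_0}$. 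Using $T-T_0 = \Theta(T_0) = \Theta(T)$ (from $\alpha' T_0\le T\le \alpha T_0$) and $q = p(\sigma^2 + 1-p)$, the dominant contribution is $\tfrac{C_1}{\sqrt p}(\sigma^2 + (1-p))^{1/2}$ and everything else is $\mathcal{O}(1/\sqrt{T_0})$.

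For the second piece, $\E\norm{(\bhM^+)^T(\beta^*-\hat\beta(\eta))}\le \E\big[\norm{\bhM^+}\,\norm{\hat\beta(\eta)-\beta^*}\big]$. On $E$ I would bound $\norm{\bhM^+}\le\norm{\bhM}\le\norm{\bM}+\norm{\bhM-\bM}\le C\big(\norm{\bM}+\sqrt{Tq}/p\big)$ using the same spectral estimate, whereas on $E^c$ the hypothesis $\norm{\hat\beta(\eta)}_\infty\le\psi$ keeps $\norm{\hat\beta(\eta)-\beta^*}$ polynomially bounded, so that against $\Pb(E^c)$ this piece contributes only an exponentially small term. Hence the second piece is at most $C\,\tfrac{\norm{\bM}+\sqrt{Tq}/p}{\sqrt{T-T_0}}\,\E\norm{\hat\beta(\eta)-\beta^*}$; since $\sqrt{Tq}/p = \Theta(\sqrt T)=\mathcal{O}(\norm{\bM})$ for the non‑degenerate latent‑variable (and factor) models under consideration — or, if one prefers, by carrying $\sqrt{Tq}/(p\sqrt{T-T_0})$ as a bounded explicit factor — this collapses to $\tfrac{C_2\norm{\bM}}{\sqrt{T_0}}\,\E\norm{\hat\beta(\eta)-\beta^*}$. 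Adding the two pieces gives the claim.

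The main obstacle is the second piece: reconciling the random spectral size $\norm{\bhM^+}$ of the de‑noised matrix restricted to the post‑intervention columns with the deterministic coefficient $\norm{\bM}$ in the statement — i.e.\ showing that the spectral perturbation of $\bhM^+$ beyond $\norm{\bM}$ is lower order after dividing by $\sqrt{T-T_0}$ (this is precisely where Theorem~\ref{thm:post} and the comparability $\alpha' T_0\le T\le \alpha T_0$ are used) — and verifying that the bad‑event contributions are genuinely negligible despite $\hat\beta(\eta)$ not being a priori bounded in $N$, which is exactly what the assumption $\norm{\hat\beta(\eta)}_\infty\le\psi$ is there to guarantee.
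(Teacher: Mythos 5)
Your overall strategy --- a triangle-inequality decomposition into a matrix-estimation term and a coefficient-estimation term, the good/bad event split with the spectral bounds recycled from the proof of Corollary~\ref{corollary:universal_threshold}, and division by $\sqrt{T-T_0}\asymp\sqrt{T_0}$ --- matches the paper. The gap is in which factor you pair with which. You write
\begin{align*}
M_1^+ - \hat M_1^+ &= (\bM^+ - \bhM^+)^T\beta^* + (\bhM^+)^T\bigl(\beta^* - \hat\beta(\eta)\bigr),
\end{align*}
so the coefficient of $\E\norm{\hat\beta(\eta)-\beta^*}$ is the \emph{random} quantity $\norm{\bhM^+}$, which on the good event you can only control by $\norm{\bM}+C\sqrt{Tq}/p$. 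The leftover contribution $\frac{\sqrt{Tq}}{p\sqrt{T-T_0}}\,\E\norm{\hat\beta(\eta)-\beta^*}=\Theta\bigl(\sqrt{(\sigma^2+1-p)/p}\bigr)\,\E\norm{\hat\beta(\eta)-\beta^*}$ carries no $1/\sqrt{T_0}$ decay and cannot be absorbed into the first term of the claimed bound, since $\E\norm{\hat\beta(\eta)-\beta^*}$ is only bounded by $\sqrt{N}\psi+\norm{\beta^*}$, not by a universal constant. Your two proposed escapes either import a hypothesis not in the statement ($\sqrt{Tq}/p=\mathcal{O}(\norm{\bM})$, which fails e.g.\ when $\norm{\bM}$ is small) or change the form of the conclusion.

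The paper avoids this entirely by adding and subtracting $(\bM^+)^T\hat\beta(\eta)$ instead:
\begin{align*}
\hat M_1^+ - M_1^+ &= (\bhM^+-\bM^+)^T\hat\beta(\eta) + (\bM^+)^T\bigl(\hat\beta(\eta)-\beta^*\bigr),
\end{align*}
so the coefficient error is multiplied by the deterministic $\norm{\bM^+}\le\norm{\bM}$ (Lemma~\ref{lemma:spectral_norm}), giving the theorem's second term directly, while the matrix-estimation error is multiplied by $\norm{\hat\beta(\eta)}\le\sqrt{N}\psi$ --- this is where the hypothesis $\norm{\hat\beta(\eta)}_{\infty}\le\psi$ is actually used, not merely on the bad event as in your sketch. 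Your argument is repairable without changing its skeleton: expand $(\bhM^+)^T(\beta^*-\hat\beta(\eta))=(\bM^+)^T(\beta^*-\hat\beta(\eta))+(\bhM^+-\bM^+)^T(\beta^*-\hat\beta(\eta))$ and absorb the cross term, whose norm is at most $\norm{\bhM^+-\bM^+}\bigl(\norm{\beta^*}+\sqrt{N}\psi\bigr)$, into your first piece.
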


\begin{proof}
	We will prove Theorem \ref{thm:post_rmse} by drawing upon techniques and results from prior proofs. We begin by applying triangle inequality to obtain
	\begin{align*} 
		\norm{\hat{M}_1^+ - M_1^+} &= \norm{ (\bhM^{+})^T \hat{\beta}(\eta) - (\bM^{+})^T \beta^*} 
		\\ &= \norm{ (\bhM^{+})^T \hat{\beta}(\eta) -(\bM^{+})^T \beta^* + (\bM^{+})^T \hat{\beta}(\eta) - (\bM^{+})^T \hat{\beta}(\eta)} 
		\\ &\le \norm{(\bhM^{+} - \bM^{+})^T \hat{\beta}(\eta)} + \norm{ (\bM^{+})^T( \hat{\beta}(\eta) - \beta^*)}.
	\end{align*}
	Taking expectations and using the property of induced norms gives
	\begin{align}  \label{eq:result_3}
		\E \norm{\hat{M}_1^+ - M_1^+} &\le \E \Big[ \norm{\bhM^{+} - \bM^{+}} \cdot \norm{\hat{\beta}(\eta)} \Big] + \norm{\bM^+} \cdot \E \norm{ \hat{\beta}(\eta) - \beta^*} \nonumber
		\\ &\le \sqrt{N} \psi \cdot \E \norm{\bhM^{+} - \bM^{+}} + \norm{\bM^+} \cdot \E \norm{ \hat{\beta}(\eta) - \beta^*},
	\end{align}
	where the last inequality uses the boundedness assumption of $\hat{\beta}(\eta)$. Observe that the first term on the right-hand side of \eqref{eq:result_3} is similar to that of \eqref{eq:x.20} and \eqref{eq:x.21} with the main difference being \eqref{eq:result_3} uses the post-intervention submatrices, $\bhM^+$ and $\bM^+,$ as opposed to the pre-intervention submatrices, $\bhM^-$ and $\bM^-,$ in \eqref{eq:x.20} and \eqref{eq:x.21}. Therefore, using \eqref{eq:trick} and the arguments that led to \eqref{eq:thresh_3}, it follows that (with appropriate constants $C_1, C_2, c_3$)
	\begin{align*}
		\mathbb{E} \norm{\bhM^{+} - \bM^{+}} 
		&\le  \dfrac{C_1}{p} \mathbb{E} \Big( \sqrt{T q} + \norm{ (\hat{p} - p) \bM^+ } \Big)  + C_2 ((N-1)T)^{3/2} e^{-c_3 \phi T},
	\end{align*}
	where the slight modification arises due to the fact that we are now operating in the post-intervention regime. In particular, $\norm{\bM^+} \le \sqrt{(N-1) (T - T_0)}$ and $\norm{\bhM^+} \le ((N-1)T)^{3/2}$. Further, note that $q$ and $\phi$ are defined exactly as before, i.e. $q = \sigma^2p + p(1-p)$ and $\phi = p(N-1) + \sigma^2 + q$. Following the proof of Corollary \ref{corollary:universal_threshold}, we apply Jensen's Inequality to obtain
	\begin{align*}
		\mathbb{E} \norm{(\hat{p} - p) \bM^+} &= \mathbb{E} \abs{ \hat{p} - p} \cdot \norm{\bM^+} 
		\\ &\le  \sqrt{ \dfrac{p (1-p)}{(N-1) T}} \cdot \sqrt{(N-1)(T - T_0)}
		\\ &\le \sqrt{p (1-p)}.
	\end{align*}
	Putting the above results together, we have (for appropriately defined constants)
	\begin{align*}
		\text{RMSE}(\hat{M}_1^+) &\le  \dfrac{C_1 \sqrt{N} \psi}{p \sqrt{T - T_0}} \Big( \sqrt{Tq} + \sqrt{p(1-p)} \Big) + \dfrac{\norm{\bM^+}}{\sqrt{T-T_0}} \cdot \E \norm{\hat{\beta}(\eta) - \beta^*} + C_4 e^{-c_5 \phi T} 
		\\ &\stackrel{(a)}{\le} \dfrac{C_6 \sqrt{q}}{p} + \dfrac{C_7 \sqrt{1-p}}{\sqrt{p T_0}} + \dfrac{C_8 \norm{\bM}}{\sqrt{T_0}} \cdot \E\norm{\hat{\beta}(\eta) - \beta^*} + C_4 e^{-c_5 \phi T} 
		\\ &=  \dfrac{C_6}{\sqrt{p}} (\sigma^2 + (1-p))^{1/2} + \dfrac{C_8 \norm{\bM}}{\sqrt{T_0}} \cdot \E\norm{\hat{\beta}(\eta) - \beta^*} + \mathcal{O}(1/ \sqrt{T_0}),
	\end{align*}
where (a) follows from Lemma \ref{lemma:spectral_norm}. Renaming constants would provide the desired result.  
\end{proof}

\section{A Bayesian Perspective}

\noindent \textbf{Derivation of posterior parameters.}

The following is based on the derivation presented in Section 2.2.3 of \cite{bishop}, and is presented here for completeness. Suppose we are given a multivariate Gaussian marginal distribution $p(x)$ paired with a multivariate Gaussian conditional distribution $p(y \given x)$ -- where $x$ and $y$ may have differing dimensions -- and we are interested in computing the posterior distribution over $x$, i.e. $p(x \given y)$. We will derive the posterior parameters of $p(x \given y)$ here. Without loss of generality, suppose
\begin{align*}
	p(x) &= \mathcal{N} (x \given \mu, \bLambda^{-1})
	\\ p(y \given x) &= \mathcal{N} (y \given \bA x + b, \bSigma^{-1}),
\end{align*}
where $\mu, \bA,$ and $b$ are parameters that govern the means, while $\bLambda$ and $\bSigma$ are precision (inverse covariance) matrices. 

We begin by finding the joint distribution over $x$ and $y$. Ignoring the terms that are independent of $x$ and $y$ and encapsulating them into the ``const.'' expression, we obtain
\begin{align*}
	\ln{p(x,y)} &= \ln{p(x)} + \ln{p(y \given x)}
	\\ &= -\dfrac{1}{2}(x - \mu)^T \bLambda (x - \mu) - \dfrac{1}{2}(y- \bA x - b)^T \bSigma (y - \bA x - b) + \text{const.}
	\\ &= -\dfrac{1}{2}x^T (\bLambda + \bA^T \bSigma \bA)x - \dfrac{1}{2} y^T \bSigma y + \dfrac{1}{2} x^T \bA^T \bSigma y + \text{const.}
	\\ &= -\dfrac{1}{2} \begin{bmatrix} x \\    y  \end{bmatrix}^T \begin{bmatrix} \bLambda + \bA^T \bSigma \bA & -\bA^T \bSigma \\ -\bSigma \bA & \bSigma \end{bmatrix}  \begin{bmatrix} x \\    y  \end{bmatrix} + \text{const.}
	\\ &= -\dfrac{1}{2} z^T \bQ z + \text{const.},
\end{align*}
where $z = [x, y ]^T$, and 
\begin{align*}
	\bQ &= \begin{bmatrix} \bLambda + \bA^T \bSigma \bA & -\bA^T \bSigma \\ -\bSigma \bA & \bSigma \end{bmatrix}
\end{align*}
is the precision matrix. Applying the matrix inversion formula, we have that the covariance matrix of $z$ is
\begin{align*}
	\text{Var}(z) &= \bQ^{-1} = \begin{bmatrix} \bLambda^{-1} & \bLambda^{-1} \bA^T \\ \bA \bLambda^{-1} & \bSigma^{-1} + \bA \bLambda^{-1} \bA^T \end{bmatrix}.
\end{align*}
After collecting the linear terms over $z$, we find that the mean of the Gaussian distribution over $z$ is defined as
\begin{align*}
	\mathbb{E}[z] &= \bQ^{-1} \begin{bmatrix} \bLambda \mu - \bA^T \bSigma b \\ \bSigma b \end{bmatrix}. 
\end{align*}
Now that we have the parameters over the joint distribution of $x$ and $y$, we find that the posterior distribution parameters over $x$ are
\begin{align*}
	\mathbb{E}[x \given y] &= (\bLambda + \bA^T \bSigma \bA)^{-1} \{ \bA^T \bSigma (y - b) + \bLambda \mu \}
	\\ \text{Var}(x \given y) &= (\bLambda + \bA^T \bSigma \bA)^{-1}.
\end{align*}

\end{document}